\documentclass[12pt,pdftex,a4paper]{amsart}

\selectfont

\usepackage{qtree}
\usepackage{wrapfig}
\usepackage{forest}
\usepackage{mathtools} 

\usepackage{caption}

\usepackage{etoolbox}
\usepackage{epic,eepic,ecltree}
\usepackage{graphicx}
\usepackage{tikz}
\usepackage{tikz-cd}

\usetikzlibrary{decorations.text}
\usepackage{amssymb,color, euscript, enumerate}
\usepackage{amsthm}
\usepackage{amsmath}
\usepackage{braket}
\usepackage{amscd}
\usepackage{txfonts}
\usepackage{comment}
\usepackage{bm}
\usepackage{amscd}
\numberwithin{equation}{section}

\makeatletter
\let\old@tocline\@tocline
\let\section@tocline\@tocline
\newcommand{\subsection@dotsep}{4.5}
\newcommand{\subsubsection@dotsep}{4.5}
\patchcmd{\@tocline}
  {\hfil}
  {\nobreak
     \leaders\hbox{$\m@th
        \mkern \subsection@dotsep mu\hbox{.}\mkern \subsection@dotsep mu$}\hfill
     \nobreak}{}{}
\let\subsection@tocline\@tocline
\let\@tocline\old@tocline

\patchcmd{\@tocline}
  {\hfil}
  {\nobreak
     \leaders\hbox{$\m@th
        \mkern \subsubsection@dotsep mu\hbox{.}\mkern \subsubsection@dotsep mu$}\hfill
     \nobreak}{}{}
\let\subsubsection@tocline\@tocline
\let\@tocline\old@tocline

\let\old@l@subsection\l@subsection
\let\old@l@subsubsection\l@subsubsection

\def\@tocwriteb#1#2#3{%
  \begingroup
    \@xp\def\csname #2@tocline\endcsname##1##2##3##4##5##6{%
      \ifnum##1>\c@tocdepth
      \else \sbox\z@{##5\let\indentlabel\@tochangmeasure##6}\fi}%
    \csname l@#2\endcsname{#1{\csname#2name\endcsname}{\@secnumber}{}}%
  \endgroup
  \addcontentsline{toc}{#2}%
    {\protect#1{\csname#2name\endcsname}{\@secnumber}{#3}}}%

\newlength{\@tocsectionindent}
\newlength{\@tocsubsectionindent}
\newlength{\@tocsubsubsectionindent}
\newlength{\@tocsectionnumwidth}
\newlength{\@tocsubsectionnumwidth}
\newlength{\@tocsubsubsectionnumwidth}
\newcommand{\settocsectionnumwidth}[1]{\setlength{\@tocsectionnumwidth}{#1}}
\newcommand{\settocsubsectionnumwidth}[1]{\setlength{\@tocsubsectionnumwidth}{#1}}
\newcommand{\settocsubsubsectionnumwidth}[1]{\setlength{\@tocsubsubsectionnumwidth}{#1}}
\newcommand{\settocsectionindent}[1]{\setlength{\@tocsectionindent}{#1}}
\newcommand{\settocsubsectionindent}[1]{\setlength{\@tocsubsectionindent}{#1}}
\newcommand{\settocsubsubsectionindent}[1]{\setlength{\@tocsubsubsectionindent}{#1}}

\renewcommand{\l@section}{\section@tocline{1}{\@tocsectionvskip}{\@tocsectionindent}{}{\@tocsectionformat}}%
\renewcommand{\l@subsection}{\subsection@tocline{2}{\@tocsubsectionvskip}{\@tocsubsectionindent}{}{\@tocsubsectionformat}}%
\renewcommand{\l@subsubsection}{\subsubsection@tocline{3}{\@tocsubsubsectionvskip}{\@tocsubsubsectionindent}{}{\@tocsubsubsectionformat}}%
\newcommand{\@tocsectionformat}{}
\newcommand{\@tocsubsectionformat}{}
\newcommand{\@tocsubsubsectionformat}{}
\expandafter\def\csname toc@1format\endcsname{\@tocsectionformat}
\expandafter\def\csname toc@2format\endcsname{\@tocsubsectionformat}
\expandafter\def\csname toc@3format\endcsname{\@tocsubsubsectionformat}
\newcommand{\settocsectionformat}[1]{\renewcommand{\@tocsectionformat}{#1}}
\newcommand{\settocsubsectionformat}[1]{\renewcommand{\@tocsubsectionformat}{#1}}
\newcommand{\settocsubsubsectionformat}[1]{\renewcommand{\@tocsubsubsectionformat}{#1}}
\newlength{\@tocsectionvskip}
\newcommand{\settocsectionvskip}[1]{\setlength{\@tocsectionvskip}{#1}}
\newlength{\@tocsubsectionvskip}
\newcommand{\settocsubsectionvskip}[1]{\setlength{\@tocsubsectionvskip}{#1}}
\newlength{\@tocsubsubsectionvskip}
\newcommand{\settocsubsubsectionvskip}[1]{\setlength{\@tocsubsubsectionvskip}{#1}}

\patchcmd{\tocsection}{\indentlabel}{\makebox[\@tocsectionnumwidth][l]}{}{}
\patchcmd{\tocsubsection}{\indentlabel}{\makebox[\@tocsubsectionnumwidth][l]}{}{}
\patchcmd{\tocsubsubsection}{\indentlabel}{\makebox[\@tocsubsubsectionnumwidth][l]}{}{}

\newcommand{\@sectypepnumformat}{}
\renewcommand{\contentsline}[1]{%
  \expandafter\let\expandafter\@sectypepnumformat\csname @toc#1pnumformat\endcsname%
  \csname l@#1\endcsname}
\newcommand{\@tocsectionpnumformat}{}
\newcommand{\@tocsubsectionpnumformat}{}
\newcommand{\@tocsubsubsectionpnumformat}{}
\newcommand{\setsectionpnumformat}[1]{\renewcommand{\@tocsectionpnumformat}{#1}}
\newcommand{\setsubsectionpnumformat}[1]{\renewcommand{\@tocsubsectionpnumformat}{#1}}
\newcommand{\setsubsubsectionpnumformat}[1]{\renewcommand{\@tocsubsubsectionpnumformat}{#1}}
\renewcommand{\@tocpagenum}[1]{%
  \hfill {\mdseries\@sectypepnumformat #1}}

\let\oldappendix\appendix
\renewcommand{\appendix}{%
  \leavevmode\oldappendix%
  \addtocontents{toc}{%
    \protect\settowidth{\protect\@tocsectionnumwidth}{\protect\@tocsectionformat\sectionname\space}%
    \protect\addtolength{\protect\@tocsectionnumwidth}{2em}}%
}
\makeatother

\makeatletter
\settocsectionnumwidth{2em}
\settocsubsectionnumwidth{2.5em}
\settocsubsubsectionnumwidth{3em}
\settocsectionindent{1pc}%
\settocsubsectionindent{\dimexpr\@tocsectionindent+\@tocsectionnumwidth}%
\settocsubsubsectionindent{\dimexpr\@tocsubsectionindent+\@tocsubsectionnumwidth}%
\makeatother

\settocsectionvskip{10pt}
\settocsubsectionvskip{0pt}
\settocsubsubsectionvskip{0pt}
    
\settocsectionformat{\bfseries}
\settocsubsectionformat{\mdseries}
\settocsubsubsectionformat{\mdseries}
\setsectionpnumformat{\bfseries}
\setsubsectionpnumformat{\mdseries}
\setsubsubsectionpnumformat{\mdseries}

\let\oldtableofcontents\tableofcontents
\renewcommand{\tableofcontents}{%
  \vspace*{-\linespacing}
  \oldtableofcontents}

\newcommand{\CP}{\C P^1}

\newcommand{\msl}{\mathrm{sl}}

\newcommand{\Harm}{\mathrm{Harm}}

\newcommand{\Emb}{\mathrm{Mfld}_{d,\mathrm{emb}}^{\mathrm{CO}}}
\newcommand{\Embf}{\mathrm{Mfld}}
\newcommand{\Embc}{\mathrm{Mfld}_{d}^{\mathrm{CO}}}
\newcommand{\EmbFc}{\mathrm{Mfld}_{d}^{\mathrm{CF}}}

\newcommand{\Disk}{\mathrm{Disk}_d^{\mathrm{CO}}}

\newcommand{\SO}{\mathrm{SO}}
\newcommand{\PO}{\mathrm{PO}}

\newcommand{\tDisk}{{\mathrm{Disk}}_{d,\mathrm{emb}}^{\mathrm{CO}}}
\newcommand{\Hilb}{{\mathsf{Hilb}}}
\newcommand{\Ind}{{\operatorname{Ind}(\Hilb)}}
\newcommand{\colim}{\mathrm{colim}}

\newcommand{\fS}{\mathbb{S}}

\newcommand{\fG}{\mathbb{G}}

\newcommand{\Hom}{{\mathrm{Hom}}}

\newcommand{\dn}{{\frac{d-2}{2}}}

\newcommand{\da}{\dagger}

\newcommand{\cC}{{\mathcal C}}
\newcommand{\bH}{\mathbb H}

\newcommand{\N}{\mathbb{N}}

\newcommand{\R}{\mathbb{R}}
\newcommand{\C}{\mathbb{C}}

\newcommand{\pr}{{\mathrm{pr}}}

\newcommand{\Om}{\Omega}

\newcommand{\Conf}{\mathrm{Conf}}

\newcommand{\sod}{\mathrm{so}(d+1,1)}

\newcommand{\alg}{\text{alg}}

\newcommand{\std}{{\text{std}}}

\newcommand{\va}{\bm{1}}

\newcommand{\id}{{\mathrm{id}}}

\newcommand{\z}{{\bar{z}}}

\newcommand{\m}{\rho}

\newcommand{\pa}{{\partial}}

\newcommand{\Vect}{{\underline{\text{Vect}}_\C}}
\newcommand{\VectR}{{\underline{\text{Vect}}_\R}}

\newcommand{\al}{\alpha}
\newcommand{\ep}{\epsilon}
\newcommand{\be}{\beta}
\newcommand{\ga}{\gamma}

\newcommand{\de}{\delta}

\newcommand{\la}{\lambda}

\newcommand{\si}{\sigma}

\newcommand{\ft}{\frac{1}{2}}

\newcommand{\Ld}{{\overline{L}}}

\newcommand{\Sym}{\mathrm{Sym}}

\newcommand{\Hf}{{\mathcal{H}}}
\newcommand{\Aut}{\mathrm{Aut}\,}

\newcommand{\End}{\mathrm{End}}

\newcommand{\hotimes}{\hat{\otimes}}

\newcommand{\CE}{\mathbb{CE}_d}
\newcommand{\CEd}{\mathbb{CE}_d}
\newcommand{\CEc}{\mathbb{CE}_{d}^\mathrm{emb}}

\newcommand{\CEt}{\mathbb{CE}_2}
\newcommand{\CEct}{\mathbb{CE}_{2}^\mathrm{emb}}

\newcommand{\bD}{\mathbb{D}}

\newcommand{\bs}{\hat{S}}
\newcommand{\bn}{\hat{N}}

\newcommand{\norm}[1]{\left\lVert #1 \right\rVert}

\newcommand{\op}{{\mathrm{op}}}

\newtheorem{thm}{Theorem}[section]
\newtheorem{dfn}[thm]{Definition}
\newtheorem{lem}[thm]{Lemma}
\newtheorem{prop}[thm]{Proposition}
\newtheorem{cor}[thm]{Corollary}
\newtheorem{rem}[thm]{Remark}

\theoremstyle{remark}

\begin{document}

\begin{center}
{{\LARGE \bf Conformally flat factorization homology in Ind-Hilbert spaces and conformal field theory}
} \par \bigskip

\renewcommand*{\thefootnote}{\fnsymbol{footnote}}
{\normalsize
Yuto Moriwaki \footnote{email: \texttt{moriwaki.yuto (at) gmail.com}}
}
\par \bigskip
{\footnotesize Interdisciplinary Theoretical and Mathematical Science Program (iTHEMS)\\
Wako, Saitama 351-0198, Japan}

\par \bigskip
\end{center}

\noindent


%
%
\begin{center}
\textbf{\large Abstract}
\end{center}

We introduce a metric-dependent geometric variant of factorization homology in conformally flat Riemannian geometry for $d \geq 2$.
Its coefficients are symmetric monoidal functors from a disk category in conformal Riemannian geometry to the ind-category of Hilbert spaces, which we call conformally flat $d$-disk algebras. We prove that their left Kan extensions define symmetric monoidal invariants of conformally flat manifolds. Under suitable positivity and continuity assumptions, the value on the standard sphere recovers the sphere partition function of the associated conformal field theory. For $d\geq 3$, we construct explicit examples
using unitary representations of
$\mathrm{SO}^+(d,1)$ and harmonic analysis, and show that their operadic structures do not extend to bounded operations on the natural Hilbert space completions.

%
%
%

\tableofcontents

\begin{center}
\textbf{\large Introduction}
\end{center}

\emph{Factorization homology}, also known as topological chiral homology, is a homology-type theory
introduced by Lurie and further developed by Ayala--Francis \cite{Lurie2, AF1}.
Taking a $d$-disk algebra $A$ as input, factorization homology assigns invariants of
$d$-dimensional manifolds via the left Kan extension:
\begin{equation}
\label{eq_intro_left}
\begin{tikzcd}
\mathrm{Disk}_d
  \arrow[r,"A"]
  \arrow[d,hook,"i"']
&
\mathcal{C}
\\
\mathrm{Mfld}_d
  \arrow[ru,dashed,"\mathrm{Lan}_{A}"']
&
\end{tikzcd}
\end{equation}
Factorization homology is formulated in an $\infty$-categorical setting, and its output
is closely related to metric-independent field theories, i.e., \emph{topological field
theories} \cite{Lurie3, Sc, CS, AFR, AFT, AF2}.
The topological nature of factorization homology makes it natural to ask whether an analogous local-to-global construction can be formulated in metric-dependent geometries.

In this paper, we develop a conformal Riemannian variant of factorization homology for
$d\geq 2$.
We introduce the category $\Embc$ of germs of $d$-dimensional Riemannian manifolds and
orientation-preserving conformal open embeddings, and its full monoidal subcategory $\Disk$ generated by the flat unit disk.
We call a symmetric monoidal functor 
\begin{align}
A:\Disk \rightarrow \cC
\label{intro_disk_add}
\end{align}
a \emph{conformally flat $d$-disk algebra}. 
It is a conformal-geometric analogue of a $d$-disk algebra and is intended to encode
the local structure of a $d$-dimensional \emph{conformal field theory} (CFT), that is, a
metric-dependent quantum field theory with conformal symmetry.
This viewpoint is closely related to Costello--Gwilliam's formulation of quantum field
theory (QFT) by factorization algebras \cite{CG1,CG2}, especially the functorial construction of
the free scalar field in Riemannian geometry \cite[Chapter 6.3]{CG1}.

While factorization algebras reveal an elegant link between QFT and geometry, 
many constructions of factorization algebras arising from QFT are
perturbative in nature, formulated over formal power series $\C[[h]]$, which typically have zero radius of convergence
\cite{Dy,Jaffe}.
On the other hand, analytic approaches to QFT, such as the G{\aa}rding--Wightman axioms, involve Hilbert spaces, unitary representations of the Poincar\'e group, and operator-valued tempered distributions  (quantum fields) \cite{GW,SW}.
In this analytic setting, non-formal and nontrivial examples are known,
for instance through the probabilistic construction of path integrals by Glimm--Jaffe
\cite{Glimm-Jaffe}.
However, these constructions are usually formulated on flat spacetime.

The aim of
this paper is to construct a categorical framework that combines conformal
Riemannian geometry with the Hilbert-space analytic structures of conformal field
theory.
From the perspective of factorization homology, one is led to regard the local analytic theory as the input datum and the corresponding global geometric theory as its left Kan extension.
A main difficulty in placing analytic QFT into a categorical framework is that, as in
the G{\aa}rding--Wightman axioms, quantum fields remain unbounded operators even after
smearing by test functions. Hence it is not clear a priori whether the target category $\cC$ in
\eqref{intro_disk_add} can be taken to be the category $\Hilb$ of separable Hilbert
spaces.

For $d \geq 3$,
we construct examples of conformally flat $d$-disk algebras using harmonic analysis and unitary representations of $\mathrm{SO}^+(d,1)$. The key observation is that, if one works with the naive category $\Emb$ of Riemannian manifolds and conformal open
embeddings, then unbounded operators arise naturally in the construction of examples,
as one would expect from the G{\aa}rding--Wightman axioms. 
For $d\geq 3$, we characterize the
boundedness of these operators by a geometric condition on conformal
embeddings (Theorem~\ref{thm_unbounded}).

This leads us to replace $\Emb$ by the category $\Embc$ of Riemannian germs and
conformal open embeddings, where morphisms are required to extend to neighborhoods
of the cores. At the same time, on the target side, we take the category $\cC$ in
\eqref{intro_disk_add} to be the ind-category $\Ind$ of the category $\Hilb$ of
separable Hilbert spaces. With these choices, we construct conformally flat $d$-disk
algebras in $\Ind$ (Theorem~\ref{thm_CF}).

The underlying vector spaces of these algebras admit natural Hilbert space completions.
A natural question is whether the algebra structure extends to bounded operations on these Hilbert spaces. We prove that, for $d\geq 3$, such extensions do not exist: some operadic products are unbounded for the Hilbert space norm (Theorem~\ref{thm_app_unbounded}).
Thus, in dimensions $d\geq 3$, the construction requires refinements on both sides:
the source category is replaced by germs around cores, and the target category by
$\Ind$. Together, these refinements make it possible to incorporate the unbounded
operations of analytic QFT into a geometric and categorical framework.



This phenomenon is particularly interesting in comparison with dimension two.
Conformally flat $2$-disk algebras in ind-Hilbert spaces have been constructed using
Bergman spaces \cite{MBergman}, while the unboundedness obstruction exhibited in this
paper for $d\geq 3$ does not arise from the same argument in dimension two. This is consistent with Segal's formulation, where two-dimensional conformal field theories are symmetric monoidal functors from a bordism category of Riemann surfaces to Hilbert spaces \cite{Segal}, and important examples have been constructed in \cite{Te,GKRV2}. Together, these facts suggest that the boundedness of conformally flat $d$-disk algebras with respect to Hilbert space norms may behave differently in dimension two and in dimensions $d\geq 3$.


For a general conformally flat $d$-disk algebra $A$, we prove that the
($1$-categorical) left Kan extension along $\Disk\hookrightarrow \Embc$ defines a
symmetric monoidal functor
\begin{align*}
\mathrm{Lan}_A:\Embc \rightarrow \Ind,
\end{align*}
and hence yields metric-dependent invariants of conformally flat Riemannian manifolds
(Theorem~\ref{prop_ind_CF}). Under suitable positivity and continuity assumptions on
$A$, its value on the standard sphere,
$\mathrm{Lan}_A(S^d,g_\std)$, recovers the sphere partition function of the associated
CFT (Theorem~\ref{cor_left_nonzero}).

The rest of the introduction explains these constructions in more detail. We first describe
the conformal-geometric disk category and its left Kan extension, then introduce Hilbert
space filtrations and positivity conditions, and finally discuss the explicit examples from
harmonic analysis.

\vspace{3mm}

\noindent
\begin{center}
{\bf 0.1. Riemannian manifolds, conformal embeddings, and disk algebras}
\end{center}

As a conformal-geometric analogue of factorization homology \eqref{eq_intro_left},
we consider the category $\Emb$ whose objects are oriented $d$-dimensional
Riemannian manifolds without boundary and whose morphisms are
orientation-preserving conformal open embeddings.
Let $\bD_d=\{x\in\R^d \mid |x|<1\}$ be the unit disk equipped with the standard
Riemannian metric $g_{\std}=dx_1^2+\cdots+dx_d^2$, and denote by $\tDisk$
the full monoidal subcategory of $\Emb$ generated by the unit disk.

\begin{figure}[h]
\begin{minipage}[c]{.44\textwidth}
\centering
    \includegraphics[scale=0.3]{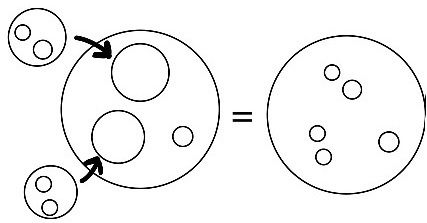}
    \caption{$\CE$-operad}\label{fig_intro_operad}
\end{minipage}
\begin{minipage}[l]{.50\textwidth}
    \centering
    \includegraphics[scale=0.18]{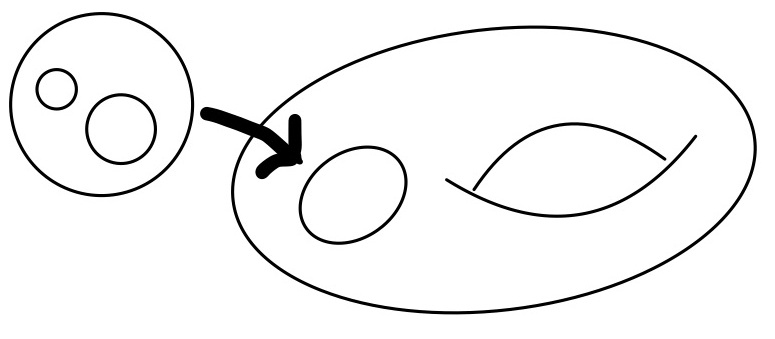}
    \caption{Kan extension}\label{fig_intro_module}
\end{minipage}
\end{figure}
This category defines an operad of conformal open embeddings of disks
(Fig. \ref{fig_intro_operad})
\begin{align*}
\CEc(n)
&=\mathrm{Hom}_{\Emb}(\sqcup_n \bD_d,\bD_d),\qquad\qquad\qquad (n \geq 0)\\
&=\left\{\phi:\underbrace{\bD_d \sqcup \cdots \sqcup \bD_d}_n \rightarrow \bD_d
\;\middle|\;\text{orientation-preserving conformal open embeddings} \right\},
\end{align*}
which gives a naive definition of a $d$-disk operad in conformally flat geometry (to be refined later).

%

In two dimensions, orientation-preserving conformal maps and holomorphic maps with nonzero derivatives are equivalent, and $\CEct(1)$ is the infinite-dimensional monoid of all injective holomorphic maps from disk to disk.
On the other hand, for $d \geq 3$, Liouville's theorem \cite{Liouville} implies that any conformal map from disk to disk can be expressed as a restriction of the group $\Conf(S^d) \cong \mathrm{O}^+(d+1,1)$, which consists of all conformal diffeomorphisms on the standard sphere.
In particular, for $d \geq 3$, the monoid $\CEc(1)$ can be identified as
$\CEc(1)=\{g \in \SO^+(d+1,1)\mid g(\bD_d) \subset \bD_d\}$, and $\CEc(n)$ consists of 
all conformal disk embeddings whose images are mutually disjoint
\begin{align}
\CEc(n)
= \{(f_1,\dots,f_n) \in \CEc(1)^n \mid f_i(\bD_d)\cap f_j(\bD_d)=\emptyset \}.
\label{eq_intro_geom}
\end{align}
Note that \eqref{eq_intro_geom} allows configurations such as $\overline{f_i(\bD_d)}\cap\overline{f_j(\bD_d)} \neq \emptyset$.
Our key observation is that, in the natural construction of a $\CEc$-algebra associated with the free scalar field,
such configurations give rise to unbounded operators on Hilbert spaces (Theorem \ref{thm_unbounded}).


To address this issue, we introduce a category $\Embc$ whose objects are triples $(U,g,\overline{M})$, where $\overline{M}$ is a compact Riemannian manifold which may have boundary (called the \emph{core}),
and $U$ is an open Riemannian neighborhood of $\overline{M}$, \emph{a germ of a Riemannian manifold}.
Morphisms in $\Embc$ are given by orientation-preserving conformal open embeddings
that are defined on neighborhoods of the cores.
The germ of the closed disk $\overline{\bD}_d$ in flat Euclidean space defines an object of $\Embc$.
We denote by $\Disk$ the full monoidal subcategory of $\Embc$ generated by $\overline{\bD}_d$.
The morphisms in $\Disk$ refine \eqref{eq_intro_geom} to
\begin{align}
\CE(n)
= \Hom_{\Embc}(\sqcup_n \overline{\bD}_d,\overline{\bD}_d)
=
\{(f_1,\dots,f_n) \in \CE(1)^n \mid \overline{f_i(\bD_d)}\cap \overline{f_j(\bD_d)} = \emptyset \}.
\label{eq_intro_geom2}
\end{align}

Here, $\CE(1)$ coincides with $\CEc(1)$ for $d \geq 3$, but for $d=2$, it becomes genuinely smaller because an injective holomorphic map $f:\bD_2 \rightarrow \bD_2$ may not extend to a neighborhood of $\overline{\bD}_2$.
Our key result is that for $d \geq 3$ and for the basic examples, the necessary and sufficient condition for the operators associated to elements of $\CEc(n)$ to be bounded is precisely when they lie in $\CE(n) \subset \CEc(n)$ (Theorem \ref{thm_unbounded}). (In dimension two this situation is more subtle; see \cite{MBergman}.)
We call  this operad a \textbf{conformally flat d-disk operad}.
(Our definition is also motivated by the definition of functorial CFT by Stolz and Teichner;
see Remark~\ref{rem_cobordism}.)

Let $\cC$ be a cocomplete symmetric monoidal category with a colimit distributive tensor product.
A $\CE$-algebra in $\cC$ is equivalently a symmetric monoidal functor $\Disk \rightarrow \cC$,
and its (1-categorical) left Kan extension
\begin{equation}
\label{eq_intro_left2}
\begin{tikzcd}
\Disk
  \arrow[r,"A"]
  \arrow[d,hook,"i"']
& 
\mathcal{C}
\\
\Embc
  \arrow[ru,dashed,"\mathrm{Lan}_{A}"']
& 
\end{tikzcd}
\end{equation}
defines a symmetric monoidal functor $\mathrm{Lan}_A: \Embc \rightarrow \cC$ (see Proposition~\ref{prop_Left_Kan}).
In particular, this yields invariants for conformally flat manifolds.
%
When $\cC$ is the category of vector spaces $\Vect$, a $\CE$-algebra gives a collection
of products on a vector space $A$ compatible with the operadic composition (Fig. \ref{fig_intro_operad})
\begin{align}
\rho_n: \CE(n) \rightarrow \Hom_\Vect(A^{\otimes n},A),
\qquad\qquad (n \geq 0).
\label{eq_intro_product}
\end{align}
In this case, for each object $\overline{M} \in \Embc$, giving
$\chi \in \mathrm{Hom}_\Vect\bigl(\mathrm{Lan}_A(\overline{M}),\C\bigr)$
is equivalent to giving a family of maps
\begin{align}
\chi_n:
\mathrm{Hom}_{\Embc}(\sqcup_n \overline{\bD},\overline{M})
\rightarrow
\Hom_{\Vect}(A^{\otimes n},\C),
\qquad\qquad (n \geq 0),
\label{eq_intro_state}
\end{align}
which is compatible with the $\CE$-operad action
(Fig. \ref{fig_intro_module})
\begin{align}
\CE(n) \times
\mathrm{Hom}_{\Embc}(\sqcup_m \overline{\bD},\overline{M})
\rightarrow
\mathrm{Hom}_{\Embc}(\sqcup_{n+m-1} \overline{\bD},\overline{M}).
\label{eq_intro_ope}
\end{align}
We call such a collection $\chi$ a \emph{state} of $A$ on $\overline{M}$.
When $\overline{M}$ has no boundary, we expect a state to recover the \emph{partition function} on $\overline{M}$ in physics,
while for manifolds with boundary, states should encode the boundary degrees of freedom of the CFT.

In the theory of vertex operator algebras, the compatibility condition
\eqref{eq_intro_ope} is naturally interpreted as compatibility with the operator product expansion,
appearing in the definition of conformal blocks \cite{TUY,Zh}.
From this viewpoint, the left Kan extension \eqref{eq_intro_left2} can be viewed as
a Riemannian-geometric analogue of conformal blocks and chiral homology \cite{BD},
without {relying on holomorphicity}.

\vspace{3mm}

\noindent
\begin{center}
{0.2. \bf 
$\CE$-algebras with Hilbert space filtrations
}
\end{center}

A \emph{Hilbert space filtration} of a $\CE$-algebra $A$ in $\Vect$ is a family of subspaces
\begin{align*}
A = \cup_{k=0}^\infty H^k, \qquad H^0 \subset H^1 \subset H^2 \subset \dots
\end{align*}
such that each $H^k$ is a Hilbert space and the inclusion $H^k \subset H^{k+1}$ is isometric, and
each $H^k$ is preserved by the action of the monoid $\CE(1)$, and
for any $n \geq 1$ and $k_1,\dots,k_n \geq 0$, there exists $K \geq 0$ such that the restriction of \eqref{eq_intro_product} to $H^{k_1}\otimes \cdots \otimes H^{k_n} \subset A^{\otimes n}$ has image contained in $H^K$, and defines a bounded operator between Hilbert spaces
\footnote{
In Hilbert spaces, a continuous bilinear map $H_1 \times H_2 \rightarrow H$ and
a bounded operator $H_1 \hotimes H_2 \rightarrow H$ from the completed tensor product are not the same notion.
In this paper we adopt the latter, as required by the definition of the tensor product in the category of ind-Hilbert spaces $\Ind$ (see Definition \ref{def_ind_CF}).}.

Let $\Hilb$ be the category of separable Hilbert spaces and bounded operators, and let $\Ind$ be its ind-category.
A $\CE$-algebra $A$ equipped with a Hilbert space filtration naturally determines a symmetric monoidal functor
$\Disk \rightarrow \Ind$.
The filtration is important because it allows one to impose continuity on \eqref{eq_intro_state} and thereby
to introduce natural notions such as continuous states and closed ideals of $A$.
We call a $\CE$-algebra \emph{simple} if it has no nontrivial closed ideals.

We are interested in when the continuous states (partition functions) obtained via left Kan extension are nontrivial.
Given a continuous state $\chi_M$ on $\overline{M}$, define its \emph{radical} $N(\chi_M)$ to be the subspace of $A$
consisting of those $v \in A$ such that
\begin{align*}
\chi_M(\iota)(a_1,\dots,a_{n-1},v)=0
\end{align*}
for any $n \geq 1$, $\iota \in \mathrm{Hom}_{\Embc}(\sqcup_n \overline{\bD},\overline{M})$ and
$a_1,\dots,a_{n-1} \in A$.
By continuity of the state, $N(\chi_M)$ is a closed ideal. Hence, if $A$ is simple then any non-zero continuous state on a conformally flat manifold has trivial radical (Proposition \ref{prop_null_ideal}).
%
%
Among the states of $A$ on $\overline{\bD}$ there is a distinguished one, called the \emph{vacuum expectation value} in physics.
Under additional assumptions to be stated later, simplicity of $A$ can be determined by the triviality of the radical of the vacuum expectation value (Proposition \ref{prop_criterion_simple}).
\vspace{3mm}

\noindent
\begin{center}
{0.3. \bf Reflection positivity in conformal field theory}
\end{center}

Here we explain why we choose the unit disk as the local geometric model.
In topological field theory, unitarity can be incorporated by equipping the
cobordism category with a dagger structure \cite{Atiyah,Steh2}.
In our setting, it is natural not only to take the target category to be
$\Ind$, but also to impose additional positivity conditions.

In Euclidean QFT, the natural notion of positivity is
Osterwalder--Schrader reflection positivity \cite{OS1,OS2}, which is defined by choosing an involution of $\R^d$,
\begin{align*}
\R^d \rightarrow \R^d,\quad (x_1,x_2,\dots,x_d) \mapsto (-x_1,x_2,\dots,x_d).
\end{align*}
The choice of involution is non-canonical; however, any two such choices are equivalent by
$\SO(d)$-symmetry.
On the other hand, for vertex operator algebras (VOA) \cite{B1,FLM}, which provide a formulation
of two-dimensional chiral CFT
(see also \cite{FMS} from the physics viewpoint),
 unitarity is defined by the involution
$z \mapsto \bar z^{-1}$ \cite{FHL,Li,DL}.
For a unitary VOA $V$, its underlying vector space admits a Hilbert space completion
$\overline{V}^{\mathrm{Hilb}}$, and Carpi-Kawahigashi-Longo-Weiner showed that
(Minkowskian) Wightman fields on $S^1=\{z \in \C \mid z = \z^{-1} \}$ act on $\overline{V}^{\mathrm{Hilb}}$ \cite{CKLW}.
Moreover, Adamo-Tanimoto and the author showed that Osterwalder--Schrader reflection positivity
corresponds to the unitarity of full VOAs \cite{AMT}.
Under the Wick rotation $t\mapsto i\tau$, Minkowski spacetime
$e^{it+i\theta}\in S^1\times S^1$ is mapped to $e^{-\tau+i\theta}\in \bD_2$
\cite{AGT,AMT}.
From this perspective, our local model $\Disk$ generated by $\bD_d$
 is naturally associated with the reflection positivity of the inversion
\begin{align}
J_{\mathrm{RP}}:\R^d \rightarrow \R^d,\quad (x_1,x_2,\dots,x_d) \mapsto
\left(\frac{x_1}{\norm{x}^2},\frac{x_2}{\norm{x}^2},\dots,\frac{x_d}{\norm{x}^2}\right).
\label{eq_intro_reflection}
\end{align}
The choice of reflection positivity \eqref{eq_intro_reflection} also suggests that
Hermitian conditions should be imposed on the products of a $\CE$-algebra.

In this paper, we impose the following conditions on the monoid homomorphism
$\rho_1:\CE(1)\rightarrow \mathrm{End}\,A$:
\begin{enumerate}
\item[(U)]
The restriction of $\rho_1$ to 
$\{f\in \CE(1)\mid J_{\mathrm{RP}} f J_{RP}^{-1} =f \}\cong \SO^+(d,1)$
defines a strongly continuous unitary representation on $H^k$ for each $k\geq 0$.
\item[(D)]
Dilations, shrinking the disk by a factor $r\in(0,1)$, define a strongly continuous,
self-adjoint, contractive semigroup representation by Hilbert-Schmidt operators.\footnote{
The assumption that these operators are Hilbert--Schmidt is quite strong; physically,
it corresponds to restricting attention to so-called \emph{compact CFTs} (see also \cite{M1}).}
\end{enumerate}
Note that the dilation $D(r)$ satisfies $J_{\mathrm{RP}} D(r) J_{RP}^{-1} = D(r^{-1})$.
By condition (D), the $\CE$-algebra $A$ decomposes into the eigenspaces for dilations  (Proposition \ref{prop_semigroup}),
\begin{align}
A=\bigoplus_{\Delta\geq 0} A_\Delta\qquad \text{in $\Ind$},
\label{eq_intro_eigenspace}
\end{align}
where each $A_\Delta$ is finite-dimensional (in CFT, $\Delta$ is
called the conformal dimension).

We prove that if $A$ is a $\CE$-algebra equipped with a Hilbert space filtration satisfying (U) and (D) with $\dim A_0=1$, then the left Kan extension to the sphere $(S^d,g_\std)$ satisfies
\begin{align}
\dim \Hom_{\Ind}(\mathrm{Lan}_A(S^d,g_\std),\C)^{\Conf^+(S^d,g_\std)} = 1.
\label{eq_intro_coh}
\end{align}
Here \eqref{eq_intro_coh} denotes the fixed subspace under the action of the
orientation-preserving conformal diffeomorphism group $\Conf^+(S^d,g_\std)$.
In other words, there exists a unique $\Conf^+(S^d,g_\std)$-invariant partition function on the standard sphere $S^d$,
and it is obtained via left Kan extension.

The eigenspace decomposition \eqref{eq_intro_eigenspace} also plays a role in the construction of
correlation functions. For $n\geq 1$, let
\begin{align*}
\Conf_n(\R^d)= \{(x_1,\dots,x_n) \in (\R^d)^n \mid x_i \neq x_j \text{  for  }i\neq j\}
\end{align*}
be the configuration space of $n$ ordered points.
The products of a $\CE$-algebra give rise to multilinear maps
\begin{align}
Y_{x_1,\dots,x_n}:(A^\alg)^{\otimes n}\longrightarrow \overline{A^\alg},
\qquad (x_1,\dots,x_n)\in \Conf_n(\R^d),\label{intro_Y}
\end{align}
where $A^\alg=\bigoplus_{\Delta\geq 0}^\alg A_\Delta$ and
$\overline{A^\alg}=\prod_{\Delta\geq 0}A_\Delta$ are the algebraic direct sum and the direct product of vector spaces, respectively.
After pairing with elements of $A^\alg$, these maps give correlation functions on
configuration spaces.
We show that the boundedness of the operadic products implies a clustering property: when one
configuration is inserted into another in a separated region, the corresponding
composition of \eqref{intro_Y} is expressed by an absolutely convergent sum with respect to the eigenspace
decomposition (Theorem~\ref{thm_Y}).

\vspace{3mm}

\noindent
\begin{center}
{0.4. \bf 
Example from harmonic analysis
}
\end{center}

We now turn to the examples, which form the central part of the paper. For $d\geq 3$,
we construct explicit $\CEd$-algebras using harmonic polynomials and reproducing kernel
Hilbert space methods. The key point is that the higher operations, which are a priori
unbounded from the viewpoint of QFT, become bounded on each stage
of the Hilbert space filtration precisely under the geometric separation condition encoded
in $\CEd$.

The vector space of harmonic polynomials in $d$ variables admits a natural inner
product, and its Hilbert space completion will be denoted by $\Hf$. This space has a
realization as a reproducing kernel Hilbert space on the disk, and carries a unitary
representation of $\mathrm{SO}^+(d,1)$.
Using the reproducing kernel structure, we obtain a representation of the monoid $\CE(1)$
satisfying conditions (U) and (D),
\begin{align*}
\rho_1: \CE(1) \rightarrow \mathbb{B}(\Hf),
\end{align*}
where $\mathbb{B}(\Hf)$ denotes the $C^*$-algebra of bounded linear operators on $\Hf$ (Proposition \ref{prop_unitary}).

The nontrivial part of the construction lies in defining the higher products and proving their boundedness.
For any conformal disk embedding $(f_1,f_2) \in \CE(2)$, we construct a bounded operator
from the Hilbert space tensor product,
\begin{align}
C_{f_1,f_2}:\Hf \hat{\otimes} \Hf \rightarrow \R,
\label{eq_intro_contraction}
\end{align}
which satisfies, for any $g,h_1,h_2 \in \CE(1)$,
\begin{align}
C_{g f_1, g f_2}=C_{f_1,f_2}, \qquad
C_{f_1 h_1, f_2 h_2}
= C_{f_1,f_2}\circ (\rho_1(h_1)\otimes \rho_1(h_2))
\label{eq_intro_cont_inv}
\end{align}
(see Theorem \ref{thm_wick_contraction}).
The construction proceeds as follows:
(I) we define $C_{f_1,f_2}$ formally as a linear map on a dense subspace of $\Hf$;
(II) we show that, when $f_1$ and $f_2$ are given by compositions of dilations and translations,
this operator admits a bounded closure;
(III) using \eqref{eq_intro_cont_inv}, we extend boundedness to arbitrary elements of $\CE(2)$.

Step (II) is the key observation: it shows that operators which are a priori unbounded become  bounded under the geometric separation conditions \eqref{eq_intro_geom2}. On the other hand, for an element of the naive operad $(f_1,f_2) \in \CEc(2) \setminus \CE(2)$ in \eqref{eq_intro_geom}, we prove that $C_{f_1,f_2}$ is an unbounded operator (Theorem \ref{thm_unbounded}). These results are proved by directly estimating the norm of $C_{f_1,f_2}$ using a combinatorial description of the inner product on harmonic polynomials (Theorem \ref{thm_bounded}).
Using this geometric generalization of Wick contraction, we obtain
a nontrivial $\CE$-algebra structure
\begin{align*}
\rho_n^\Psi:\CE(n) \rightarrow 
\mathrm{Hom}_{\Ind}\left((\Sym\Hf)^{\otimes n},\Sym\Hf\right)
,\qquad n\geq 0
\end{align*}
on $\Sym \Hf = \bigoplus_{p\geq 0}^\alg \Sym^p \Hf$
(Theorem \ref{thm_CF}).
Here $\Sym \Hf$ is the algebraic direct sum of the Hilbert spaces $\Sym^p\Hf$ equipped with the Hilbert space filtration
$\Sym^{\leq k}\Hf=\bigoplus_{p=0}^k\Sym^p\Hf$.
It is not complete as a Hilbert space
and its natural Hilbert space completion is the Fock space
$\mathcal{F}_\Hf=\widehat{\bigoplus}_{p\geq 0}\Sym^p\Hf$.

We then ask whether the operadic products extend to bounded operators on this Hilbert space completion.
Let $B_{r_i}(a_i)$, $i=1,\ldots,N$, be mutually disjoint balls contained in $\bD$, and 
define \(g_i:\bD\to \bD\) by \(g_i(x)=r_i x+a_i\). Then \(\phi=(g_1,\ldots,g_N)\in\CE(N)\).
We show that if the corresponding operation $\rho_N^\Psi(\phi):\Sym \Hf^{\otimes N} \rightarrow \Sym \Hf$ extended to a bounded operator on $\mathcal{F}_\Hf^{\hat{\otimes}N}$, then one would have
\begin{align}
\sum_{i, j=1}^N \frac{r_i^{\frac{d-2}{2}}r_j^{\frac{d-2}{2}}}{(1-2(a_i,a_j)+\norm{a_i}^2\norm{a_j}^2)^{\frac{d-2}{2}}}
+
\sum_{i\neq j}
\frac{r_i^{\frac{d-2}{2}}r_j^{\frac{d-2}{2}}}{\norm{a_i-a_j}^{d-2}}
\leq N.
\label{intro_inequality1}
\end{align}
For $d\geq 3$ this inequality is violated by suitable configurations with $N$ sufficiently large. This gives an explicit obstruction to extending the $\CE$-algebra structure to the Hilbert space completion (Theorem \ref{thm_app_unbounded}).

It is worth noting that the analogous inequality arising from the two-dimensional $\CEt$-algebras is a Grunsky--Nehari type inequality in complex analysis, and holds for every $N\geq 2$ (see Remark \ref{rem_app_2d}). Thus the obstruction used here is absent in dimension two. 
The case $d=2$ is treated in \cite{MBergman}, where $\CEt$-algebras in Ind-Hilbert spaces are constructed using Bergman spaces. These
constructions are closely related to the factorization algebras associated with the
conformal Laplacian \cite{Mfactorization}.

\vspace{3mm}

The organization of this paper is as follows. Section 1 develops the geometric framework: we define the category of Riemannian germs and conformal embeddings, the conformally flat disk operad $\CE$, and the notion of a $\CE$-algebra with Hilbert space filtration. Section 2 constructs the concrete examples from harmonic analysis. After recalling the action of $\mathfrak{so}(d+1,1)$ on harmonic polynomials and proving the fundamental norm estimate, we construct the contraction operators and the corresponding $\CE$-algebras in $\Ind$. We also prove that, for $d\geq 3$, these examples do not extend to $\CE$-algebras on their natural Hilbert space completions. 
Section 3 studies left Kan extensions and continuous states, applies them to the construction of conformally invariant states on the sphere, and finally extracts correlation functions on configuration spaces from $\CE$-algebras, 
and proves their clustering properties.
Appendix~\ref{app_remark} collects basic results on the action of $\Conf^+(S^d)$ on $S^d$, while
Appendix~\ref{app_Hilb} reviews standard facts about the category of Hilbert spaces and its ind
category.

\vspace{3mm}
\begin{center}
\textbf{\large Notations}
\end{center}

\vspace{3mm}
We will use the following notations:
\begin{itemize}
\item[$\text{[}n\text{]}$:] the finite set $\{1,2,\dots,n\}$ for $n \geq 1$.
\item[$(M,g)$:] A pair of a manifold $M$ and a Riemannian metric $g$.
\item[$\Vect$:] the category of $\C$-vector spaces.
\item[$\Hilb$:] the category of Hilbert spaces with bounded linear operators, Appendix \ref{app_Hilb}
\item[$\Ind$:] the category of ind-objects in $\Hilb$, Appendix \ref{app_Hilb}
\item[$\otimes$:] the algebraic tensor product of vector spaces.
\item[$\hotimes$:] the tensor product of Hilbert spaces, Appendix \ref{app_Hilb}.
\item[$T_a$:] the translation $x \mapsto x+a$ for $a \in \R^d$, \S \ref{sec_conf_sphere}
\item[$r^D =D(r)$:] the dilation $x \mapsto rx$ for $r\in (0,1)$, \S \ref{sec_conf_sphere}
\item[$P_\mu,J_{\mu\nu}, K_\mu,D$:] a basis of the Lie algebra $\sod$, \S \ref{sec_conf_sphere}
\item[$V_\Delta$:] the eigenspace of the dilation operators with eigenvalue $r^\Delta$, \S\ref{sec_ind_Hilb}.
\item[$V^\alg$:] the algebraic core $\bigoplus_{\Delta\geq0}^\alg V_\Delta$, \S\ref{sec_ind_Hilb}.
\end{itemize}

\vspace{5mm}

\section{Operad in conformally flat geometry}\label{sec_operad}
In this section we introduce an operad that serves as a local model for conformally flat geometry, and we define its algebras in ind Hilbert spaces.
Throughout this paper, by a manifold we mean a smooth, oriented, second countable manifold, and we write $(M,g)$ for a manifold $M$ equipped with a Riemannian metric $g$. We also assume that the dimension of a manifold is always $d \geq 2$.

In Section \ref{sec_category}, we introduce and study the symmetric monoidal category $\Embc$ of conformal Riemannian manifolds, its full subcategory $\Disk$ generated by disks, and the associated left Kan extensions.
In Section \ref{sec_conf_sphere}, we describe explicitly the conformal transformation group of the standard sphere and investigate its properties. 
Although much of the material in Section \ref{sec_conf_sphere} is standard, we provide self-contained proofs since it plays an important role in this paper. 
Some propositions are proved in Appendix \ref{app_remark}.
In Section \ref{sec_def_CF}, based on Liouville's theorem, we discuss some basic properties of $\Disk$-algebras.
Finally, in Section \ref{sec_ind_Hilb} we introduce the notion of a $\Disk$-algebra equipped with a Hilbert space filtration and impose several additional assumptions on the induced monoid action.
Under these assumptions, we study the behavior of closed ideals.
The properties of the category of Hilbert spaces and of its ind-category are reviewed in Appendix \ref{app_Hilb}.


\subsection{Categories of Riemannian manifolds with conformal open embeddings}\label{sec_category}

\begin{dfn}\label{def_conformal_map}
A local diffeomorphism $f : (M_1,g_1) \rightarrow (M_2,g_2)$ between Riemannian manifolds is called a {\bf conformal map} if there exists a smooth function $\Om_f : M_1 \rightarrow \R$ with $\Om_f>0$ such that $f^*(g_2)= \Om_f^2 g_1$. The positive smooth function $\Om_f$ is called a \textbf{conformal factor}. 
\end{dfn}

For a Riemannian manifold $(M,g)$, set
\begin{align*}
\Conf(M,g) &= \{f \in \mathrm{Diff}(M) \mid f \text{ is a conformal map}\}\\
\Conf^+(M,g) &= \{f \in \mathrm{Diff}(M) \mid f \text{ is an orientation preserving conformal map}\}.
\end{align*}
Here $\mathrm{Diff}(M)$ is the group of all diffeomorphisms on $M$.

Let $\Emb$ be the category whose objects are oriented Riemannian manifolds $(M,g)$ of dimension $d$ without boundary (not necessarily compact), and whose morphisms are smooth open embeddings that are orientation-preserving conformal maps. We denote the morphisms of this category by $\Emb(M,N)$.
The category $\Emb$ is equipped with a symmetric monoidal structure given by disjoint union of manifolds with the empty set as the unit.


Let $g_\std=dx_1^2+\dots+dx_d^2$ be the standard metric on $\R^{d}$.
Set
\begin{align*}
\bD_d = \{(x_1,\dots,x_d) \in\R^d\mid ||x||<1 \},
\end{align*}
the unit disk equipped with the standard Riemannian metric $g_{\std}$.
When the dimension is clear, we often write $(\bD_d, g_\std)$ simply as $\bD$ for brevity.
Let $\tDisk$ denote the full subcategory of $\Emb$ whose objects are disjoint unions of disks $\{\sqcup_n \bD\}_{n \geq 0}$.
The category $\tDisk$ is the full subcategory generated as a monoidal category by the flat disk $\bD$.
\begin{dfn}\label{def_operad}
For $n \geq 0$, set
\begin{align*}
\CEc(n)=\Emb(\underbrace{\bD \sqcup \cdots \sqcup \bD}_n,\bD).
\end{align*}
The collection $\{\CEc(n)\}_{n \geq 0}$ has the structure of a symmetric operad via the operation of composing into the $i$-th component ($i \in \{1,\dots,n\}$)
\begin{align*}
\circ_i:\CEc(n) \times \CEc(m) \rightarrow \CEc(n+m-1)
\end{align*}
The element $(\id_\bD:\bD \rightarrow \bD)\in \CEc(1)$ is the unit, and 
for the empty disk embedding $*=(\emptyset \rightarrow \bD) \in \CE(0)$, the composition
\begin{align*}
\circ_i:\CEc(n) \times \CEc(0)=\{*\} \rightarrow \CEc(n-1),\quad\phi_{[n]} \mapsto \phi_{[n]}\circ_i *
\end{align*}
is the operation that forgets the $i$-th disk of $\phi_{[n]}$ and regards it as an element of $\CEc(n-1)$.
The permutation group $S_n$ acts on $\CEc(n)$ by permuting the disks.
\end{dfn}

For $\phi \in \CEc(n)$, since $\sqcup_{n} \bD$ is a disjoint union, $\phi$ can be
written as the disjoint union of maps $\phi_i: \bD \rightarrow \bD$ ($i=1,\dots n$). Hence,
\begin{align*}
\CEc(n) = \{(\phi_1,\dots,\phi_n) \in \CEc(1)^n \mid \phi_i(\bD) \cap \phi_j(\bD) = \emptyset \text{ for any }i\neq j \}.
\end{align*}

Note that $\phi \in \CEc(1) = \Emb(\bD,\bD)$ is not always surjective. For example, for $a\in\R^d$ and $r>0$, set
\begin{align}
B_r(a) = \{x \in \R^d \mid |x-a|<r\}.\label{eq_ball_trans}
\end{align}
If $B_r(a) \subset \bD$, then the map $\bD \rightarrow \bD$ defined by $x \mapsto rx+a$ is an element in $\CEc(1)$.
More generally, if $r_1,\dots,r_n >0$ and $a_1,\dots,a_n \in \bD$ satisfy $B_{r_i}(a_i) \subset \bD$ and
\begin{align}
B_{r_i}(a_i) \cap B_{r_j}(a_j) = \emptyset \qquad \text{for any }i\neq j\label{eq_sec1_disj}
\end{align}
then this determines an element of $\CEc(n)$.
The condition \eqref{eq_sec1_disj} allows the boundaries of the disks to have an intersection.

Our observation is that in the case of scalar free field, the structure maps defining a $\CEc$-algebra do not give bounded operators
for configurations in which the boundaries have  non-empty intersection.
On the other hand, in order to define a category, it is essential that the identity map
$\id_\bD:\bD \rightarrow \bD$ is included.
For this purpose, we consider the following category
(this definition is motivated by the cobordism category of Stolz--Teichner; see Remark \ref{rem_cobordism}).
\begin{dfn}\label{def_cat_compact}
An object of $\Embc$ is a triple $(U,g,M)$, where $(U,g)$ is an oriented $d$-dimensional Riemannian manifold without boundary, and $M \subset U$ is a compact closed subset equipped with a smooth structure as a $d$-dimensional manifold, possibly with boundary,  such that the inclusion $M \rightarrow U$ is a smooth embedding (see Remark \ref{rem_smooth_embedding}).

A morphism from $(U_1,g_1,M_1)$ to $(U_2,g_2,M_2)$ is an equivalence class of maps $f$, each defined on some neighborhood $V_f$ of $M_1$ in $U_1$,
$M_1 \subset V_f \subset U_1$, 
 where $f:(V_f,g_1|_{V_f}) \rightarrow (U_2,g_2)$ is an orientation-preserving conformal open embedding satisfying $f(M_1) \subset M_2$.
Two such morphisms $(f,V_f)$ and $(f',V_{f'})$ are said to be equivalent if there exists an open set $V \subset U_1$ with $M_1 \subset V \subset V_f \cap V_{f'}$ such that $f|_V = f'|_V$.
We define $\Embc$ to be the category whose morphisms are these equivalence classes $[f]$.
With disjoint union, this category becomes a symmetric monoidal category.
\end{dfn}

\begin{rem}
\label{rem_smooth_embedding}
Let $(U,g,M) \in \Embc$.
Then an interior point $p \in M \setminus \pa M$ has the same local coordinates as those of $U$.
If $p \in \pa M$ is a boundary point, then there exist an open neighborhood $V_p \subset U$ of $p$ and a local coordinate chart
$\phi:V_p \rightarrow \R^d$ of $U$ such that
$\phi: M \cap V_p \rightarrow \{(x_1,\dots,x_d) \in \R^d \mid x_d \geq 0 \} \cap \phi(V_p)$ and
$\phi: \pa M \cap V_p \rightarrow \{x_d=0\} \cap \phi(V_p)$
(codimension zero slice).
The smooth structure on $M$ is given in this way (see, for example, \cite[Theorem 5.51]{Lee}).
\end{rem}


The triple $(U,g,M)$ encodes the information of a neighborhood of the compact Riemannian manifold $M$.
We call $M$ the \textbf{ core}, and the equivalence class determines a Riemannian manifold germ around $M$ (see also \cite{Milnor}).
For example, for a compact Riemannian manifold $(M,g)$ without boundary, $(M,g,M)$ is an object of $\Embc$.
Also, $(\R^d,g_\std,\overline{\bD})$ defines an object of this category, where $\overline{\bD}=\{x\in\R^d\mid |x|\leq 1\}$.
The following proposition is immediate from the definition.
\begin{prop}\label{prop_category_mor}
Let $(U,g,M) \in \Embc$. Then the following statements hold in the category $\Embc$:
\begin{enumerate}
\item
If $M$ has no boundary, then $M \subset U$ is both open and closed, and $(U,g,M)$ is isomorphic to $(M,g|_M,M)$.
\item
If $M=\emptyset$, then $(U,g,\emptyset)$ is an initial object in $\Embc$.
\item
If $M$ is a compact Riemannian manifold without boundary, then the automorphism group of $(M,g,M)$ in $\Embc$
coincides with $\Conf^+(M,g)$.
\end{enumerate}
\end{prop}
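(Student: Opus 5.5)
For (1), we use the slice-chart description of Remark~\ref{rem_smooth_embedding}. If $\pa M=\emptyset$, every point $p\in M$ is an interior point of the $d$-manifold $M$. Since $M\hookrightarrow U$ is a smooth embedding of manifolds of the same dimension $d$, invariance of domain (or the inverse function theorem applied to the embedding) shows that $p$ has a neighbourhood in $U$ contained in $M$. Hence $M$ is open in $U$. It is also closed, being compact in the Hausdorff space $U$.

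To get the isomorphism, we take $\iota:(M,g|_M,M)\to(U,g,M)$ to be the inclusion $M\hookrightarrow U$, which is defined on the neighbourhood $M$ of the core. In the other direction we take $\pi:(U,g,M)\to(M,g|_M,M)$ to be the identity map on the open neighbourhood $V=M$ of $M$ in $U$. Both maps are orientation-preserving isometries, hence conformal open embeddings, and both send core to core. The composite $\pi\circ\iota$ is literally $\id_M$. The composite $\iota\circ\pi$ is represented by $\id_M$ on $V=M$, which is equivalent to $\id_U$ because the two agree on the neighbourhood $M$ of the core. This equivalence of germs is the only point that needs any care.

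For (2), a morphism $(U,g,\emptyset)\to(U',g',M')$ is a germ of a conformal open embedding defined on a neighbourhood $V$ of $\emptyset$. The empty map on $V=\emptyset$ qualifies and satisfies $f(\emptyset)\subset M'$ trivially, so a morphism exists. Any two representatives $(f,V_f)$ and $(f',V_{f'})$ agree on the open set $V=\emptyset$, which contains the core, so they are equivalent. Hence there is exactly one morphism, and $(U,g,\emptyset)$ is initial.

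For (3), take $U=M$. Every neighbourhood of the core is then all of $M$, so a morphism $(M,g,M)\to(M,g,M)$ is just an orientation-preserving conformal open embedding $f:M\to M$, with no quotient by germ equivalence. Such an $f$ is an open map, and its image is compact, hence closed. Since $M$ is compact, each connected component is open, so $f(M)$ meets only components that it covers entirely.

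There is one subtlety. If $M$ had infinitely many components, a self-embedding could fail to be surjective, but compactness means $M$ has only finitely many components. An open embedding $M\to M$ maps each component injectively onto a component, so $f$ permutes the finitely many components injectively and is therefore surjective. Thus $f$ is a diffeomorphism, and its inverse is again an orientation-preserving conformal map. It follows that $\mathrm{End}(M,g,M)=\mathrm{Aut}(M,g,M)=\Conf^+(M,g)$, with composition matching composition of maps.
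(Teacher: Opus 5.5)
Your proposal is correct, and it is the direct unwinding of Definition~\ref{def_cat_compact} that the paper intends when it calls the proposition immediate from the definition; the paper gives no further proof. Your component-counting argument in (3) proves something stronger than needed, namely that every endomorphism of $(M,g,M)$ is invertible; this is correct, but the identification of the automorphism group only requires that the invertible morphisms are exactly the orientation-preserving conformal diffeomorphisms.
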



Set $M^\circ = M \setminus \pa M$ to be the set of interior points of $M$.
Define a functor $F: \Embc \rightarrow \Emb$ by sending an object $(U,g,M)$ to the open submanifold $(M^\circ, g|_{M^\circ})$ of the core $M$, and by sending morphisms to their restrictions to the interior.
This defines a symmetric monoidal functor.
\begin{prop}\label{prop_faithful}
For any $d \geq 2$, $F: \Embc \rightarrow \Emb$ is a faithful functor.
\end{prop}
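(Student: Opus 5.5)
Faithfulness amounts to this: if two morphisms $[f],[f']:(U_1,g_1,M_1)\to(U_2,g_2,M_2)$ have the same restriction to $M_1^\circ$, then $[f]=[f']$. So I would take representatives $(f,V_f)$ and $(f',V_{f'})$ with $f|_{M_1^\circ}=f'|_{M_1^\circ}$. The task is to produce an open set $V$ with $M_1\subset V\subset V_f\cap V_{f'}$ on which $f=f'$. The natural tool is a unique continuation principle for conformal maps. For $d\ge 3$ this is the rigidity of Liouville's theorem. For $d=2$ it is the identity theorem for holomorphic maps, after passing to local isothermal coordinates in which orientation-preserving conformal maps are holomorphic.

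First, by continuity $f=f'$ on the closure of $M_1^\circ$. Since $M_1$ is a smooth codimension-zero submanifold with boundary (Remark~\ref{rem_smooth_embedding}), every boundary point is a limit of interior points, so $f=f'$ on all of $M_1$. Next, set $W=V_f\cap V_{f'}$, an open neighbourhood of $M_1$. Let $W_0$ be the union of those connected components of $W$ that meet $M_1$. I claim $f=f'$ on $W_0$, and then $V=W_0$ works.

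Fix a component $C$ of $W$ meeting $M_1$. If $C$ meets $M_1$ only in boundary points, the slice charts of Remark~\ref{rem_smooth_embedding} show that $C$ still meets $M_1^\circ$, because a neighbourhood of a boundary point contains a half-ball of interior points. So $C\cap M_1^\circ$ is a nonempty open set on which $f=f'$. Let $E=\{p\in C \mid f=f' \text{ on a neighbourhood of } p\}$. This set is open and nonempty, and it remains to show it is closed in $C$; connectedness of $C$ then gives $E=C$.

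Closedness is the main step, the unique continuation claim. Take $p\in\overline{E}\cap C$ and a small connected coordinate ball $B\subset C$ around $p$ whose image under $f$ lies in a chart of $U_2$.
\begin{itemize}
\item For $d=2$: in isothermal coordinates on $B$ and on the target, $f$ and $f'$ are holomorphic and agree on the nonempty open set $B\cap E$. By the identity theorem they agree on $B$.
\item For $d\ge3$: the map $f'\circ f^{-1}$ is a conformal map between open subsets of $(U_2,g_2)$ that equals the identity on an open set. A conformal map is determined by its $2$-jet at a point, since conformal Killing fields and conformal maps satisfy a finite-type overdetermined system. Its $2$-jet agrees with that of the identity on the set where the two maps agree. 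The set where the jets agree is closed by continuity, and open by the finite-type property: the conformal Killing-type prolongation shows that two conformal maps with equal $2$-jets at a point coincide near it. Hence $f=f'$ on $B$.
\end{itemize}

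I expect the $d\ge3$ step to be the main obstacle, because it needs the unique continuation statement for conformal maps on general Riemannian manifolds, not only the flat Liouville theorem. The simplest route is to invoke the standard fact that conformal transformations form a finite-type structure (a Cartan geometry), so that they are determined by a finite jet at one point. Alternatively, in the conformally flat case one can cite Liouville's theorem \cite{Liouville} locally, where both maps are restrictions of Möbius transformations. Given this, $p\in E$, so $E$ is closed and $f=f'$ on $W_0$, which proves faithfulness.
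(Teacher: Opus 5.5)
Your proof is correct and is essentially the paper's argument. Both reduce faithfulness to unique continuation of $f$ and $f'$ from the nonempty open set $M_1^\circ$ across connected open neighbourhoods of the boundary points: the paper uses a connected neighbourhood $W_p$ of each $p\in\partial M_1$, while you use the components of $V_f\cap V_{f'}$ meeting $M_1$. In both cases the tools are the identity theorem in dimension two (Lemma~\ref{lem_Kahler}) and the $2$-jet rigidity of Lemma~\ref{lem_rigid3} for $d\geq 3$.

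The finite-type fact you invoke first is exactly Lemma~\ref{lem_rigid3}, and it is the route actually needed. Objects of $\Embc$ need not be conformally flat, so the Liouville alternative would not suffice in general.
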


To prove this proposition, we need the following classical results.

Any two-dimensional Riemannian manifold without boundary admits local coordinates that are conformally equivalent to the flat disk $\bD_2$, called isothermal coordinates (see for example \cite{Che55}). An orientation-preserving smooth map $f:\bD_2 \rightarrow \bD_2$ is a conformal map if and only if $f$ is a holomorphic map whose derivative never vanishes. Hence, we have:
\begin{lem}\label{lem_Kahler}
An oriented Riemannian 2-manifold without boundary admits a unique complex structure compatible with the Riemannian metric and the orientation, and hence carries the structure of a K\"ahler manifold.
Moreover, for a local diffeomorphism $f:(M_1,g_1) \rightarrow (M_2,g_2)$ between oriented Riemannian 2-manifolds without boundary, $f$ is an orientation-preserving conformal map if and only if $f$ is a holomorphic map with nowhere vanishing derivative.
\end{lem}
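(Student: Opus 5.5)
The plan is to construct the complex structure locally from isothermal coordinates, check that it is independent of the chart, and then deduce the characterization of conformal maps from a pointwise linear-algebra statement in dimension two.

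\emph{Existence of the complex structure.} Around each point choose an oriented isothermal chart $\phi:V\to\bD_2$, i.e.\ an orientation-preserving diffeomorphism with $g|_V=\lambda\,\phi^*g_\std$ for some smooth $\lambda>0$; this exists by \cite{Che55}, composing with a reflection if necessary to fix orientation. The transition map between two such charts is an orientation-preserving local diffeomorphism of planar domains which is conformal for the flat metric.

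\emph{Key linear-algebra fact.} A real $2\times 2$ matrix $A$ with $\det A>0$ and $A^TA=c^2 I$ has the form $\begin{pmatrix} a & -b\\ b & a\end{pmatrix}$. In other words it is complex multiplication by $a+ib\neq 0$. Its Jacobian therefore satisfies the Cauchy--Riemann equations, so the transitions are holomorphic with nonvanishing derivative. Hence the oriented isothermal charts form a holomorphic atlas. The metric is $\lambda|dz|^2$ in these coordinates, so it is hermitian, and the associated $2$-form is a closed top form (automatically, in real dimension $2$). The structure is therefore Kähler.

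\emph{Uniqueness.} A compatible almost complex structure $J$ must be $g$-orthogonal, satisfy $J^2=-1$, and have $(v,Jv)$ positively oriented. At each point this forces $J$ to be rotation by $+\pi/2$, so $J$ is determined pointwise, and the integrable structure is unique.

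\emph{Characterization of conformal maps.} Let $f:(M_1,g_1)\to(M_2,g_2)$ be a local diffeomorphism and work in isothermal charts on both sides. There, $f^*g_2=\Omega^2 g_1$ says exactly that $Df$ has the form $A^TA=c^2I$ up to the positive conformal factors of the charts. Orientation preservation means $\det Df>0$. By the linear-algebra fact, the two conditions together are equivalent to $Df$ being $\C$-linear and invertible, i.e.\ $f$ holomorphic with nowhere-vanishing derivative. Conversely, a holomorphic $f$ with $f'\neq 0$ has $Df$ equal to multiplication by $f'$, which gives $f^*g_2=|f'|^2(\lambda_2\circ f/\lambda_1)\,g_1$ and positive determinant $|f'|^2$.

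The only genuinely nontrivial input is the existence of isothermal coordinates, which we cite. Everything else reduces to the linear-algebra fact, and no step should present a real obstacle.
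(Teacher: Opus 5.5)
Your proposal is correct and follows the paper's route: the paper obtains the lemma from the existence of isothermal coordinates (citing Chern) together with the fact that orientation-preserving conformal maps between planar domains are exactly the holomorphic maps with nonvanishing derivative. You additionally spell out what the paper leaves implicit, namely the $2\times 2$ linear-algebra fact behind that equivalence and the pointwise uniqueness and K\"ahler checks.
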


%
The structure group of oriented conformal geometry is given by the direct product of the orthogonal group and dilations, $\mathrm{CO}(n) = \mathrm{SO}(n)\times \R_{>0}$.
The following result is called the \emph{2-rigidity} of conformal geometry, and it follows immediately from the fact that the prolongation of $\mathrm{CO}(n)$ vanishes at order two for $d\geq 3$ (see \cite[Example 2.6, Theorem  3.2 and Theorem 5.1 in Chapter 1]{Kobayashi}):
\begin{lem}\label{lem_rigid3}
Let $(M_i,g_i)$ be oriented $d$-dimensional Riemannian manifolds with $d \geq 3$. Let
$f_1,f_2: (M_1,g_1) \rightarrow (M_2,g_2)$ be orientation-preserving conformal local diffeomorphisms.
Assume that $M_1$ is connected. If $f_1$ and $f_2$ agree up to their $2$-jets at a point $p \in M_1$, then $f_1=f_2$.
\end{lem}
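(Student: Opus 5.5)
The plan is to prove the statement by a standard open–closed argument on the connected manifold $M_1$. The local step will rest on the fact that, for $d\geq 3$, an orientation-preserving conformal local diffeomorphism satisfies a \emph{closed} first-order system of ordinary differential equations along curves. The unknowns of this system are the point $f(x)$, the differential $df_x$, the conformal factor $\Om_f(x)$ and its differential $d\Om_f(x)$, which together are exactly the data of the $2$-jet. This is the concrete content of the vanishing of the second prolongation of $\mathrm{CO}(d)$ quoted from \cite{Kobayashi}.

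\emph{Step 1 (the closed system).} Let $f:(M_1,g_1)\to(M_2,g_2)$ satisfy $f^*g_2=\Om^2 g_1$ and write $\omega=\log\Om$.
\begin{itemize}
\item The Levi-Civita connections are related by the standard formula
\begin{align*}
f^*\nabla^{2}_XY=\nabla^1_XY+X(\omega)Y+Y(\omega)X-g_1(X,Y)\,\mathrm{grad}\,\omega .
\end{align*}
Hence the second derivatives of $f$ in normal coordinates are expressed through $f$, $df$ and $d\omega$.
\item Compare the Schouten tensors $P_i=\frac{1}{d-2}\bigl(\mathrm{Ric}_i-\frac{\mathrm{scal}_i}{2(d-1)}g_i\bigr)$, which are defined since $d\geq 3$. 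They satisfy
\begin{align*}
f^*P_2=P_1-\mathrm{Hess}_{g_1}\omega+d\omega\otimes d\omega-\tfrac12|d\omega|^2_{g_1}g_1 .
\end{align*}
Solving this for $\mathrm{Hess}_{g_1}\omega$ expresses $\nabla d\omega$ in terms of $P_1$, of $P_2$ evaluated at $f(x)$, of $df$ (through the pullback) and of $d\omega$.
\end{itemize}
Together these give a smooth first-order system for $(f,df,\omega,d\omega)$, valued in a finite-dimensional bundle over $M_1\times M_2$. This is the step I expect to be the main obstacle. It must be checked carefully that the full Hessian of $\omega$, and not merely its trace-free part, is determined. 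That is precisely where $d\geq 3$ enters, through the factor $\frac{1}{d-2}$ in the Ricci transformation law; in $d=2$ the trace part is free and the lemma indeed fails.

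\emph{Step 2 (local uniqueness).} Suppose $f_1$ and $f_2$ have the same $2$-jet at $q$. Then they share the initial data $(f(q),df_q,\omega(q),d\omega(q))$, since $\Om$ is read off from $df$ and $d\Om$ from the $2$-jet. Restrict the system of Step~1 to the radial $g_1$-geodesics $t\mapsto\exp_q(tv)$ inside a normal ball around $q$. Uniqueness for ODEs then gives $f_1=f_2$ on that ball. In particular their $2$-jets agree at every point of the ball.

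\emph{Step 3 (globalization).} Let $S\subset M_1$ be the set of points where $f_1$ and $f_2$ agree to second order. The set $S$ is closed, because jets depend continuously on the point. It is open by Step~2, and it is nonempty because $p\in S$. Since $M_1$ is connected, $S=M_1$, and therefore $f_1=f_2$.
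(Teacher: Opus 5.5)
Your proof is correct, but it takes a more explicit route than the paper. The paper gives no argument of its own. It cites Kobayashi's theory of $G$-structures of finite type: for $d\geq 3$ the second prolongation of $\mathfrak{co}(d)$ vanishes, so the conformal structure prolongs to a canonical parallelism (an $\{e\}$-structure) on a bundle of second-order frames. Structure-preserving maps, lifted to that bundle through their $2$-jets, are then determined on a connected manifold by their value at a single point.

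You instead write down the closed first-order system for $(f,df,\omega,d\omega)$ directly, using the transformation laws of the Levi-Civita connection and the Schouten tensor. This is in effect the normal conformal Cartan (tractor) connection written in Riemannian terms. You then conclude by ODE uniqueness along radial geodesics and an open--closed argument.

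The two routes trade off differently:
\begin{itemize}
\item The citation is short and covers every finite-type structure at once. However, it leaves the reader to translate Kobayashi's statements, which concern automorphisms and isomorphisms of $G$-structures, to local diffeomorphisms $M_1\to M_2$ that need not be injective.
\item Your argument is self-contained and applies verbatim to exactly that setting, which is the one Proposition~\ref{prop_faithful} needs. It also shows concretely where $d\geq 3$ is used.
\end{itemize}

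One correction to your commentary. In $d=2$ the Ricci transformation law fixes only the trace $\Delta\omega$ of $\mathrm{Hess}\,\omega$, and it is the \emph{trace-free} part that is left free, not the trace part as you wrote. For $d\geq 3$, the trace-free part is recovered precisely because $d-2\neq 0$. Your Schouten identity already yields the full Hessian, so the proof itself is unaffected.
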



\begin{proof}[Proof of Proposition \ref{prop_faithful}]
Let $f_1,f_2 \in \Embc((U_1,g_1,M_1),(U_2,g_2,M_2))$.
We may assume that both $f_1$ and $f_2$ are conformal open embeddings defined on the same open subset $V \subset U_1$.
Note that if $M_1$ is empty, then by definition any $f_1$ and $f_2$ are equivalent.

Assume that $M_1$ is nonempty and that $f_1=f_2$ on $M_1^\circ$.
Let $p \in M_1 \setminus M_1^\circ$, and let $W_p$ be a connected open neighborhood of $p$.
Since $W_p \cap M_1^\circ$ is nonempty (see Remark \ref{rem_smooth_embedding}), it follows from Lemma \ref{lem_Kahler} or Lemma \ref{lem_rigid3} that $f_1|_{W_p} = f_2|_{W_p}$.
Therefore, there exists an open set $W \subset V$ with $M_1 \subset W$ such that $f_1|_W = f_2|_W$.
Hence $f_1 = f_2$ as morphisms in $\Embc$, and thus $F:\Embc \rightarrow \Emb$ is faithful.
%
%
\end{proof}

Let $\Disk$ denote the full subcategory of $\Embc$ whose objects are disjoint unions of disks $\{\sqcup_n (\R^d,g_\std,\overline{\bD_d})\}_{n \geq 0}$. As before, the collections
$\{\Embc (\sqcup_n (\R^d,g_\std,\overline{\bD}),(\R^d,g_\std,\overline{\bD}))\}_{n\geq 0}$ carry a symmetric operad structure.
Via the faithful symmetric monoidal functor $F:\Embc \rightarrow \Emb$, this operad becomes a suboperad of $\CEc$.
We denote it by $\CE$.
 The operads $\CEc$ and $\CE$ are analogues of the little disk operad in conformal geometry.
We call $\CE$ the \textbf{conformally flat d-disk operad}.
An explicit description of $\CE$ will be given in Section \ref{sec_def_CF}.
The operad $\CE$ is a genuinely smaller suboperad of $\CEc$; in particular, $F$ is faithful but not full.
\begin{rem}
\label{rem_cobordism}
Even if a diffeomorphism between the boundaries of two Riemannian manifolds is given,
there is no canonical way to glue the manifolds together while keeping track of the metrics.
In \cite{ST}, this issue is avoided by considering germs of Riemannian manifolds
(or, more generally, germs of $(G,X)$-structures with boundary).
Our definitions of cores and germs are motivated by \cite{ST}.
The relationship between cobordism and $\CE$-algebras is discussed in detail in \cite{Mvertex}.
\end{rem}

The following is the standard definition of an algebra over an operad (see \cite{LoV}):
\begin{dfn}\label{def_operad_algebra}
Let $\cC$ be a strict symmetric monoidal category. A $\CE$-algebra in $\cC$ is
an object $A \in \cC$ equipped with a sequence of maps
\begin{align*}
\rho_n:\CE(n) \rightarrow \Hom_\cC(A^{\otimes n},A)
\end{align*} 
such that:
\begin{enumerate}
\item
$\rho_1(\id_{\bD}) = \id_A$.
\item
For any $\phi_{[n]} \in \CE(n)$ and $\phi_{[m]} \in \CE(m)$ with $n \geq 1, m \geq 0$ and
$i \in \{1,\dots,n\}$,
\begin{align}
\rho_{n+m-1}(\phi_{[n]}\circ_i \phi_{[m]})=
\rho_{n}(\phi_{[n]})\circ_i \rho_m(\phi_{[m]}).\label{eq_def_of_operad}
\end{align}
\item
The map $\rho_n:\CE(n) \rightarrow \Hom(A^{\otimes n},A)$ is $S_n$-equivariant for all $n \geq 1$.
\end{enumerate}
\end{dfn}

In this paper we will mostly work in $\Embc$. In the following, when there is no risk of confusion, we will simply write $(U,g,M)$ as $M$.
Let $G:\Disk \rightarrow \cC$ be a symmetric monoidal functor. Then,
for any $\phi_{[n]} \in \CE(n)=\Embc(\sqcup_n \overline{\bD}_d,\overline{\bD}_d)$, $G$ assigns
\begin{align*}
G(\phi_{[n]}): G(\overline{\bD})^{\otimes n} \rightarrow G(\overline{\bD}).
\end{align*}
Hence, $A = G(\overline{\bD}) \in \cC$ is a $\CE$-algebra. Conversely, since
\begin{align*}
\Embc(\sqcup_n\overline{\bD}, \sqcup_m \overline{\bD}) \cong \sqcup_{i_1+\dots+i_m =n} \Pi_{k=1}^m \Emb(\sqcup_{i_k}\overline{\bD},\overline{\bD}),
\end{align*}
any $\CE$-algebra defines a symmetric monoidal functor in the obvious way.
\begin{prop}\label{prop_CF_algebra}
There is a one-to-one correspondence between $\CE$-algebras (resp. $\CEc$-algebra) in $\cC$ and symmetric monoidal functors $G: \Disk \rightarrow \cC$ (resp. $G:\tDisk \rightarrow \cC$).
Moreover, a $\CEc$-algebra naturally determines a $\CE$-algebra.
\end{prop}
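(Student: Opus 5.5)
The first claim is a standard dictionary between operad algebras and symmetric monoidal functors out of a category whose objects are tensor powers of one generator. I will prove it for $\CE$ and $\Disk$; the argument for $\CEc$ and $\tDisk$ is identical. The plan has three steps: build the two constructions, check they are mutually inverse, and then deduce the final claim from the operad inclusion.

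\emph{From functors to algebras.} Given a symmetric monoidal functor $G:\Disk\to\cC$, set $A=G(\overline{\bD})$. By the monoidality isomorphisms we may identify $G(\sqcup_n\overline{\bD})\cong A^{\otimes n}$; since $\cC$ is strict, I will take $G$ strict without loss, or else transport along the coherence isomorphisms. Then define $\rho_n(\phi)=G(\phi)$ for $\phi\in\CE(n)$.
\begin{itemize}
\item The unit axiom holds because $G$ preserves identities.
\item Equivariance holds because $G$ is symmetric: permuting the disks is precomposition by a symmetry isomorphism of $\sqcup_n\overline{\bD}$, which $G$ sends to the symmetry of $A^{\otimes n}$.
\item For \eqref{eq_def_of_operad}, write $\phi_{[n]}\circ_i\phi_{[m]}=\phi_{[n]}\circ(\id^{\sqcup(i-1)}\sqcup\phi_{[m]}\sqcup\id^{\sqcup(n-i)})$ in $\Disk$. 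Functoriality and monoidality of $G$ then give $\rho_n(\phi_{[n]})\circ(\id\otimes\cdots\otimes\rho_m(\phi_{[m]})\otimes\cdots\otimes\id)$. The case $m=0$ uses $G(\emptyset)=\mathbf{1}$.
\end{itemize}

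\emph{From algebras to functors.} Given $(A,\rho)$, set $G(\sqcup_n\overline{\bD})=A^{\otimes n}$. For morphisms I use the displayed decomposition
\begin{align*}
\Embc(\sqcup_n\overline{\bD},\sqcup_m\overline{\bD})\cong\sqcup_{i_1+\dots+i_m=n}\Pi_{k}\Embc(\sqcup_{i_k}\overline{\bD},\overline{\bD}).
\end{align*}
This decomposition holds because each source disk is connected, so it lands in exactly one target disk, and because the morphisms are germs near disjoint compact cores. Accordingly, a morphism $f$ is a map of finite sets $\sigma:[n]\to[m]$ together with components $f_k\in\CE(|\sigma^{-1}(k)|)$. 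I define
\begin{align*}
G(f)=(\rho(f_1)\otimes\cdots\otimes\rho(f_m))\circ\tau_\sigma,
\end{align*}
where $\tau_\sigma$ is the symmetry of $A^{\otimes n}$ that reorders the factors by fibers of $\sigma$. This is well defined because of $S_n$-equivariance: the order of the inputs within a fiber is fixed by the ordering in $[n]$.

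The main obstacle is functoriality of this $G$, namely $G(g\circ f)=G(g)\circ G(f)$. Composing $g$ with $f$ amounts to simultaneous multi-compositions of operad elements, so the check reduces to the associativity and equivariance axioms of the operad algebra (iterated $\circ_i$). The only care needed is bookkeeping of the symmetries $\tau$, and I will handle that via the standard identification of operads with their associated symmetric monoidal categories (PROPs) in \cite{LoV}.

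Once functoriality is in hand, the remaining properties follow directly. Monoidality and symmetry of $G$ hold by construction. The two constructions are inverse to each other because a symmetric monoidal functor on $\Disk$ is determined by its value on the generator and its values on the morphisms $\sqcup_n\overline{\bD}\to\overline{\bD}$, as the decomposition above shows.

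Finally, for the last claim, the functor $F$ restricted to $\Disk\to\tDisk$ is symmetric monoidal and, by Proposition~\ref{prop_faithful}, identifies $\CE$ with a suboperad of $\CEc$. Given a $\CEc$-algebra $\rho$, its restriction $\rho|_{\CE}$ is therefore a $\CE$-algebra. Equivalently, in the functor language, a symmetric monoidal functor $G:\tDisk\to\cC$ yields the symmetric monoidal functor $G\circ F$ on $\Disk$.
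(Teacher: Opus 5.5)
Your proposal is correct and follows the paper's route: evaluate $G$ on the generator to get the algebra, use the hom-set decomposition of $\Embc(\sqcup_n\overline{\bD},\sqcup_m\overline{\bD})$ to build the functor back, and restrict along the faithful symmetric monoidal functor $F$ (which makes $\CE$ a suboperad of $\CEc$) for the last claim. Your indexing of that decomposition by maps $\sigma:[n]\to[m]$ is more precise than the paper's indexing by $i_1+\dots+i_m=n$. As in the paper, the correspondence is a literal bijection only for strict symmetric monoidal functors and otherwise holds up to monoidal natural isomorphism, so read your ``strict without loss'' in that sense.
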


The following proposition holds for the full subcategory $\tDisk$
 of $\Emb$ as well:
\begin{prop}\label{prop_Left_Kan}
Let $A: \Disk \rightarrow \cC$ be a symmetric monoidal functor. Assume that $\cC$ is cocomplete and the tensor product $\otimes_\cC$ distributes over colimits. Then, the left Kan extension of $A:\Disk  \rightarrow \cC$ along $\Disk \rightarrow \Embc$, denoted by
\begin{align*}
\mathrm{Lan}_A: \Embc \rightarrow \cC
\end{align*}
is a symmetric monoidal functor and the restriction of $\mathrm{Lan}_A$ to $\Disk$ is isomorphic to $A$.
Moreover, for any object $(U,g,M) \in \Embc$,
\begin{align*}
\mathrm{Lan}_A(U,g,M) = \mathrm{colim}_{\phi \in \CE/(U,g,M)} \tilde{A}(\phi).
\end{align*}
Here, $\CE/(U,g,M)$ is the comma category whose object is a map
$\phi: \sqcup_n \overline{\bD} \rightarrow (U,g,M)$ in $\Embc$ with $n \geq 0$ and morphism $f:(\phi_1: \sqcup_n \overline{\bD} \rightarrow M) \rightarrow (\phi_2: \sqcup_m \overline{\bD} \rightarrow M)$ is a map $\psi:\sqcup_n \overline{\bD} \rightarrow \sqcup_m \overline{\bD}$ in $\Disk$ such that $\phi_1 =\phi_2\circ \psi$.
The functor $\tilde{A}: \CE/(U,g,M) \rightarrow \cC$ is defined as the composition of the forgetful functor
$\CE/(U,g,M) \rightarrow \Disk$, which sends
$(\phi: \sqcup_n \overline{\bD} \rightarrow M)$ to $\sqcup_n \overline{\bD}$, with the functor
$A:\Disk \rightarrow \cC$.
\end{prop}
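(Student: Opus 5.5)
The plan is to follow the standard argument that left Kan extension along the inclusion of disks computes factorization homology, in its 1-categorical form. First, the pointwise colimit formula. Since $\cC$ is cocomplete and $\CE/(U,g,M)$ is small (objects are morphisms from finitely many disks, and $\Embc$ has small hom-sets), the pointwise left Kan extension exists and is given by $\mathrm{colim}_{\CE/M}\tilde A$. On a morphism $[f]:M\to M'$, $\mathrm{Lan}_A([f])$ is induced by postcomposition $\CE/M\to\CE/M'$, $\phi\mapsto f\circ\phi$. Because $i:\Disk\hookrightarrow\Embc$ is fully faithful, the comma category $\CE/\sqcup_n\overline{\bD}$ has a terminal object $\id$. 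Hence $\mathrm{Lan}_A\circ i\cong A$.

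Second, monoidality. The unit is handled first: $\CE/\emptyset$ has the single object $\emptyset\to\emptyset$, so $\mathrm{Lan}_A(\emptyset)=A(\emptyset)=\mathbf 1$. For objects $M_1,M_2$, I will show that the functor
\[\sqcup:\CE/M_1\times\CE/M_2\to\CE/(M_1\sqcup M_2)\]
is an equivalence of categories.
\begin{itemize}
\item Essential surjectivity: a morphism $\phi:\sqcup_n\overline{\bD}\to M_1\sqcup M_2$ is a conformal embedding defined near each closed disk. Each $\overline{\bD}$ is connected, and a connected neighborhood of it can be chosen, so each disk lands entirely in $U_1$ or $U_2$. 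Its core lands in $M_1$ or $M_2$, since $U_1\sqcup U_2$ is a disjoint union. After a permutation in $\Disk$, $\phi$ is therefore isomorphic to $\phi_1\sqcup\phi_2$.
\item Fully faithfulness: a map $\psi$ in $\Disk$ with $\phi_1\sqcup\phi_2=(\phi_1'\sqcup\phi_2')\circ\psi$ must send the disks over $M_i$ to the disks over $M_i$, for the same connectedness reason. So $\psi=\psi_1\sqcup\psi_2$ uniquely.
\end{itemize}
Under this equivalence $\tilde A(\phi_1\sqcup\phi_2)\cong\tilde A(\phi_1)\otimes\tilde A(\phi_2)$, because $A$ is symmetric monoidal.

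Then
\[\mathrm{Lan}_A(M_1\sqcup M_2)\cong\mathrm{colim}_{\CE/M_1\times\CE/M_2}\tilde A(\phi_1)\otimes\tilde A(\phi_2)\cong\mathrm{Lan}_A(M_1)\otimes\mathrm{Lan}_A(M_2).\]
The last isomorphism uses that $\otimes$ distributes over colimits separately in each variable, computing the colimit over the product category iteratively. The associativity, unit and symmetry coherences follow from those of $A$ and of $\otimes$, since every isomorphism is induced by the universal property. Naturality in $[f_1]\sqcup[f_2]$ is immediate from the fact that postcomposition commutes with $\sqcup$.

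The main obstacle I expect is the essential surjectivity and fullness step. One must check carefully, at the level of germs, that a morphism out of a connected core germ lands in one component. The point is that the equivalence class $[f]$ contains a representative defined on a connected neighborhood of $\overline{\bD}$, for example a slightly larger ball. The image of that neighborhood is connected, hence lies in $U_1$ or $U_2$. Combined with the faithfulness from Proposition~\ref{prop_faithful}, this makes the decomposition well defined independently of the representative. The remaining arguments are formal.
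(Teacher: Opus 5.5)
Your proof is correct and follows the paper's argument: the pointwise colimit formula, the decomposition of the comma category over $M_1\sqcup M_2$ as $\CE/M_1\times\CE/M_2$, and then symmetric monoidality of $A$ together with distributivity of $\otimes$ over colimits. It is also slightly more careful than the paper's version. The paper asserts an isomorphism of categories, but because the disks of an object of $\CE/(M_1\sqcup M_2)$ may land in $M_1$ and $M_2$ in interleaved order, the functor $\sqcup$ is only an equivalence, exactly as you state. You also supply what the paper leaves implicit: the unit, the identification $\mathrm{Lan}_A\circ i\cong A$, and the connectedness argument.
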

\begin{proof}
Since $\cC$ is cocomplete, the left Kan extension is obtained as a pointwise left Kan extension, and can be written as the explicit colimit described in the proposition.
\begin{align*}
\mathrm{Lan}_A(M_1 \sqcup M_2) \cong \mathrm{colim}_{\phi \in \CE/(M_1 \sqcup M_2)} \tilde{A}(\phi).
\end{align*}
Since $M_1 \sqcup M_2$ is disjoint, as categories,
\begin{align*}
\CE/(M_1 \sqcup M_2) \cong 
\CE/M_1 \times
\CE/ M_2
\end{align*}
holds. Since $A$ is symmetric monoidal and the tensor product of $\cC$ distributes over colimit, we obtain
\begin{align*}
\mathrm{colim}_{\phi \in \CE/(M_1 \sqcup M_2)} \tilde{A}(\phi)
&\cong \mathrm{colim}_{(\phi_1,\phi_2) \in \CE/M_1 \times \CE/M_2} \tilde{A}(\phi_1,\phi_2) \\
&\cong \mathrm{colim}_{(\phi_1,\phi_2) \in \CE/M_1 \times \CE/M_2} \tilde{A}(\phi_1)\otimes_\cC \tilde{A}(\phi_2)\\
&\cong \mathrm{colim}_{\phi_1 \in \CE/ M_1}\tilde{A}(\phi_1)\otimes_\cC  \mathrm{colim}_{\phi_2 \in \CE/M_2}\tilde{A}(\phi_2) \\
&\cong \mathrm{Lan}_A(M_1) \otimes_\cC 
\mathrm{Lan}_A(M_2).
\end{align*}
Hence $\mathrm{Lan}_A$ is symmetric monoidal.
\end{proof}

Recall that a Riemannian manifold $(M,g)$ is called \textbf{conformally flat} if for any $x \in M$ there is a smooth conformal open embedding $\phi:(\bD,g_\std) \rightarrow (M,g)$ such that $x \in \phi(\bD)$.
Denote by $\EmbFc$ the full subcategory of $\Embc$ consisting $(U,g,M)$ such that $(V,g|_V)$ is a conformally flat manifold for some open neighborhood $M \subset V$.
In $d=2$, every Riemannian manifold is conformally flat (see Lemma \ref{lem_Kahler}). 
On the other hand, when $d \geq 3$, there exist many Riemannian manifolds that are not conformally flat \cite{Kuiper,Kuiper2}.
For such a manifold, there are not enough disk embeddings, so one cannot access it via left Kan extension.
One may say that the above proposition essentially constructs a symmetric monoidal functor from the category of conformally flat manifolds $\EmbFc$.

The category of vector spaces $\Vect$ and the category of ind Hilbert spaces $\Ind$
are symmetric monoidal categories satisfying the assumptions of Proposition \ref{prop_Left_Kan} (see Appendix \ref{app_Hilb}).
The properties of left Kan extensions in these categories will be studied in more detail in
Section \ref{sec_left}.

\subsection{Conformal transformation on sphere}\label{sec_conf_sphere}
In this section, we give an explicit description of the conformal transformation group $\Conf^+(S^d)$ on the standard sphere $S^d$, and investigate a submonoid of $\Conf^+(S^d)$ that is closely related to $\CE(1)$.  
The part concerning the conformal transformation group is standard; see, for example, \cite{Matsumoto} and the references therein.


Let $d \geq 2$ and $\mathbb{R}^{d+1,1}$ be equipped with a nondegenerate symmetric bilinear form
$\langle\cdot,\cdot\rangle$ of signature $(d+1,1)$. Let $\{e_1,\dots,e_d,e_+,e_-\}$ be a basis of $\R^{d+1,1}$ with
\begin{align}
\langle e_+,e_- \rangle=1,\quad \langle e_i,e_j \rangle=\delta_{i,j},\quad \langle e_+,e_+ \rangle=\langle e_-,e_- \rangle=0= \langle e_\pm,e_i \rangle\label{eq_null_basis}
\end{align}
 for $i,j=1,\dots,d$.
Set
\[
\mathcal N \;=\; \{ v\in \mathbb{R}^{d+1,1}\setminus\{0\} \mid \langle v,v\rangle=0\}
\]
and let $\mathbb{P}(\mathcal N)$ be the projective space of $\mathcal{N}$.
Set
$
\mathcal N_+ = \{v \in \mathcal N \mid \langle v, e_+ - e_- \rangle > 0 \}$ and $\mathcal N_- = \{v \in \mathcal N \mid \langle v, e_+ - e_- \rangle < 0 \}$,
then $\mathcal N$ has two connected components, $\mathcal N = \mathcal N_+ \sqcup \mathcal N_-$.
Since $\mathrm{O}(d+1,1)$ preserves the bilinear form on $\R^{d+1,1}$, it acts on $\mathcal N$.
Set
\begin{align*}
\mathrm{O}^+(d+1,1) = \{g \in \mathrm{O}(d+1,1) \mid g(\mathcal N_+) = \mathcal N_+ \}.
\end{align*}
Then $\mathrm{O}^+(d+1,1)$ is an index-two subgroup of $\mathrm{O}(d+1,1)$.
The projective group $\PO(d+1,1)=\mathrm{O}(d+1,1)/\{\pm 1\}$ acts on the projective space $\mathbb{P}(\mathcal N)$. Since $-1$ sends $\mathcal N_+$ to $\mathcal N_-$, we have an identification
$\PO(d+1,1) \cong \mathrm{O}^+(d+1,1)$.
The purpose of this section is to study the action of $\mathrm{O}^+(d+1,1)$ on $\mathbb{P}(\mathcal N)$.

%

Define maps $\sigma_0,\sigma_\infty:\mathbb{R}^d\to\mathcal N_+$ by
\begin{align*}
\sigma_0(x)&= x- \frac{\norm{x}^2}{\sqrt{2}}e_+ + \frac{1}{\sqrt{2}}e_-,\qquad \sigma_\infty(x)= x- \frac{1}{\sqrt{2}}\,e_+ + \frac{\norm{x}^2}{\sqrt{2}}e_-
\end{align*}
for $x\in \R^d$.
It is clear that $\mathbb{P}(\mathcal{N}) \setminus \si_0 (\R^d)$ (resp. $\mathbb{P}(\mathcal{N}) \setminus \si_\infty (\R^d)$) consists of the projective class of $e_+$ (resp. $e_-$). We denote $[e_+] \in \mathbb{P}(\mathcal{N})$ by $\infty$.
Since 
\begin{align}
J(x)=\si_\infty^{-1}\circ \si_0(x) = \left(\frac{x_1}{\norm{x}^2},\dots, \frac{x_d}{\norm{x}^2} \right),
\label{eq_inversion_si}
\end{align}
$\mathbb{P}(\mathcal{N})= \R^d \cup \{\infty\}$ is a smooth $d$-dimensional manifold with the coordinates $\si_0,\si_\infty$,
which is the $d$-dimensional sphere.
Since $J^*(g_\std) = \frac{4}{\norm{x}^4} g_\std$, it follows that $\mathbb{P}(\mathcal{N})$ admits, up to a conformal factor, a unique  Riemannian metric with respect to which $\si_0$ and $\si_\infty$ are conformal maps.

\begin{rem}\label{rem_standard_sphere}
Let $(S^{d},g_{S^d})$ be the $d$-dimensional sphere, viewed as a submanifold of $(\R^{d+1},g_\std)$.
Let $P_\pm:\R^d \rightarrow S^d$ denote the stereographic projections, that is,
\begin{align*}
(x_1,\dots,x_d) \mapsto \left(\frac{2x_1}{1+|x|^2},\dots,\frac{2x_d}{1+|x|^2}, \pm \frac{1 - |x|^2}{1+|x|^2}\right).
\end{align*}
Since
$P_\pm^*(g_{S^d})= P_\pm^*(dy_1^2+\dots+dy_{d+1}^2) =\frac{4}{(1+|x|^2)^2} g_\std$, $P_\pm:(\R^d, g_\std) \rightarrow (S^d,g_{S^d})$ is a conformal map with conformal factor $\Om_{P_\pm}(x) = \frac{4}{(1+|x|^2)^2}$.
Moreover, since $P_-^{-1}\circ P_+ =J$, it follows that $\mathbb{P}(\mathcal{N})$ is conformally equivalent to $(S^d,g_{S^d})$.
\end{rem}

%
%
Let $g \in \mathrm{O}^+(d+1,1)$ and $x\in \R^d$.
If $g \si_0(x) \neq \infty$, then there exists a unique nonzero scalar $j_g(x) \in\mathbb{R}^\times$ and $y \in \R^d$
such that
\begin{equation}\label{eq:lift-null}
g\sigma_0(x)=j_g(x)\sigma_0(y).
\end{equation}
We write this $y$ as $g\cdot x$. This is the action of $\mathrm{O}^+(d+1,1)$ in the $\si_0$ coordinate system.
The map $g\cdot x$ is a rational polynomial of $x$, and $j_g(x)$ is a polynomial of degree at most $2$.

\begin{lem}\label{lem_norm_all_inv}
For any $g,h \in \mathrm{O}^+(d+1,1)$, 
\begin{align}
j_g(hx)j_h(x) &=j_{gh}(x) \label{eq_cocycle}\\
j_g(x)j_g(y) \norm{gx-gy}^2 &= \norm{x-y}^2
\end{align}
hold.
\end{lem}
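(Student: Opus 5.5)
The cocycle identity \eqref{eq_cocycle} follows from uniqueness of the decomposition \eqref{eq:lift-null}, and the distance identity follows from one computation of the bilinear form on $\si_0$-vectors. Both statements are understood on the set of $x$ (and $y$) where all the points involved are finite, i.e. $h\si_0(x)\neq\infty$ and $gh\si_0(x)\neq\infty$. The key steps are as follows.

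\emph{Cocycle.} By definition $h\si_0(x)=j_h(x)\si_0(hx)$. Applying $g$ and using linearity gives
\begin{align*}
gh\si_0(x)=j_h(x)\,g\si_0(hx)=j_h(x)j_g(hx)\,\si_0(g(hx)).
\end{align*}
On the other hand, $gh\si_0(x)=j_{gh}(x)\si_0((gh)x)$. The $e_-$-coefficient of $\si_0(y)$ is always $1/\sqrt{2}$. Comparing $e_-$-components therefore forces $j_{gh}(x)=j_g(hx)j_h(x)$. Comparing the $\R^d$-components then gives $(gh)x=g(hx)$, so the definition is indeed a group action.

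\emph{Distance identity.} From \eqref{eq_null_basis}, for $x,y\in\R^d$ we compute
\begin{align*}
\langle \si_0(x),\si_0(y)\rangle=(x,y)-\tfrac12\norm{x}^2-\tfrac12\norm{y}^2=-\tfrac12\norm{x-y}^2 .
\end{align*}
Since $g\in\mathrm{O}^+(d+1,1)$ preserves $\langle\cdot,\cdot\rangle$, we get
\begin{align*}
-\tfrac12\norm{x-y}^2=\langle g\si_0(x),g\si_0(y)\rangle=j_g(x)j_g(y)\langle\si_0(gx),\si_0(gy)\rangle=-\tfrac12 j_g(x)j_g(y)\norm{gx-gy}^2,
\end{align*}
which is the claimed identity.

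The only delicate point is bookkeeping of the domains where $g\si_0(x)\neq\infty$, and checking that $j_g$ is well defined there. If $g\si_0(x)$ is not a multiple of $e_+$, then it lies in $\mathcal N$ and has nonzero $e_-$-coefficient. Dividing by $\sqrt2$ times that coefficient normalizes it to $\si_0(y)$, because a null vector $v=y'+ae_++be_-$ with $b\neq0$ satisfies $a=-\norm{y'}^2/(2b)$. This is a direct check, so I expect no real obstacle here.
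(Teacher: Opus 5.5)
Your proposal is correct and follows the same route as the paper. The cocycle law comes from uniqueness of the lift $g\si_0(x)=j_g(x)\si_0(gx)$, which you make explicit by comparing $e_-$-coefficients. The distance identity comes from $\langle\si_0(x),\si_0(y)\rangle=-\tfrac12\norm{x-y}^2$ together with invariance of $\langle\cdot,\cdot\rangle$ under $g$.
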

\begin{proof}
The cocycle law \eqref{eq_cocycle} is immediate from \eqref{eq:lift-null}.
Since for any $x,y\in\mathbb{R}^d$ $
\big\langle \sigma_0(x), \sigma_0(y)\big\rangle  =(x,y)-\tfrac12\bigl(|x|^2+|y|^2\bigr)
=-\tfrac12\,|x-y|^2$,
we have $-\ft \norm{x-y}^2= \langle \si_0(x), \si_0(y)\rangle=
\langle g(\si_0(x)), g(\si_0(y))\rangle
 = -\ft j(g,x)j(g,y) \norm{gx-gy}^2$.
\end{proof}

Let us consider the pseudo-Riemannian metric $g_{\R^{d+1,1}} = dx_1^2+\dots+dx_d^2 + dx_+dx_-$ on $\R^{d+1,1}$.
Since $\si_0^*(g_{\R^{d+1,1}}) = g_\std$, it follows that the action of $\mathrm{O}^+(d+1,1)$ on $\mathbb{P}(\mathcal{N})$ is conformal, and its conformal factor is given by $\Om_g(x) = j_g(x)^{-1}$.
Conversely, any conformal diffeomorphism of $\mathbb{P}(\mathcal{N})$ is necessarily of this form (see Theorem \ref{thm_Liouville}):
\begin{prop}\cite{Liouville}\label{prop_Liouville_sphere}
If $d \geq 2$, then
\begin{align*}
\Conf(S^d,g_\std) \cong \mathrm{O}^+(d+1,1),\qquad \Conf^+(S^d,g_\std) \cong \SO^+(d+1,1).
\end{align*}
Moreover, for $g \in \mathrm{O}^+(d+1,1)$, in the coordinate system $\si_0$ one has
$g^*(g_\std) = j_g(x)^{-2}g_\std$.
\end{prop}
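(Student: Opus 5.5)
The plan has three parts: show that $\mathrm{O}^+(d+1,1)$ acts faithfully by conformal maps on $\mathbb{P}(\mathcal N)\cong S^d$, show that every conformal diffeomorphism arises this way, and then track orientation.

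\emph{Step 1 (the action and its conformal factor).} Because $g$ is linear and preserves $\langle\cdot,\cdot\rangle$, it maps $\mathcal N$ to itself. Since $g\in\mathrm{O}^+(d+1,1)$, it also preserves $\mathcal N_+$, so it descends to a diffeomorphism of $\mathbb{P}(\mathcal N)$; smoothness follows from the fact that $g\cdot x$ is rational in the charts $\si_0,\si_\infty$.

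For the metric formula, I would differentiate \eqref{eq:lift-null}. This gives
$g\,d\si_0(x) = dj_g(x)\,\si_0(gx) + j_g(x)\,d\si_0(gx)\circ d(g\cdot)_x$.
Now pair with itself. We have $\langle\si_0,\si_0\rangle=0$, and differentiating this gives $\langle \si_0, d\si_0\rangle=0$. Also $\si_0^*g_{\R^{d+1,1}}=g_\std$. The cross terms therefore vanish and we obtain
$\norm{v}^2 = j_g(x)^2\norm{d(g\cdot)_x v}^2$,
which is $g^*(g_\std)=j_g(x)^{-2}g_\std$. A near-infinity chart check with $\si_\infty$ is analogous. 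Alternatively, the same identity follows by letting $y\to x$ in the second formula of Lemma \ref{lem_norm_all_inv}, which is quicker.

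Faithfulness modulo $\pm1$: if $g$ fixes every null line, then $g$ is a scalar, because null vectors span $\R^{d+1,1}$ and a linear map with all vectors of a spanning cone as eigenvectors is scalar. The scalar must be $\pm1$, and only $+1$ lies in $\mathrm{O}^+$.

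\emph{Step 2 (surjectivity), the main obstacle.} I would first show that the image acts transitively and contains translations $T_a$, dilations, rotations $\mathrm{SO}(d)$, reflections and the inversion $J$. Each of these is given by an explicit matrix in the null basis \eqref{eq_null_basis}; for example, $J$ swaps $e_+$ and $-e_-$ up to sign normalization.

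Given $f\in\Conf(S^d)$, compose with such an element so that $f(\infty)=\infty$. Then $f$ restricts to a conformal diffeomorphism of $\R^d$.
\begin{itemize}
\item For $d\ge3$: Liouville's theorem (Theorem \ref{thm_Liouville}) says $f$ is a Möbius map, and fixing $\infty$ forces it to be a similarity $x\mapsto \lambda Ax+a$, which lies in the image.
\item For $d=2$: by Lemma \ref{lem_Kahler}, $f$ is holomorphic or antiholomorphic on $\C$ and bijective. An entire injective function is affine (it has a removable singularity or pole at $\infty$), so again $f$ is a similarity.
\end{itemize}
This is the step where the real content lies, and I would simply cite Liouville for $d\ge3$.

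\emph{Step 3 (orientation).} Since $\mathrm{SO}^+(d+1,1)$ is connected, its elements act by orientation-preserving maps. A reflection $x_1\mapsto -x_1$ lies in $\mathrm{O}^+\setminus\mathrm{SO}^+$ and reverses orientation. Index two then gives $\Conf^+(S^d)\cong\SO^+(d+1,1)$. Finally, Remark \ref{rem_standard_sphere} transfers everything to $(S^d,g_{S^d})$, since the identification is conformal and conformal groups depend only on the conformal class.
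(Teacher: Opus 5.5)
Your proposal is correct and follows essentially the paper's route: the paper derives the conformal factor $j_g(x)^{-1}$ from $\si_0^*(g_{\R^{d+1,1}})=g_\std$ as in your Step 1, and obtains surjectivity by appeal to Liouville's theorem (Theorem \ref{thm_Liouville}); you make explicit what it leaves implicit, namely faithfulness modulo $\pm1$, the $d=2$ case via injective entire functions, and the orientation count. One phrase in the faithfulness step needs tightening: ``all vectors of a spanning cone are eigenvectors $\Rightarrow$ scalar'' fails for general cones (e.g.\ a union of coordinate axes), so the argument must use the null cone specifically; for instance, take non-proportional null vectors $u,v$ normalized so that $\langle u,v\rangle=1$ and a unit vector $e\perp u,v$, and then the null eigenvectors $-\tfrac12u+v\pm e$ force the eigenvalues of $u$ and $v$ to agree.
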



Set
\begin{align}
J =
\begin{pmatrix}
I_{d} & 0 & 0\\
0 & 0 & -1 \\
0 & -1 & 0
\end{pmatrix} \in \mathrm{O}^+(d+1,1).
\label{eq_I_def}
\end{align}
Then, $J \si_0(x) =x- \ft e_+ +  |x|^2 e_- = |x|^2 \si_0(\frac{x}{|x|^2})$. Hence, $J$ corresponds to the inversion $J(x)$ \eqref{eq_inversion_si} on $\R^d$. Since $\det J = -1$, 
\begin{align*}
\mathrm{O}^+(d+1,1) = \mathrm{SO}^+(d+1,1) \sqcup \mathrm{SO}^+(d+1,1) J
\end{align*}
and $\mathrm{SO}^+(d+1,1)$ is connected.
With respect to our basis \eqref{eq_null_basis}, the group $\SO^+(d+1,1)$ is generated by the following elements:

\begin{description}
\item[Rotation] For any $R\in\mathrm{SO}(d)$,
\[
\tilde{R}=
\begin{pmatrix}
R & 0 & 0\\
0 & 1 & 0\\
0 & 0 & 1
\end{pmatrix},
\qquad
\tilde{R}\,\sigma_0(x)=\sigma_0(Rx),
\qquad
j_{\tilde{R}}(x)=1.
\]
\item[dilation] For any $\lambda >0$,
\[
D(\lambda)=\lambda^D=
\begin{pmatrix}
I_d & 0 & 0\\
0 & \lambda & 0\\
0 & 0 & \lambda^{-1}
\end{pmatrix},
\qquad
D(\lambda)\,\sigma_0(x)=\lambda^{-1}\,\sigma_0(\lambda x),
\qquad
j_{D(\la)}\bigl(x\bigr)=\lambda^{-1}.
\]
\item[translation] For any $a\in\R^d$,
\[
T(a)=
\begin{pmatrix}
I_d & 0 & \sqrt{2}a\\
-\sqrt{2} a^\top & 1 & - |a|^2\\
0 & 0 & 1
\end{pmatrix},
\qquad
T(a)\,\sigma_0(x)=\sigma_0(x{+}a),
\qquad
j_{T(a)}\bigl(x\bigr)=1.
\]
\item[special conformal transformation] For any $c\in \R^d$,
\[
K(c)=
\begin{pmatrix}
I_d & \sqrt{2}c & 0\\
0 & 1 & 0\\
-\sqrt{2}c^\top & -|c|^2 & 1
\end{pmatrix},\quad\quad
j_{K(c)}\bigl(x\bigr)=1-2c\cdot x+|c|^2|x|^2.
\]
\[
K(c)\,\sigma_0(x)
=\bigl(1-2c\!\cdot\!x+|c|^2|x|^2\bigr)\,
\sigma_0\!\left(\frac{x-|x|^2 c}{\,1-2c\!\cdot\!x+|c|^2|x|^2\,}\right).
\]
\end{description}
Such actions are analogous to the action of $\mathrm{PSL}_2\C$ on $\CP$, and are called \textbf{M\"obius transformations}.
As an important property, hyperplanes and hyperspheres in $\R^d$ are mapped again to hyperplanes or hyperspheres by M\"obius transformations \cite{Matsumoto}.
 
Let $E_{\mu,\nu}$ denote the matrix whose $(\mu,\nu)$-entry is $1$ and whose remaining entries are $0$.
For $\mu,\nu \in \{1,\dots,d\}$, set
\begin{align*}
J_{\mu,\nu} = E_{\mu,\nu} - E_{\nu,\mu},\qquad
D &= E_{++}-E_{--},\\
P_\mu = \sqrt{2}E_{\mu,-} - \sqrt{2}E_{+,\mu},\qquad
K_\mu &= \sqrt{2}E_{\mu,+} -  \sqrt{2}E_{-,\mu},
\end{align*}
which are a basis of the Lie algebra $\sod$, which satisfies the following relations:
\begin{align}
\begin{split}
[D,K_\mu]&=  -K_\mu \\
[D,P_\mu]&= P_\mu \\
[D, J_{\mu,\nu}]&= 0 \\
[P_\mu, K_\nu] &= -2(\de_{\mu,\nu}D + J_{\mu,\nu}) \\
[J_{\mu,\nu},P_\rho]&=-\de_{\mu,\rho}P_\nu + \de_{\nu,\rho} P_\mu \\
[J_{\mu,\nu},K_\rho]&= -\de_{\mu,\rho}K_\nu + \de_{\nu,\rho} K_\mu  \\
[J_{\mu \nu}, J_{\rho,\si}]&=
\delta_{\nu,\rho}J_{\mu,\si}
- \de_{\mu,\rho}J_{\nu,\si}
-\de_{\nu,\si}J_{\mu,\rho}
+\de_{\mu,\si}J_{\nu,\rho}.
\end{split}
\label{eq_com_Lie}
\end{align}

Let $\R[x_1,\partial_1,\dots,
x_d,\partial_d]$ be the ring of differential operators on $\R^d$.
Set $$\norm{x}^2=x_1^2+\dots+x_d^2 \in \R[x_1,\partial_1,\dots,
x_d,\partial_d]$$
and $$E_x=\sum_{i=1}^d x_i \pa_i \in \R[x_1,\partial_1,\dots,
x_d,\partial_d],$$ the Euler operator.
Writing the infinitesimal action of $\mathrm{SO}^+(d+1,1)$ on the algebra of functions on $S^d$ in the coordinate system $\si_0$, we obtain the following Lie algebra homomorphism $d:\mathrm{so}(d+1,1) \rightarrow \R[x_1,\partial_1,\dots,
x_d,\partial_d]$:
\begin{align}
\begin{split}
d(D)&= - E_x \\
d(K_\mu) &= \norm{x}^2\pa_\mu -2x_\mu E_x \\
d(P_\mu) &= -\pa_\mu \\
d(J_{\mu,\nu}) &= x_\mu \pa_\nu-x_\nu \pa_\mu
\end{split}
\label{eq_differential_action}
\end{align}
for $\mu,\nu \in \{1,2,\dots,d \}$.
Set 
\begin{align*}
\fG_d = \{g \in \SO^+(d+1,1) \mid g(\bD) = \bD \},\\
\fS_d = \{g \in \SO^+(d+1,1) \mid g(\bD) \subset \bD \}.
\end{align*}
Note that $\fG_d$ is a group, while $\fS_d$ is a monoid which, for example, contains elements corresponding to \eqref{eq_ball_trans}.
In this section, we state several results concerning $\fG_d$ and $\fS_d$ that will be used in this paper. Proofs of these results are given in the appendix.

Note that \eqref{eq_I_def} defines an outer automorphism of order two
$\SO^+(d+1,1) \rightarrow \SO^+(d+1,1)$, given by $g \mapsto J g J^{-1}$, and
\begin{align*}
(J T(a) J)(x)
= \frac{x/|x|^2 + a}{|x/|x|^2 + a|^2}
= \frac{x + a|x|^2}{1 + 2(x,a) + |x|^2|a|^2}
= K(-a)(x).
\end{align*}
Let $\theta:\mathrm{so}(d+1,1) \rightarrow \mathrm{so}(d+1,1)$ denote the automorphism of the Lie algebra induced by $J$.
That is, $\theta$ is an involution of $\mathrm{so}(d+1,1)$ satisfying the following properties:
\begin{align}
\begin{split}
\theta(P_\mu)&= -K_\mu \\
\theta(K_\mu)&= -P_\mu \\
\theta(D)&=-D \\
\theta(J_{\mu,\nu})&=J_{\mu,\nu},
\end{split}
\label{eq_def_theta}
\end{align}
for $\mu,\nu \in \{1,2,\dots,d\}$.

\begin{prop}\label{prop_fixed_point}
The following properties hold:
\begin{enumerate}
\item
For $x \in \mathbb{P}(\mathcal{N})$, the condition $J(x)=x$ is equivalent to $\norm{x}=1$.
\item
The group $\{g \in \mathrm{O}^+(d+1,1) \mid Jg=gJ\}$ contains $\fG_d$ as an index $2$ subgroup and coincides with
$\fG_d \sqcup \fG_d J$.
Moreover, as a Lie group, $\fG_d$ is isomorphic to $\SO^+(d,1)$.
\item
If the restriction of $g \in \fG_d$ to $\partial \bD = \{x \in \R^d\mid \norm{x}=1 \}$ is the identity map, then $g = \id$.
\end{enumerate}
\end{prop}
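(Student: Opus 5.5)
We verify the three statements directly in the $\si_0$-coordinate and the null-cone model, using $J\si_0(x)=\norm{x}^2\si_0(x/\norm{x}^2)$.

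\emph{Part (1).} Points of $\R^d\setminus\{0\}$ are fixed by $J$ exactly when $x/\norm{x}^2=x$, i.e.\ $\norm{x}=1$. The two remaining points $0$ and $\infty=[e_+]$ are interchanged by $J$, since $J e_+=-e_-$. Hence the fixed set is $\partial\bD$.

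\emph{Part (2).} We show that the centralizer $Z=\{g\in\mathrm{O}^+(d+1,1)\mid Jg=gJ\}$ coincides with the stabilizer of the unit sphere $\partial\bD$ in $\mathrm{O}^+(d+1,1)$.

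If $Jg=gJ$, then $g$ maps the fixed set $\partial\bD$ of $J$ to itself.

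Conversely, suppose $g$ preserves $\partial\bD$. Then $gJg^{-1}$ is an orientation-reversing conformal involution fixing $\partial\bD$ pointwise. The element $J$ is the linear reflection of $\R^{d+1,1}$ in the spacelike vector $v=e_+ + e_-$, because $Jv=-v$ and $J$ fixes $v^\perp$. The fixed null lines of such a reflection are exactly $\mathbb P(\mathcal N\cap v^\perp)=\partial\bD$. The set $\partial\bD$ spans $v^\perp$, so any $h\in\mathrm{O}^+$ fixing $\partial\bD$ pointwise acts as a scalar on $v^\perp$; being in $\mathrm{O}^+$, it acts as the identity there. Consequently $h=\id$ or $h=J$. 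Applying this to $h=gJg^{-1}$, which is orientation reversing, gives $gJg^{-1}=J$.

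So $Z=\mathrm{Stab}(\partial\bD)=\mathrm{Stab}(v)$ inside $\mathrm{O}^+(d+1,1)$; note that $gv=\pm v$ both stabilize the line. Every $g\in Z$ either preserves the two components $\bD$ and $\R^d\cup\{\infty\}\setminus\overline{\bD}$ of the complement of $\partial\bD$, or swaps them. The element $J$ swaps them.

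We next identify $\fG_d\subset Z$ as an index $2$ subgroup. Let $Z_0\subset Z$ be the subgroup preserving $\bD$, so that $Z=Z_0\sqcup Z_0J$. An orientation-reversing element of $Z_0$, for instance the reflection $x_1\mapsto -x_1$, composed with $J$ becomes orientation preserving but swaps the components. This shows $Z_0\cap\SO^+=\fG_d$, and both $\fG_d$ and $\fG_d J$ lie in $Z$.

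The decomposition $Z=\fG_d\sqcup\fG_dJ$ is the sticky step. Since $\det J=-1$, the coset $\fG_dJ$ consists of determinant $-1$ elements. We therefore need the determinant $-1$ elements of $Z$ that preserve $\bD$, together with the determinant $+1$ elements that swap, to be accounted for. I would handle this using the identification below: $Z\cong \mathrm{O}(v^\perp)^{+}\times\{\pm1 \text{ on } v\}$. Here $v^\perp$ has signature $(d,1)$, and $\fG_d$ is the subgroup acting trivially on $v$ with determinant $1$. I expect the exact index statement will need the interpretation $\mathrm{Stab}$ in $\{\pm\}$-appropriate form. The main obstacle is this bookkeeping of the components of $\mathrm{O}(d,1)$ versus orientation on $\bD$, and I would check it on explicit reflections.

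For the isomorphism, restriction to $v^\perp\cong\R^{d,1}$ gives an injective homomorphism from $\fG_d$ to $\mathrm{O}(v^\perp)$. Its image preserves the time orientation, since the $\mathcal N_+$ condition is inherited, and has determinant $1$. It is therefore onto $\SO^+(d,1)$ by connectedness and a dimension count: both groups have dimension $d(d+1)/2$, and $\fG_d$ contains the rotations together with the one-parameter groups $\exp t(P_\mu+K_\mu)$ fixed by $\theta$.

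\emph{Part (3).} If $g\in\fG_d$ is the identity on $\partial\bD$, then $g$ fixes every null line in $v^\perp$. As above, $g$ is then the identity on $v^\perp$, so $gv=\pm v$. Preserving $\bD$ rather than swapping the components forces $gv=v$, and hence $g=\id$.
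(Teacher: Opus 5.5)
Your strategy is the same as the paper's. The paper conjugates $J$ by the Cayley transform to the diagonal reflection $J_C$ in $e_d$ and observes that commuting matrices preserve $\R e_d$ and its orthogonal complement; you work directly with the reflection vector $v=e_++e_-$. Parts (1), (3) and the isomorphism $\fG_d\cong\SO^+(d,1)$ go through this way.

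The genuine gap is the step you call sticky, and it is not bookkeeping. The equality $Z=\fG_d\sqcup\fG_dJ$ with $[Z:\fG_d]=2$ cannot be proved because it is false, and your own identification shows why. The timelike vector $e_+-e_-$ lies in $v^\perp$, so $Z\cong\{\pm1\}\times\mathrm{O}^+(v^\perp)$. Since $\mathrm{O}^+(v^\perp)\cong\mathrm{O}^+(d,1)$ has two components, $[Z:\fG_d]=4$.

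Concretely, take $R=\mathrm{diag}(-1,1,\dots,1)$, i.e.\ $x_1\mapsto-x_1$. It fixes $e_\pm$, so it lies in $\mathrm{O}^+(d+1,1)$ and commutes with $J$. It preserves $\bD$ and has $\det R=-1$, so $R\notin\fG_d$. Also $R\notin\fG_dJ$, because every element of $\fG_dJ$ maps $\bD$ onto $\mathbb{P}(\mathcal N)\setminus\overline{\bD}$. The element $RJ$ you produced yourself is orientation preserving, commutes with $J$ and swaps the two components. So it lies in $Z\cap\SO^+(d+1,1)$ but in neither $\fG_d$ nor $\fG_dJ$. Thus $Z=\fG_d\sqcup\fG_dJ\sqcup\fG_dR\sqcup\fG_dRJ$ has four components.

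The paper's argument for Proposition~\ref{prop_app_SO}(2) has the same oversight. Sorting commuting matrices by the sign of the $e_d$-diagonal entry leaves both determinant signs in each piece; the $+1$ piece is $\mathrm{O}^+(d,1)$, not $\fG_C\cong\SO^+(d,1)$. The correct statements are:
\begin{itemize}
\item $\fG_d$ has index $2$ in $Z\cap\SO^+(d+1,1)$ and in $\{g\in Z\mid g(\bD)=\bD\}$;
\item $\fG_d\sqcup\fG_dJ=\{\pm1\}\times\SO^+(v^\perp)$ has index $2$ in $Z$.
\end{itemize}
What the paper uses later, $\{f\in\fS_d\mid Jf=fJ\}=\fG_d\cong\SO^+(d,1)$, is unaffected, because an element of $\fS_d$ cannot swap $\bD$ with its exterior.

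Two smaller points. First, the $\theta$-fixed generators are $P_\mu-K_\mu$, not $P_\mu+K_\mu$, since by \eqref{eq_def_theta} $\theta(P_\mu+K_\mu)=-(P_\mu+K_\mu)$. You can skip the dimension count altogether: restriction $\fG_d\to\SO^+(v^\perp)$ has the explicit inverse $h\mapsto 1\oplus h$ (identity on $\R v$). This map fixes $v$, has determinant $1$, and preserves $\mathcal N_+$ because $e_+-e_-\in v^\perp$.

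Second, the step ``$h$ fixes every null line of $v^\perp$, hence acts as a scalar'' needs an argument, since a spanning set of eigenvectors does not by itself force a scalar. Because $h$ is an isometry, and two non-proportional null vectors of the Lorentzian space $v^\perp$ pair nontrivially, their eigenvalues satisfy $\lambda\lambda'=1$. Comparing three pairwise independent null lines (available for $d\ge2$) shows all eigenvalues equal a single $\lambda$ with $\lambda^2=1$. Preservation of $\mathcal N_+$ then gives $\lambda=1$.
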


\begin{prop}\label{prop_decomposition}
Let $g \in \fS_d$. Then, the following properties hold:
\begin{enumerate}
\item
If $g \notin \fG_d$, then there is $1>r>0$, $x_0 \in \bD$ and $h \in \fG_d$ such that
\begin{align*}
g = T_{x_0}r^D h.
\end{align*}
\item
If $\overline{g(\bD)} \subset \bD$, then there is $1>r>0$ and $h_1,h_2 \in \fG_d$ such that 
\begin{align*}
g = h_1 r^D h_2.
\end{align*}
\end{enumerate}
\end{prop}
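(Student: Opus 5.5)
We work with the images $g(\bD)$, which are round balls in $\R^d$ because Möbius transformations send spheres to spheres or hyperplanes. Since $g(\bD)\subset\bD$ is bounded and $g$ is a homeomorphism of $\mathbb{P}(\mathcal N)$, the image $g(\bD)$ is an open ball $B_r(x_0)$ with $\overline{B_r(x_0)}\subset\overline{\bD}$. Hence $r\le 1$, with $r=1$ only if $x_0=0$.

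For (1), suppose $g\notin\fG_d$. Then $g(\bD)\neq\bD$, since otherwise $g\in\fG_d$ by definition. So $B_r(x_0)\subsetneq\bD$, which forces $r<1$; and $x_0\in\bD$ because $|x_0|\le 1-r<1$. Put $h=(T_{x_0}r^D)^{-1}g=r^{-D}T_{-x_0}g$. This is an element of $\SO^+(d+1,1)$, since translations and dilations lie in it. It sends $\bD$ onto $r^{-1}(B_r(x_0)-x_0)=\bD$, so $h\in\fG_d$ and $g=T_{x_0}r^Dh$.

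For (2), we first observe that $\overline{g(\bD)}\subset\bD$ forces $g\notin\fG_d$, so (1) gives $g=T_{x_0}r^Dh$ with $|x_0|+r<1$. It then suffices to find $k\in\fG_d$ and $s\in(0,1)$ with $k(B_r(x_0))=B_s(0)$, because then $s^{-D}kg$ maps $\bD$ onto $\bD$. Setting $h_1=k^{-1}$ and $h_2=s^{-D}kg\in\fG_d$ gives $g=h_1s^Dh_2$.

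To produce $k$, rotate first so that $x_0=(t,0,\dots,0)$ with $t\ge0$. We then use the one-parameter subgroup of $\fG_d\cong\SO^+(d,1)$ generated by $P_1+K_1$, or a suitable sign variant. By \eqref{eq_def_theta}, $\theta(P_1-K_1)=P_1-K_1$, so this generator commutes with $J$ and therefore preserves $\bD$ (using Proposition \ref{prop_fixed_point}). These "hyperbolic translations" act on the $x_1$-axis, and they preserve both the axis and the rotational symmetry about it. The ball $B_r(x_0)$ meets the axis in the interval $(t-r,t+r)\subset(-1,1)$, and the ball is determined by that interval together with its symmetry about the axis. On the diameter $(-1,1)$ the flow acts as a one-dimensional Möbius flow, transitive with hyperbolic-distance translations. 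Choosing the parameter so that the image interval becomes symmetric, $(-s,s)$, centers the image ball at $0$. Such a parameter exists by the intermediate value theorem, since the endpoints' hyperbolic positions shift uniformly.

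The main obstacle is making this centering step rigorous without heavy computation. A cleaner alternative is to use the Poincaré ball model, in which $\fG_d$ acts transitively by hyperbolic isometries. A Euclidean ball with closure in $\bD$ is exactly a hyperbolic ball, and mapping its hyperbolic center to $0$ sends it to a hyperbolic ball centered at $0$, which is a Euclidean ball $B_s(0)$.
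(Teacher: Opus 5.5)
Your argument is correct. Part (1) is the paper's proof; you are slightly more careful, since you justify that $g(\bD)$ is the bounded component and that $r<1$, $x_0\in\bD$. Your reduction of (2) to finding $k\in\fG_d$ with $k(B_r(x_0))=B_s(0)$ is exactly the paper's reduction to Proposition~\ref{prop_app_circle}.

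The genuine difference is in how you produce that centering element.

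The paper conjugates by the Cayley transform $\cC$ to the upper half-space. There $\fG_C$ visibly contains the horizontal translations and the dilations. A sphere in $\bH_d$ with centre $(y',y_d)$ and radius $r'<y_d$ is moved by $T_{-y'}$ and then $D(\lambda)$, with $\lambda=(y_d^2-r'^2)^{-1/2}$. The result is a sphere with centre $(0,c)$ and radius $\rho$ satisfying $c^2-\rho^2=1$, and these are exactly the spheres $\cC^{-1}(C_R(0))$. This is a one-line explicit computation, uniform in $d\geq 2$, needing no symmetry or continuity argument. The same computation also shows (Remark~\ref{rem_circle_intersection}) why (2) breaks down when the sphere touches $\pa\bD$.

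You stay in the ball model and use the hyperbolic translation along a diameter. The flow of $P_1-K_1$ lies in $\fG_d$ because $\theta(P_1-K_1)=P_1-K_1$. On the $x_1$-axis its generator is $\pm(1-x_1^2)\pa_1$ by \eqref{eq_differential_action}, i.e.\ a translation in $u=\operatorname{artanh}x_1$. So the centering parameter can be written down explicitly, as half of $\operatorname{artanh}(t-r)+\operatorname{artanh}(t+r)$ up to the sign convention, rather than obtained from the intermediate value theorem.

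What each buys:
\begin{itemize}
\item Your route is more intrinsic: it literally moves the hyperbolic centre of the ball to $0$, as your Poincar\'e-model remark says. It does require more verification than the paper's.
\item Your Poincar\'e version rests on standard hyperbolic-geometry facts that the paper does not develop: transitivity of $\fG_d$ on $\bD$, invariance of the hyperbolic metric (cf.\ Lemma~\ref{lem_app_identity}), and the fact that Euclidean balls with closure in $\bD$ are hyperbolic balls.
\item The paper's route is shorter and fully explicit.
\end{itemize}

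Two small corrections to your first route:
\begin{itemize}
\item The generator must be $P_1-K_1$, not $P_1+K_1$. One has $\theta(P_1+K_1)=-(P_1+K_1)$, and on $|x|=1$ the field $d(P_1+K_1)$ reduces to the radial field $-2x_1E_x$, so its flow does not preserve $\bD$.
\item For $d=2$ the group $\SO(d-1)$ is trivial, so ``rotational symmetry about the axis'' does not force the image circle to be centred on the axis. Use instead that the field $-\pa_1-|x|^2\pa_1+2x_1E_x$ is equivariant under all of $\mathrm{O}(d-1)$ acting on $(x_2,\dots,x_d)$, in particular under $x_2\mapsto -x_2$. Alternatively, fall back on your Poincar\'e-model argument, which works in every dimension.
\end{itemize}
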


\begin{prop}\label{prop_conf_identity}
For any $\phi \in \fS_d$ and $x,y \in \bD_d$,
\begin{align}
\norm{\phi(x)-\phi(y)}^2=\Om_\phi(x)\Om_\phi(y)\norm{x-y}^2.
\label{eq_identity_all}
\end{align}
Moreover, for any $\phi \in \fG_d$,
\begin{align}
 (1-2(\phi(x),\phi(y))+|\phi(x)|^2|\phi(y)|^2) =\Om_\phi(x)\Om_{\phi}(y) (1-2(x,y)+|x|^2|y|^2).
\label{eq_identity_G}
\end{align}
\end{prop}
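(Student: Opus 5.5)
Both identities should follow from Lemma~\ref{lem_norm_all_inv} once the conformal factor $\Om_\phi$ is expressed through the cocycle $j_\phi$. Throughout I view $\phi\in\fS_d\subset\SO^+(d+1,1)$ as acting in the chart $\si_0$.

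\emph{First identity.} Since $\phi(\bD)\subset\bD$, the point $\phi\si_0(x)$ is never $\infty$ for $x\in\bD$. Hence $j_\phi(x)$ is defined and nonzero on $\bD$. By Proposition~\ref{prop_Liouville_sphere}, $\phi^*g_\std=j_\phi(x)^{-2}g_\std$, so $\Om_\phi(x)^2=j_\phi(x)^{-2}$. Here $\Om_\phi$ is the factor in the sense of Definition~\ref{def_conformal_map}, whose square multiplies $g_\std$.

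I then need the sign $j_\phi>0$ on $\bD$. Since $\phi\in\mathrm{O}^+(d+1,1)$ preserves $\mathcal N_+$ and $\si_0(\R^d)\subset\mathcal N_+$, the identity $\phi\si_0(x)=j_\phi(x)\si_0(\phi x)$ forces $j_\phi(x)>0$. Therefore $\Om_\phi=j_\phi^{-1}$. Lemma~\ref{lem_norm_all_inv} gives $j_\phi(x)j_\phi(y)\norm{\phi x-\phi y}^2=\norm{x-y}^2$, and substituting $j_\phi=\Om_\phi^{-1}$ yields \eqref{eq_identity_all}. The only care needed is this normalization: matching the paper's convention $\Om=j^{-1}$ with Definition~\ref{def_conformal_map}, and the positivity of $j$.

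\emph{Second identity.} My plan is to recognize $1-2(x,y)+|x|^2|y|^2$ as a squared distance involving the inversion. Indeed,
\[
1-2(x,y)+|x|^2|y|^2=|x|^2\,\norm{J(x)-y}^2 ,
\]
so up to the factor $|x|^2$ the left side is the pairing $-2\langle J\si_0(x),\si_0(y)\rangle$. Concretely, $J\si_0(x)=|x|^2\si_0(x/|x|^2)$, so
\[
\langle J\si_0(x),\si_0(y)\rangle=-\tfrac12\,|x|^2\,|x/|x|^2-y|^2=-\tfrac12\bigl(1-2(x,y)+|x|^2|y|^2\bigr).
\]
This expression is polynomial, so it also covers $x=0$.

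For $\phi\in\fG_d$, Proposition~\ref{prop_fixed_point}(2) gives $J\phi=\phi J$. Then
\[
\langle J\si_0(\phi x),\si_0(\phi y)\rangle=j_\phi(x)^{-1}j_\phi(y)^{-1}\langle J\phi\si_0(x),\phi\si_0(y)\rangle=j_\phi(x)^{-1}j_\phi(y)^{-1}\langle \phi J\si_0(x),\phi\si_0(y)\rangle .
\]
By invariance of the form under $\phi$, this equals $j_\phi(x)^{-1}j_\phi(y)^{-1}\langle J\si_0(x),\si_0(y)\rangle$. Replacing $j_\phi^{-1}$ by $\Om_\phi$ gives \eqref{eq_identity_G}.

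The main (mild) obstacle is keeping the conventions straight: the square versus non-square conformal factor, and the positivity of $j_\phi$ via $\mathcal N_+$. The commutation of $\phi$ with $J$ is already supplied by Proposition~\ref{prop_fixed_point}.
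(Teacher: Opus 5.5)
Your proof is correct. The first identity is handled exactly as in the paper, via Lemma \ref{lem_norm_all_inv} and Proposition \ref{prop_Liouville_sphere}; your check that $j_\phi>0$ via $\mathcal N_+$ fills in a detail the paper leaves implicit.

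For \eqref{eq_identity_G} you take a genuinely different route.

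\textbf{The paper's route.} It passes to the upper half-space with the Cayley transform $\cC$. It proves that $H_C(x',y')=|x'-y'|^2/(x'_dy'_d)$ is $\fG_C$-invariant by writing each element as $T_{y'}D(y_d)\cC^{-1}R\cC$, using that elements of $\fG$ fixing the origin are rotations. From this it deduces invariance of the Poincar\'e-disk quantity $|x-y|^2/((1-|x|^2)(1-|y|^2))$. It then combines this with \eqref{eq_identity_all} and the elementary identity $|x-y|^2+(1-|x|^2)(1-|y|^2)=1-2(x,y)+|x|^2|y|^2$.

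\textbf{Your route.} You stay in the Lorentzian model and identify $1-2(x,y)+|x|^2|y|^2$ with $-2\langle J\si_0(x),\si_0(y)\rangle$. The identity then follows from $J\phi=\phi J$ and invariance of the form alone. Both identities thus come from the same mechanism: $|x-y|^2=-2\langle\si_0(x),\si_0(y)\rangle$, twisted by $J$. You also avoid the explicit Cayley-transform computation and the generation of $\fG_C$.

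\textbf{Comparison.} The one external input you use is Proposition \ref{prop_fixed_point}(2), which the paper proves in the Cayley picture. In your setting it is equally elementary. $J$ is the Lorentz reflection in the spacelike vector $e_++e_-$, and the null vectors orthogonal to $e_++e_-$ project exactly onto $\partial\bD$. So any $\phi$ preserving $\bD$ preserves $(e_++e_-)^\perp$, hence sends $e_++e_-$ to $\pm(e_++e_-)$, hence commutes with $J$. In exchange, the paper's route makes the hyperbolic-metric interpretation visible, along with the intermediate fact $1-|\phi(x)|^2=\Om_\phi(x)(1-|x|^2)$. Your identity recovers that fact by setting $y=x$ and taking positive square roots.
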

The above identities are higher-dimensional analogues of the well-known identities for the linear fractional transformations $\mathrm{PSL}_2\C$:
\begin{align*}
(cz+d)(cw+d)\left(\frac{az+b}{cz+d} - \frac{aw+b}{cw+d}\right)&= (z-w)\\
((1-\bar{\al}z)(1-\al\bar{w})-(z-\al)(\bar{w}-\bar{\al}))&=(1-|\al|^2)(1-z\bar{w}),
\end{align*}
where $|\al|<1$, which implies that $\frac{z-\al}{1-\bar{\al}z} \in \mathrm{PSU}(1,1)=\fG_2$.


\subsection{Conformally flat d-disk operad}\label{sec_def_CF}
In this section we study the structure of the $\CE$-operad.
First, we recall the following result due to Liouville (see also Lemma \ref{lem_rigid3}):
\begin{thm}\cite{Liouville}\label{thm_Liouville}
Let $U,V \subset \R^d$ be connected open subsets equipped with the standard metric $g_\std$.
\begin{itemize}
\item
Assume $d \geq 3$. A smooth map $\phi:U \rightarrow V$ is an orientation-preserving conformal map if and only if $\phi$ is a restriction of an element in $\Conf^+(S^d)$.
\item
Assume $d=2$. A smooth map $\phi:U \rightarrow V$ is an orientation-preserving conformal map if and only if $\phi$ is a holomorphic map satisfying $\phi'(z)\neq 0$ for any $z\in U$.
\end{itemize}
\end{thm}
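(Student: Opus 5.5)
We will prove the two cases separately; in each, the "if" direction is elementary and the "only if" direction carries the content.

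\emph{Case $d=2$.} If $\phi$ is holomorphic with $\phi'\neq 0$, then $\phi^*g_\std=|\phi'(z)|^2 g_\std$, so $\phi$ is conformal with conformal factor $|\phi'|$. It is orientation-preserving because its Jacobian determinant is $|\phi'|^2>0$. Conversely, let $\phi$ be orientation-preserving and conformal. At each point, $d\phi\in \mathrm{CO}(2)=\mathrm{SO}(2)\times\R_{>0}$. Every such matrix has the form $\begin{pmatrix} a&-b\\ b&a\end{pmatrix}$, so $\phi$ satisfies the Cauchy--Riemann equations. Since $\det d\phi>0$, we get $\phi'\neq0$. This is exactly the content of Lemma~\ref{lem_Kahler}, which we may invoke directly.

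\emph{Case $d\ge3$, the easy direction.} Restrictions of elements of $\SO^+(d+1,1)$ are conformal by Proposition~\ref{prop_Liouville_sphere}, with conformal factor $j_g^{-1}$. They are orientation-preserving because $\SO^+(d+1,1)$ is connected and contains the identity, so the sign of the Jacobian is constant.

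\emph{Case $d\ge3$, the converse.} Fix $p\in U$ and consider the $2$-jet of $\phi$ at $p$. We want to find $g\in\SO^+(d+1,1)$ whose $2$-jet at $p$ agrees with that of $\phi$. Lemma~\ref{lem_rigid3}, applied on the connected set $U$, then gives $\phi=g|_U$. Since $g$ and $\phi$ both map into $\R^d$, the target being $S^d$ causes no problem: near $p$ the map $g$ avoids $\infty$, and rigidity holds on the connected domain. To match the jet, first compose with translations so that $p=0$ and $\phi(0)=0$. Next, $d\phi_0=\lambda R$ with $\lambda>0$ and $R\in\SO(d)$, so composing with $D(\lambda)^{-1}\tilde R^{-1}$ reduces to the case $d\phi_0=I$.

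The main step is the second-order term. For a conformal map with $d\phi_0=I$, differentiating the relation $\phi^*g_\std=\Om^2 g_\std$ once at $0$ shows that the second derivative is determined by $c:=\nabla\Om(0)$. Explicitly, with $\Om(0)=1$,
\[
\partial_i\partial_j\phi^k(0)=c_i\delta_{jk}+c_j\delta_{ik}-c_k\delta_{ij}.
\]
This is the standard computation: the symmetric tensor $T_{ijk}=(\partial_i\partial_j\phi,\partial_k\phi)$ is determined by the derivatives of the metric through the Christoffel-type identity. The special conformal transformation $K(-c/2)$ (up to sign and normalization, read off from $d(K_\mu)=\norm{x}^2\pa_\mu-2x_\mu E_x$) has $1$-jet equal to the identity at $0$ and realizes exactly this second-order term. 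So the $2$-jets can be matched, and Lemma~\ref{lem_rigid3} finishes the proof.

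The main obstacle is the second-order computation: verifying that conformality forces the Hessian to have the stated form, and that $K(c)$ realizes every such form. Both are direct calculations using the explicit formula for $K(c)$.
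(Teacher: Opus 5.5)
Your proof is correct. The paper does not prove this statement itself. It cites Liouville and points to the $2$-rigidity Lemma~\ref{lem_rigid3}, and your argument is exactly what that pointer encodes. Translations, dilations and rotations normalize the $1$-jet at $p$; the special conformal transformations $K(b)$ realize every second-order term that conformality permits; and rigidity propagates the agreement of $2$-jets over the connected set $U$. Compared with the bare citation, this gives a self-contained proof for smooth maps, with the hard analysis delegated to the prolongation theory behind Lemma~\ref{lem_rigid3}.

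Three small corrections.

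\begin{enumerate}
\item The sign in the special conformal transformation is off. In the paper's parametrization, $K(b)(x)=x+2(b\cdot x)x-\norm{x}^2b+O(\norm{x}^3)$. Hence $\partial_i\partial_jK(b)^k(0)=2(b_i\delta_{jk}+b_j\delta_{ik}-b_k\delta_{ij})$, and its conformal factor $j_{K(b)}^{-1}$ has gradient $2b$ at $0$. The element you need is therefore $K(c/2)$, not $K(-c/2)$.
\item Apply Lemma~\ref{lem_rigid3} to $\sigma_0\circ\phi$ and $g\circ\sigma_0|_U$, viewed as maps $U\to S^d$. The fact that $g$ maps $\sigma_0(U)$ into $\sigma_0(V)$ is then a conclusion of the lemma, not something you need to assume beforehand.
\item In the easy direction, do not identify $\Conf^+(S^d)$ with $\SO^+(d+1,1)$ via Proposition~\ref{prop_Liouville_sphere}. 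In the paper, the inclusion $\Conf(S^d)\subset\mathrm{O}^+(d+1,1)$ is itself derived from this theorem, so that step would be circular. It is also unnecessary: that direction follows directly from the definition of $\Conf^+(S^d)$, because $\sigma_0$ is an orientation-compatible conformal chart. The hard direction only uses the elementary inclusion $\SO^+(d+1,1)\subset\Conf^+(S^d)$.
\end{enumerate}
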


It follows immediately that
\begin{align*}
\CEc(1) =
\begin{cases}
\fS_d & \text{ for }d \geq 3\\
\{\phi:\bD_2 \rightarrow \bD_2 \mid \text{$\phi$ is an injective holomorphic map}\} & \text{ for }d =2.
\end{cases} 
\end{align*}
as semigroups. Hence, we have:
\begin{prop}\label{prop_CE_d_explicit}
For $d \geq 3$, we have $\CE(1)=\CEc(1)=\fS_d$, and
\begin{align*}
\CEc(n) &= \{(g_1,\dots,g_n) \in \fS_d^n \mid g_i(\bD_d) \cap g_j(\bD_d) =\emptyset \text{ for any } i\neq j\},\\
\CE(n) &= \{(g_1,\dots,g_n) \in \fS_d^n \mid \overline{g_i(\bD_d)} \cap \overline{g_j(\bD_d)} =\emptyset \text{ for any } i\neq j\}.
\end{align*}
The operadic composition $\circ_i:\CE(n) \times \CE(m) \rightarrow \CE(n+m-1)$ is given by
\begin{align*}
(g_1,\dots,g_n) \circ_i (f_1,\dots,f_m)
= (g_1,\dots,g_{i-1}, g_i \circ f_1, \dots, g_i\circ f_m, g_{i+1},\dots,g_n).
\end{align*}
\end{prop}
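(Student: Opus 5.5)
The plan is to identify $\CE(1)$ first, then $\CEc(n)$ and $\CE(n)$ componentwise, and finally read off the composition from composition of maps.

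\emph{Step 1: $\CE(1)=\CEc(1)=\fS_d$.} The identity $\CEc(1)=\fS_d$ for $d\geq 3$ was just derived from Theorem~\ref{thm_Liouville}. Via the faithful functor $F$ of Proposition~\ref{prop_faithful}, $\CE(1)$ is a submonoid of $\CEc(1)$, so only the reverse inclusion needs work. Take $g\in\fS_d$. It is the restriction of an element of $\SO^+(d+1,1)$, which is a global conformal diffeomorphism of $S^d$. Since $g(\bD)\subset\bD$, continuity gives $g(\overline{\bD})\subset\overline{\bD}$, and in particular $g(\overline{\bD})$ avoids $\infty$.

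Because $\overline{\bD}$ is compact and $g^{-1}(\infty)$ is a single point outside $\overline{\bD}$, there is an open neighborhood $V$ of $\overline{\bD}$ in $\R^d$ with $g^{-1}(\infty)\notin V$. The restriction $g|_V:V\to\R^d$ is then an orientation-preserving conformal open embedding with $g(\overline{\bD})\subset\overline{\bD}$. It therefore defines a morphism of $\Embc$ whose image under $F$ is $g$, so $g\in\CE(1)$.

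\emph{Step 2: description of $\CEc(n)$ and $\CE(n)$.} The description of $\CEc(n)$ is the decomposition already noted after Definition~\ref{def_operad}, combined with Step 1.

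For $\CE(n)$, a morphism $\sqcup_n\overline{\bD}\to\overline{\bD}$ in $\Embc$ consists of a germ of a conformal open embedding on a neighborhood of $\sqcup_n\overline{\bD}$. Each component is a germ around $\overline{\bD}$ that restricts to some $g_i\in\fS_d$. By Lemma~\ref{lem_rigid3} or Theorem~\ref{thm_Liouville}, each component agrees with $g_i$ near $\overline{\bD}$.

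\begin{itemize}
\item \emph{Necessity.} Injectivity of the embedding on the neighborhood forces $g_i(\overline{\bD})\cap g_j(\overline{\bD})=\emptyset$ for $i\neq j$. Since $g_i(\overline{\bD})=\overline{g_i(\bD)}$ by compactness, the closures are disjoint.
\item \emph{Sufficiency.} Suppose the closures are pairwise disjoint. These are disjoint compact sets, so there are disjoint open neighborhoods $W_i$ of $g_i(\overline{\bD})$. Choose neighborhoods $V_i$ of $\overline{\bD}$ as in Step 1 that are small enough that $g_i(V_i)\subset W_i$; this is possible by continuity and compactness. Then $\sqcup_i g_i|_{V_i}$ is an injective conformal open embedding, and it sends each core into $\overline{\bD}$. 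Hence it is a morphism of $\Embc$.
\end{itemize}

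\emph{Step 3: operadic composition.} The formula for $\circ_i$ is immediate: composition in $\Embc$ is composition of representatives, and $F$ is a functor.

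The main point requiring care is Step 1, in checking that every $g\in\fS_d$ extends to a neighborhood of the closed disk. This holds because $g$ is globally defined on $S^d$, so the only possible obstruction is the pole $g^{-1}(\infty)$, and that point lies outside $\overline{\bD}$.
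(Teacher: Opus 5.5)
Your proposal is correct and follows essentially the same route as the paper. Both arguments extend each $g\in\fS_d$ as a global M\"obius transformation to a neighborhood of $\overline{\bD}$; you treat the pole $g^{-1}(\infty)$ explicitly, which the paper leaves implicit in its choice of $B_{1+\ep}(0)$. For $n\geq 2$, both use disjointness of the compact images $g_i(\overline{\bD})=\overline{g_i(\bD)}$ to find neighborhoods on which the components stay disjoint, and get the converse from injectivity on a neighborhood of the cores.
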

\begin{proof}
By Theorem \ref{thm_Liouville}, any element $f \in \CEc(1)$ is an element of $\fS_d$, and the extension $\tilde{f}\in\fS_d$ satisfies $\tilde{f}(\overline{\bD}) \subset \overline{\bD}$, we have $\tilde{f} \in \Embc(\overline{\bD},\overline{\bD})$ and $\CE(1)=\CEc(1)=\fS_d$.

Let $(g_1,\dots,g_n) \in \fS_d^n$ satisfy
$\overline{g_i(\bD_d)} \cap \overline{g_j(\bD_d)} =\emptyset$ for any $i\neq j$.
Then, for sufficiently small $\ep >0$, we have
$g_i(B_{1+\ep}(0)) \cap g_j(B_{1+\ep}(0)) = \emptyset$.
Hence,
\begin{align}
(g_1,\dots,g_n): \sqcup_n B_{1+\ep}(0) \rightarrow \R^d
\label{eq_ep_smooth}
\end{align}
is an open conformal embedding, and thus defines a morphism in $\Embc$.
Conversely, any element of
$\Embc(\sqcup_n (\R^d,g_\std,\bD), (\R^d,g_\std,\bD))$
satisfies
$\overline{g_i(\bD_d)} \cap \overline{g_j(\bD_d)} =\emptyset$.
\end{proof}

%
%
\begin{prop}\label{prop_CE_two}
The set $\CEt(1)$ consists of holomorphic maps $\phi:\bD_2 \rightarrow \bD_2$ such that there exist an open set $U$ with $\bD_2 \subset U$ and an injective holomorphic map $\tilde{\phi}: U \rightarrow \C$ such that $\tilde{\phi}|_{\bD_2}=\phi$.
In particular, $\CEt(1)$ is a proper submonoid of $\CEct(1)$. Moreover,
\begin{align}
\CEt(n)=\{(f_1,\dots,f_n) \in \CEt(1)^n \mid \overline{f_i(\bD)}\cap \overline{f_j(\bD)} = \emptyset \text{ for any }i\neq j \}.
\label{eq_CEt_exp}
\end{align}
Here the closures are taken in $\C$. Furthermore, if $f \in \CEt(1)$ satisfies
$\overline{f(\bD)} = \overline{\bD}$, then $f \in \fG_2$.
\end{prop}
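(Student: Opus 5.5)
The plan is to unwind Definition~\ref{def_cat_compact} for the object $(\R^2,g_\std,\overline{\bD}_2)$ and then use Lemma~\ref{lem_Kahler} together with Theorem~\ref{thm_Liouville} ($d=2$) to translate conformality into holomorphy.

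\emph{Description of $\CEt(1)$.} A morphism $\overline{\bD}\to\overline{\bD}$ is the germ of an orientation-preserving conformal open embedding $\tilde\phi:V\to\R^2$. Here $V$ is an open neighborhood of $\overline{\bD}$ and $\tilde\phi(\overline{\bD})\subset\overline{\bD}$. By the $d=2$ case of Theorem~\ref{thm_Liouville}, applied on the connected component of $V$ containing $\overline{\bD}$, $\tilde\phi$ is an injective holomorphic map. Its restriction $\phi$ is then an injective holomorphic self-map of $\bD$. Conversely, let $\tilde\phi:U\to\C$ be injective holomorphic with $\tilde\phi(\bD)\subset\bD$. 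Continuity gives $\tilde\phi(\overline{\bD})\subset\overline{\bD}$, and injectivity makes $\tilde\phi'$ nonvanishing. So $\tilde\phi$ is an orientation-preserving conformal open embedding defined near the core, hence a morphism. By Proposition~\ref{prop_faithful}, distinct germs restrict to distinct maps on $\bD$, so $\CEt(1)$ is identified with the stated set of maps.

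\emph{Properness and the description of $\CEt(n)$.} For properness I will exhibit a Riemann map $\phi:\bD\to\Omega\subset\bD$, where $\Omega$ is a slit disk, say $\bD\setminus[0,1)$. Then $\phi$ lies in $\CEct(1)$. It cannot extend injectively, or even continuously and injectively, to a neighborhood of $\overline{\bD}$, because $\overline{\phi(\bD)}$ would be a closed Jordan domain whose interior is $\phi(\bD)$; the slit violates this. For $\CEt(n)$, I will repeat the argument of Proposition~\ref{prop_CE_d_explicit}. Suppose the extensions are defined on a common $B_{1+\ep}(0)$ and have disjoint closures $\overline{f_i(\bD)}$. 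By compactness, shrinking $\ep$ keeps the images $f_i(B_{1+\ep'}(0))$ disjoint, so the tuple is an open embedding of $\sqcup_n B_{1+\ep'}$. Conversely, an open embedding defined near $\sqcup_n\overline{\bD}$ is injective on the disjoint union of closed disks, so the closures $f_i(\overline{\bD})=\overline{f_i(\bD)}$ are disjoint.

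\emph{Final claim.} This is the main step. Suppose $\overline{f(\bD)}=\overline{\bD}$ with $f$ extending injectively to $U\supset\overline{\bD}$. Then $f(\partial\bD)$ is a Jordan curve, and $f(\bD)$ is the bounded component of its complement, by the Jordan curve theorem and the fact that $f$ is a homeomorphism near $\overline{\bD}$. Since $f(\overline{\bD})=\overline{\bD}$ and $f(\bD)$ is open in $\bD$, the boundary $f(\partial\bD)=\partial f(\bD)$ must lie in $\partial\bD$. A Jordan curve contained in the circle $\partial\bD$ must be the whole circle. Hence $f(\bD)$ is the bounded component of $\C\setminus\partial\bD$, namely $\bD$. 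So $f$ is a biholomorphism of $\bD$, i.e.\ an element of $\mathrm{PSU}(1,1)$ restricted to the disk. This group is $\fG_2$, which is realized inside $\SO^+(3,1)$ via Möbius maps preserving $\bD$. The delicate point is justifying $\partial f(\bD)=f(\partial\bD)\subset\partial\bD$, which uses invariance of domain for the injective extension.
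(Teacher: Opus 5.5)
Your proposal is correct and follows the same route as the paper: the first statement and the description of $\CEt(n)$ via the argument of Proposition~\ref{prop_CE_d_explicit} with the $d=2$ case of Liouville's theorem, properness via a Riemann map onto the slit disk $\bD\setminus[0,1)$, and the final claim by showing that the injective extension forces $f(\bD)=\bD$, so that $f$ is a disk automorphism, i.e.\ an element of $\mathrm{PSU}(1,1)=\fG_2$. You justify the boundary step more carefully than the paper, which compresses it into ``$\tilde f:\overline{\bD}\to\overline{\bD}$ is injective''; your Jordan-curve detour is not needed, since continuity and injectivity of $\tilde f$ on a neighborhood of $\overline{\bD}$ already give $\tilde f(\partial\bD)\subset\partial\bD$, and hence $\bD\subset f(\bD)$.
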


\begin{proof}
The first statement and \eqref{eq_CEt_exp} follow from the proof of Proposition \ref{prop_CE_d_explicit}.
Note that since the intersection of $\overline{f_i(\bD)}$ and $\overline{f_j(\bD)}$ may occur on the boundary of $\overline{\bD}$, the closures must be taken in $\C$.
By the Riemann mapping theorem, there exists a biholomorphic map
$f:\bD \rightarrow \bD \setminus [0,1)$.
This map belongs to $\CEct(1)$, but it clearly cannot be extended to an injective holomorphic map on any neighborhood of $\overline{\bD}$.
Hence $\CEt(1) \subsetneq \CEct(1)$.
Let $f \in \CEt(1)$ satisfy $\overline{f(\bD)} = \overline{\bD}$, and denote by $\tilde{f}$ its extension to $\overline{\bD} \subset U$.
Since $\overline{f(\bD)}=\tilde{f}(\overline{\bD})=\overline{\bD}$ and
$\tilde{f}:\overline{\bD} \rightarrow \overline{\bD}$ is injective,
it follows that $f:\bD \rightarrow \bD$ is a biholomorphic map,
which is an element of $\fG_2=\mathrm{PSU}(1,1)$ by Schwarz's lemma.
\end{proof}

%

Let $\CE(1)^\times$ denote the subgroup of $\CE(1)$ consisting of those elements
$f \in \CE(1)$ that are invertible as elements of $\CE(1)$.
By Proposition \ref{prop_CE_two} and Proposition \ref{prop_CE_d_explicit}, we have the following:
\begin{prop}\label{prop_invertible}
For any $d \geq 2$, the group $\CE(1)^\times$ coincides with $\fG_d$.
\end{prop}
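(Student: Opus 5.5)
We show both inclusions $\fG_d \subseteq \CE(1)^\times$ and $\CE(1)^\times \subseteq \fG_d$, treating $d\geq 3$ and $d=2$ separately only where needed.

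\emph{The inclusion $\fG_d \subseteq \CE(1)^\times$.} Let $g\in\fG_d$. It is a global conformal diffeomorphism of $S^d$ with $g(\bD)=\bD$, so by continuity $g(\overline{\bD})=\overline{\bD}$. Restricting to a neighborhood of $\overline{\bD}$ therefore gives a morphism of $\Embc$. The same holds for $g^{-1}\in\fG_d$, so $g$ is invertible in $\CE(1)$. For $d\geq 3$ this is immediate from $\CE(1)=\fS_d\supset\fG_d$ (Proposition \ref{prop_CE_d_explicit}). For $d=2$, elements of $\fG_2=\mathrm{PSU}(1,1)$ are M\"obius maps, hence injective and holomorphic near $\overline{\bD}$, so they lie in $\CEt(1)$ by Proposition \ref{prop_CE_two}.

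\emph{The inclusion $\CE(1)^\times \subseteq \fG_d$.} Let $f\in\CE(1)$ have an inverse $h\in\CE(1)$ with $fh=hf=\id$. Passing to the interior via the faithful functor $F$ (Proposition \ref{prop_faithful}), $f|_{\bD}$ and $h|_{\bD}$ are mutually inverse self-maps of $\bD$, so $f(\bD)=\bD$.
\begin{itemize}
\item For $d\geq 3$: $f\in\fS_d$ is a global element of $\SO^+(d+1,1)$ with $f(\bD)=\bD$, so $f\in\fG_d$ by definition.
\item For $d=2$: $f$ is a biholomorphism of $\bD$, so $f\in\mathrm{PSU}(1,1)=\fG_2$ by Schwarz's lemma. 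Equivalently, $\overline{f(\bD)}=\overline{\bD}$, and the final assertion of Proposition \ref{prop_CE_two} applies.
\end{itemize}

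\emph{Main obstacle.} The one point needing care is that equality in $\CE(1)$ means equality of germs near $\overline{\bD}$, whereas surjectivity $f(\bD)=\bD$ is read off the interior. This is exactly what faithfulness of $F$ controls: $fh=\id$ in $\CE(1)$ implies $f\circ h=\id$ on $\bD$, and conversely germs agreeing on $\bD$ agree near $\overline{\bD}$ by Lemmas \ref{lem_Kahler} and \ref{lem_rigid3}.
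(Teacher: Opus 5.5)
Your proof is correct and follows essentially the same route as the paper, which derives the statement directly from Proposition~\ref{prop_CE_d_explicit} (for $d\geq 3$, $\CE(1)=\fS_d$) and from the final assertion of Proposition~\ref{prop_CE_two} (for $d=2$, via Schwarz's lemma). Your write-up additionally spells out the germ-versus-interior bookkeeping through the functor $F$, which the paper leaves implicit.
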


Let $(A,\rho)$ be a $\CE$-algebra in $\Vect$.
Note that for $*=(\emptyset \rightarrow\bD) \in \CE(0)$, $\rho_0(*) \in \mathrm{Hom}_{\Vect}(\C,A)\cong A$, which defines a distinguished element in $A$.
We denote it by $\va_A$, called the \textbf{vacuum vector}.
\begin{prop}\label{prop_vacuum_1}
Let $(A,\rho)$ be a $\CE$-algebra in $\Vect$. Then, $\rho_1=\rho: \CE(1) \rightarrow \End(A)$ is a unital monoid homomorphism which satisfies the following conditions:
\begin{enumerate}
\item
$\va_A \in A$ satisfies $\rho_1(g)\va_A =\va_A$ for any $g \in \CE(1)$.
\item
$\rho_{n+1}(\phi_{[n+1]})(v_1,\dots,v_n,\va_A)=\rho_n(\phi_{[n+1]} \circ_{n+1} *)(v_1,\dots,v_n)$ for any $\phi_{[n+1]} \in \CE(n+1)$ and $(v_1,\dots,v_n)\in A^n$.
\item
For any $(\phi_1,\dots,\phi_n) \in \CE(n)$ and $g\in \CE(1)$,
\begin{align*}
\rho(g)\rho_n(\phi_1,\dots,\phi_n) = \rho_n(g\circ \phi_1,\dots,g\circ\phi_n)
\end{align*}
and for any $i \in \{1,\dots,n\}$,
\begin{align*}
\rho_n(\phi_1,\dots,\phi_n)\circ_i \rho(g) = \rho_n(\phi_1,\dots, \phi_i \circ g, \dots, \phi_n).
\end{align*}
\end{enumerate}
\end{prop}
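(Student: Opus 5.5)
All three statements will be derived directly from the axioms of Definition \ref{def_operad_algebra}, applied to particular operadic compositions in $\CE$. The main tools are the description of composition in Proposition \ref{prop_CE_d_explicit} (composition with the empty embedding $*\in\CE(0)$ forgets a disk) and the identity $\rho_1(\id_\bD)=\id_A$. We first check that $\rho_1$ is a unital monoid homomorphism. The unit axiom gives unitality. Associativity \eqref{eq_def_of_operad} with $n=m=1$ gives $\rho_1(g\circ_1 h)=\rho_1(g)\rho_1(h)$, and $g\circ_1 h=g\circ h$ in $\CE(1)$.

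For (1), take $g\in\CE(1)$ and consider $g\circ_1 *\in\CE(0)$. Since $\CE(0)=\Embc(\emptyset,\overline{\bD})$ is a single point (the empty set is initial, Proposition \ref{prop_category_mor}), we have $g\circ_1 *=*$. Applying \eqref{eq_def_of_operad} with $n=1$, $m=0$ gives
\begin{align*}
\rho_1(g)\va_A=\rho_1(g)\circ_1\rho_0(*)=\rho_0(g\circ_1 *)=\rho_0(*)=\va_A .
\end{align*}

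For (2), apply \eqref{eq_def_of_operad} with $i=n+1$ and $m=0$:
\begin{align*}
\rho_n(\phi_{[n+1]}\circ_{n+1}*)=\rho_{n+1}(\phi_{[n+1]})\circ_{n+1}\rho_0(*).
\end{align*}
Since $\rho_0(*)$ is the map $\C\to A$ sending $1$ to $\va_A$, evaluating the right-hand side on $v_1\otimes\cdots\otimes v_n$ gives $\rho_{n+1}(\phi_{[n+1]})(v_1,\dots,v_n,\va_A)$.

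For (3), the second identity is \eqref{eq_def_of_operad} with $m=1$, $\phi_{[m]}=g$, because $(\phi_1,\dots,\phi_n)\circ_i g=(\phi_1,\dots,\phi_i\circ g,\dots,\phi_n)$ by Proposition \ref{prop_CE_d_explicit}. The first identity is \eqref{eq_def_of_operad} with the outer operation $g\in\CE(1)$, $i=1$, inserted into, and $\phi_{[n]}=(\phi_1,\dots,\phi_n)$. Indeed, $g\circ_1(\phi_1,\dots,\phi_n)=(g\circ\phi_1,\dots,g\circ\phi_n)$, again by the explicit composition formula. For $d=2$ the same formula holds, since composition in the operad is literally composition of embeddings.

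There is no real obstacle here; the only point needing care is to check that each composite stays in $\CE$, i.e.\ that the closures remain disjoint. This is automatic because the operad is defined through the category $\Embc$ and is therefore closed under composition.
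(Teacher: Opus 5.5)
Your proof is correct and follows the paper's argument: (1) comes from $g\circ_1 * = *$ (with $\CE(0)$ a single point) together with the composition axiom \eqref{eq_def_of_operad}. The remaining claims, (2) and (3), are direct instances of \eqref{eq_def_of_operad} with $m=0$, $m=1$, and an outer operation in $\CE(1)$; the paper calls these clear, and you have simply written out which instances are used.
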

\begin{proof}
Since $g \circ_1 * = * \in \CE(0)$ for any $g \in \CE(1)$, $\rho(g)\va_A =\va_A$ holds. (2), (3) are clear.
\end{proof}

\begin{dfn}\label{def_ideal}
Let $(A,\rho)$ be a $\CE$-algebra in $\Vect$. An \textbf{ideal} of $A$ is a subspace $I \subset A$ such that
\begin{align*}
\rho(\phi_1,\dots,\phi_n)(a_1,\dots,a_{n-1},v) \in I
\end{align*}
for any $n \geq 1$, $(\phi_1,\dots,\phi_n)\in \CE(n)$, $a_i \in A$, and $v \in I$.
In this case, the quotient $A/I$ as a vector space naturally carries the structure of a $\CE$-algebra.
\end{dfn}

We end this section by showing that a $\CE$-algebra becomes trivial when $\va_A=0$.

\begin{lem}\label{lem_degenerate}
Let $A$ be a $\CE$-algebra in $\Vect$. If $\va_A=0$, then the following hold:
\begin{enumerate}
\item
For any $n \geq 2$, $\rho_n(\phi_{[n]})=0$ for any $\phi_{[n]} \in \CE(n)$.
\item
If $g \in \CE(1) \setminus \fG$, then $\rho_1(g)=0$.
\end{enumerate}
On the other hand, if a $\CE$-algebra $A$ in $\Vect$ satisfies $\rho_1(g)\va_A=0$ for some $g \in \CE(1)$, then $\va_A=0$.
\end{lem}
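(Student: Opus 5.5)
The idea is to trade an element of $\CE(1)$, or a multi-disk configuration, for a configuration that contains an extra empty disk. The vacuum axiom (Proposition \ref{prop_vacuum_1}(2)) then lets us insert $\va_A$ in that slot.

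For (1), take $\phi_{[n]}=(\phi_1,\dots,\phi_n)\in\CE(n)$ with $n\geq 2$. I would write $\phi_{[n]}$ as an operadic composition in which one of the inputs is filled by the empty disk $*\in\CE(0)$. Since the closures $\overline{\phi_i(\bD)}$ are pairwise disjoint compact subsets of $\bD$ (Proposition \ref{prop_CE_d_explicit}, resp. \eqref{eq_CEt_exp}), there is room to add one more small disk. Concretely, I pick a point of $\bD\setminus\bigcup_i\overline{\phi_i(\bD)}$, which is nonempty and open. I then choose $r>0$ small and $a$ so that $\overline{B_r(a)}$ is disjoint from all $\overline{\phi_i(\bD)}$ and lies in $\bD$. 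Let $\psi=T_a r^D\in\CE(1)$, so that $\Phi=(\phi_1,\dots,\phi_n,\psi)\in\CE(n+1)$. Then $\phi_{[n]}=\Phi\circ_{n+1}*$, and Proposition \ref{prop_vacuum_1}(2) gives
\[\rho_n(\phi_{[n]})(v_1,\dots,v_n)=\rho_{n+1}(\Phi)(v_1,\dots,v_n,\va_A)=0,\]
by multilinearity, since $\va_A=0$. This argument actually works for every $n\geq 1$ whenever the complement of $\bigcup_i\overline{\phi_i(\bD)}$ in $\bD$ is nonempty. The hypothesis $n\geq 2$ guarantees this, because a connected $\bD$ cannot be the union of two or more disjoint nonempty closed sets that are compact in $\bD$.

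For (2), the same argument applies with $n=1$ once $\bD\setminus\overline{g(\bD)}\neq\emptyset$, where the closure is taken in $\bD$. This is where I expect the main obstacle, since one must show that $g\notin\fG_d$ forces $g(\bD)$ to miss an open subset of $\bD$. For $d\geq3$ I would use Proposition \ref{prop_decomposition}(1) to write $g=T_{x_0}r^Dh$ with $r<1$. Then $g(\bD)=B_r(x_0)$ is a proper ball inside $\bD$, and $\bD\setminus\overline{B_r(x_0)}$ is a nonempty open set. For $d=2$, $g(\bD)$ is a Jordan domain with $\overline{g(\bD)}\neq\overline{\bD}$ by the last claim of Proposition \ref{prop_CE_two}, so its complement in $\bD$ contains an open set. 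Inserting a small disk $\psi$ there gives $(g,\psi)\in\CE(2)$ and $\rho_1(g)=\rho_2(g,\psi)(\cdot,\va_A)=0$.

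For the last statement, suppose $\rho_1(g)\va_A=0$. By Proposition \ref{prop_vacuum_1}(1), $\rho_1(g)\va_A=\va_A$, so $\va_A=0$ immediately.
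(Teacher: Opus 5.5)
Your proof is correct and is essentially the paper's argument: you append a small disk $T_{x_0}D(\epsilon)$ in $\bD\setminus\bigcup_i\overline{\phi_i(\bD)}$, feed it $\va_A$ via Proposition~\ref{prop_vacuum_1}(2), show this complement is nonempty for $n\geq 2$ and for $g\notin\fG_d$ (using Proposition~\ref{prop_decomposition} when $d\geq 3$ and Proposition~\ref{prop_CE_two} when $d=2$), and obtain the converse from $\rho_1(g)\va_A=\va_A$. One harmless slip: the closures $\overline{\phi_i(\bD)}$ need not be compact subsets of $\bD$, since they may touch $\partial\bD$; your connectedness argument only needs the sets $\overline{\phi_i(\bD)}\cap\bD$ to be relatively closed in $\bD$, pairwise disjoint and nonempty, so it goes through unchanged.
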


\begin{proof}
Let $\phi_{[n]}=(\phi_1,\dots,\phi_n) \in \CE(n)$ for $n \geq 1$.
If
\begin{align}
\bD \not\subset \bigcup_{i=1}^n \overline{\phi_i(\bD)}, \label{eq_sature}
\end{align}
then there exist $x_0 \in \bD$ and a sufficiently small $\ep>0$ such that $B_\ep(x_0) \subset \bD$ and
$\overline{B_\ep(x_0)} \cap \overline{\phi_i(\bD)} =\emptyset$ for any $i=1,\dots,n$.
Hence, $(\phi_1,\dots,\phi_n,T_{x_0}D(\ep)) \in \CE(n+1)$.\\
Thus,
$\rho_n(\phi_1,\dots,\phi_n)
=\rho_{n+1}(\phi_1,\dots,\phi_n,T_{x_0}D(\ep))\circ_{n+1}\rho_0(\emptyset)
=0$.
By Proposition \ref{prop_CE_two} and Proposition \ref{prop_CE_d_explicit}, if $n \geq 2$, then \eqref{eq_sature} always holds for any $\phi_{[n]} \in \CE(n)$. This proves (1). Let us consider the case of $n=1$.
If $d \geq 3$, then for $g \in \fS_d \setminus \fG_d$ it is clear that $\bD \not\subset \overline{g(\bD)}$.
If $d =2$, this follows from Proposition \ref{prop_CE_two}.
The final assertion follows immediately from Proposition \ref{prop_vacuum_1} (1).
\end{proof}

%
%

Let $V$ be a representation of $\fG$ in $\Vect$. Then one can define a $\CE$-algebra structure on $V$ so that it satisfies (1) and (2) of Lemma \ref{lem_degenerate}, and such that $\rho_1|_{\fG}$ agrees with the given representation on $V$. Hence, we have:
\begin{prop}\label{prop_}
There is a one-to-one correspondence between $\fG_d$-representations in $\Vect$ and $\CE$-algebras in $\Vect$ satisfying $\va_A=0$.
\end{prop}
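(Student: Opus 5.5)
Two maps set up the correspondence. Going one way, a $\CE$-algebra $(A,\rho)$ with $\va_A=0$ restricts along $\fG_d=\CE(1)^\times\subset\CE(1)$ (Proposition \ref{prop_invertible}). Since $\rho_1$ is a unital monoid homomorphism (Proposition \ref{prop_vacuum_1}), this restriction $\rho_1|_{\fG_d}$ is a group representation. Going the other way, we start from a representation $\pi:\fG_d\to \mathrm{GL}(V)$ and define $\rho$ as follows:
\begin{align*}
\rho_0(*)=0,\qquad \rho_1(g)=\begin{cases}\pi(g) & g\in\fG_d\\ 0 & g\in \CE(1)\setminus\fG_d,\end{cases}\qquad \rho_n\equiv 0\ (n\geq 2).
\end{align*}
Lemma \ref{lem_degenerate} shows that any $\CE$-algebra with $\va_A=0$ is forced to have exactly this form. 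Hence the map "restrict to $\fG_d$" is injective on such algebras, and the construction above inverts it, provided it really yields a $\CE$-algebra. That verification is the substance of the proof.

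For the axioms, unitality $\rho_1(\id)=\id_V$ holds because $\id\in\fG_d$. $S_n$-equivariance is automatic: $\rho_n$ vanishes for $n\ne1$, and $S_1$ is trivial. It remains to check the composition law \eqref{eq_def_of_operad} for $\phi\in\CE(n)$, $\psi\in\CE(m)$ and $\phi\circ_i\psi\in\CE(n+m-1)$. We split by $(n,m)$ as follows.
\begin{itemize}
\item[(a)] If $n+m-1\geq 2$, the left side is $0$. On the right, either $n\geq2$ or $m\geq2$, so one factor is $0$ and the right side vanishes.
\item[(b)] If $n=1,m=0$, then $\phi\circ_1 *=*$ and $\rho_0(*)=0$. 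The right side is $\rho_1(\phi)(0)=0$.
\item[(c)] If $n=m=1$, we need $\rho_1(gh)=\rho_1(g)\rho_1(h)$ for $g,h\in\CE(1)$. When both lie in $\fG_d$ this is the homomorphism property of $\pi$.
\end{itemize}

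The only real content lies in case (c) when $g\notin\fG_d$ or $h\notin\fG_d$. There we need $gh\notin\fG_d$, so that both sides are $0$. The key fact is that $\fG_d$ is a union of complete cosets, i.e. $\CE(1)\setminus\fG_d$ is a two-sided ideal of the monoid $\CE(1)$. To prove it, suppose $gh=k\in\fG_d$. Then $g(h(\bD))=\bD$, so $g$ is surjective onto $\bD$. Consider first $d\geq3$, where $g\in\fS_d\subset\SO^+(d+1,1)$ is injective on the sphere. Then $g(\bD)\subset \bD$ together with $g(h(\bD))=\bD$ forces $g(\bD)=\bD$, so $g\in\fG_d$. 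It follows that $h=g^{-1}k\in\fG_d$. For $d=2$ the same argument works, using injectivity of the extension of $g$ on a neighborhood of $\overline{\bD}$ from Proposition \ref{prop_CE_two}. From $\bD=g(h(\bD))\subset g(\bD)\subset\bD$ we get $g(\bD)=\bD$, and Schwarz's lemma gives $g\in\fG_2$. Then $h=g^{-1}k$ is also invertible, so $h\in\fG_2=\CE(1)^\times$. Alternatively, one can argue uniformly via Proposition \ref{prop_invertible}: a left or right factor of a unit in a monoid need not be a unit in general, but here the injectivity argument supplies it.

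I expect this ideal property to be the only point needing care. The rest is bookkeeping, plus the observation that the two constructions are mutually inverse. For that, note that a $\CE$-algebra with $\va_A=0$ is determined by $\rho_1|_{\fG_d}$ thanks to Lemma \ref{lem_degenerate}(1),(2). Morphisms can be matched in the same way if a categorical equivalence is desired.
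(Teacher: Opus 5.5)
Your construction is the same as the paper's: the paper only asserts, in one sentence, that one can define the algebra so that (1) and (2) of Lemma \ref{lem_degenerate} hold. Your argument that $\CE(1)\setminus\fG_d$ is a two-sided ideal is correct and settles case (c).

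However, your case split for \eqref{eq_def_of_operad} is not exhaustive. The pair $(n,m)=(2,0)$ has $n+m-1=1$, so it falls under none of (a), (b), (c). For $\phi=(\phi_1,\phi_2)\in\CE(2)$, the axiom there reads $\rho_1(\phi_1)=\rho_2(\phi)\circ_2\rho_0(*)=0$, and symmetrically $\rho_1(\phi_2)=0$. Your definition of $\rho_1$ is therefore consistent only if no component of an element of $\CE(2)$ lies in $\fG_d$.

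This case is not bookkeeping. It is exactly the relation that, in the proof of Lemma \ref{lem_degenerate}(2), forces $\rho_1$ to vanish off $\fG_d$, and for the converse construction it has to be checked. So the real content is two facts, not one: your ideal property and this one.

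The fact does hold. Since $\phi_2(\bD)$ is a nonempty open subset of $\bD$ disjoint from $\phi_1(\bD)$, we get $\phi_1(\bD)\neq\bD$, hence $\phi_1\notin\fG_d$ and $\rho_1(\phi_1)=0$. With this line added, your verification of the axioms and the mutual inverse argument via Lemma \ref{lem_degenerate} are complete.
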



\subsection{Hilbert space filtration and monoid action}\label{sec_ind_Hilb}
In this section, we consider the case where a $\CE$-algebra in $\Vect$ is equipped with a filtration consisting of Hilbert spaces.
Such topology and completeness play an important role in the study of $\CE$-algebras.

Throughout this paper, we denote the tensor product in the category of Hilbert spaces by $\hotimes$, in order to distinguish it from the algebraic tensor product $\otimes$ of vector spaces.
Let $H_1,\dots,H_n,H$ be Hilbert spaces, and let
\begin{align}
T: H_1 \otimes \cdots \otimes H_n \rightarrow H \label{eq_vect_linear}
\end{align}
be a linear map.
Denote the inner product on $H_i$ by $(-,-)_{H_i}$.
Then $H_1 \otimes \cdots \otimes H_n$ carries a natural positive-definite inner product $(-,-)_{H_1\otimes \cdots \otimes  H_n}$
(see \eqref{eq_app_tensor_inner}).
The Hilbert space tensor product $H_1 \hotimes \cdots \hotimes H_n$ is defined as the completion of
$H_1 \otimes \cdots \otimes H_n$ with respect to this inner product.
In this situation, the following two conditions are equivalent:
\begin{itemize}
\item
$T$ extends to a bounded linear map between Hilbert spaces
$\tilde{T}: H_1 \hotimes \cdots \hotimes H_n \rightarrow H$.
\item
There exists a constant $C>0$ such that, for any $v \in H_1 \otimes \cdots \otimes H_n$,
\[
\norm{T(v)}_H \leq C \norm{v}_{H_1 \otimes \cdots \otimes H_n}.
\]
\end{itemize}
When these equivalent conditions are satisfied, we say that the vector space morphism
\eqref{eq_vect_linear} is \emph{bounded}.
A crucial remark is that even if there exists $C>0$ such that
\begin{align*}
\norm{T(v_1 \otimes \cdots \otimes v_n)}_H \leq C \norm{v_1}_{H_1}\cdots \norm{v_n}_{H_n}
\end{align*}
holds for any $v_i \in H_i$, $T$ does not necessarily admit such an extension (For counterexamples, see Remark \ref{rem_subtle_bilinear}).
%
%

\begin{dfn}\label{def_ind_CF}
Let $(V,\m)$ be a $\CE$-algebra in $\Vect$.
We say $(V,\m)$ admits \textbf{ a Hilbert space filtration} if
there are subspaces $\{H^k \subset V\}_{k\geq 0}$ equipped with positive Hermitian inner products $(-,-)_k$ such that:
\begin{enumerate}
\item
$\va_V \in H^0$.
\item
$H^k$ forms an increasing sequence $H^0 \subset H^1 \subset H^2 \subset \dots$ and satisfies $V = \cup_{k \geq 0}H^k$.
Moreover, the inner products $(-,-)_k$ are compatible with the inclusions, that is,
$(-,-)_k|_{H^{k-1}} = (-,-)_{k-1}$.
\item
Each $H^k$ is a separable Hilbert space with respect to $(-,-)_k$.
\item
The monoid homomorphism $\rho:\CEd(1) \rightarrow \mathrm{End}(V)$ satisfies
$\rho(g)(H^k) \subset H^k$ for any $g\in\CEd(1)$ and $k \geq 0$.
\item
For any $n \geq 0$ and $k_1,\dots,k_n \geq 0$, there exists $K$ such that
for any $\phi_{[n]} \in \CEd(n)$, the image of
$H^{k_1}\otimes \cdots \otimes H^{k_n}$ under $\m_n(\phi_{[n]}):V^{\otimes n} \rightarrow V$
is contained in $H^K$. Moreover,
\begin{align*}
\m_n(\phi_{[n]})\Bigl|_{H^{k_1}\otimes \cdots \otimes H^{k_n}}:H^{k_1}\otimes \cdots \otimes H^{k_n} \rightarrow  H^{K}
\end{align*}
is a bounded operator in the above sense.
\end{enumerate}
\end{dfn}
Let $\Hilb$ be the category of separable Hilbert spaces and bounded operators.
In Appendix \ref{app_Hilb}, we will see that a $\CE$-algebra equipped with a Hilbert space filtration defines a $\CE$-algebra in the ind-category of the category of Hilbert spaces, $\Ind$.

Let $(V,\rho)$ be a $\CE$-algebra with Hilbert space filtration $V=\cup_k H^k$.

\begin{dfn}\label{def_closed}
A subspace $W \subset V$ is said to be \textbf{closed} if, for every $k \geq 0$, the intersection
$W \cap H^k$ is a closed subspace with respect to the Hilbert space topology of $H^k$.
\end{dfn}

\begin{dfn}\label{def_closed_ideal}
A \textbf{closed ideal} of $V$ is an ideal $I \subset V$ that is a closed subspace.
We say that $V$ is \textbf{simple} if it has no non-trivial closed ideals.
\end{dfn}

\begin{rem}\label{rem_ideal_quotient}
Assume that $I \subset V$ is closed ideal. 
Unfortunately, the quotient $V/I$ does not necessarily admit a Hilbert space filtration in the above sense.
Indeed, $V/I = \cup_k H^k /(H^k\cap I)$, and $H^k/(I\cap H^k)$ carries a Hilbert space structure
as the orthogonal complement of $I\cap H^k$ in $H^k$, denoted by $I_k^\perp$.
However, the naturally induced linear map
$I_k^\perp \rightarrow H^{k+1}/(I\cap H^{k+1}) \rightarrow I_{k+1}^\perp$
is an injective contraction and not necessarily isometric.
Hence condition (2) fails.
In fact, this map is isometric if and only if $I_k^\perp \subset H^k$ is orthogonal to $I\cap H^{k+1}$.

Definition \ref{def_ind_CF} is much stronger than to define a $\CE$-algebra in $\Ind$,
and one could relax them to incorporate such quotients; however, we do not do so for the following reasons.
We expect that $\CE$-algebras arising from conformal field theory are always simple (see Corollary \ref{cor_vertex_simple}),
and throughout this paper we never take quotients by closed ideals.
This corresponds to the fact that simple vertex operator algebras play a central role
in the theory of vertex operator algebras.
\end{rem}

In studying closed ideals and simplicity, it is important to impose unitarity and continuity conditions on the action of the monoid $\CE(1)$ on $V$.
In what follows, we define such conditions and examine their relation to closed subspaces.


%
\begin{dfn}\label{def_semigroup_nice}
For the monoid homomorphism $\rho:\CEd(1) \rightarrow \mathrm{End}(V)$, we consider the following conditions:
\begin{enumerate}
\item[(U)]
$\rho(g)|_{H^k}: H^k \rightarrow H^k$ is unitary and
$\rho|_{\fG_d}: \fG_d \rightarrow U(H^k)$ is strongly continuous for any $k \geq 0$.
\item[(D)]
For any $1>r>0$ and $k \geq 0$, the dilation $D(r)=r^D \in \CEd(1)$, $\rho(r^D)|_{H^k}:H^k \rightarrow H^k$ satisfies
\begin{enumerate}
\item[(1)]
$\rho(r^D)|_{H^k}: \R_{\geq 0} \rightarrow B(H^k)$ is strongly continuous, that is, for any $v \in H^k$ and $\ep>0$, there is $1>\delta>0$ such that for any $1 \geq r >\delta$,
\begin{align*}
\norm{\rho(r^D)v-v} < \ep.
\end{align*}
\item[(2)]
it is a contraction, that is, the operator norm satisfies $\norm{\rho(r^D)|_{H^k}} \leq 1$,
\item[(3)]
it is a self-adjoint operator.
\item[(4)]
it is a Hilbert-Schmidt operator, i.e., there is an orthonormal basis $\{e_i \in H^k\}_{i \in I}$ such that
\begin{align}
\norm{\rho(r^D)|_{H_k}}_2= \sum_{i \in I}\norm{\rho(r^D)e_i}^2 <\infty.\label{eq_trace_class}
\end{align}
Note that in this case \eqref{eq_trace_class} holds for any orthonormal basis.
\item[(5)]
For each $1 \geq r>0$, there is $M(r)>0$ such that $\mathrm{sup}_{k \geq 0} \norm{\rho(r^D)|_{H_k}}_2 \leq M(r)$.
\end{enumerate}
\end{enumerate}
\end{dfn}

Assume that the dilation $\rho(r^D)$ satisfies (D). 
Let $v \in \ker(\rho(e^{-sD}):H^k \rightarrow H^k)$ for some $s>0$.
By $(\rho(e^{-\ft sD})v,\rho(e^{-\ft sD})v) = (v,\rho(e^{-sD})v)=0$, $v \in \ker(\rho(e^{-\frac{s}{2^N}D}))$ for any $N \geq 0$.
By (D-1), this implies $v=0$. Hence, $\ker(\rho(r^D))=0$ for any $k\geq0$ and $1 \geq r>0$.
Since $\rho(r^D)$ is a Hilbert-Schmidt operator, it is a compact operator \cite[Theorem VI.22]{RS}.
Hence, by (D-3), $H^k$ admits an eigenspace decomposition whose eigenvalues are positive real numbers, and since $\ker \rho(r^D)=0$, each eigenspace is finite-dimensional.
Moreover, since $\{\rho(r^D)\}_{1\geq r >0}$ are mutually commuting, they can be simultaneously diagonalized.

Let $v \in H^k$ be a non-zero simultaneous eigenvector, that is, there exists a map $f:(0,1] \rightarrow \R_{>0}$ such that $\rho(r^D)v=f(r)v$.
Since $f(rs)=f(r)f(s)$ and $f(1)=1$, and $f$ is lower semicontinuous at $r=1$, there exists $\Delta \geq 0$ such that $f(r)=r^{\Delta}$.
Hence we obtain an orthogonal decomposition
\begin{align}
H^k = \bigoplus_{\Delta \geq 0} H_\Delta^k.
\label{eq_orth_dec}
\end{align}
such that $\rho(r^D)|_{H_\Delta^k}=r^{\Delta}$.
Since $\rho(r^D)$ is compact, the eigenvalues $\Delta$ have no accumulation points (other than $\infty$).
Moreover, by (D-3), the decomposition \eqref{eq_orth_dec} is compatible with the inclusions $H^k \hookrightarrow H^{k+1}$.

In this situation, condition (D-4) implies that
$\sum_{\Delta \geq 0} (\dim H_\Delta^k) r^{2\Delta} < \infty$
holds for any $1 \geq r>0$.
It follows that $\rho(r^D):H^k \rightarrow H^k$ is a trace class operator, and
\begin{align*}
\mathrm{tr}|_{H^k}\rho(r^D) = \sum_{\Delta \geq 0} (\dim H_\Delta^k) r^{\Delta} = \norm{\rho(r^{\ft D}|_{H_k})}_2 < \infty.
\end{align*}

%
%
%


Hence, for any $N >0$ and $1>r>0$, we have
\begin{align*}
\sum_{N \geq \Delta \geq 0} \dim H_\Delta^k \leq 
\sum_{N \geq \Delta \geq 0} \dim H_\Delta^k r^{\Delta-N} \leq r^{-N} \mathrm{tr}|_{H^k}\rho(r^D).
\end{align*}
In particular, by (D-5), the sum $\sum_{N \geq \Delta \geq 0} \dim H_\Delta^k$ can be bounded by $r^{-N}M(\sqrt{r})$, independently of $k$.
Let $\overline{V}$ be the Hilbert space completion of $V$. By (D-2), $r^D$ defines a bounded linear map
$\rho(r^D): \overline{V} \rightarrow \overline{V}$.
Then, we have:
\begin{prop}\label{prop_semigroup}
Assume that the conditions (D) hold. Then, the map
$\rho(r^D):\R_{\geq 0} \rightarrow B(\overline{V})$
is a monoid homomorphism such that $\rho(r^D)$ is a contraction, self-adjoint, and trace class for any $1 \geq r >0$.
Moreover, $\overline{V}$ admits an orthogonal decomposition as a Hilbert space
\begin{align*}
\overline{V} = \bigoplus_{\Delta \geq 0}V_\Delta
\end{align*}
such that:
\begin{enumerate}
\item
For any $\Delta \geq 0$, $V_\Delta$ is finite-dimensional and satisfies $V_\Delta \subset H^k$ for some $k \geq 0$.
\item
$\rho(r^D)$ is injective.
\item
$\va_V \in V_0$ and $\va_V \neq 0$ if $V \neq 0$.
\item
For any $1\geq r>0$, $\rho(r^D)|_{V_\Delta}=r^\Delta$.
\item
For any $N \geq 0$ and $1>r>0$,
\begin{align*}
\sum_{N \geq \Delta \geq 0} \dim V_\Delta \leq r^{-N}\mathrm{tr}|_{\overline{V}}\rho(r^D).
\end{align*}
\item
The map $\rho(\bullet^D):\R_{\geq 0} \rightarrow B(\overline{V})$ is strongly continuous.
\end{enumerate}
\end{prop}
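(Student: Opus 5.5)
The plan is to build $\overline{V}$ from the level-wise decompositions \eqref{eq_orth_dec}, which are already available on each $H^k$. First, since the inclusions $H^k\subset H^{k+1}$ are isometric, the inner products glue to a positive inner product on $V=\cup_k H^k$, and we let $\overline{V}$ be its completion. By (D-2), each $\rho(r^D)$ is a contraction on every $H^k$ with norm independent of $k$, so it is bounded on $V$ and extends uniquely to a contraction on $\overline{V}$. The monoid law $\rho(r^D)\rho(s^D)=\rho((rs)^D)$ holds on $V$ and therefore on $\overline{V}$. Self-adjointness passes to the extension, because $(\rho(r^D)v,w)=(v,\rho(r^D)w)$ on the dense subspace $V$ by (D-3).

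Next I define $V_\Delta=\bigcup_k H^k_\Delta$. By the compatibility of \eqref{eq_orth_dec} with the inclusions, this is an increasing union. The bound derived before the statement gives $\dim H^k_\Delta\le \sum_{\Delta'\le N}\dim H^k_{\Delta'}\le r^{-N}M(\sqrt r)$ uniformly in $k$, so the dimensions stabilize. Hence $V_\Delta$ is finite-dimensional and equals $H^k_\Delta$ for $k$ large, which proves (1). The spaces $V_\Delta$ are mutually orthogonal, since they are eigenspaces of the self-adjoint operator $\rho(r^D)$ for distinct eigenvalues. Their algebraic sum is dense: every $v\in H^k$ is an $\ell^2$-sum of its components in the $H^k_\Delta\subset V_\Delta$. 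This gives $\overline V=\widehat\bigoplus_\Delta V_\Delta$ and (4).

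Item (5) follows by letting $k\to\infty$ in the inequality already derived: the left side stabilizes, and $\mathrm{tr}|_{H^k}\rho(r^D)=\sum_\Delta \dim H^k_\Delta\, r^\Delta$ increases monotonically to $\sum_\Delta \dim V_\Delta\, r^\Delta$. That limit is finite, bounded by $M(\sqrt r)$ via (D-5), and it shows $\rho(r^D)$ is trace class on $\overline V$ with trace equal to the limit; (5) then holds with $\mathrm{tr}|_{\overline V}$. Injectivity (2) is immediate because the operator is diagonal with positive eigenvalues $r^\Delta$ on an orthogonal decomposition.

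For (3), I use $\rho(r^D)\va_V=\va_V$ (Proposition \ref{prop_vacuum_1}) together with $\va_V\in H^0$, which places $\va_V$ in $V_0$. If $\va_V=0$, Lemma \ref{lem_degenerate}(2) gives $\rho(r^D)=0$ for $r<1$, contradicting injectivity unless $V=0$. For (6), strong continuity at a general $r_0$ reduces, via the semigroup law and contractivity, to continuity at $r=1$. Continuity at $1$ holds on each finite-dimensional $V_\Delta$, and by uniform boundedness of the operators it extends to the closure $\overline V$ through an $\epsilon/3$ argument. The main obstacle I expect is the stabilization step: I must make sure the uniform bound (D-5) really yields finite-dimensionality of $V_\Delta$, and that the trace on $\overline V$ is the monotone limit of the traces on $H^k$.
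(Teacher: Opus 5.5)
Your proposal is correct and follows the paper's route. The paper declares everything except (3) and (6) to follow from the discussion preceding the statement, which you make explicit via $V_\Delta=\bigcup_k H^k_\Delta$, stabilization from the uniform bound in (D-5), and the trace on $\overline{V}$ as a monotone limit; it then derives (3) from injectivity and Lemma~\ref{lem_degenerate}, and proves (6) by the same $\epsilon/3$ truncation argument you describe (you use $\rho(r^D)|_{V_\Delta}=r^\Delta$ on the truncation where the paper invokes (D-1), and you add the reduction from a general $r_0$ to $r=1$, which the paper leaves implicit).
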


\begin{proof}
From the above discussion, all properties are clear except for (3) and (6).
(3) follows from (2) and Lemma \ref{lem_degenerate}. We will show (6).
Let $\ep>0$ and $v= \sum_{\Delta \geq 0}v_\Delta  \in \overline{V}$.
Then, there is $N>0$ such that
$\norm{v -v_N} < \ep/3$ where $v_N= \sum_{N \geq \Delta \geq 0}v_\Delta$.
Since $\rho(r^D)$ is a contraction, $
\norm{\rho(r^D)v -v}= \norm{\rho(r^D)(v-v_N+v_N) -(v-v_N+v_N)}
\leq 2 \norm{v-v_N}+\norm{\rho(r^D)v_N-v_N}$.
Since $v_N \in H^k$ for some $k \geq 0$,
by (D1), there is $1>\delta>0$ such that 
$\norm{\rho(r^D)v_N-v_N} <\ep/3$ for any $1\geq r>\delta$.
Hence, the assertion holds.
\end{proof}

\begin{dfn}\label{def_algebraic_core}
Under the assumption (D), set
\begin{align*}
V^\alg = \bigoplus_{\Delta \geq 0}^{\alg}V_\Delta,\qquad \overline{V^\alg}=\prod_{\Delta \geq 0}V_\Delta,
\end{align*}
the algebraic direct sum
and direct product of vector spaces, respectively. We call $V^\alg$ the \textbf{algebraic core} of the $\CEd$-algebra $(V,\m)$.
\end{dfn}

The importance of $V^\alg$ can be seen in the following lemma:
\begin{lem}\label{lem_lowest_argument}
Assume that $V$ satisfies condition (D).
Let $W \subset V$ be a closed subspace such that $\rho(D(r))W \subset W$ holds for any $r \in (0,1]$.
If $v = \sum_{\Delta \geq \Delta_0} v_\Delta \in W$ with $v_\Delta \in V_\Delta$ and $v_{\Delta_0} \neq 0$, then $v_{\Delta_0} \in W$.
In particular, $W=0$ if and only if $W^\alg = \bigoplus_{\Delta \geq 0} W \cap V_\Delta =0$.
\end{lem}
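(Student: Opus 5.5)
The plan is to extract the lowest component by a rescaling-and-limit argument, using only that $W$ is closed and stable under dilations. First I would fix $k$ with $v \in H^k$; this is possible because $V=\cup_k H^k$. The orthogonal decomposition \eqref{eq_orth_dec} of $H^k$ then gives $v=\sum_{\Delta\geq\Delta_0} v_\Delta$ with $v_\Delta\in H^k_\Delta$, the sum converging in $H^k$. Since the decomposition is compatible with the inclusions, $H^k_\Delta\subset V_\Delta$. The eigenvalues $\Delta$ occurring in $H^k$ form a discrete set without accumulation points, so there is a gap $\delta>0$ with every $\Delta>\Delta_0$ in the expansion satisfying $\Delta\geq\Delta_0+\delta$.

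Next I would consider the rescaled vectors
\begin{align*}
w_r = r^{-\Delta_0}\rho(r^D)v = v_{\Delta_0} + \sum_{\Delta\geq \Delta_0+\delta} r^{\Delta-\Delta_0} v_\Delta, \qquad 0<r<1 .
\end{align*}
Each $w_r$ lies in $W\cap H^k$: it lies in $W$ by the hypothesis $\rho(D(r))W\subset W$, and in $H^k$ by condition (4) of Definition \ref{def_ind_CF}. By orthogonality,
\begin{align*}
\norm{w_r-v_{\Delta_0}}^2=\sum_{\Delta\geq\Delta_0+\delta} r^{2(\Delta-\Delta_0)}\norm{v_\Delta}^2\leq r^{2\delta}\norm{v}^2,
\end{align*}
and this tends to $0$ as $r\to 0$. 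Closedness of $W\cap H^k$ in $H^k$ then gives $v_{\Delta_0}\in W$. The only point needing care is the existence of the gap $\delta$. It follows from compactness of $\rho(r^D)|_{H^k}$ as established before Proposition \ref{prop_semigroup}, so I expect no real obstacle.

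For the final assertion, one direction is trivial. Conversely, suppose $W\neq 0$ and pick $0\neq v\in W$. The set of $\Delta$ with $v_\Delta\neq 0$ is nonempty, and being a subset of a discrete subset of $[0,\infty)$ it has a minimum $\Delta_0$. The first part then gives $0\neq v_{\Delta_0}\in W\cap V_\Delta$, so $W^\alg\neq 0$.
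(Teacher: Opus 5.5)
Your proposal is correct and follows the paper's proof essentially verbatim. The paper also fixes $k$ with $v\in H^k\cap W$ and uses the rescaled vectors $n^{\Delta_0}\rho(D(1/n))v\in W\cap H^k$, a discrete version of your $w_r$; it bounds their distance to $v_{\Delta_0}$ by $n^{-(\Delta_1-\Delta_0)}\norm{v}_k$, where $\Delta_1$ is the next occurring eigenvalue (your gap $\delta$), and concludes by closedness of $W\cap H^k$, while your treatment of the ``in particular'' clause via the minimal occurring $\Delta$ spells out a step the paper leaves implicit.
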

\begin{proof}
Let $v = \sum_{\Delta \geq \Delta_0} v_\Delta \in W$ with $v_{\Delta_0} \neq 0$. By assumption, there exists $k$ such that $v\in H^k \cap W$.
Set
$a_n =  \sum_{\Delta \geq \Delta_0} \frac{1}{n^{\Delta-\Delta_0}} v_\Delta
= n^{\Delta_0} D(1/n) v \in W \cap H^k$ for $n>0$.
Let $\Delta_1$ be the smallest $\Delta > \Delta_0$ such that $v_\Delta \neq 0$.
Then
$\norm{a_n-v_{\Delta_0}}_k \leq \frac{1}{n^{\Delta_1-\Delta_0}} \norm{v}_k$,
so $a_n$ converges to $v_{\Delta_0}$ in $H^k$.
Since $W\cap H^k$ is closed, $v_{\Delta_0} \in W$.
\end{proof}

%

\begin{prop}\label{prop_proper_ideal}
Assume that the conditions (D) hold.
Suppose $\dim V_0=1$. Then, for any closed ideal $I \subset V$, the following are equivalent:
\begin{enumerate}
\item
$I \neq V$.
\item
$(\va_V,v)=0$ for any $v\in I$.
\end{enumerate}
\end{prop}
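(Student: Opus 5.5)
We prove (2)$\Rightarrow$(1) directly and (1)$\Rightarrow$(2) by contrapositive. The ingredients are Lemma \ref{lem_lowest_argument}, the orthogonal eigenspace decomposition of Proposition \ref{prop_semigroup}, and the fact that $\va_V$ spans $V_0$. Since $\dim V_0=1$ and $\va_V\in V_0$ is nonzero by Proposition \ref{prop_semigroup}(3), we have $V_0=\C\va_V$. Note also that the inner products on the $H^k$ are compatible, so $(\va_V,v)$ is well defined for every $v\in V$.

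(2)$\Rightarrow$(1): if $(\va_V,v)=0$ for all $v\in I$, then $\va_V\notin I$, because $(\va_V,\va_V)\neq 0$. Hence $I\neq V$.

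(1)$\Rightarrow$(2): suppose some $v\in I$ has $(\va_V,v)\neq 0$. We show $I=V$.
\begin{enumerate}
\item Any ideal is stable under $\rho(D(r))$. Indeed, $\rho(D(r))v=\rho_1(D(r))(v)$, and this is an instance of the ideal condition with $n=1$. So $I$ is a closed subspace stable under dilations.
\item Decompose $v=\sum_{\Delta}v_\Delta$ inside some $H^k$. By orthogonality of \eqref{eq_orth_dec}, $(\va_V,v)=(\va_V,v_0)$, so $v_0\neq 0$.
\item All eigenvalues satisfy $\Delta\geq 0$, so $\Delta_0=0$ is the lowest nonzero component of $v$. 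Lemma \ref{lem_lowest_argument} then gives $v_0\in I$.
\item Since $V_0=\C\va_V$, the vector $v_0$ is a nonzero multiple of $\va_V$, hence $\va_V\in I$.
\end{enumerate}

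It remains to show that $\va_V\in I$ forces $I=V$. Take any $a\in V$ and choose $\phi=(\id_\bD\circ D(r),\,T_{x_0}D(\ep))\in\CE(2)$ with small disjoint closed images; for instance $r=1/2$, $x_0$ near the boundary, and $\ep$ small. By the ideal property,
\[
\rho_2(\phi)(a,\va_V)\in I .
\]
By Proposition \ref{prop_vacuum_1}(2), this element equals $\rho_1(D(r))a$. Hence $\rho(D(r))a\in I$ for all $a\in V$ and all $r$ small, and by composing dilations, for all $r\in(0,1)$.

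To recover $a$ itself, use strong continuity (D-1): $\rho(D(r))a\to a$ in $H^k$ as $r\to1$. Since $I\cap H^k$ is closed, $a\in I$, and therefore $I=V$. This completes the contrapositive.

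The main subtle point is this last step. Inserting the vacuum only yields dilated copies $\rho(D(r))a$, never $a$ directly, because $\id_\bD$ cannot be paired with a second disjoint disk in $\CE(2)$. The closedness of $I$ together with strong continuity (D-1) is exactly what closes this gap.
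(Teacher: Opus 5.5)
Your proof follows the paper's: (2)$\Rightarrow$(1) uses $\va_V\neq 0$. For the contrapositive of (1)$\Rightarrow$(2), you get $\va_V\in I$ from Lemma~\ref{lem_lowest_argument}, insert $\va_V$ into $\rho_2(D(r),T_{x_0}D(\ep))$, and close up with (D-1) and the closedness of $I\cap H^k$.

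One justification is wrong. From ``$\rho(D(r))a\in I$ for all small $r$'' you cannot reach $r$ near $1$ ``by composing dilations.'' Indeed $D(r)D(s)=D(rs)$ only shrinks the scale, and (D-1) needs $r\to 1$.

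The fix is immediate, in either of two ways:
\begin{itemize}
\item For every $r\in(0,1)$ the configuration exists directly: take $r<|x_0|<1$ and $\ep<\min(|x_0|-r,\,1-|x_0|)$. Then $(D(r),T_{x_0}D(\ep))\in\CE(2)$, which is exactly what the paper does.
\item A single fixed $r$ already suffices. Since $\rho(D(r))$ acts by $r^{\Delta}\neq 0$ on each $V_\Delta$, every $w\in V_\Delta$ equals $\rho(D(r))(r^{-\Delta}w)\in I$. Hence $V^\alg\subset I$. Because $V^\alg\cap H^k$ is dense in $H^k$ and $I\cap H^k$ is closed, this gives $I=V$.
\end{itemize}
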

\begin{proof}
The implication (2)$\Rightarrow$(1) follows from Proposition \ref{prop_semigroup} (3).
We prove (1)$\Rightarrow$(2) by contrapositive.
Assume that $I$ is a closed ideal and that there exists $v \in I$ such that $(\va_V,v)\neq 0$.
Since $\dim V_0 =1$, we can decompose
$v = \va_V+ \sum_{\Delta >0}v_\Delta \in V$
into eigenvectors of $D(r)$.
By Lemma \ref{lem_lowest_argument}, this implies $\va_V \in I$.
Let $r \in (0,1)$.
Choose $x_0 \in \bD$ and a sufficiently small $\ep>0$ such that
\begin{align*}
\overline{B_r(0)} \cap \overline{B_\ep(x_0)} =\emptyset,\quad B_\ep(x_0) \subset \bD.
\end{align*}
Then $(D(r), T_{x_0}D(\ep)) \in \CE(2)$, and
\begin{align*}
\rho_2(D(r), T_{x_0}D(\ep))(a,\va_V) = \rho_1(D(r))(a) \in I
\end{align*}
holds for any $a \in V$.
By assumption, if $a \in H^k$, then $\rho_1(D(r)) a \to a$ strongly as $r \to 1$.
Since $I \cap H^k \subset H^k$ is closed, it follows that $a \in I$.
Hence $I=V$.
\end{proof}

%
%



We end this section by constructing a trivial example that is not simple.
Let $\Hf$ be a Hilbert space
and $\rho:\CE(1) \rightarrow B(\Hf)$
a monoid homomorphism.
For any $p \geq 1$, let
\begin{align*}
\bs^p:\Hf^{\hotimes p} \rightarrow \Hf^{\hotimes p}
\end{align*}
be a linear map defined by
\begin{align}
\bs^p(v_1\otimes \cdots \otimes v_p) = \frac{1}{p!}\sum_{\si \in S_p} v_{\si(1)}\otimes \cdots \otimes v_{\si(n)}.\label{eq_sym_proj}
\end{align}
Then, $\bs^p$ is an orthogonal projection onto the closed subspace $\Sym^p \Hf \subset \Hf^{\hotimes p}$.
Note that here we are considering the tensor product $\hotimes$ in $\Hilb$ (see Appendix \ref{app_Hilb}). We set $\C\cong \Sym^0 \Hf$, and denote by $\va$, the vector corresponding to $1\in\C$.
Set
\begin{align}
\Hf^k = \bigoplus_{p\geq 0}^k \Sym^p(\Hf),\label{eq_def_Hk}\\
\Sym(\Hf) = \bigoplus_{p\geq 0} \Sym^p(\Hf),\label{eq_def_Sym}
\end{align}
which is the algebraic direct sum of the vector spaces.
By endowing $\Sym^p \Hf$ and $\Sym^q \Hf$ with an inner product such that they are orthogonal whenever $p\neq q$,
$\Hf^k$ becomes a Hilbert space in a natural way, and the inclusion map
$\Hf^k \hookrightarrow \Hf^{k+1}$ preserves the inner product.
Note that $\Sym(\Hf)$ is not complete and not a Hilbert space.
The symmetric algebra $\Sym(\Hf)$ becomes a unital commutative associative algebra by defining the product
$a \cdot b = \bs^{p+q}(a\otimes b)$ for $a \in \Sym^p\Hf$ and $b \in \Sym^q\Hf$.
This product is continuous in each degree.
Extending $\rho: \CE(1) \rightarrow B(\Hf)$ onto $\Sym(\Hf)$ by
\begin{align*}
\rho_1(g) (v_1 \otimes \cdots \otimes v_p)= \rho(g)v_1 \otimes \cdots \otimes \rho(g) v_p
\end{align*}
for any $p \geq 1$, define
\begin{align*}
\rho:\CE(n) \rightarrow \mathrm{Hom}(\Sym(\Hf)^{\otimes n}, \Sym(\Hf))
\end{align*}
by setting, for $(g_1,\dots,g_n)\in \CE(n)$ and $a_i \in \Sym(\Hf)$,
\begin{align*}
\rho_{(g_1,\dots,g_n)}(a_1,\dots,a_n)
= \rho(g_1)a_1 \cdot \rho(g_2)a_2 \cdots \rho(g_n)a_n .
\end{align*}
Then the following is clear:
\begin{prop}\label{prop_trivial_example}
$(\mathrm{Sym}\Hf,\rho,\va)$ is a $\CE$-algebra with the Hilbert space filtration
$\Hf^k = \bigoplus_{p\geq 0}^k \Sym^p(\Hf)$.
Moreover, $I = \bigoplus_{p>0} \Sym^p(\Hf)$ is a closed ideal.
In particular, $\mathrm{Sym}\Hf$ is not simple.
\end{prop}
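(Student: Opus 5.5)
We check the axioms of a $\CE$-algebra (Definition \ref{def_operad_algebra}), then the five conditions of Definition \ref{def_ind_CF}, and finally that $I$ is a closed ideal. Throughout, $\rho:\CE(1)\to B(\Hf)$ is a unital monoid homomorphism. Hence $\rho(g)^{\hotimes p}$ is bounded on $\Hf^{\hotimes p}$ with norm $\norm{\rho(g)}^p$, and it commutes with the permutation operators. Therefore it commutes with $\bs^p$ and preserves $\Sym^p\Hf$. This shows that $\rho_1$ is a well-defined monoid homomorphism $\CE(1)\to\End(\Sym\Hf)$ preserving each $\Hf^k$, with $\rho_1(\id)=\id$ and $\rho_1(g)\va=\va$.

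For the operad axioms, write $\rho_{(g_1,\dots,g_n)}(a_1,\dots,a_n)=\prod_i\rho(g_i)a_i$ using the commutative associative product on $\Sym\Hf$. Since $\rho_1(g)$ acts diagonally, it is an algebra homomorphism: $\rho_1(g)(a\cdot b)=\rho_1(g)a\cdot\rho_1(g)b$. This holds because $\rho(g)^{\hotimes(p+q)}\bs^{p+q}=\bs^{p+q}(\rho(g)^{\hotimes p}\otimes\rho(g)^{\hotimes q})$. With the composition formula of Proposition \ref{prop_CE_d_explicit} (and its $d=2$ analogue, which is also componentwise composition), we compute
\begin{align*}
\rho_n(g)\circ_i\rho_m(f)(a,b)=\cdots\rho(g_i)\Bigl(\prod_j\rho(f_j)b_j\Bigr)\cdots=\cdots\prod_j\rho(g_if_j)b_j\cdots,
\end{align*}
which equals $\rho_{n+m-1}(g\circ_i f)$. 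The case $m=0$ holds because $\rho_0(*)=\va$ is the unit of the product. $S_n$-equivariance follows from commutativity. Note that nothing here uses the disjointness in $\CE(n)$; the algebra is the ``trivial'' one.

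For the filtration, condition (1) holds since $\va\in\Sym^0\Hf\subset\Hf^0$. Conditions (2) and (3) are immediate from the orthogonal direct sum: each $\Hf^k$ is a finite orthogonal sum of closed subspaces of separable Hilbert spaces. Condition (4) was shown above. For (5), take $K=k_1+\cdots+k_n$. On $\Sym^{p_1}\Hf\otimes\cdots\otimes\Sym^{p_n}\Hf$ the map is the restriction of $\bs^{p_1+\cdots+p_n}\circ(\rho(g_1)^{\hotimes p_1}\hotimes\cdots\hotimes\rho(g_n)^{\hotimes p_n})$. This uses the canonical unitary identification $\Hf^{\hotimes p_1}\hotimes\cdots\hotimes\Hf^{\hotimes p_n}\cong\Hf^{\hotimes\sum p_i}$, and the map is bounded by $\prod\norm{\rho(g_i)}^{p_i}$ because $\bs$ is a projection. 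The source $\Hf^{k_1}\otimes\cdots\otimes\Hf^{k_n}$ decomposes as a finite orthogonal sum of such pieces, so the map is bounded. The one point needing care is exactly the subtlety flagged before Definition \ref{def_ind_CF}: boundedness must hold on the completed tensor product, not merely as a bilinear map. Here it does, since we exhibited the map as a bounded operator on $\hotimes$ directly. I expect this to be the only step requiring any attention.

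Finally, $I=\bigoplus_{p>0}\Sym^p\Hf$ satisfies $I\cap\Hf^k=\bigoplus_{p=1}^k\Sym^p\Hf$, which is the orthogonal complement of $\Sym^0\Hf$ in $\Hf^k$, hence closed. It is an ideal: if $v\in I$, then $\rho(g_n)v\in I$ because $\rho$ preserves degree, and multiplying by anything keeps the degree positive since degrees add. It is nontrivial provided $\Hf\neq0$, because $\va\notin I$. So $\Sym\Hf$ is not simple.
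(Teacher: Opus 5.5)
Your proposal is correct and is the direct verification the paper omits with ``the following is clear''; in particular, you exhibit the $n$-ary operation as a bounded operator on the completed tensor product, and you identify $I\cap\Hf^k$ as the orthogonal complement of $\Sym^0\Hf$ in $\Hf^k$, which is exactly what is needed. You are also right to make explicit the tacit hypotheses that $\Hf$ is separable and nonzero: if $\Hf=0$, then $I=0$ and $\Sym\Hf=\C$ is simple.
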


\section{Example of conformally flat d-disk algebra}\label{sec_example}
In this section, using reproducing kernel Hilbert space and harmonic analysis, we construct an example of a $\CE$-algebra with Hilbert space filtration for $d \geq 3$ (for the case of $d=2$, see \cite{MBergman}).
The material in this section is independent of Section \ref{sec_left}.

In this section, except for the proof of Theorem \ref{thm_unbounded}, we work throughout with vector spaces over the real numbers and real Hilbert spaces.
The fact that such a construction is possible reflects that our examples correspond to conformal field theories of real scalar fields.
Note that a $\CE$-algebra in $\VectR$ equipped with a filtration by real Hilbert spaces
naturally determines, by complexification, a $\CE$-algebra in $\Vect$ equipped with a filtration by complex Hilbert spaces.


In Section \ref{sec_infinitesimal}, we recall an irreducible representation of $\sod$ on the space of harmonic polynomials.
We also review several basic results from harmonic analysis.
In Section \ref{sec_reproduce}, using reproducing kernel Hilbert spaces, we construct a representation of the monoid $\CE(1)$ on the Hilbert space completion of harmonic polynomials, denoted by $\Hf$.
Section \ref{sec_inequality} and Section \ref{sec_completion} form the most technically important parts of the construction, where for $(\phi_1,\phi_2) \in \CE(2)$ we construct a bounded linear map
\begin{align*}
C_{\phi_1,\phi_2}:H \hotimes H \rightarrow \R.
\end{align*}
In Section \ref{sec_inequality}, we show the existence and boundedness for special elements.
In Section \ref{sec_completion}, we then construct the map in the general case.
Using these results, in Section \ref{sec_construction} we construct examples of $\CE$-algebras.
In Section \ref{sec_unbounded}, we show that these examples do not extend to the natural Hilbert space completion.



\subsection{Infinitesimal representation of the global conformal group}\label{sec_infinitesimal}
In this section, we review basic results on a representation of $\sod$ on the harmonic polynomials.
The space of harmonic polynomials carries at least three natural inner products:
the representation-theoretic inner product $(-,-)_H$, the combinatorial inner product $(-,-)_F$,
and the $L^2$ inner product $(-,-)_{L^2(S^{d-1})}$.
All three play important roles in our construction.

Although all results in this section are essentially known (see, for example, \cite{ABR,DX}), conventions vary across the literature.
For the convenience of the reader, we therefore provide a detailed review and include proofs of some results 
(we also refer the reader to our review \cite{Mvertex} for the detailed proof of 
Proposition \ref{prop_positive}, Proposition \ref{prop_psi_dn} and Proposition \ref{prop_relation_L2}).

%

Let $\sod^+$ (resp. $\sod^-$) be a subalgebra of $\sod$ spanned by
$\{P_\mu, D, J_{\mu,\nu}\}_{\mu,\nu \in \{1,\dots,d\}}$ (resp. $\{K_\mu, D, J_{\mu,\nu}\}_{\mu,\nu \in \{1,\dots,d\}}$).
For $\al \in \R$, let $\R v_\al$ (resp. $v_\al^\da$) be the one-dimensional representation of $\sod^+$ (resp. $\sod^-$) defined by
\begin{align}
\begin{split}
P_\mu v_\al &=0,\\
D v_\al &= -\al v_\al,\\
J_{\mu,\nu} v_\al &= 0
\end{split}
\label{eq_univ_verma}
\end{align}
and
\begin{align}
\begin{split}
K_\mu v_\al^\da &=0,\\
D v_\al^\da &= \al v_\al^\da,\\
J_{\mu,\nu} v_\al^\da &= 0
\end{split}
\label{eq_dual_Verma}
\end{align}
for $\mu,\nu\in \{1,\dots,d\}$.
Let $V(\al) = \mathrm{Ind}^{\sod} \R v_\al$ (resp.  $V(\al)^\da = \mathrm{Ind}^{\sod} \R v_\al^\da$)
 be the induced representation of $\sod$ from $\R v_\al$ (resp. $\R v_\al^\da$).
Denote by $N(\al)^\da$ the (unique) maximal submodule of $V(\al)^\da$ which does not contain $v_\al^\da$. Then, 
\begin{align*}
L(\al)^\da =V(\al)^\da / N(\al)^\da
\end{align*}
is an irreducible $\sod$-module. 
The irreducible module $L(\al)=V(\al)/N(\al)$ can be defined in the same way as $L(\al)^\da$.
For any $\sod$-module $M$,
we can define a new action by 
$A\cdot_\theta m = \theta(A)\cdot m$
for $A \in \sod$ and $m\in M$ by the automorphism $\theta:\sod \rightarrow \sod$ \eqref{eq_def_theta}.
We denote this module by $\theta^* M$.
Then, it is clear that $\theta^* L(\al) \cong L(\al)^\da$ as $\sod$-modules.
Hence, we have \cite{Hum}:
\begin{prop}\label{prop_bilinear}
There is a unique non-degenerate bilinear form $(-,-):L(\al)^\da \otimes L(\al)^\da \rightarrow \R$ such that
\begin{enumerate}
\item
$(v_\al^\da,v_\al^\da)=1$;
\item
$(Au,v) = -(u,\theta(A)v)$ for any $u,v\in L(\al)^\da$ and $A\in \sod$.
\end{enumerate}
Moreover, $(u,v)=(v,u)$ holds for any $u,v\in L(\al)^\da$.
\end{prop}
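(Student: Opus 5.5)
The plan is the standard Shapovalov-form construction for highest/lowest weight modules, adapted to the involution $\theta$ from \eqref{eq_def_theta}. I would build a contravariant pairing on the Verma module $V(\al)^\da$, show its radical is exactly $N(\al)^\da$, and then pass to the quotient $L(\al)^\da$.

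\emph{Construction.} Write $\mathfrak{n}^+$ for the span of the $P_\mu$ and $\mathfrak{n}^-$ for the span of the $K_\mu$. We have $\sod = \mathfrak{n}^+ \oplus \mathfrak{h} \oplus \mathfrak{n}^-$ with $\mathfrak{h}$ spanned by $D$ and the $J_{\mu,\nu}$. By PBW, $V(\al)^\da \cong U(\mathfrak{n}^+)v_\al^\da$. The map $\sigma(A) = -\theta(A)$ is a Lie algebra anti-automorphism of $\sod$, since $\theta$ is an automorphism. It therefore extends to an anti-involution $\sigma$ of $U(\sod)$. By \eqref{eq_def_theta}, $\sigma$ exchanges $\mathfrak{n}^+$ and $\mathfrak{n}^-$ and fixes $D$. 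On $\mathfrak{so}(d)$ it acts by $J \mapsto -J$; this is harmless because the $J_{\mu,\nu}$ act by $0$ on $v_\al^\da$.

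Let $\pi: U(\sod) \to U(\mathfrak{h})$ be the projection along $\mathfrak{n}^+ U + U \mathfrak{n}^-$. Define
\[
(x v_\al^\da,\, y v_\al^\da) := \chi\bigl(\pi(\sigma(x)\, y)\bigr),
\]
where $\chi: U(\mathfrak{h}) \to \R$ is the character $D \mapsto \al$, $J_{\mu,\nu} \mapsto 0$. Equivalently, $(u,v)$ is the $v_\al^\da$-coefficient of $\sigma(x) y v_\al^\da$ in the $D$-weight decomposition. This is well defined because $\sigma(x)$ annihilating $v$ from the left pushes into $\mathfrak{n}^-$-type terms, and those have zero pairing.

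\emph{Properties.} By construction $(v_\al^\da, v_\al^\da) = 1$, and $(Au,v) = (u,\sigma(A)v) = -(u,\theta(A)v)$. Symmetry follows from $\chi \circ \pi \circ \sigma = \chi \circ \pi$. This last identity holds because $\sigma$ preserves the triangular decomposition up to swapping the two nilpotent parts, and it acts on $U(\mathfrak{h})$ in a way compatible with $\chi$: $\chi$ is trivial on $\mathfrak{so}(d)$ and $\sigma(D) = D$.

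Distinct $D$-weight spaces are orthogonal, since $D$ is self-adjoint for the form and $\sigma(D) = D$. The radical $R$ is a submodule by contravariance. It does not contain $v_\al^\da$, because $(v_\al^\da, v_\al^\da) = 1$. Hence $R \subset N(\al)^\da$.

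Conversely, let $M$ be any proper submodule. Then $M$ is a sum of weight spaces, and its weight-$\al$ component is zero. So for $m \in M$ and any $x$, the element $\sigma(x) m$ lies in $M$ and has no $v_\al^\da$-component, which gives $(x v_\al^\da, m) = 0$. Thus $M \subset R$, and therefore $R = N(\al)^\da$. The form descends to a non-degenerate symmetric form on $L(\al)^\da$ with properties (1) and (2).

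\emph{Uniqueness.} Any form satisfying (1) and (2) is determined by its values $(x v_\al^\da, v_\al^\da)$. Contravariance reduces these to $(v_\al^\da, \sigma(x) v_\al^\da)$, which is computed by the weight decomposition together with $K_\mu v_\al^\da = 0$. The main technical point I expect is checking that $\sigma$ is compatible with the $\mathfrak{so}(d)$-part and with the weight grading; this is the reason for using $-\theta$ rather than $\theta$. Everything else is the routine Shapovalov argument as in \cite{Hum}.
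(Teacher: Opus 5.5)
Your argument is correct and is essentially the standard contravariant (Shapovalov) form construction that the paper invokes by citing \cite{Hum} after noting $\theta^*L(\al)\cong L(\al)^\da$; your anti-involution $\sigma=-\theta$ plays the role of the transpose map that swaps the two nilpotent parts. The one step you state too loosely is well-definedness in the first slot. If $xv_\al^\da=0$, then $x$ lies in the left ideal generated by the $K_\mu$ and the $h-\chi(h)$ with $h\in\mathfrak{h}$, so $\sigma(x)\in\mathfrak{n}^+U(\sod)+\sum_{h}(\sigma(h)-\chi(h))U(\sod)$ (so it is $\mathfrak{n}^+$-terms on the left that appear, not $\mathfrak{n}^-$-terms). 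Both pieces are killed by $\chi\circ\pi(\,\cdot\,y)$, because $\pi(hw)=h\,\pi(w)$ for $h\in\mathfrak{h}$ and $\chi\circ\sigma=\chi$ on $\mathfrak{h}$.
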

The following result can be obtained by computing the Kac determinant (For details, see, for example, \cite[Appendix]{Mvertex}).
\begin{prop}
\label{prop_positive}
Assume $d \geq 3$. Then, $N(\frac{d-2}{2})$ is generated by $\sum_{\rho =1}^d P_\rho^2 v_{\frac{d-2}{2}}^\da$ as a left $\sod$-module
and the bilinear form $(-,-)$ on $L(\frac{d-2}{2})$ is positive-definite.
\end{prop}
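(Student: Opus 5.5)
The plan is to make everything explicit in a polynomial model. Since $K_\mu v^\da_\al=0$, $J_{\mu,\nu}v^\da_\al=0$ and $Dv^\da_\al=\al v^\da_\al$, the PBW theorem identifies $V(\al)^\da$ with the polynomial ring $\R[P_1,\dots,P_d]$ via $f(P)\mapsto f(P)v^\da_\al$. In this model the $P_\mu$ act by multiplication. By $[D,P_\mu]=P_\mu$, $D$ acts on the homogeneous degree-$n$ part by the scalar $\al+n$. The $J_{\mu,\nu}$ act by the infinitesimal rotations $P_\mu\partial_\nu-P_\nu\partial_\mu$ (up to sign), where $\partial_\mu=\partial/\partial P_\mu$.

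The first real step is to compute $K_\mu$ in this model. I will use $[P_\mu,K_\nu]=-2(\de_{\mu,\nu}D+J_{\mu,\nu})$ and commute $K_\mu$ past a monomial, with $K_\mu v^\da_\al=0$. The aim is a second-order operator of the form
\begin{align*}
K_\mu f = c_1\,(\al+E_P)\,\partial_\mu f + c_2\,P_\mu\,\Delta_P f ,
\end{align*}
where $E_P=\sum_\rho P_\rho\partial_\rho$ and $\Delta_P=\sum_\rho\partial_\rho^2$. The constants $c_1,c_2$ and the exact ordering are fixed by checking degrees $1$ and $2$ by hand; I expect $c_1=2$ and $c_2=-1$ up to an overall sign. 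I expect this bookkeeping, especially the signs coming from $\theta$ and from the $J$-terms, to be the main obstacle. Everything afterwards is a formal consequence of the formula.

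Next I apply the formula to $|P|^2=\sum_\rho P_\rho^2$. Using $\partial_\mu|P|^2=2P_\mu$ and $\Delta_P|P|^2=2d$, $K_\mu|P|^2$ becomes a multiple of $P_\mu$ with coefficient linear in $\al$. This coefficient should vanish exactly at $\al=\frac{d-2}{2}$. Hence $\sum_\rho P_\rho^2v^\da$ is a singular vector, and it generates the submodule $|P|^2\R[P]\,v^\da$, which does not contain $v^\da$. I then use the classical decomposition $\R[P]_n=\bigoplus_j |P|^{2j}\Harm_{n-2j}$. It gives $V(\al)^\da/|P|^2\R[P]v^\da\cong\Harm$, the harmonic polynomials. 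In this quotient $K_\mu$ acts on a harmonic $h$ of degree $n$ by $c_1(\al+n-1)\partial_\mu h$, since the $\Delta_P$ term vanishes.

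Finally I compute the form. The invariance property in Proposition \ref{prop_bilinear} together with $\theta(P_\mu)=-K_\mu$ gives $(P_\mu u,v)=(u,K_\mu v)$. So for harmonic $h,h'$ of degree $n$, induction on $n$ should give
\begin{align*}
(h,h')=\Bigl(\prod_{m=1}^{n}c_1(\al+m-1)\Bigr)\,(h,h')_F ,
\end{align*}
where $(-,-)_F$ is the Fischer inner product $(f,g)_F=f(\partial)g|_{0}$. This inner product is positive definite on polynomials, and distinct degrees are orthogonal by $D$-invariance. For $\al=\frac{d-2}{2}$ and $d\ge3$, every factor $\al+m-1$ with $m\ge1$ is at least $\frac{d-2}{2}>0$, and $c_1>0$ by the choice of sign convention, so the induced form on $\Harm$ is positive definite. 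In particular it is nondegenerate, so the quotient by $|P|^2\R[P]v^\da$ is irreducible. Therefore $N(\frac{d-2}{2})^\da=|P|^2\R[P]v^\da$, which is the left $\sod$-module generated by $\sum_\rho P_\rho^2v^\da$, and the form on $L(\frac{d-2}{2})^\da\cong\Harm$ is positive definite. The statement for $L(\frac{d-2}{2})$ follows by applying $\theta^*$.
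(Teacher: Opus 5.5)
Your proof is correct, and it takes a different route from the paper. The paper gives no argument for this proposition and defers to a Kac-determinant computation in \cite[Appendix]{Mvertex}.

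With the paper's relations the constants are forced, not a matter of convention. The operator $\tilde K_\nu=2(\alpha+E_P)\partial_\nu-P_\nu\Delta_P$ kills $1$ and satisfies $[\tilde K_\nu,P_\mu]=2\delta_{\mu\nu}(\alpha+E_P)+2(P_\mu\partial_\nu-P_\nu\partial_\mu)$. This is exactly the recursion dictated by $[P_\mu,K_\nu]=-2(\delta_{\mu\nu}D+J_{\mu\nu})$. Hence $c_1=2>0$, $c_2=-1$, and $K_\mu|P|^2v^\dagger=2(2\alpha+2-d)P_\mu v^\dagger$.

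You should state explicitly that you use the contravariant form on the Verma module $V(\alpha)^\dagger$ itself, since Proposition~\ref{prop_bilinear} is phrased on $L(\alpha)^\dagger$. Every $D$-graded submodule not containing $v^\dagger$ lies in the radical of that form. Positivity on $\Harm\cong V(\alpha)^\dagger/|P|^2\R[P]v^\dagger$ then identifies this radical, i.e.\ $N(\alpha)^\dagger$, with $|P|^2\R[P]v^\dagger$.

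As for the trade-off: the determinant approach locates all reducibility points uniformly in $\alpha$. Yours is tailored to $\alpha=\frac{d-2}{2}$ and needs the singular vector as input, but it is elementary and self-contained. Your induction via $h=\frac1n\sum_\mu P_\mu\partial_\mu h$ also gives $(f(P)v^\dagger,g(P)v^\dagger)=2^n(\alpha)_n(f,g)_F$ for harmonic $f,g$. This is exactly \eqref{eq_fisher1}, which the paper reaches only later, through \cite[Theorem 5.14]{ABR} and Proposition~\ref{prop_relation_L2}.
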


As in \eqref{eq_differential_action}, geometrically $P_\mu$ corresponds to $-\pa_\mu$.
In particular, $\sum_{\rho =1}^d P_\rho^2$ corresponds to the Laplacian, and
the representation $L(\frac{d-2}{2})$ admits the following realization on the space of harmonic polynomials.
For any $\al \in \R$,
let $d_\al:\mathrm{so}(d+1,1) \rightarrow \R[x_1,\partial_1,\dots,
x_d,\partial_d]$ be a linear map defined by
\begin{align*}
d_\al(D)&= -E_x - \al \\
d_\al(K_\mu) &= ||x||^2\pa_\mu -2 x_\mu (E_x +\al)\\
d_\al(P_\mu) &= -\pa_\mu \\
d_\al(J_{\mu,\nu}) &= x_\mu \pa_\nu-x_\nu \pa_\mu
\end{align*}
for $\mu,\nu \in \{1,2,\dots,d \}$.
It is easy to show that the linear map $d_\al:\mathrm{so}(d+1,1) \rightarrow \R[x_1,\partial_1,\dots,x_d,\partial_d]$ gives a Lie algebra homomorphism.
Thus, $\R[x_1,\dots,x_d,|x|^\R]$ is a $\sod$-module via $d_\al$.
Since $1 \in \R[x_1,\dots,x_d]$ (resp. $|x|^{-2\al} \in \R[x_1,\dots,x_d]$) satisfies \eqref{eq_univ_verma} (resp. \eqref{eq_dual_Verma}) with respect to $d_\al$,
there exists a unique $\sod$-module homomorphism 
\begin{align}
\Psi_\al^\da: V(\al)^\da \rightarrow \R[x^1,\dots,x^d,|x|^\R] \label{eq_psi_dagger}
\end{align}
and
\begin{align}
\Psi_\al: V(\al) \rightarrow \R[x^1,\dots,x^d]\label{eq_psi_al}
\end{align}
such that $\Psi_\al^\da(v_\al^\da) = ||x||^{-\al}$ and $\Psi_\al(v_\al) = 1$. 
Throughout this paper, we consider only the case $\al = \dn$ with $d \geq 3$.
The differential operator representation of $\sod$ on (rational) polynomials is always understood to be $d_\dn$.


Denote the image of $\Psi_\dn^\da$ (resp. $\Psi_\dn$) by $H\left(\dn\right)^\da$ (resp. $H\left(\dn\right)$).
For $n \geq 0$, let $\R[x_1,\dots,x_d]_n$ be the vector space of polynomials of total degree $n$.
 Set
\begin{align}
\mathrm{Harm}_{d,n} = \left\{P(x)\in \R[x_1,\dots,x_d]_n \mid \left(\sum_{\rho=1}^d \left(\frac{d}{dx_\rho}\right)^2\right) P(x) =0\right\},
\end{align}
which consists of harmonic polynomials of degree $n$.
Then, we have (see \cite[Lemma 3.6 and Proposition 3.8]{Mvertex}):
\begin{prop}\label{prop_psi_dn}
Let $d \geq 3$. Then, $\Psi_\dn^\da$ (resp. $\Psi_\dn$) induces a $\sod$-module isomorphism $L(\dn)^\da \rightarrow H\left(\dn\right)^\da$ (resp. $L(\dn) \rightarrow H\left(\dn\right)$), and 
\begin{align}
H\left(\dn\right) &= \bigoplus_{n \geq 0} \mathrm{Harm}_{d,n}\\
H\left(\dn\right)^\da &= \bigoplus_{n \geq 0} \norm{x}^{-d+2-2n} \mathrm{Harm}_{d,n}.\label{eq_Harm_dagger}
\end{align}
Moreover, define a linear map $T:H\left(\dn\right)\rightarrow \theta^* H\left(\dn\right)^\da$ by
\begin{align}
T_\al(P(x)) =P(x)\norm{x}^{-2\al-2n}= P\left(\frac{x}{\norm{x}^2}\right)\norm{x}^{-2\al}\label{eq_Tal_def}
\end{align}
for any $P(x) \in \Harm_{d,n}$. Then, $T_\al$ is a $\sod$-module isomorphism.
\end{prop}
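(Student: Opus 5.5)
We prove Proposition \ref{prop_psi_dn} in three stages: identify the images of $\Psi_\dn$ and $\Psi_\dn^\da$ with the stated spaces, show both maps factor through the irreducible quotients, and then treat $T_\al$.

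\emph{Image of $\Psi_\dn$.} Since $V(\al)$ is induced from $\sod^+$, it is spanned by monomials in the $K_\mu$ applied to $v_\al$. Hence $H(\dn)=\Psi_\dn(V(\dn))$ is the span of iterated applications of $d_\dn(K_\mu)=\norm{x}^2\pa_\mu-2x_\mu(E_x+\dn)$ to $1$. The key computation is that, for $\al=\dn$, $d_\al(K_\mu)$ preserves harmonicity. Writing $\Delta=\sum_\rho\pa_\rho^2$, one computes
\begin{align*}
[\Delta,\norm{x}^2\pa_\mu-2x_\mu(E_x+\al)]=(2d-4-4\al)\pa_\mu+(\ldots)\Delta ,
\end{align*}
which vanishes on harmonic polynomials exactly when $\al=\frac{d-2}{2}$. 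This gives $H(\dn)\subset\bigoplus_n\Harm_{d,n}$.

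For the reverse inclusion, observe that $d_\dn(K_\mu)$ raises degree by one, and on $\Harm_{d,n}$ its top part is the harmonic projection of $x_\mu P$, up to the nonzero scalar $-(2n+d-2)$. Since the harmonic projections of $x_\mu\Harm_{d,n}$ span $\Harm_{d,n+1}$ (via the Fischer decomposition), induction on $n$ gives equality.

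\emph{Image of $\Psi_\dn^\da$.} For $H(\dn)^\da$ we avoid a separate computation by using the Kelvin transform. Set $\tilde T(f)(x)=\norm{x}^{-2\al}f(x/\norm{x}^2)$. A direct check with the differential operators shows $\tilde T\circ d_\al(A)=d_\al(\theta(A))\circ\tilde T$, because conjugation by the inversion $J$ induces $\theta$. Since $\tilde T(1)=\norm{x}^{-2\al}=\Psi^\da(v^\da)$, we get $\Psi^\da_\dn=\tilde T\circ\Psi_\dn\circ(\text{the identification }V(\al)^\da\cong\theta^*V(\al))$. Hence $H(\dn)^\da=\tilde T(H(\dn))$, which is \eqref{eq_Harm_dagger} because $P(x/\norm{x}^2)=\norm{x}^{-2n}P(x)$ for homogeneous $P$. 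The same intertwining relation shows that $T_\al$ is a module isomorphism $H(\dn)\to\theta^*H(\dn)^\da$; it is bijective since $\tilde T$ is an involution.

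\emph{Factoring through the irreducible quotients.} We must show $\ker\Psi_\dn=N(\dn)$, so that $\Psi_\dn$ induces $L(\dn)\cong H(\dn)$. By Proposition \ref{prop_positive}, $N$ is generated by the image of $\sum P_\rho^2$, and this maps to $\Delta$ applied to a lowest vector, which gives $0$. Therefore $\Psi$ factors through $L(\dn)$. The induced map is nonzero on the irreducible module $L(\dn)$, so it is injective, and surjective by definition of $H(\dn)$.

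I expect the main obstacle to be the spanning argument in the first stage. The commutator computation, and checking that $\tilde T$ intertwines via $\theta$ (including the sign conventions in \eqref{eq_def_theta}), are mechanical but error-prone. The spanning step, however, requires a genuine harmonic-analysis input: the Fischer decomposition together with the nonvanishing of $2n+d-2$ for $d\ge3$.
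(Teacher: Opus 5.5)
Your proof is correct, and it uses the same ingredients as the paper. The paper does not prove this statement itself; it cites \cite[Lemma 3.6 and Proposition 3.8]{Mvertex}. Its surrounding text, however, relies on Proposition \ref{prop_positive}, the remark that $\sum_\rho P_\rho^2$ acts as the Laplacian, and the Kelvin-type formula \eqref{eq_Tal_def}. Your two computations both check out: the commutator $[\sum_\rho\partial_\rho^2, d_\al(K_\mu)]=(2d-4-4\al)\partial_\mu-4x_\mu\sum_\rho\partial_\rho^2$, and the intertwining $\tilde T\circ d_\al(A)=d_\al(\theta(A))\circ\tilde T$, which in fact holds for every $\al$.

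There is one slip in your third stage. In $V(\dn)$ you have $P_\rho v_\dn=0$, so $\sum_\rho P_\rho^2$ cannot produce a generator of $N(\dn)$. The Laplacian vector $\sum_\rho P_\rho^2 v_\dn^\da$ lives in $V(\dn)^\da$, and there $\Psi_\dn^\da$ does send it to $\sum_\rho\partial_\rho^2\norm{x}^{2-d}=0$. For $\Psi_\dn$ itself, the corresponding generator is $\sum_\rho K_\rho^2 v_\dn$. Its image is $\sum_\rho d_\al(K_\rho)^2 1=-2\al(d-2-2\al)\norm{x}^2$, which vanishes exactly at $\al=\dn$. Alternatively, transport through your own identity $\Psi_\dn^\da=\tilde T\circ\Psi_\dn\circ\iota$, where $\iota:V(\dn)^\da\cong\theta^*V(\dn)$. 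Since $\tilde T$ is injective, this gives $\ker\Psi_\dn^\da=\iota^{-1}(\ker\Psi_\dn)$, and also $N(\dn)^\da=\iota^{-1}(N(\dn))$, so either case implies the other.

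You could also drop Proposition \ref{prop_positive}, and with it the Kac determinant, entirely. The argument is self-contained:
\begin{enumerate}
\item A nonzero $d_\dn$-submodule of $\Harm_d$ is graded by the eigenvalues of $d_\dn(D)$.
\item Repeatedly applying $d_\dn(P_\mu)=-\partial_\mu$ brings a nonzero homogeneous element down to a nonzero constant.
\item By your first stage, that constant generates all of $\Harm_d$, so $H(\dn)$ is irreducible.
\item Hence $\ker\Psi_\dn$ is a maximal submodule not containing $v_\dn$, and therefore equals $N(\dn)$.
\end{enumerate}
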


Let $\Hf$ be the Hilbert space completion of $L\left(\dn\right)^\da$, and denote its inner product by $(-,-)_H$.
By Proposition \ref{prop_psi_dn}, this Hilbert space is the completion of the space of harmonic polynomials
$\Harm_d = \bigoplus_{n \geq 0} \mathrm{Harm}_{d,n}$, and $\sod$ acts via the representation twisted by $\theta^*$.
We simply write this action as $A.P(x) = d_\dn(\theta(A)) P(x)$
for $A \in \sod$ and $P(x) \in \Harm_d$.
Since $d_\dn(\theta(D)) = d_\dn(-D)= E_x+\dn$,
\begin{align*}
\Harm_{d,n} = \left\{P(x) \in \Harm_d \mid D.P(x)= \left(n+\dn\right)P(x)\right\}
\end{align*}
for any $n \geq 0$, which are mutually orthogonal subspaces by Proposition \ref{prop_bilinear}.
Hence,
\begin{align}
\Hf = \bigoplus_{n \geq 0} \Harm_{d,n}\label{eq_harm_orth1}
\end{align}
as a Hilbert space.
We note that
\begin{align}
A_{d,n}=\dim \Harm_{d,n} = \binom{n+d-1}{d-1}-\binom{n+d-3}{d-1}. \label{eq_harm_dim}
\end{align}
For example, $A_{2,n}=2$ with $\Harm_{2,n}\otimes \C =\C z^n \oplus \C\z^n$ and $A_{3,n}=2n+1$, $A_{4,n}=(n+1)^2$.

For $r\in (0,1]$, define a linear map
$D(r):\Hf \rightarrow \Hf$ by
\begin{align*}
D(r)(v) = \sum_{n \geq 0} r^{\dn+n} v_n
\end{align*}
for any $v = \sum_{n\geq 0} v_n \in \Hf$
with $v_n \in \Harm_{d,n}$. Since $s \geq 0$, it is clear that $\norm{D(r)v} \leq \norm{v}$ for any $v\in \Hf$,
and thus, $D(r)$ is a well-defined bounded linear operator.
Since \eqref{eq_harm_orth1} is an orthogonal decomposition of the Hilbert space, the following proposition is immediate:
\begin{prop}\label{prop_dilation}
The family of bounded linear operators $\{D(r)\}_{r \in (0,1]}$ satisfies the following conditions:
\begin{enumerate}
\item
It is a monoid homomorphism $\R_{\geq 0} \rightarrow B(\Hf)$ and $\norm{D(r)} \leq 1$ for any $r \in (0,1]$.
\item
It is strongly continuous, that is, for any $v \in \Hf$ and $\ep >0$, there is $1>\delta >0$ such that 
$\norm{D(r)v - v} < \ep$ for any $1 \geq r >\delta$.
\item
For any $r \in (0,1]$, $D(r)$ is a self-adjoint operator.
\item
For any $1>r > 0$, $D(r)$ is a trace-class map with 
\begin{align*}
\mathrm{tr}|_{\Hf}D(r) = \sum_{n \geq 0}A_{d,n} r^{n+\dn} = \frac{r^{\dn}(1+r)}{(1-r)^{d-1}}.
\end{align*}
\end{enumerate}
\end{prop}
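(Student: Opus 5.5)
Everything here follows from the orthogonal decomposition \eqref{eq_harm_orth1} $\Hf=\bigoplus_{n\geq 0}\Harm_{d,n}$ together with the finiteness \eqref{eq_harm_dim} of each summand. The idea is that $D(r)$ acts as the scalar $r^{n+\dn}$ on $\Harm_{d,n}$, so it is a diagonal operator with respect to any orthonormal basis adapted to this decomposition, and each property reduces to a statement about the numbers $r^{n+\dn}$.

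For (1), the monoid property is immediate from $r^{n+\dn}s^{n+\dn}=(rs)^{n+\dn}$ on each summand, together with $D(1)=\id$. The contraction bound follows from $0<r^{n+\dn}\leq 1$, since $n+\dn\geq 0$, and Pythagoras over the orthogonal summands.

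For (3), self-adjointness is clear because real diagonal operators on an orthogonal decomposition are self-adjoint: $(D(r)u,v)_H=\sum_n r^{n+\dn}(u_n,v_n)_H=(u,D(r)v)_H$.

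For (2), I use the usual tail argument, as in the proof of Proposition \ref{prop_semigroup}(6). Given $v=\sum_n v_n$ and $\ep>0$, choose $N$ with $\norm{\sum_{n>N}v_n}<\ep/3$. Then
\[
\norm{D(r)v-v}\leq 2\norm{\textstyle\sum_{n>N}v_n}+\textstyle\sum_{n\leq N}(1-r^{n+\dn})\norm{v_n}.
\]
The finite sum tends to $0$ as $r\to 1$.

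For (4), $D(r)$ is a positive diagonal operator with eigenvalue $r^{n+\dn}$ of multiplicity $A_{d,n}$. So it is trace class exactly when $\sum_n A_{d,n}r^{n+\dn}<\infty$, and then the trace equals that sum. The remaining work, and the only real computation, is the closed form. By \eqref{eq_harm_dim}, $A_{d,n}$ is the coefficient of $t^n$ in $(1-t^2)/(1-t)^d$: the generating function of $\binom{n+d-1}{d-1}$ is $(1-t)^{-d}$, and subtracting the same sequence shifted by two gives the factor $(1-t^2)$. Hence
\[
\sum_n A_{d,n}r^n=\frac{1-r^2}{(1-r)^d}=\frac{1+r}{(1-r)^{d-1}},
\]
which converges for $0<r<1$. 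Multiplying by $r^{\dn}$ gives the stated formula, and the convergence for $r<1$ confirms the trace-class claim.
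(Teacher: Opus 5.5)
Your proposal is correct and takes the same route as the paper. The paper simply declares the proposition immediate from the orthogonal decomposition $\Hf=\bigoplus_{n\geq 0}\Harm_{d,n}$, and you supply exactly the details it leaves implicit, including the generating-function evaluation $\sum_n A_{d,n}r^n=(1+r)/(1-r)^{d-1}$. That evaluation tacitly uses $d\geq 3$, so that $\binom{n+d-3}{d-1}=0$ for $n=0,1$.
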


Next, we describe an explicit formula for the inner product $(-,-)_H$ on the space of harmonic polynomials, which will be used in later sections.
Let $d\si$ be the volume form on $S^{d-1}$, normalized by $\int_{S^{d-1}}d\si =1$, and write the inner product on $L^2(S^{d-1})$ as $(-.-)_{L^2(S^{d-1})}$.
By restricting a harmonic polynomial $P(x)$ to the sphere $S^{d-1} \subset \R^d$, we may regard $P(x)$ as a function on $S^{d-1}$.
Then the subspaces $\Harm_{d,n}\subset L^2(S^{d-1})$ ($n \geq 0$) are mutually orthogonal, and
$L^2(S^{d-1}) = \bigoplus_{n \geq 0} \Harm_{d,n}$
gives an orthogonal decomposition as a Hilbert space.
The following can be seen by explicitly computing the inner product of the harmonic polynomial
$(x_1+ i x_2)^n \in \Harm_{d,n}\otimes_\R \C$ (see \cite[Appendix]{Mvertex}):
\begin{prop}\label{prop_relation_L2}
For any $n \geq 0$ and $P_n(x) \in \Harm_{d,n}$, $\norm{P_n(x)}_H^2 = \frac{2n+d-2}{d-2} \norm{P_n(x)}_{L^2(S^{d-1})}^2$.
\end{prop}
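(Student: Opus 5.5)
The plan is to reduce the identity to one explicit vector per degree, as the paper suggests. Both $(-,-)_H$ and $(-,-)_{L^2(S^{d-1})}$ are $\SO(d)$-invariant on $\Harm_{d,n}$. For $(-,-)_H$ this follows from Proposition \ref{prop_bilinear}, since $\theta(J_{\mu,\nu})=J_{\mu,\nu}$ and $J_{\mu,\nu}$ acts by skew operators. Moreover, $\Harm_{d,n}\otimes\C$ is an irreducible $\SO(d)$-representation for $d\geq 3$. By Schur's lemma the two real invariant forms on $\Harm_{d,n}$ are therefore proportional. It then suffices to compute the ratio on one vector, for which I take $P=(x_1+ix_2)^n$ and use the Hermitian extensions of the two forms.

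For the $L^2$ side I would compute $\int_{S^{d-1}}|x_1+ix_2|^{2n}d\si=\int_{S^{d-1}}(x_1^2+x_2^2)^n d\si$. I would evaluate this with the standard Gaussian trick: integrate $e^{-|x|^2}(x_1^2+x_2^2)^n$ over $\R^d$ in polar coordinates and compare with the $2$-dimensional Gaussian moment. This gives $n!\,\Gamma(d/2)/\Gamma(n+d/2)$.

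For the $H$ side I would identify $P$ with a vector in $L(\dn)$, via the isomorphism of Proposition \ref{prop_psi_dn}. Because $d_\dn(P_\mu)=-\pa_\mu$, under $\Psi_\dn$ the polynomial $(x_1+ix_2)^n$ is $\frac{(-1)^n}{n!}\cdots$, that is, up to sign it is $\frac{1}{n!}(P_1+iP_2)^n v_\dn$, since $(P_1+iP_2)^n\cdot 1=(-1)^n n!\cdots$. The constant needs care, but the principle is to write $P=c_n (P_1+iP_2)^n v_\dn$ with $c_n=(-1)^n/n!$. The contravariance $(Au,v)=-(u,\theta(A)v)$, with $\theta(P_\mu)=-K_\mu$, lets me move one $P_1+iP_2$ across as $K_1-iK_2$. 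I would then commute $K_1-iK_2$ through $(P_1+iP_2)^{n}$ using \eqref{eq_com_Lie}:
\begin{align*}
[P_1+iP_2,K_1-iK_2]=-2(2D+ i J_{2,1}-iJ_{1,2})=-4D+4iJ_{1,2}\cdot(\text{sign}),
\end{align*}
together with $[P_1+iP_2,P_1+iP_2]$-type identities. The point of choosing $P_1+iP_2$ is that it is null, so $\sum$-Laplacian terms do not appear and the computation closes within a rank-one sl$_2$-type recursion. This yields a recursion $\norm{(P_1+iP_2)^n v}^2 = 4n(n-1+\dn)\norm{(P_1+iP_2)^{n-1}v}^2$, with the shift $n-1$ coming from the $J_{1,2}$ eigenvalue. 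Hence the norm is $4^n n!\,(\dn)_n$ in Pochhammer notation. Multiplying by $c_n^2$ and correcting for the normalization of the $P_\mu$ (the factors of $\sqrt2$ in the matrix conventions) gives $\norm{P}_H^2$.

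Finally I would take the ratio. Using $\Gamma(n+d/2)/\Gamma(d/2)=(d/2)_n$ and $(\dn)_n/(d/2)_n=\dn/(n+\dn)$ inverted appropriately, the ratio $\norm{P}_H^2/\norm{P}^2_{L^2}$ should collapse to $(2n+d-2)/(d-2)$.

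The main obstacle is the $H$-side commutator computation. The signs and constant factors from the conventions ($\theta$, the $-\al$ in $D v_\al$, and whether we work in $L(\dn)^\da$ versus $L(\dn)$ via $T_\al$) must be tracked exactly. I would fix them by checking the case $n=1$ directly: $\norm{x_1}_H^2$ should equal $\frac{d}{d-2}\cdot\frac1d=\frac{1}{d-2}$. Once this is matched, the recursion determines everything.
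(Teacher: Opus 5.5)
Your overall route (Schur's lemma on the irreducible $\SO(d)$-module $\Harm_{d,n}\otimes\C$, then an explicit evaluation on $(x_1+ix_2)^n$) is the one the paper indicates. Two of your computations are correct. First, $\int_{S^{d-1}}(x_1^2+x_2^2)^n\,d\si=n!/(d/2)_n$. Second, for a vector $v$ killed by $K_1-iK_2$, the recursion $\norm{(P_1+iP_2)^n v}^2=4n(n-1+\dn)\norm{(P_1+iP_2)^{n-1}v}^2$ holds, giving $4^n n!\,(\dn)_n$.

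The gap is the step linking $(x_1+ix_2)^n$ to that module vector. Your constant $c_n=(-1)^n/n!$ is wrong, and so is its derivation. In $L(\dn)$ the vector $v_\dn$ is annihilated by every $P_\mu$, and $d_\dn(P_\mu)=-\pa_\mu$ kills $1$, so $(P_1+iP_2)^n v_\dn=0$ there. The vector your recursion actually needs is $v_\dn^\da$, i.e.\ $1\in\Harm_d$ with the $\theta$-twisted action. There $K_\mu$ acts by $\pa_\mu$ and $P_\mu$ by $2x_\mu(E_x+\dn)-\norm{x}^2\pa_\mu$. Since $(\pa_1+i\pa_2)(x_1+ix_2)^k=0$, this gives $(P_1+iP_2)(x_1+ix_2)^k=2(k+\dn)(x_1+ix_2)^{k+1}$. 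Hence $(P_1+iP_2)^n.1=2^n(\dn)_n(x_1+ix_2)^n$, a special case of \eqref{eq_harm_formula}, so $c_n=1/(2^n(\dn)_n)$.

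This is not a cosmetic constant: the $n$-dependence of $c_n$ is exactly what produces the ratio. With the correct value, $\norm{(x_1+ix_2)^n}_H^2=4^n n!(\dn)_n/\bigl(4^n(\dn)_n^2\bigr)=n!/(\dn)_n$. Dividing by $n!/(d/2)_n$ gives $(d/2)_n/(\dn)_n=(n+\dn)/\dn=(2n+d-2)/(d-2)$. With your $c_n$ you would get $4^n(\dn)_n/n!$ instead. That is off by the factor $\bigl(2^n(\dn)_n/n!\bigr)^2$, which is not of the form $C\lambda^n$ unless $d=4$. So calibrating at $n=1$ cannot repair it. Adjusting the $\sqrt2$'s cannot either: they are already absorbed into the basis, so that $d(P_\mu)=-\pa_\mu$ exactly.

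The cleanest fix bypasses $c_n$ entirely. $K_1-iK_2$ is the adjoint of $P_1+iP_2$ for the Hermitian extension and acts as $\pa_1-i\pa_2$. So with $z=x_1+ix_2$ you get $\norm{z^n}_H^2=\frac{1}{2(n-1+\dn)}\bigl((P_1+iP_2)z^{n-1},z^n\bigr)_H=\frac{n}{n-1+\dn}\norm{z^{n-1}}_H^2$ directly.
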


\begin{prop}\label{prop_restriction_sphere}
Let $P_n \in \Harm_{d,n}$. Then, for any $\xi \in S^{d-1}$, 
\begin{align*}
|P_n(\xi)| \leq \sqrt{\frac{(d-2)A_{d,n}}{2n+d-2}} ||P_n||_H.
\end{align*}
\end{prop}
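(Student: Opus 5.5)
The plan is to use the reproducing kernel of $\Harm_{d,n}$ in $L^2(S^{d-1})$ (the zonal harmonic) and then switch to the norm $\norm{-}_H$ via Proposition \ref{prop_relation_L2}.

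First, since $\Harm_{d,n}$ is finite-dimensional, choose an orthonormal basis $\{Y_1,\dots,Y_{A_{d,n}}\}$ of $\Harm_{d,n}$ with respect to $(-,-)_{L^2(S^{d-1})}$. For $\xi\in S^{d-1}$, define the kernel $Z_\xi(\eta)=\sum_{i}Y_i(\xi)Y_i(\eta)$. Then $P_n(\xi)=(P_n,Z_\xi)_{L^2(S^{d-1})}$ for every $P_n\in\Harm_{d,n}$.

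Second, show that $\norm{Z_\xi}_{L^2}^2=Z_\xi(\xi)=\sum_i Y_i(\xi)^2$ does not depend on $\xi$. The space $\Harm_{d,n}$ and the inner product $(-,-)_{L^2(S^{d-1})}$ (normalized $d\sigma$) are both $\mathrm{O}(d)$-invariant. Hence $R\mapsto (Y_i\circ R^{-1})_i$ is another orthonormal basis, and the function $\sum_i Y_i(\xi)^2$ is independent of the basis chosen. It follows that this function is $\mathrm{O}(d)$-invariant, and so constant on $S^{d-1}$, because $\mathrm{O}(d)$ acts transitively. Integrating over $S^{d-1}$ against $d\sigma$, with $\int d\sigma=1$, gives the constant value $\sum_i\norm{Y_i}_{L^2}^2=A_{d,n}$.

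Third, apply Cauchy--Schwarz: $|P_n(\xi)|\le \norm{P_n}_{L^2}\norm{Z_\xi}_{L^2}=\sqrt{A_{d,n}}\,\norm{P_n}_{L^2(S^{d-1})}$. Substituting $\norm{P_n}_{L^2}^2=\frac{d-2}{2n+d-2}\norm{P_n}_H^2$ from Proposition \ref{prop_relation_L2} then gives the stated bound.

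There is no real obstacle. The only step needing care is the invariance argument for $\sum_i Y_i(\xi)^2$ together with the normalization $\int_{S^{d-1}} d\sigma=1$, which together produce exactly the factor $A_{d,n}$. The same argument can be run over $\R$ throughout, as the section requires.
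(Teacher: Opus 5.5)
Your proof is correct and is essentially the paper's argument. The paper builds the reproducing kernel $E_\xi^n=\sum_l Y_{n,l}(\xi)Y_{n,l}$ from an $(-,-)_H$-orthonormal basis, uses $\SO(d)$-invariance to show that $\norm{E_\xi^n}_H^2$ is independent of $\xi$, and invokes Proposition~\ref{prop_relation_L2} while integrating over $S^{d-1}$; you instead build the kernel in $L^2(S^{d-1})$ and convert norms only at the end, which is the same computation because the two inner products are proportional on $\Harm_{d,n}$.
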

\begin{proof}
Let $Y_{n,l}(x)$ be an orthonormal basis of $\Harm_{d,n}$ with $(-,-)_H$. For $\xi \in S^{d-1}$, set
\begin{align*}
E_{\xi}^n(x) = \sum_{l=1}^{A_{d,n}} Y_{n,l}(\xi)Y_{n,l}(x) \in \Harm_{d,n}.
\end{align*}
Then, for any $P_n(x) \in \Harm_{d,n}$, 
\begin{align*}
|P_n(\xi)| = |(P_n(x),E_{\xi}^n(x))_H| \leq ||P_n(x)||_H ||E_{\xi}^n(x)||_H.
\end{align*}
Since the inner product on $\Harm_{d,n}$ is $\SO(d)$-invariant,
 $||E_{\xi}^n(x)||_H^2 = \sum_{l=1}^{A_{d,n}} Y_{n,l}(\xi)^2$ is independent of the choice of $\xi \in S^{d-1}$.
Hence, by Proposition \ref{prop_relation_L2}
\begin{align*}
||E_{\xi}^n(x)||_H^2 &= \int_{S^{d-1}}\sum_{l=1}^{A_{d,n}} Y_{n,l}(\xi)^2 d\si(\xi) =\sum_{l=1}^{A_{d,n}} ||Y_{n,l}(\xi)||_{L^2(S^{d-1})}^2\\
&=\frac{d-2}{2n+d-2}\sum_{l=1}^{A_{d,n}} ||Y_{n,l}(\xi)||_{H}^2 = \frac{(d-2)A_{d,n}}{2n+d-2}.
\end{align*}
\end{proof}

For any index $\al \in \N^d$, $\la \in \R$ and $N \geq 0$, set
\begin{align*}
\al! = \prod_{i=1}^d \al_i!,\quad |\al| =  \sum_{i=1}^d \al_i,\quad (\la)_N=\la (\la+1) \cdots (\la+N-1)
\end{align*}
and for any polynomial $f(x)=\sum_\al a_\al x^\al$, 
\begin{align*}
P^\al=\prod_{\mu=1}^d P_\mu^{\al_\mu},\quad f(P)=\sum_\al a_\al P^\al.
\end{align*}
Define an inner product $(-,-)_F$ on $\R[x_1,\dots,x_d]$ by
\begin{align*}
(x^\al,x^\be)_F = \delta_{\al,\be} \al!
\end{align*}
for any $\al,\be \in \N^d$ (see \cite[Chapter 5]{ABR}).
Here, $x^\al$ and $x^\be$ are monomials.
\begin{lem}
\label{lem_fisher_product}
For any $n,m \geq 0$ and $f \in \R[x_1,\dots,x_d]_n$, $g \in \R[x_1,\dots,x_d]_m$,
\begin{align*}
\norm{fg}_F \leq \sqrt{\binom{n+m}{n}} \norm{f}_F \norm{g}_F.
\end{align*}
\end{lem}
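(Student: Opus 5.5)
The idea is to pass to the Fock/Bargmann realization of the Fischer inner product, in which multiplication by $f$ is expressed through the adjoint of differentiation. For a polynomial $f=\sum_\al a_\al x^\al$ write $f(\pa)=\sum_\al a_\al \pa^\al$. Since $(x_i p,q)_F=(p,\pa_i q)_F$ on monomials, the adjoint of multiplication by $x^\al$ is $\pa^\al$. It follows that
\begin{align*}
(f,g)_F=f(\pa)\bar g\big|_{x=0}
\end{align*}
for homogeneous $f,g$ of the same degree. In particular $(x^\al,x^\be)_F=\de_{\al,\be}\al!$.

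The cleanest route is a tensor-power embedding. Identify $\R[x]_n$ with symmetric tensors in $(\R^d)^{\otimes n}$ via
\begin{align*}
x^\al\mapsto \sum e_{i_1}\otimes\cdots\otimes e_{i_n},
\end{align*}
the sum running over all words with multiplicity $\al$. There are $n!/\al!$ such words, so this vector has tensor norm squared $n!/\al!$. Hence $\norm{x^\al}_F^2=\al!$ corresponds, after rescaling by the factor $\al!/n!$, to the symmetric tensor $\bs^n(e_{\al})$ with coefficient normalization. More precisely, the map
\begin{align*}
\iota_n: f\mapsto \frac{1}{\sqrt{n!}}\,(\text{the symmetric tensor } T_f \text{ with } f(x)=\langle T_f,x^{\otimes n}\rangle\, n!^{?})
\end{align*}
is to be chosen so that $\iota_n$ is an isometry $(\R[x]_n,(\cdot,\cdot)_F)\to\Sym^n\R^d$. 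I would fix it explicitly as $\iota_n(x^\al)=\frac{\al!}{\sqrt{n!}}\cdot\bigl(\text{sum over words}\bigr)/\al!\cdot$, and check once on monomials that the norms match: a sum of $n!/\al!$ orthonormal words scaled by $c$ has norm squared $c^2 n!/\al!$, which equals $\al!$ when $c=\al!/\sqrt{n!}$.

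Under this normalization the product should become the symmetrized tensor product with a binomial factor:
\begin{align*}
\iota_{n+m}(fg)=\sqrt{\tbinom{n+m}{n}}\;\bs^{n+m}\bigl(\iota_n f\otimes\iota_m g\bigr).
\end{align*}
I would verify this on monomials $f=x^\al$, $g=x^\be$. The symmetrization of the word-sum for $\al$ tensored with the word-sum for $\be$ equals a multiple of the word-sum for $\al+\be$, and one counts the multiplicity: $\frac{n!}{\al!}\frac{m!}{\be!}$ words in total are spread uniformly over the $\frac{(n+m)!}{(\al+\be)!}$ target words. Tracking the constants $\al!/\sqrt{n!}$ and $\be!/\sqrt{m!}$ then gives exactly the factor $\sqrt{(n+m)!/(n!m!)}$.

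The conclusion then follows from two norm facts: $\bs^{n+m}$ is an orthogonal projection, so it has norm at most $1$, and the tensor norm is multiplicative, $\norm{u\otimes v}=\norm u\norm v$. Together these give
\begin{align*}
\norm{fg}_F\le\sqrt{\tbinom{n+m}{n}}\,\norm{\iota_n f}\,\norm{\iota_m g}=\sqrt{\tbinom{n+m}{n}}\,\norm f_F\norm g_F.
\end{align*}
The only real work is getting the normalization constants right in the monomial computation of the product identity. As a sanity check I would test $d=1$ with $f=x^n$, $g=x^m$: there $\norm{fg}_F^2=(n+m)!$ and $\binom{n+m}{n}\,n!\,m!=(n+m)!$, so the bound is attained with equality.
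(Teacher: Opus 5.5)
Your argument is correct, but it takes a different route from the paper. The paper works directly with coefficients: in $\norm{fg}_F^2=\sum_{|\ga|=n+m}\ga!\bigl(\sum_{\al+\be=\ga}a_\al b_\be\bigr)^2$ it writes $\ga!=\al!\,\be!\cdot\frac{\ga!}{\al!\,\be!}$, applies Cauchy--Schwarz to each inner sum, and uses the multi-index Vandermonde identity $\sum_{\al+\be=\ga,\,|\al|=n}\frac{\ga!}{\al!\,\be!}=\binom{n+m}{n}$. You instead identify $(\R[x_1,\dots,x_d]_n,(-,-)_F)$ isometrically with $\Sym^n\R^d$, and show that multiplication becomes $\sqrt{\binom{n+m}{n}}$ times the symmetrized tensor product. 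Cauchy--Schwarz plus Vandermonde is thereby replaced by $\norm{\bs^{n+m}}\le 1$ and $\norm{u\otimes v}=\norm{u}\norm{v}$.

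Your constants do check out. Let $W_\al$ be the sum of the $n!/\al!$ words of multiplicity $\al$, and set $\iota_n(x^\al)=\frac{\al!}{\sqrt{n!}}W_\al$. Then $\bs^{n+m}(W_\al\otimes W_\be)=\frac{n!\,m!\,(\al+\be)!}{\al!\,\be!\,(n+m)!}W_{\al+\be}$, which gives exactly the factor $\sqrt{\binom{n+m}{n}}$.

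However, you should discard your two displayed attempts to define $\iota_n$. The one with $n!^{?}$ is unresolved, and the one with the extra division by $\al!$ would give $\norm{\iota_n(x^\al)}^2=1/\al!$. The clean definition is $\iota_n(x_{i_1}\cdots x_{i_n})=\sqrt{n!}\,\bs^n(e_{i_1}\otimes\cdots\otimes e_{i_n})$. From it the product identity is immediate, because $\bs^{n+m}\circ(\bs^n\otimes\bs^m)=\bs^{n+m}$.

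The paper's proof is shorter and self-contained. Yours buys several things:
\begin{itemize}
\item It explains where the binomial comes from: it is the same $\sqrt{p!}$ rescaling as the map $\Psi$ in \eqref{eq_sqrt}.
\item It gives the exact identity $\norm{fg}_F=\sqrt{\binom{n+m}{n}}\,\norm{\bs^{n+m}(\iota_nf\otimes\iota_mg)}$.
\item It makes sharpness transparent: equality holds for $f=\ell^n$, $g=\ell^m$ with $\ell$ linear, and your $d=1$ check is a special case.
\item It extends at once to several factors with a multinomial coefficient.
\end{itemize}
Your opening paragraph on $(f,g)_F=f(\pa)g|_{x=0}$ is never used and can be dropped.
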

\begin{proof}
Note that for any $\ga \in \N^d$ with $|\ga|=n+m$, by looking at the terms of total degree $n$ in
$(1+x_1)^{\ga_1}\dots (1+x_d)^{\ga_d}$, we have
$\sum_{\substack{|\al|=n,|\be|=m \\ \al+\be = \ga}}\frac{(\al+\be)!}{\al!\be!} = \binom{n+m}{n}$.
Let $f(x)=\sum_{|\al|=n} a_\al x^\al$ and $g(x)=\sum_{|\be|=m} b_\be x^\be$.
By Cauchy-Schwartz,
\begin{align*}
\norm{f(x)g(x)}_F^2 &= \sum_{|\ga|=n+m}\ga! \left(\sum_{\substack{|\al|=n,|\be|=m \\\al+\be=\ga}} a_{\al} b_{\be}\right)^2\\
&= \sum_{|\ga|=n+m}\left(\sum_{\substack{|\al|=n,|\be|=m  \\\al+\be=\ga}} \sqrt{\al!} a_{\al} \sqrt{\be!} b_{\be}\sqrt{\frac{(\al+\be)!}{\al!\be!}}\right)^2\\
&\leq \sum_{|\ga|=n+m} \left(\left(\sum_{\substack{|\al|=n,|\be|=m  \\\al+\be=\ga}} {\al!} a_{\al}^2 {\be!} b_{\be}^2 \right)
\left(\sum_{\substack{|\al|=n,|\be|=m  \\\al+\be=\ga}} \frac{(\al+\be)!}{\al!\be!} \right)\right)\\
&\leq \binom{n+m}{n}\norm{f(x)}_F^2 \norm{g(x)}_F^2.
\end{align*}
\end{proof}

Any polynomial $f \in \R[x_1,\dots,x_d]_m$ of degree $m$ admits a unique decomposition of the form
$f(x) = g(x) + \norm{x}^2 h(x)$
using a harmonic polynomial $g \in \Harm_{d,m}$ and a polynomial $h \in \R[x_1,\dots,x_d]_{m-2}$ of degree $m-2$ (see \cite[Proposition 5.5]{ABR}).
That is, there exists an isomorphism of vector spaces
\begin{align}
\begin{split}
\Harm_d \oplus \norm{x}^2\R[x_1,\dots,x_d] &\rightarrow \R[x_1,\dots,x_d]\\
(g(x),\norm{x}^2 h(x)) &\mapsto g + \norm{x}^2 h(x).
\end{split}
\label{eq_Harm_decomp}
\end{align}
We denote by $\pr_H$ the projection $\R[x_1,\dots,x_d] \rightarrow \Harm_d$ defined by \eqref{eq_Harm_decomp}.
\begin{lem}\label{lem_fisher_adjoint}
The decomposition \eqref{eq_Harm_decomp} is orthogonal with respect to the inner product $(-,-)_F$.
In particular, $\pr_H$ is an orthogonal projection, and for any $f \in \R[x_1,\dots,x_d]$,
the inequality $\norm{\pr_H(f)}_F \leq \norm{f}_F$ holds.
\end{lem}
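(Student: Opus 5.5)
The approach is to identify the $(-,-)_F$-adjoint of multiplication by $\norm{x}^2$ with the Laplacian. Orthogonality of \eqref{eq_Harm_decomp} then follows because harmonic polynomials lie in the kernel of that adjoint.

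First, by bilinearity it suffices to check the adjointness on monomials. Write $\Delta=\sum_\rho \partial_\rho^2$. The key identity is
\begin{align*}
(x_\rho f, g)_F = (f, \partial_\rho g)_F
\end{align*}
for all polynomials $f,g$. Take $f=x^\al$ and $g=x^\be$. The left side is $\de_{\al+e_\rho,\be}\,\be!$. The right side is $\be_\rho\,(x^\al,x^{\be-e_\rho})_F=\be_\rho\,\de_{\al,\be-e_\rho}\,\al!$. When $\be=\al+e_\rho$ we have $\be_\rho\,\al!=\be!$, so the two sides agree. Iterating this twice and summing over $\rho$ gives
\begin{align*}
(\norm{x}^2 h, g)_F=(h,\Delta g)_F .
\end{align*}

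Second, take $g\in\Harm_d$ and any $h\in\R[x_1,\dots,x_d]$. Then $(\norm{x}^2h,g)_F=(h,\Delta g)_F=0$. So $\Harm_d$ is $F$-orthogonal to $\norm{x}^2\R[x_1,\dots,x_d]$, which is the orthogonality of \eqref{eq_Harm_decomp}.

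Third, the existence and uniqueness of the decomposition is already granted by \cite[Proposition 5.5]{ABR}. Combined with the orthogonality, this makes $\pr_H$ the orthogonal projection onto $\Harm_d$. Hence for $f=g+\norm{x}^2h$,
\begin{align*}
\norm{f}_F^2=\norm{g}_F^2+\norm{\norm{x}^2h}_F^2\ge\norm{\pr_H f}_F^2 .
\end{align*}

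There is no real obstacle here. The only step needing care is the index bookkeeping in the monomial adjointness identity.
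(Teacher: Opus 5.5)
Your proof is correct and follows the paper's argument: the paper also uses that multiplication by $x_i$ is $(-,-)_F$-adjoint to $\partial_i$, which gives $(\norm{x}^2 h, g)_F = (h, \sum_i \partial_i^2 g)_F = 0$ for harmonic $g$, and then reads off the orthogonal projection and the norm inequality. Your explicit check of the adjointness identity on monomials fills in a step that the paper states without verification.
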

\begin{proof}
With respect to the inner product $(-,-)_F$, the adjoint operator of multiplication by $x_i$ is $\pa_i$.
That is, for any $f,g \in \R[x_1,\dots,x_d]$,
\begin{align*}
(x_i f,g)_F = (f,\partial_i g)_F
\end{align*}
holds.
Therefore, for $g(x) \in \Harm_d$ and $h(x) \in \R[x_1,\dots,x_d]$, we have
\begin{align*}
(\norm{x}^2 h(x),g(x))_F = (h(x), (\sum_i \pa_i^2)g(x))_F =0.
\end{align*}
Hence $\Harm_d$ and $\norm{x}^2\R[x_1,\dots,x_d]$ are orthogonal to each other.
\end{proof}

%

\begin{lem}\label{lem_harm_proj}
Let $f(x)= \sum_\ga a_\ga x^\ga \in \R[x_1,\dots,x_d]_m$. Then,
\begin{align*}
\pr_H(f(x)) = \frac{1}{2^m \left(\frac{d-2}{2}\right)_m} f(P).1.
\end{align*}
\end{lem}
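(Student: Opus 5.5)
The plan is to compare $f(P).1$ with $f$ modulo $\norm{x}^2\R[x_1,\dots,x_d]_{m-2}$, and then use the uniqueness of the decomposition \eqref{eq_Harm_decomp}. Here $f(P).1$ means the action on $1\in\Harm_d$ through the twisted representation $A.Q=d_\dn(\theta(A))Q$. Write $\al=\dn$. Since $\theta(P_\mu)=-K_\mu$, each generator acts on $\Harm_d$ by the differential operator
\begin{align*}
P_\mu . Q = -d_\al(K_\mu)Q = 2x_\mu(E_x+\al)Q - \norm{x}^2\pa_\mu Q .
\end{align*}
The $P_\mu$ commute in $\sod$ by \eqref{eq_com_Lie}, so $f(P)$ is well defined and $f\mapsto f(P).1$ is linear. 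It therefore suffices to treat monomials $f=x^\ga$ with $|\ga|=m$.

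\textbf{Step 1 (harmonicity).} By Proposition \ref{prop_psi_dn}, $H(\dn)=\Harm_d$ is the image of the $\sod$-module map $\Psi_\dn$, so it is stable under every $d_\al(A)$. Hence it is also stable under every $d_\al(\theta(A))$. Since $1\in\Harm_{d,0}$, we get $f(P).1\in\Harm_d$. Each $P_\mu$ raises the homogeneous degree by one, because $E_x$ acts by $k$ on degree-$k$ polynomials. So $f(P).1\in\Harm_{d,m}$.

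\textbf{Step 2 (leading term modulo $\norm{x}^2$).} I will show by induction on $k=|\be|$ that
\begin{align*}
P^\be.1 \equiv 2^k(\al)_k\, x^\be \pmod{\norm{x}^2\R[x_1,\dots,x_d]_{k-2}} .
\end{align*}
The case $k=0$ is trivial. For the inductive step, let $g=P^\be.1$ be homogeneous of degree $k$ with $g=c_k x^\be+\norm{x}^2h$. Then
\begin{align*}
P_\mu.g = 2(k+\al)x_\mu g-\norm{x}^2\pa_\mu g \equiv 2(k+\al)c_k\, x_\mu x^\be .
\end{align*}
This holds because both $x_\mu\norm{x}^2h$ and $\norm{x}^2\pa_\mu g$ lie in $\norm{x}^2\R[x]$. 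Thus $c_{k+1}=2(k+\al)c_k$, which gives $c_m=2^m(\al)_m$. By linearity, $f(P).1=2^m(\al)_m f+\norm{x}^2 h'$ for some $h'$ of degree $m-2$.

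\textbf{Step 3 (conclusion).} By Step 1, $f(P).1$ is harmonic. Rewriting Step 2 as $f=\frac{1}{2^m(\al)_m}f(P).1+\norm{x}^2\bigl(-\frac{1}{2^m(\al)_m}h'\bigr)$ gives a decomposition of $f$ of the form \eqref{eq_Harm_decomp}. By uniqueness of that decomposition, $\pr_H(f)=\frac{1}{2^m(\al)_m}f(P).1$. This step requires $(\al)_m\neq0$, which holds because $\al=\dn>0$ for $d\ge3$.

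The only delicate point is Step 1, namely checking that the $\theta$-twisted operators preserve $\Harm_d$. This follows from the module-isomorphism statement of Proposition \ref{prop_psi_dn}, since $d_\al\circ\theta$ is again a composite of Lie algebra homomorphisms. The rest is the routine degree bookkeeping above.
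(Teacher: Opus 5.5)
Your proof is correct and takes the same route as the paper. The paper also writes $P_\mu$ on $\Harm_d$ as $2x_\mu(E_x+\dn)-\norm{x}^2\pa_\mu$, observes that $f(P).1\equiv 2^m(\dn)_m f$ modulo $\norm{x}^2\R[x_1,\dots,x_d]$, and concludes from the harmonicity of $f(P).1$ and the uniqueness of the decomposition \eqref{eq_Harm_decomp}; your induction on the degree and your justification of harmonicity via Proposition \ref{prop_psi_dn} just spell out steps the paper states without detail.
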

\begin{proof}
The action of $P_i$ on $\Harm_{d}$ is given by
\begin{align*}
2x_i (E_x+\frac{d-2}{2})-\norm{x}^2 \pa_i.
\end{align*}
The polynomial $f(P).1$ is equal to $2^m \left(\frac{d-2}{2}\right)_m f(x)$ modulo $\norm{x}^2\R[x_1,\dots,x_d]$.
Since $f(P).1$ is a harmonic polynomial, the claim follows from the uniqueness of the decomposition \eqref{eq_Harm_decomp}.
\end{proof}

Although the inner products $(-,-)_H$ and $(-,-)_F$ on the space of harmonic polynomials look completely different, they in fact coincide up to a degree-dependent scaling factor (see \cite[Theorem 5.14]{ABR}).
This fact will be crucial later when estimating the norm of a bilinear form:
\begin{prop}\label{prop_fisher_inner}
For harmonic polynomials of degree $m$,
$f(x)= \sum_\be a_\be x^\be$ and $g(x)=\sum_\ga b_\ga x^\ga \in \Harm_{d,m}$, we have
\begin{align}
{2^m \left(\frac{d-2}{2}\right)_m} (f(x),g(x))_H &= (f(x),g(x))_F,\label{eq_fisher3}\\
(f(P).1,g(P).1)_H &= {2^m \left(\frac{d-2}{2}\right)_m} (f(x),g(x))_F.\label{eq_fisher1}
\end{align}
Moreover, if $h(x)= \sum_\ga a_\ga x^\ga \in \R[x_1,\dots,x_d]_m$ is a polynomial that is not necessarily harmonic, then
\begin{align}
\norm{h(P).1}_H^2 \leq  {2^m \left(\frac{d-2}{2}\right)_m} (h(x),h(x))_F. \label{eq_fisher2}
\end{align}
\end{prop}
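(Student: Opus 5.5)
The plan is to reduce everything to \eqref{eq_fisher3}, and to prove \eqref{eq_fisher3} by Schur's lemma plus a single explicit norm computation. Throughout set $c_m=2^m\left(\frac{d-2}{2}\right)_m$.

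\emph{Reduction to \eqref{eq_fisher3}.} By Lemma~\ref{lem_harm_proj}, $h(P).1=c_m\,\pr_H(h)$ for every $h\in\R[x_1,\dots,x_d]_m$. If $f,g\in\Harm_{d,m}$, then $\pr_H f=f$ and $\pr_H g=g$. Assuming \eqref{eq_fisher3}, we get
\[
(f(P).1,g(P).1)_H=c_m^2(f,g)_H=c_m(f,g)_F,
\]
which is \eqref{eq_fisher1}. For a general $h$ of degree $m$, \eqref{eq_fisher3} applied to the harmonic polynomial $\pr_H h$ gives
\[
\norm{h(P).1}_H^2=c_m^2\norm{\pr_H h}_H^2=c_m\norm{\pr_H h}_F^2\le c_m\norm{h}_F^2,
\]
where the last step is Lemma~\ref{lem_fisher_adjoint}. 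This is \eqref{eq_fisher2}.

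\emph{Proportionality of the two forms.} Both $(-,-)_H$ and $(-,-)_F$ restricted to $\Harm_{d,m}$ are $\SO(d)$-invariant positive definite forms.
\begin{itemize}
\item For $(-,-)_H$ this follows from Proposition~\ref{prop_bilinear}, since $\theta(J_{\mu,\nu})=J_{\mu,\nu}$ and $J_{\mu,\nu}$ acts by the vector fields $x_\mu\pa_\nu-x_\nu\pa_\mu$.
\item For $(-,-)_F$, one notes that $(f,g)_F=f(\pa)g$ evaluated at $0$ (i.e.\ apply the operator $f(\pa)$ to $g$ and read off the constant), and this expression is $\mathrm{O}(d)$-invariant.
\end{itemize}
For $d\ge3$, $\Harm_{d,m}$ is an irreducible $\SO(d)$-module. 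Moreover its complexification is still irreducible, so the module is absolutely irreducible. Schur's lemma then gives $(f,g)_F=\lambda_m(f,g)_H$ for some constant $\lambda_m>0$. It remains to show $\lambda_m=c_m$, which is a one-vector computation after complexifying.

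\emph{Evaluating the constant (main step).} Take $f=(x_1+ix_2)^m\in\Harm_{d,m}\otimes\C$ and use the Hermitian extensions of both forms.
\begin{itemize}
\item The $F$-norm is immediate from the binomial expansion:
\[
\norm{f}_F^2=\sum_k\binom{m}{k}^2(m-k)!\,k!=2^m m!.
\]
\item For the $H$-norm, set $Q=P_1+iP_2$. Lemma~\ref{lem_harm_proj} gives $Q^m.1=c_m f$. The adjoint of $Q$ with respect to $(-,-)_H$ is $\bar Q^{\da}=K_1-iK_2$, since $(Au,v)=-(u,\theta(A)v)$ and $\theta(P_\mu)=-K_\mu$.
\item Using \eqref{eq_com_Lie}, one has $[\bar Q^{\da},Q]=-2\bigl(2D+i(J_{1,2}-J_{2,1})\bigr)$ up to signs to be tracked. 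The vector $Q^k.1$ is a joint eigenvector of $D$, with eigenvalue $k+\frac{d-2}{2}$, and of $J_{1,2}$, with eigenvalue $\pm ik$. Since $\bar Q^{\da}.1=0$, commuting $\bar Q^{\da}$ through $Q^m$ should yield the recursion
\[
\norm{Q^m.1}_H^2=4m\left(\tfrac{d-2}{2}+m-1\right)\norm{Q^{m-1}.1}_H^2 .
\]
\end{itemize}
Iterating the recursion gives $\norm{Q^m.1}_H^2=4^m m!\left(\frac{d-2}{2}\right)_m$. Hence
\[
\norm{f}_H^2=\frac{4^m m!\left(\frac{d-2}{2}\right)_m}{c_m^2}=\frac{m!}{\left(\frac{d-2}{2}\right)_m},
\]
and therefore $\norm{f}_F^2/\norm{f}_H^2=2^m\left(\frac{d-2}{2}\right)_m=c_m$, as required.

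The only delicate point is getting the signs in the commutator right so that the recursion coefficient is exactly $4m\left(\frac{d-2}{2}+m-1\right)$. As a cross-check, the result must be consistent with Proposition~\ref{prop_relation_L2} combined with the standard value of the $L^2(S^{d-1})$-norm of $(x_1+ix_2)^m$.
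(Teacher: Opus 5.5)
Your proposal is correct. Your derivations of \eqref{eq_fisher1} and \eqref{eq_fisher2} from \eqref{eq_fisher3}, via Lemma~\ref{lem_harm_proj} and Lemma~\ref{lem_fisher_adjoint}, are exactly the paper's.

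The difference is in how you prove \eqref{eq_fisher3}. The paper never compares $(-,-)_H$ and $(-,-)_F$ directly. It passes through $L^2(S^{d-1})$:
\begin{itemize}
\item it cites \cite[Theorem 5.14]{ABR} for $(f,g)_{L^2(S^{d-1})}=(f,g)_F/(d(d+2)\cdots(d+2m-2))$;
\item it combines this with Proposition~\ref{prop_relation_L2}, whose proof is deferred to another paper and is itself a computation on $(x_1+ix_2)^n$.
\end{itemize}
You instead apply Schur's lemma directly to the two $\mathrm{SO}(d)$-invariant forms on $\mathrm{Harm}_{d,m}$. You then fix the constant by a single computation in $\mathrm{so}(d+1,1)$. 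This merges the paper's two steps into one and stays inside the paper's own algebraic setup, with no harmonic-analysis citation. What it does not give you is the $L^2$ relation, which the paper needs anyway for Proposition~\ref{prop_restriction_sphere}. For Schur you do not actually need absolute irreducibility: the intertwiner relating the two forms is self-adjoint for $(-,-)_H$, so real irreducibility already forces it to be a scalar.

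Your displayed commutator has the wrong sign. Taken literally, each step of the recursion would contribute the negative number $-2(d-2)$. The correct commutator follows from $[P_\mu,K_\nu]=-2(\delta_{\mu,\nu}D+J_{\mu,\nu})$:
\[
[K_1-iK_2,\,P_1+iP_2]=4D-4iJ_{1,2}.
\]
The operator $J_{1,2}$ acts by $x_1\partial_2-x_2\partial_1$, so it multiplies $(x_1+ix_2)^k$ by $ik$. Hence $[Q^{*},Q]$ acts on $Q^k.1$ by $4\bigl(2k+\frac{d-2}{2}\bigr)$. Summing over $k=0,\dots,m-1$ gives exactly your coefficient $4m\bigl(m-1+\frac{d-2}{2}\bigr)$, so your recursion and final constant stand.

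This also agrees with the paper's computation $\|L(-1)^n.1\|_H^2=n!\bigl(\frac{d-2}{2}\bigr)_n$ in the proof of Proposition~\ref{prop_unbounded}. By complex conjugation, that is the same as $\|Q^n.1\|_H^2=4^n n!\bigl(\frac{d-2}{2}\bigr)_n$.
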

\begin{proof}
By \cite[Theorem 5.14]{ABR}, if $f,g$ are harmonic polynomials, then
\begin{align*}
(f(x),g(x))_{L^2(S^{d-1})}=\frac{1}{d(d+2)\cdots (d+2m-2)} (f(x),g(x))_F.
\end{align*}
Hence \eqref{eq_fisher3} follows from Proposition \ref{prop_relation_L2}.
By Lemma \ref{lem_harm_proj}, if $f(x)$ is a harmonic polynomial of degree $m$, then
\begin{align}
f(P).1 = 2^m \left(\dn\right)_m f(x)\label{eq_harm_formula}
\end{align}
holds.
Combining this with \eqref{eq_fisher3} yields \eqref{eq_fisher1}.
If $h(x)= \sum_\ga a_\ga x^\ga \in \R[x_1,\dots,x_d]_m$ is a polynomial that is not necessarily harmonic, then
\begin{align*}
(h(P).1,h(P).1)_H &=  (\pr_H(h)(P).1, \pr_H(h)(P).1)_H\\
&={2^m \left(\frac{d-2}{2}\right)_m} (\pr_H(h),\pr_H(h))_F\\
&\leq {2^m \left(\frac{d-2}{2}\right)_m} \norm{h(x)}_F^2.
\end{align*}
Here, the last inequality follows from Lemma \ref{lem_fisher_adjoint}.
\end{proof}

%

\begin{rem}\label{rem_fisher}
Note here that, in \eqref{eq_fisher3} of Proposition \ref{prop_fisher_inner}, the assumption that $f(x),g(x)$ are harmonic polynomials is essential.
For example, if we take $f(x)= \sum_i x_i^2$, then $f(P).1 = 0$, while $(f(x),f(x))_F = 2d$.
\end{rem}

The following lemma follows from Proposition \ref{prop_psi_dn} (see also \cite[Corollary 5.20]{ABR}):
\begin{lem}
\label{lem_dual_rel}
For any $f(x) \in \R[x_1,\dots,x_d]_m$,
\begin{align*}
\norm{x}^{d-2+2m} f(-\partial)\frac{1}{\norm{x}^{d-2}} = f(P).1 \in \Harm_{d,m}.
\end{align*}
\end{lem}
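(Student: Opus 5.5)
The plan is to read the identity off the intertwiner $T_\dn$ of Proposition~\ref{prop_psi_dn}, applied to $f(P).1$. The main point is to track correctly how the twist by $\theta$ turns the action of $P_\mu$ on $\Harm_d$ into the operator $-\pa_\mu$ on the dual realization $H(\dn)^\da$. Write $\al=\dn$, so that $2\al=d-2$ and $T_\al(1)=\norm{x}^{-(d-2)}$.

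First I would unwind what it means for $T_\al:H(\al)\to\theta^*H(\al)^\da$ to be a $\sod$-module isomorphism. By definition of $\theta^*$, it means
\begin{align*}
T_\al\bigl(d_\al(A)P\bigr)=d_\al(\theta(A))\,T_\al(P)
\end{align*}
for all $A\in\sod$ and $P\in\Harm_d$. The action used on $\Harm_d$ is the twisted one, $A.P=d_\al(\theta(A))P$. Replacing $A$ by $\theta(A)$ in the displayed identity and using $\theta^2=\id$ gives
\begin{align*}
T_\al(A.P)=d_\al(A)\,T_\al(P).
\end{align*}
For $A=P_\mu$ we have $d_\al(P_\mu)=-\pa_\mu$, hence $T_\al(P_\mu.P)=-\pa_\mu T_\al(P)$.

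Next I would iterate this over monomials: $T_\al(P^\beta.1)=(-\pa)^\beta T_\al(1)=(-\pa)^\beta\norm{x}^{-(d-2)}$. Extending linearly to $f(x)=\sum_\beta a_\beta x^\beta\in\R[x_1,\dots,x_d]_m$ yields
\begin{align*}
T_\al\bigl(f(P).1\bigr)=f(-\pa)\,\norm{x}^{-(d-2)}.
\end{align*}
Since each $P_\mu$ raises the $D$-eigenvalue by one, $f(P).1$ lies in $\Harm_{d,m}$. This also follows from Lemma~\ref{lem_harm_proj}.

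Finally I would invert $T_\al$ on $\Harm_{d,m}$. By \eqref{eq_Tal_def}, $T_\al$ acts there as multiplication by $\norm{x}^{-2\al-2m}=\norm{x}^{-(d-2)-2m}$. Multiplying both sides by $\norm{x}^{d-2+2m}$ gives exactly the claimed identity, together with the membership in $\Harm_{d,m}$.

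The only delicate step is the twist bookkeeping: confirming that $P_\mu$ maps to $-\pa_\mu$ rather than to $d_\al(K_\mu)$. I would double-check it on $m=1$: $P_\mu.1=-d_\al(K_\mu)1=2\al x_\mu=(d-2)x_\mu$, while $\norm{x}^{d}(-\pa_\mu)\norm{x}^{2-d}=(d-2)x_\mu$. The two sides agree.
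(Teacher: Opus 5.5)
Your proposal is correct and follows the same route as the paper. The paper only remarks that the lemma follows from the intertwiner $T_{\dn}$ of Proposition~\ref{prop_psi_dn}. You unwind the $\theta$-twist correctly to get $T_{\dn}(P_\mu.P)=-\partial_\mu T_{\dn}(P)$, iterate it from $T_{\dn}(1)=\norm{x}^{-(d-2)}$, and then invert $T_{\dn}$ on $\Harm_{d,m}$ as multiplication by $\norm{x}^{-(d-2)-2m}$, which is exactly the intended argument.
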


For any $a \in \bD$ and $N \geq 0$, set 
\begin{align*}
E_a^N = \sum_{\substack{\al \in \N^d\\ |\al|=N}} \frac{1}{\al!}a^\al P^{\al}.1 \in \Harm_{d,N}.
\end{align*}
We will show that $\sum_{N \geq 0} \norm{E_a^N}_H^2 < \infty$ and $E_a = \sum_{N \geq 0}E_a^N$ is an element of $\Hf$.
Set
\begin{align*}
F_a^N(x) &= \sum_{\substack{\al \in \N^d\\ |\al|=N}} \frac{1}{\al!}a^\al x^{\al} = \frac{1}{N!}\left(\sum_{i=1}^d a_i x_i\right)^N
 \in \R[x_1,\dots,x_d]_{N}.
\end{align*}
It is important to note that for any $g(x) = \sum_{|\be|=N} c_\be x^\be  \in \R[x_1,\dots,x_d]_{N}$, 
\begin{align*}
(F_a^N(x), g(x))_F = \sum_{|\be|=N} c_\be a^\be =g(a).
\end{align*}
Hence, the $(-,-)_F$-inner product with $F_a^N$ acts as an evaluation.
By Lemma \ref{lem_harm_proj},
\begin{align*}
E_a^N = 2^N \left(\dn\right)_N \pr_H (F_a^N(x)).
\end{align*}
\begin{lem}\label{lem_evaluation_harmonic}
For any $a\in\bD$ and $g(x) \in \Harm_{d,N}$, $(E_a^N,g)_H=g(a)$.
\end{lem}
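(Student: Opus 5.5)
We need $(E_a^N,g)_H=g(a)$ for $g\in\Harm_{d,N}$. The plan is to move from the representation-theoretic inner product to the Fischer inner product $(-,-)_F$, where $F_a^N$ acts as the evaluation functional at $a$.

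The key identity is $E_a^N = 2^N(\dn)_N\,\pr_H(F_a^N)$, established just before the lemma from Lemma \ref{lem_harm_proj}. Since $\pr_H(F_a^N)$ and $g$ are both harmonic of degree $N$, \eqref{eq_fisher3} of Proposition \ref{prop_fisher_inner} gives
\begin{align*}
(E_a^N,g)_H = 2^N\left(\dn\right)_N\,(\pr_H(F_a^N),g)_H = (\pr_H(F_a^N),g)_F .
\end{align*}

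Next, by Lemma \ref{lem_fisher_adjoint}, the decomposition \eqref{eq_Harm_decomp} is $(-,-)_F$-orthogonal. Hence $F_a^N-\pr_H(F_a^N)\in\norm{x}^2\R[x_1,\dots,x_d]$ is $F$-orthogonal to the harmonic polynomial $g$, and so $(\pr_H(F_a^N),g)_F=(F_a^N,g)_F$.

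Finally, the evaluation property recorded before the lemma gives $(F_a^N,g)_F=\sum_{|\be|=N}c_\be a^\be=g(a)$. This completes the argument.

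There is no real obstacle. The only point needing care is that \eqref{eq_fisher3} requires both arguments to be harmonic (cf. Remark \ref{rem_fisher}), which is exactly why one first replaces $F_a^N$ by its harmonic projection and only afterwards uses orthogonality to remove the projection inside the $F$-pairing.
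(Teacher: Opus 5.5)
Your proposal is correct and is essentially the paper's own argument: the paper writes the same chain $(E_a^N,g)_H = 2^N(\dn)_N(\pr_H(F_a^N),g)_H = (\pr_H(F_a^N),g)_F = (F_a^N,g)_F = g(a)$. The paper cites \eqref{eq_fisher2} for this chain, but the steps it actually uses are the ones you name: \eqref{eq_fisher3} for harmonic arguments, and the $F$-orthogonality of Lemma~\ref{lem_fisher_adjoint}.
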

\begin{proof}
Since $g$ is harmonic, $(E_a^N,g)_H = 2^N \left(\dn\right)_N (\pr_H (F_a^N(x)), g)_H =(\pr_H (F_a^N(x)), g)_F = (F_a^N(x), g)_F =g(a)$ by \eqref{eq_fisher2}.
\end{proof}

Recall that the Gegenbauer polynomial $C_N(t) \in \R[t]$ is defined by
\begin{align}
(1-2rt+r^2)^{-\dn} = \sum_{N \geq 0} C_N(t) r^N.\label{eq_gegenbauer}
\end{align}
\begin{lem}\label{lem_Gegen_pr}
For any $N \geq 0$,
\begin{align*}
\pr_H(x_1^N) = \frac{N!}{2^N(\dn)_N}\norm{x}^N C_N\left(\frac{x_1}{\norm{x}}\right) \in \Harm_{d,N}.
\end{align*}
\end{lem}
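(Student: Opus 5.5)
Take $a=e_1$ in the preceding construction, so that $F_{e_1}^N(x)=x_1^N/N!$, and compute $E_{e_1}^N$ in two independent ways; equating the results gives the claimed formula for $\pr_H(x_1^N)$.

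\textbf{First description.} By Lemma \ref{lem_harm_proj},
\[
E_{e_1}^N=2^N\left(\dn\right)_N\pr_H(F_{e_1}^N)=\frac{2^N(\dn)_N}{N!}\,\pr_H(x_1^N).
\]
So it suffices to show
\[
E_{e_1}^N(x)=\norm{x}^N C_N\!\left(\frac{x_1}{\norm{x}}\right).
\]

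\textbf{Second description.} Use Lemma \ref{lem_dual_rel} with $f=F_{e_1}^N$, so that $f(-\partial)=\frac{(-1)^N}{N!}\partial_1^N$. This gives
\[
E_{e_1}^N(x)=\norm{x}^{d-2+2N}\,\frac{(-1)^N}{N!}\,\partial_1^N\,\norm{x}^{-(d-2)}.
\]
The right-hand side is the $N$-th Taylor coefficient, in the variable $t$, of $\norm{x-te_1}^{-(d-2)}$, multiplied by $\norm{x}^{d-2+2N}$. Now write
\[
\norm{x-te_1}^2=\norm{x}^2\bigl(1-2sc+s^2\bigr),\qquad s=\frac{t}{\norm{x}},\quad c=\frac{x_1}{\norm{x}}.
\]
The generating function \eqref{eq_gegenbauer} then gives
\[
\norm{x-te_1}^{-(d-2)}=\norm{x}^{-(d-2)}\sum_{N\ge 0} C_N(c)\,\norm{x}^{-N}t^N.
\]
Multiplying the coefficient of $t^N$ by $\norm{x}^{d-2+2N}$ yields $\norm{x}^N C_N(x_1/\norm{x})$, as required.

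\textbf{Main point to check.} The only real care needed is in Lemma \ref{lem_dual_rel}: whether it applies to non-harmonic $f$, and its sign and normalization conventions. The lemma is stated for arbitrary $f\in\R[x_1,\dots,x_d]_m$, so it applies to $x_1^N/N!$. As a fallback, if the sign conventions looked doubtful, I would check the identity against Lemma \ref{lem_evaluation_harmonic}. Both sides are harmonic, and the reproducing property $(E_{e_1}^N,g)_H=g(e_1)$ characterizes $E_{e_1}^N$. One can cross-check the low-degree cases, for instance $N=1$, where $C_1(t)=(d-2)t$ and $\pr_H(x_1)=x_1$, consistent with $\frac{1!}{2(\dn)}\,(d-2)x_1=x_1$.
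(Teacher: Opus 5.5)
Your argument is correct, but it takes a different route from the paper.

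\textbf{The paper's route.} It uses classical properties of Gegenbauer polynomials: $C_N$ has degree $N$, parity $(-1)^N$ and leading coefficient $\frac{2^N(\dn)_N}{N!}$, and the Gegenbauer differential equation shows that $\norm{x}^N C_N(x_1/\norm{x})$ is a harmonic polynomial. The normalized polynomial is congruent to $x_1^N$ modulo $\norm{x}^2$, so uniqueness of the decomposition \eqref{eq_Harm_decomp} finishes the proof.

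\textbf{Your route.} You combine Lemma \ref{lem_harm_proj} with Lemma \ref{lem_dual_rel} to obtain the closed formula
\[
\pr_H(x_1^N)=\frac{(-1)^N\norm{x}^{d-2+2N}}{2^N(\dn)_N}\,\pa_1^N\norm{x}^{-(d-2)} .
\]
You then read off the $t^N$-coefficient of $\norm{x-te_1}^{-(d-2)}$ from the generating function \eqref{eq_gegenbauer}. This is the same expansion the paper uses later in \eqref{eq_expansion2}.

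\textbf{What each buys.} Your approach needs only the definition of $C_N$: no differential equation, and no leading-coefficient or parity facts. It also gives for free that $\norm{x}^N C_N(x_1/\norm{x})$, a priori defined only for $x\neq 0$, is a harmonic polynomial, since it equals $P_1^N.1\in\Harm_{d,N}$. The price is dependence on Lemma \ref{lem_dual_rel}, which in the paper rests on the Kelvin-transform intertwining of Proposition \ref{prop_psi_dn} and is imported rather than proved. The paper's argument needs only standard special-function facts plus the elementary decomposition.

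\textbf{A cosmetic point.} Since $e_1\notin\bD$, $E_{e_1}^N$ lies outside the range where it was introduced. This is harmless: $E_a^N$ and $F_a^N$ are finite sums defined for every $a\in\R^d$. Alternatively, apply the two lemmas directly to $f=x_1^N$.
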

\begin{proof}
The polynomial $C_N(t)$ is of degree $N$ and satisfies $C_N(-t)=(-1)^N C_N(t)$, and its leading term is
$\frac{2^N(\dn)_N}{N!} t^N$.
The Gegenbauer polynomials satisfy a well-known differential equation \cite[Appendix B.2.5]{DX}, and from this it follows immediately that
$\norm{x}^N C_N\left(\frac{x_1}{\norm{x}}\right)$ is a harmonic polynomial
\cite[Lemma 1.4.2 and Theorem 1.4.5]{DX}.\\
By definition,
$\frac{N!}{2^N(\dn)_N}\norm{x}^N C_N\left(\frac{x_1}{\norm{x}}\right)$
is equal to $x_1^N$ modulo $\norm{x}^2$, and hence the claim follows.
\end{proof}

\begin{lem}\label{lem_Gegen}
For any $a,b \in \bD$ and $N \geq 0$,
\begin{align*}
(E_a^N,E_b^N)_H = {\norm{a}^N\norm{b}^N}C_N\left(\frac{(a,b)}{\norm{a}\norm{b}}\right).
\end{align*}
\end{lem}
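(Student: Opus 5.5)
Since $E_b^N\in\Harm_{d,N}$, Lemma \ref{lem_evaluation_harmonic} applied with $g=E_b^N$ gives
\begin{align*}
(E_a^N,E_b^N)_H = E_b^N(a).
\end{align*}
The whole task is therefore to evaluate the harmonic polynomial $E_b^N$ at the point $a$. By the identity just before Lemma \ref{lem_evaluation_harmonic}, $E_b^N = 2^N(\dn)_N\,\pr_H(F_b^N)$ with $F_b^N(x)=\frac{1}{N!}(b,x)^N$.

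Next I compute $\pr_H((b,x)^N)$ by reducing to Lemma \ref{lem_Gegen_pr}. First assume $b\neq 0$ and write $b=\norm{b}\,R e_1$ with $R\in\SO(d)$. Then $(b,x)^N=\norm{b}^N (R^{-1}x)_1^N$. The decomposition \eqref{eq_Harm_decomp} is $\SO(d)$-equivariant, because $\norm{x}^2$ and harmonicity are rotation invariant. Hence $\pr_H$ commutes with $x\mapsto R^{-1}x$, and Lemma \ref{lem_Gegen_pr} gives
\begin{align*}
\pr_H((b,x)^N)=\norm{b}^N\frac{N!}{2^N(\dn)_N}\norm{x}^N C_N\!\left(\frac{(b,x)}{\norm{b}\norm{x}}\right).
\end{align*}
Multiplying by $2^N(\dn)_N/N!$ yields
\begin{align*}
E_b^N(x)=\norm{b}^N\norm{x}^N C_N\!\left(\frac{(b,x)}{\norm{b}\norm{x}}\right),
\end{align*}
and evaluating at $x=a$ gives the claimed formula.

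It remains to interpret the degenerate cases $a=0$ or $b=0$. The expression $\norm{a}^N\norm{b}^N C_N\bigl((a,b)/(\norm{a}\norm{b})\bigr)$ is a polynomial in $(a,b)$ and $\norm{a}^2\norm{b}^2$, since $C_N$ has parity $(-1)^N$. Read this way it is continuous, and the formula extends by continuity; alternatively, both sides are $\delta_{N,0}$ when $a=0$ or $b=0$.

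The only step needing some care is the equivariance of $\pr_H$ under rotations, which holds because the Laplacian and $\norm{x}^2$ commute with the $\SO(d)$ action. Everything else is direct substitution.
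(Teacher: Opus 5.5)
Your proof is correct and follows essentially the same route as the paper: the reproducing property of Lemma \ref{lem_evaluation_harmonic} reduces the claim to evaluating a single harmonic polynomial at a point, and that polynomial is computed by rotating $b$ to a multiple of $e_1$ and applying Lemma \ref{lem_Gegen_pr}. The only difference is cosmetic. You transport the rotation through the $\SO(d)$-equivariance of $\pr_H$, while the paper uses the $\SO(d)$-invariance of $(-,-)_H$ to replace $E_a^N, E_b^N$ by $E_{g(a)}^N, E_{g(b)}^N$. Your treatment of the degenerate cases $a=0$ or $b=0$ is a small addition that the paper leaves implicit.
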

\begin{proof}
By Proposition \ref{prop_bilinear}, $(-,-)_H$ is invariant under the action of $\mathrm{SO}(d)$ on $\Harm_{d,N}$.
Since for any $g \in \SO(d)$ and $P \in \Harm_{d,N}$, $(E_a^N(g^{-1}x),P(x))_H = (E_a^N(x),P(g(x)))_H = P(g(a))$,
$E_a^N(g^{-1}x) = E_{g(a)}^N(x)$ holds.
Thus, $(E_a^N,E_b^N)_H = (g.E_a^N,g.E_b^N)_H=(E_{g(a)}^N,E_{g(b)}^N)_H = E_{g(a)}^N(g(b))$.
We may assume that $b = \rho e_1 =(\rho,0,\dots,0)\in \bD$ with $0<\rho<1$.
Since $F_{\rho e_1}^N(x) = \frac{1}{N!} \rho^N x_1^N$, by Lemma \ref{lem_Gegen_pr}, $E_{\rho e_1}^N = \rho^N\norm{x}^N C_N\left(\frac{x_1}{\norm{x}}\right)$. Hence, for $b=\rho e_1$,
\begin{align*}
(E_a^N,E_{\rho e_1}^N)_H = {\rho^N \norm{a}^N}C_N\left(\frac{a_1}{\norm{a}}\right) = {\norm{a}^N\norm{b}^N}C_N\left(\frac{(a,b)}{\norm{a}\norm{b}}\right).
\end{align*}
\end{proof}
Since $C_N(1)=\frac{(d-2)_N}{N!}$, by Lemma \ref{lem_Gegen}, $\sum_{N \geq 0} \norm{E_a^N}_H^2 = \sum_{N \geq 0}\frac{(d-2)_N}{N!}\norm{a}^{2N}= \frac{1}{(1-\norm{a}^2)^{d-2}} < \infty$. Hence,
$E_a = \sum_{N \geq 0} E_a^N$ converges in norm and defines an element of the Hilbert space $\Hf$.
\begin{prop}\label{prop_reproducing_kernel}
For $1\geq r >0$, we have $D(r)E_a = r^\dn E_{ra}$, and for $a,b \in \bD$,
\begin{align}
(E_a,E_b)_H = \frac{1}{(1-2(a,b)+|a|^2|b|^2)^{\frac{d-2}{2}}}\label{eq_E_reproduce}
\end{align}
holds.
Moreover, for any $1 \geq \ep>0$, the subspace spanned by $\{E_a\}_{a \in B_\ep(0)} \subset \Hf$ is dense in $\Hf$.
In addition, the family $\{E_a\}_{a \in \bD}$ is linearly independent.
\end{prop}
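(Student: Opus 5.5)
The proof splits into four claims, taken in order: the dilation formula, the kernel formula, density, and linear independence.

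\emph{Dilation formula.} I work degree by degree. Since $E_a^N = \sum_{|\al|=N}\frac{1}{\al!}a^\al P^\al.1$ is homogeneous of degree $N$ in $a$, we have $E_{ra}^N = r^N E_a^N$. By definition $D(r)$ acts on $\Harm_{d,N}$ by the scalar $r^{\dn+N}$. Hence
\begin{align*}
D(r)E_a=\sum_N r^{\dn+N}E_a^N = r^{\dn}\sum_N E_{ra}^N = r^{\dn}E_{ra},
\end{align*}
and interchanging $D(r)$ with the sum is legitimate because $D(r)$ is bounded and the series converges in norm.

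\emph{Kernel formula.} The spaces $\Harm_{d,N}$ are mutually orthogonal by \eqref{eq_harm_orth1}, so $(E_a,E_b)_H=\sum_N (E_a^N,E_b^N)_H$. By Lemma \ref{lem_Gegen}, the $N$-th term equals $\norm{a}^N\norm{b}^N C_N(t)$ with $t=(a,b)/(\norm{a}\norm{b})$. Setting $\rho=\norm{a}\norm{b}<1$ in the generating function \eqref{eq_gegenbauer} gives
\begin{align*}
(1-2\rho t+\rho^2)^{-\dn} = (1-2(a,b)+|a|^2|b|^2)^{-\dn}.
\end{align*}
The generating series converges for $|t|\le1$ and $\rho<1$; absolute convergence follows from $|C_N(t)|\le C_N(1)$. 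When $a=0$ or $b=0$ the identity is trivial, since then $E_0=E_0^0$ and $C_0=1$.

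\emph{Density.} Let $v=\sum_N v_N\in\Hf$ be orthogonal to every $E_a$ with $|a|<\ep$. By Lemma \ref{lem_evaluation_harmonic} and continuity, $(E_a,v)_H=\sum_N v_N(a)$. I need this series of harmonic polynomials to converge in a way that lets me read off its components. Write $a=s\xi$ with $\xi\in S^{d-1}$. Proposition \ref{prop_restriction_sphere} gives
\begin{align*}
|v_N(a)|\le s^N\sqrt{A_{d,N}}\,\norm{v_N}_H,
\end{align*}
and $A_{d,N}$ grows only polynomially in $N$. So the function $F(a)=\sum_N v_N(a)$ is given by a power series converging absolutely and locally uniformly on $\bD$, and it vanishes on $B_\ep(0)$. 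The homogeneous components of a function vanishing near $0$ are obtained by Taylor expansion at $0$ (equivalently, by the scaling $a\mapsto sa$ and uniqueness of power series in $s$). Each homogeneous piece $v_N$ therefore vanishes on $B_\ep(0)$, so $v_N=0$ as a polynomial. Hence $v=0$. This step, justifying termwise evaluation and extraction of homogeneous components, is the main obstacle, and the pointwise bound of Proposition \ref{prop_restriction_sphere} is exactly what handles it.

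\emph{Linear independence.} Suppose $\sum_{i=1}^n c_iE_{a_i}=0$ with the $a_i$ distinct. Pairing with an arbitrary harmonic polynomial $g$ gives $\sum_i c_i g(a_i)=0$, and I will choose $g$ to separate the points. Concretely, I would construct a harmonic polynomial $g$ with $g(a_i)=\de_{i,1}$. If that construction is awkward, I would argue instead that the evaluation functionals at distinct points are independent on $\Harm_d$. To see this, pick a linear form $\ell$ with distinct values $\ell(a_i)$, and a vector $w\in\C^d$ with $(w,w)=0$ and $\mathrm{Re}$-part $\ell$. The polynomials $(w\cdot x)^k$ are harmonic, and their real and imaginary parts lie in $\Harm_d$. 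Alternatively, work in the complexification, where the paper's complex version allows this. Then $e^{w\cdot x}$ is approximated by its harmonic Taylor polynomials, and a Vandermonde argument in the distinct values $w\cdot a_i$ forces all $c_i=0$. I expect to need a choice of $w$ with the values $w\cdot a_i$ distinct, which is possible for generic isotropic $w$ when $d\ge3$.
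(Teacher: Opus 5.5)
Your proposal is correct. The dilation identity and the kernel formula are proved exactly as in the paper: homogeneity $E^N_{ra}=r^NE^N_a$, orthogonality of the $\Harm_{d,N}$, Lemma~\ref{lem_Gegen} and \eqref{eq_gegenbauer}. Your density and linear-independence arguments take different routes.

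\textbf{Density.} The paper works inside the Hilbert space. The annihilator of $\{E_a\}_{a\in B_\ep(0)}$ is a closed subspace stable under $D(r)$. The lowest-degree extraction of Lemma~\ref{lem_lowest_argument} then yields a nonzero homogeneous harmonic polynomial vanishing on a ball, a contradiction. You instead view $a\mapsto (v,E_a)_H=\sum_N v_N(a)$ as a function on $\bD$. You recover each $v_N(\xi)$ as a Taylor coefficient of $s\mapsto\sum_N s^N v_N(\xi)$, which vanishes on $[0,\ep)$. The paper's route reuses a lemma it also needs for the simplicity criteria. Yours is elementary and works directly with the reproducing-kernel function $a\mapsto (v,E_a)_H$, with no appeal to closedness or dilation-stability of the annihilator.

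The step you flag as the main obstacle is in fact automatic. The series $\sum_N s^N v_N(\xi)$ converges for every $s\in[0,1)$ because $E_{s\xi}=\sum_N E^N_{s\xi}$ converges in norm. So its radius of convergence is at least $1$, and Proposition~\ref{prop_restriction_sphere} is not needed.

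\textbf{Linear independence.} The paper simply asserts that $\sum_i c_i f(a_i)=0$ for all harmonic $f$ forces $c_i=0$. Your argument correctly supplies this missing justification, and it can be streamlined:
\begin{enumerate}
\item Choose a unit vector $u$ with the values $u\cdot a_i$ distinct, and a unit vector $v\perp u$.
\item Set $w=u+iv$. Then $w\cdot w=0$, so the real and imaginary parts of $(w\cdot x)^k$ lie in $\Harm_d$.
\item Hence $\sum_i c_i (w\cdot a_i)^k=0$ for $k=0,\dots,n-1$.
\item The numbers $w\cdot a_i$ have distinct real parts, so the Vandermonde determinant is nonzero and all $c_i=0$.
\end{enumerate}
The detour through $e^{w\cdot x}$, generic isotropic vectors, and complexification is unnecessary.
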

\begin{proof}
Equation \eqref{eq_E_reproduce} follows immediately from Lemma \ref{lem_Gegen} and \eqref{eq_gegenbauer}.
Since $E_{ra}^N=r^N E_a^N$ for any $1\geq r>0$, the equality $D(r)E_a = r^\dn E_{ra}$ is clear.
To show that $\{E_a\}_{a \in B_\ep(0)}$ spans a dense subspace, it suffices to show that
\begin{align*}
N = \{f \in \Hf\mid (f,E_a)=0\text{ for any }a\in B_\ep(0)\}
\end{align*}
is zero.
Assume that $N$ is non-zero.
If $f =\sum_{N \geq 0}f_N \in N$, then $\sum_{N \geq 0}r^N f_N \in N$ for any $1\geq r >0$, and since $N \subset \Hf$ is a closed subspace, we can choose a non-zero element
$f_n \in N \cap \Harm_{d,n}$ (see the proof of Lemma \ref{lem_lowest_argument}).
This contradicts the fact that $(f_n,E_a)=f_n(a)=0$ holds for all $a \in B_\ep(0)$.
Finally, we show linear independence.
Let $a_i \in \bD$ for $i=1,\dots,M$ be distinct points.
If there exist $c_i \in \R$ such that $\sum_{i=1}^M c_i E_{a_i} =0$, then for any harmonic polynomial $f \in \Harm_d$ we have
$0=(\sum_{i=1}^M c_i E_{a_i},f)= \sum_{i=1}^M c_i f(a_i)$.
Hence $c_i=0$ for all $i=1,\dots,M$.
\end{proof}

%

\subsection{Reproducing Kernel Hilbert Spaces and monoid actions}\label{sec_reproduce}
Let $d \geq 3$. Set
\begin{align*}
K_\bD(x,y)=\frac{1}{(1-2(x,y)+|x|^2|y|^2)^{\dn}}
\end{align*}
for $x,y \in \bD$.
Let $K = \bigoplus_{x\in \bD} \R K_x$ be a vector space with a basis $\{K_x\}_{x \in \bD}$, and define a bilinear form on  $K$ by
\begin{align*}
(K_x,K_y)_K = K_\bD(x,y).
\end{align*}

By Proposition \ref{prop_reproducing_kernel}, the linear map
\begin{align*}
i: K \rightarrow \Hf,\quad K_x \mapsto E_x
\end{align*}
is injective, isometric and the image is dense.
In particular, $\Hf$ is the Hilbert space completion of $K$, and elements of $\Hf$ can be regarded as functions on $\bD$
via the evaluation map given by taking the inner product with $E_x$.
A Hilbert space of this kind is called a {\bf reproducing kernel Hilbert space}.
%
%

Let $\phi \in \CE(1)=\fS_d$. Define an action $\rho_K(\phi)$ on $K=\bigoplus_{x\in \bD} \R K_x$ by
\begin{align*}
\rho_K(\phi) K_x = \Om_\phi(x)^{\dn}K_{\phi(x)}.
\end{align*}
Since, by Lemma \ref{lem_norm_all_inv} and Proposition \ref{prop_Liouville_sphere},
$\rho_K(\phi_1)(\rho_K(\phi_2)K_x) = \Om_{\phi_2}(x)^{\dn} \rho_K(\phi_1)K_{\phi_2(x)}= 
\Om_{\phi_2}(x)^{\dn} \Om_{\phi_1}(\phi_2(x))^\dn K_{\phi_1\phi_2(x)} = \Om_{\phi_1\phi_2}(x)^\dn K_{\phi_1\phi_2(x)}=\rho_K(\phi_1 \phi_2) K_x$, $\rho_K$ is a monoid homomorphism.
Since it is not obvious whether $\rho_K(\phi)$ is bounded, it is non-trivial whether this definition extends to a bounded operator on $\Hf$.

On the other hand, for any $\phi \in \fG_d$, by Proposition \ref{prop_conf_identity} (2),
\begin{align}
(\rho_K(\phi) K_x,\rho_K(\phi) K_y)_K &=  \Om_\phi(x)^{\dn} \Om_\phi(y)^{\dn} K_\bD(\phi(x),\phi(y)) = K_\bD(x,y)\label{eq_unitary_rho}
\end{align} 
for any $x,y \in \bD$. Hence, $\rho_K(\phi)$ defines an isometric isomorphism (an orthogonal operator) on the Hilbert space completion $\Hf$.

\begin{lem}\label{lem_nuclear}
Let $r \in (0,1]$. Then, $\rho(r^D)$ is a contraction map on $K$, that is,
for any $v \in K$,
\begin{align*}
\norm{\rho_K(r^D) v}_K \leq \norm{v}_K.
\end{align*}
Moreover, the bounded linear map on $\Hf$ induced by $\rho_K(r^D)$ coincides with that of Proposition \ref{prop_dilation}.
\end{lem}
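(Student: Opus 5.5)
The plan is to compare $\rho_K(r^D)$ with the operator $D(r)$ of Proposition \ref{prop_dilation} on the generators $K_x$, using the isometric embedding $i:K\to\Hf$, $K_x\mapsto E_x$, and then transfer the contraction property from $D(r)$.

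First I compute the conformal factor of the dilation $\phi=r^D$, $\phi(x)=rx$. Since $\phi^*(g_\std)=r^2 g_\std$, we have $\Om_{r^D}(x)=r$ (consistent with $j_{D(\la)}=\la^{-1}$ and $\Om_g=j_g^{-1}$ in Proposition \ref{prop_Liouville_sphere}). Hence, by definition,
\begin{align*}
\rho_K(r^D)K_x = r^{\dn}K_{rx}.
\end{align*}
Second, I apply $i$ and invoke Proposition \ref{prop_reproducing_kernel}, which gives $D(r)E_x=r^{\dn}E_{rx}$. Therefore
\begin{align*}
i(\rho_K(r^D)K_x)=r^{\dn}E_{rx}=D(r)E_x = D(r)\, i(K_x)
\end{align*}
for every $x\in\bD$. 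By linearity, $i\circ\rho_K(r^D)=D(r)\circ i$ on all of $K$.

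Third, since $i$ is isometric (it preserves the form by \eqref{eq_E_reproduce}) and $\norm{D(r)}\leq 1$ by Proposition \ref{prop_dilation}(1), we get for $v\in K$
\begin{align*}
\norm{\rho_K(r^D)v}_K=\norm{D(r)i(v)}_H\leq\norm{i(v)}_H=\norm{v}_K.
\end{align*}
So $\rho_K(r^D)$ is bounded on $K$ and extends uniquely to the completion $\Hf$. That extension agrees with $D(r)$ on the dense subspace $i(K)$. Both operators are bounded, so they coincide on $\Hf$.

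There is no real obstacle here. The only points needing care are the correct value of the conformal factor $\Om_{r^D}=r$, and the identity $D(r)E_a=r^{\dn}E_{ra}$, which is already contained in Proposition \ref{prop_reproducing_kernel} (it follows from $E_{ra}^N=r^NE_a^N$).
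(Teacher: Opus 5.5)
Your proposal is correct and follows the paper's own argument: the paper likewise reduces everything to the identity $D(r)E_a = r^{\dn}E_{ra} = \Om_{r^D}(a)^{\dn}E_{r^D a}$ from Proposition \ref{prop_reproducing_kernel} and Proposition \ref{prop_Liouville_sphere}, which is exactly your intertwining relation $i\circ\rho_K(r^D)=D(r)\circ i$. You additionally write out the isometry and density steps that the paper leaves implicit.
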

\begin{proof}
By Proposition \ref{prop_reproducing_kernel} and Proposition \ref{prop_Liouville_sphere},
$D(r)E_a = r^{\dn}E_{ra} = \Om_{r^D}(a)^\dn E_{r^D a}$.
Hence $\rho_K(r^D)$ admits a unique extension to a bounded operator on $\Hf$.
\end{proof}

\begin{lem}\label{lem_internal_g}
If $g \in \fS$ satisfies $\overline{g(\bD)} \subset \bD$, then $\rho_K(g)$ is a contraction map on $K$.
\end{lem}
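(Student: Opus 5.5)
The plan is to reduce to the two cases already controlled. We have $\rho_K(\phi)$ isometric for $\phi\in\fG_d$ by \eqref{eq_unitary_rho}, and $\rho_K(r^D)$ contractive by Lemma \ref{lem_nuclear}. Since $g\in\fS_d$ satisfies $\overline{g(\bD)}\subset\bD$, Proposition \ref{prop_decomposition}(2) gives a factorization
\begin{align*}
g=h_1 r^D h_2,\qquad 1>r>0,\quad h_1,h_2\in\fG_d .
\end{align*}
Because $\rho_K$ is a monoid homomorphism on $K$ (verified above via the cocycle law of Lemma \ref{lem_norm_all_inv}), we get
\begin{align*}
\rho_K(g)=\rho_K(h_1)\,\rho_K(r^D)\,\rho_K(h_2).
\end{align*}
Each factor maps $K$ into $K$. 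Note that $K$ has basis $\{K_x\}$, and each $\rho_K(\phi)$ sends basis vectors to multiples of basis vectors.

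We then estimate for $v\in K$:
\begin{align*}
\norm{\rho_K(g)v}_K=\norm{\rho_K(r^D)\rho_K(h_2)v}_K\le\norm{\rho_K(h_2)v}_K=\norm{v}_K .
\end{align*}
Here the equalities use the isometry property \eqref{eq_unitary_rho} for $h_1$ and $h_2$. That property is stated on basis vectors, but it extends to all of $K$ by bilinearity, since the form $(-,-)_K$ is defined by the kernel on basis elements. The inequality is Lemma \ref{lem_nuclear}, applied to the vector $\rho_K(h_2)v\in K$.

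The main point needing care is the decomposition step, which is quoted from Proposition \ref{prop_decomposition}(2) (proved in the appendix); everything else is formal. I would double-check two small points. First, the conformal factors compose correctly through the cocycle, so that $\rho_K(h_1 r^D h_2)$ really equals the product of the three operators. This follows from the homomorphism property already shown. Second, the case $r$ close to $1$ requires no special treatment, since Lemma \ref{lem_nuclear} covers all $r\in(0,1]$.
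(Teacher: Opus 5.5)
Your proposal is correct and follows the paper's proof: factor $g=h_1 r^D h_2$ via Proposition \ref{prop_decomposition}(2), use that $\rho_K$ is a monoid homomorphism, and combine the isometry \eqref{eq_unitary_rho} for $h_1,h_2$ with the contraction property of $\rho_K(r^D)$ from Lemma \ref{lem_nuclear}. Your explicit norm chain on $K$ just spells out what the paper states in one line.
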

\begin{proof}
Suppose that $g \in \fS$ satisfies $\overline{g(\bD)} \subset \bD$. Then, by Proposition \ref{prop_decomposition}, there is $r \in (0,1)$ and $h_1,h_2 \in \fG$ such that $g = h_1r^Dh_2$.
Since $\rho_K(g) =\rho_K(h_1)\circ \rho_K(r^D) \circ \rho_K(h_2)$, and by Lemma \ref{lem_nuclear} and \eqref{eq_unitary_rho} each operator admits an extension as a bounded contraction operator, it follows that $\rho_K(g)$ also admits an extension as a bounded contraction operator.
\end{proof}

\begin{prop}\label{prop_contraction_extension}
For any $g \in \fS$, $\rho_K(g)$ is a contraction map on $V$.
\end{prop}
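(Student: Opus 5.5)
We need to prove that $\rho_K(g)$ is a contraction on $K$ for every $g\in\fS$, not only for those $g$ with $\overline{g(\bD)}\subset\bD$. Lemma~\ref{lem_internal_g} already covers the case $\overline{g(\bD)}\subset\bD$, and \eqref{eq_unitary_rho} covers $g\in\fG_d$. The plan is to reduce the remaining case, where $g(\bD)$ touches $\partial\bD$, to these two cases by a continuity argument on the finite-dimensional Gram matrices.

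Fix $v=\sum_{i=1}^M c_iK_{x_i}\in K$ with finitely many points $x_i\in\bD$. Then
\[
\norm{\rho_K(g)v}_K^2=\sum_{i,j}c_ic_j\,\Om_g(x_i)^{\dn}\Om_g(x_j)^{\dn}K_\bD(g(x_i),g(x_j)).
\]
This is a finite sum of functions of $g$ that are continuous on $\SO^+(d+1,1)$. Continuity holds as long as $g(x_i)$ stays inside $\bD$ and $\Om_g(x_i)>0$. Both conditions are satisfied near any $g\in\fS$, because $K_\bD(x,y)$ is smooth on $\bD\times\bD$: the denominator $1-2(x,y)+|x|^2|y|^2\ge(1-|x||y|)^2>0$ there.

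By Proposition~\ref{prop_decomposition}(1), if $g\notin\fG_d$ we may write $g=T_{x_0}r^Dh$ with $h\in\fG_d$. Set $g_t=T_{tx_0}(tr)^Dh$ for $t\in(0,1)$. The image $g_t(\bD)=B_{tr}(tx_0)$ has closure contained in $\bD$, because $tr+t|x_0|\le t<1$; here $r+|x_0|\le1$ follows from $B_r(x_0)=g(\bD)\subset\bD$. Lemma~\ref{lem_internal_g} then gives $\norm{\rho_K(g_t)v}_K\le\norm{v}_K$ for every $t<1$. Alternatively, since $\rho_K$ is a monoid homomorphism, one may write $\rho_K(g_t)$ as $\rho_K$ of (dilation)(translation)(dilation)$h$ and use Lemma~\ref{lem_nuclear}.

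Letting $t\to1$, we have $g_t\to g$, hence $g_t(x_i)\to g(x_i)\in\bD$ and $\Om_{g_t}(x_i)\to\Om_g(x_i)$. The finite sum above converges, which gives $\norm{\rho_K(g)v}_K\le\norm{v}_K$. So $\rho_K(g)$ is a contraction on $K$ and extends by density to a contraction on $\Hf$.

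The only step needing care is the last one: checking that the Gram entries converge. The issue is that $g_t(x_i)$ should stay in a compact subset of $\bD$ as $t\to1$. This holds because $g(x_i)$ lies in the open set $\bD$ and $g_t(x_i)$ is continuous in $t$. Explicit formulas for translations and dilations ($\Om=tr$, constant) make the convergence of the $\Om_{g_t}(x_i)$ immediate.
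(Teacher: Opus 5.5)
Your argument is correct and is essentially the paper's: shrink $g$ by a dilation so that Lemma \ref{lem_internal_g} applies, then let the dilation parameter tend to $1$ using continuity of $K_\bD$ on $\bD\times\bD$. The only differences are that you post-compose, whereas the paper pre-composes with $D(e^{-s})$, and that you take the limit directly in the finite Gram sums, which makes the paper's Cauchy-sequence construction of a limit operator unnecessary. Your $g_t$ is just $t^D g$, so Proposition \ref{prop_decomposition}(1) is not needed, since $\overline{t\,g(\bD)}\subset\overline{B_t(0)}\subset\bD$.

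Drop your ``alternatively'' aside, though. A translation $T_{x_0}$ with $x_0\neq0$ is not in $\fS$, so $\rho_K$ of it is undefined. Grouping $g_t$ as $t^D(T_{x_0}r^D)h$ instead requires contractivity of $\rho_K(T_{x_0}r^D)$, which is the very statement being proved. The factorization into contractive pieces is the one $h_1s^Dh_2$ from Proposition \ref{prop_decomposition}(2), i.e.\ Lemma \ref{lem_internal_g} itself.
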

\begin{proof}
By Lemma \ref{lem_internal_g}, if $g \in \fS$ satisfies $\overline{g(\bD)} \subset \bD$, then a contraction map
$\Hf \rightarrow \Hf$ can be defined as the completion of $\rho_K(g)$.
We denote this map by $\rho(g)$.
Let $g \in \fS$ and assume that $\overline{g(\bD)} \not\subset \bD$.
Set $g_s=gD(e^{-s}) \in \fS$ for $s \geq 0$.
Then, for any $s>0$, we have $\overline{g_s(\bD)} \subset \bD$.
Hence, $\rho(g_s)$ is defined and is a contraction operator on $\Hf$.
By checking equality on the dense subset $K\subset \Hf$, we see that
$\rho(g_{s+t})=\rho(g_s) \circ \rho(D(e^{-t}))$ for any $s,t >0$.

Let $v \in \Hf$.
We claim that $\{\rho(g_s)v\}_{s >0}$ is a Cauchy sequence.
Indeed, if $s>t$, then
\begin{align*}
\norm{\rho(g_s)v- \rho(g_t)v}
&=\norm{\rho(g_t)\rho(D(e^{-(s-t)}))v- \rho(g_t)v}\\
&=\norm{\rho(g_t)(\rho(D(e^{-(s-t)}))v-v)}
\leq \norm{\rho(D(e^{-(s-t)}))v-v}.
\end{align*}
By the strong continuity of $\rho(D(r))$ (Proposition \ref{prop_dilation}),
this sequence is Cauchy.
We denote the limit of this Cauchy sequence by
$A(v) = \lim_{s\to 0} \rho(g_s)v$.
Clearly, $A:\Hf \rightarrow \Hf$ is a linear map, and since
$\norm{A(v)} \leq \norm{v}$ holds for any $v \in \Hf$, it is a bounded contraction operator.
For $x\in \bD$, by definition we have
\begin{align}
&\norm{\rho(g_s)K_x - \rho_K(g)K_x}_K^2\nonumber\\
& = \norm{\rho_K(g_s)K_x - \rho_K(g)K_x}_K
=\norm{\Om_{g_s}(x)^\dn K_{g_s(x)}- \Om_g(x)^\dn K_{g(x)}}_K^2\nonumber\\
&=\norm{e^{-\dn s}\Om_{g}(e^{-s}x)^\dn K_{g(e^{-s}x)}- \Om_g(x)^\dn K_{g(x)}}_K^2\nonumber\\
&=e^{-\dn s}\Om_{g}(e^{-s}x)^\dn
\bigl(e^{-\dn s}\Om_{g}(e^{-s}x)^\dn K_\bD(g(e^{-s}x),g(e^{-s}x))
- \Om_g(x)^\dn K_\bD(g(e^{-s}x),g(x))\bigr)\label{eq_smooth}\\
&\quad
- \Om_{g}(x)^\dn
\bigl(e^{-\dn s}\Om_{g}(e^{-s}x)^\dn K_\bD(g(x),g(e^{-s}x))
- \Om_g(x)^\dn K_\bD(g(x),g(x))\bigr).\nonumber
\end{align}
Since \eqref{eq_smooth} is continuous in $s \geq 0$ and vanishes at $s=0$, we obtain
$A(K_x) = \lim_{s\to 0}\rho(g_s)K_x = \rho_K(g) K_x$.
Thus $A$ is the completion of $\rho_K(g)$ and is a contraction operator.
\end{proof}

Therefore, for any $g \in \fS$, $\rho_K(g)$ extends to a bounded contraction map on $\Hf$.
We denote this extension by $\rho(g)$.
Let $B(\Hf)$ be the space of all bounded linear map on $\Hf$. 
By an argument analogous to \eqref{eq_smooth} and \eqref{eq_unitary_rho}, it follows that the restriction $\rho|_{\fG}$ defines a strongly continuous orthogonal representation of $\fG$. Hence, we have:
\begin{prop}\label{prop_unitary}
$\rho:\fS \rightarrow B(\Hf)$ is a monoid homomorphism such that:
\begin{enumerate}
\item
$\rho(g)E_x = \Om_g(x)^\dn E_{g(x)}$ for any $x\in \bD$ and $g \in \fS$.
\item
$\norm{\rho(g)} \leq 1$ for any $g \in \fS$.
\item
$\rho(g)$ is an isometric isomorphism for any $g\in \fG$, and $\rho|_\fG$ is strongly continuous.
\item
$\rho$ satisfies the conditions (D).
\end{enumerate}
\end{prop}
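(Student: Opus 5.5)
The plan is to assemble the proposition from what was already established for the dense subspace $K$ in Lemmas \ref{lem_nuclear}, \ref{lem_internal_g} and Proposition \ref{prop_contraction_extension}, and to check each property first on the kernel vectors $E_x$.

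\textbf{Homomorphism and (1).} By the definition of $\rho_K$ and the identification $i(K_x)=E_x$, each $\rho(g)$ is the unique bounded extension of $\rho_K(g)$. This gives (1) directly. Since $\rho_K$ is a monoid homomorphism on $K$ (the cocycle computation preceding Lemma \ref{lem_nuclear}), $\rho(g_1)\rho(g_2)$ and $\rho(g_1g_2)$ are bounded operators agreeing on the dense subspace $i(K)$, so they are equal. Also $\rho(\id)=\id$.

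\textbf{(2) and (3).} Property (2) is exactly Proposition \ref{prop_contraction_extension}. For $g\in\fG$, the identity \eqref{eq_unitary_rho} shows that $\rho_K(g)$ is isometric on $K$. Its inverse $\rho_K(g^{-1})$ is isometric as well, so $\rho(g)$ is an isometric isomorphism.

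For strong continuity on $\fG$, I would first treat $v=E_x$. Here
\[
\norm{\rho(g)E_x-E_x}^2=2K_\bD(x,x)-2\,\Om_g(x)^{\dn}K_\bD(g(x),x),
\]
which is continuous in $g$ because $g\mapsto (g(x),\Om_g(x))$ is continuous for fixed $x\in\bD$, and it vanishes at $g=\id$. The same holds for finite linear combinations of the $E_x$. For general $v$ I would use an $\epsilon/3$ argument with the density of $\mathrm{span}\{E_x\}$ (Proposition \ref{prop_reproducing_kernel}) and the uniform bound $\norm{\rho(g)}=1$.

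\textbf{(4).} By Lemma \ref{lem_nuclear}, $\rho(r^D)$ coincides with the operator $D(r)$ of Proposition \ref{prop_dilation}, which acts by $r^{n+\dn}$ on $\Harm_{d,n}$. Conditions (D-1), (D-2) and (D-3) are then items (2), (1) and (3) of that proposition. For (D-4), the Hilbert--Schmidt norm is $\sum_n A_{d,n}r^{2n+d-2}=\mathrm{tr}\,D(r^2)<\infty$, by item (4). Condition (D-5) is vacuous here because the filtration is trivial, since $\Hf$ is a single Hilbert space.

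\textbf{Main obstacle.} The only real subtlety I expect is strong continuity on $\fG$. It needs joint continuity of $g\mapsto \Om_g(x)$ and $g\mapsto g(x)$ for fixed $x$. This follows from the explicit formula $g\si_0(x)=j_g(x)\si_0(g\cdot x)$ with $\Om_g=j_g^{-1}$, since $j_g(x)$ is a polynomial in the matrix entries of $g$ and does not vanish on $\bD$ for $g\in\fG$.
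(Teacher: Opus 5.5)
Your proposal is correct and takes essentially the same route as the paper. The bounded extension comes from Proposition \ref{prop_contraction_extension}, isometry on $\fG$ from \eqref{eq_unitary_rho}, strong continuity from a kernel computation analogous to \eqref{eq_smooth} together with density of the $E_x$, and condition (D) from Lemma \ref{lem_nuclear} combined with Proposition \ref{prop_dilation}, so you have simply written out the details the paper compresses into one sentence.
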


\subsection{Harmonic polynomials and fundamental inequality}\label{sec_inequality}
Let $a,b \in \bD$ and $1>r,s>0$ satisfy 
\begin{align}
\overline{B_r(a)} \cap \overline{B_s(b)} =\emptyset, 
\qquad {B_r(a)}, {B_s(b)} \subset \bD, \label{eq_geom_sep}
\end{align}
where $\overline{B_r(a)}$ is a closure of $B_r(a)$ in $\R^d$.
This condition corresponds to $(T_ar^D,T_bs^D) \in \CE(2)$ by Proposition \ref{prop_CE_d_explicit}.
Set 
\begin{align}
\si_1 = \frac{r}{\norm{a-b}},\quad \si_2 = \frac{s}{\norm{a-b}}.\label{def_sigma_sep}
\end{align}
Then, $\si_1+\si_2 <1$ holds.
We define a linear map
\begin{align*}
C_{a,r;b,s}:\Harm_{d} \otimes \Harm_{d} \rightarrow \R
\end{align*}
by for any $\be,\ga \in \N^d$
\begin{align}
C_{a,r;b,s}(P^\be. 1,P^\ga.1)
= r^{|\be|+\frac{d-2}{2}} s^{|\ga|+\frac{d-2}{2}}
\left(\partial_x^\be \partial_y^\ga \frac{1}{\norm{x-y}^{d-2}}\right)\Bigl|_{(x,y)=(a,b)}.
\label{eq_def_contraction_trans}
\end{align}
Here $\Bigl|_{(x,y)=(a,b)}$ denotes evaluation at $(x,y)=(a,b)$.
Note that $\frac{1}{\norm{x-y}^{d-2}}$ is a Green function of the Laplacian and 
$\sum_i \partial_{x_i}^2 \frac{1}{\norm{x-y}^{d-2}}=0$ holds for $x\neq y$. Hence, the map $C_{a,r;b,s}$ is well-defined as a linear map.
The goal of this section is to show that $C_{a,r;b,s}$ is a bounded linear operator with respect to $(-,-)_H$, and hence induces a map on the completion,
\begin{align*}
C_{a,r;b,s}:\Hf \hotimes \Hf \rightarrow \R.
\end{align*}
This constitutes the most important analytic part in the construction of a $\CE$-algebra.

\begin{lem}\label{lem_minus_translation}
Let $f_n \in \Harm_{d,n}$ and $g_m \in \Harm_{d,m}$. Then,
\begin{align}
C_{a,r;b,s}(f_n(P).1,g_m(P).1)&=  \si_1^{n+\dn}\si_2^{m+\dn} \left(f_n(-P)g_m(P).1 \right)\Bigl|_{x=\frac{a-b}{\norm{a-b}}},
\label{eq_minus_translation}
\end{align}
where $\frac{a-b}{\norm{a-b}} \in S^{d-1}$.
\end{lem}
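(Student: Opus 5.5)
The plan is to unwind the definition \eqref{eq_def_contraction_trans} by linearity, rewrite the mixed derivative of the Green function as a derivative in the difference variable, and then apply Lemma \ref{lem_dual_rel} together with homogeneity.

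\emph{Step 1 (linearity).} Write $f_n(x)=\sum_{|\be|=n}a_\be x^\be$ and $g_m(x)=\sum_{|\ga|=m}b_\ga x^\ga$. Then $f_n(P).1=\sum a_\be P^\be.1$ and $g_m(P).1=\sum b_\ga P^\ga.1$. Since $C_{a,r;b,s}$ is well defined and linear, \eqref{eq_def_contraction_trans} gives
\begin{align*}
C_{a,r;b,s}(f_n(P).1,g_m(P).1)= r^{n+\dn}s^{m+\dn}\,\bigl(f_n(\pa_x)g_m(\pa_y)\norm{x-y}^{2-d}\bigr)\Bigl|_{(x,y)=(a,b)}.
\end{align*}

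\emph{Step 2 (difference variable).} Set $G(u)=\norm{u}^{2-d}$ and $u=x-y$. On functions of $x-y$ we have $\pa_{x_i}=\pa_{u_i}$ and $\pa_{y_i}=-\pa_{u_i}$. Hence
\begin{align*}
f_n(\pa_x)g_m(\pa_y)G(x-y)=\bigl(f_n(\pa_u)g_m(-\pa_u)G\bigr)(a-b)\quad\text{at }(x,y)=(a,b).
\end{align*}
Put $h(t)=f_n(-t)g_m(t)\in\R[t_1,\dots,t_d]_{N}$ with $N=n+m$. Then $h(-\pa_u)=f_n(\pa_u)g_m(-\pa_u)$. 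Because the $P_\mu$ commute, $h(P)=f_n(-P)g_m(P)$.

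\emph{Step 3 (Lemma \ref{lem_dual_rel}).} Lemma \ref{lem_dual_rel} applies to the possibly non-harmonic polynomial $h$ and gives
\begin{align*}
h(-\pa_u)\norm{u}^{2-d}=\norm{u}^{-(d-2)-2N}\,\bigl(h(P).1\bigr)(u),
\end{align*}
valid for $u\neq0$, in particular at $u=a-b$.

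\emph{Step 4 (homogeneity).} The polynomial $h(P).1$ lies in $\Harm_{d,N}$, so it is homogeneous of degree $N$. Therefore $(h(P).1)(a-b)=\norm{a-b}^{N}\,(h(P).1)(\xi)$ with $\xi=\frac{a-b}{\norm{a-b}}$. Combining Steps 1--4, the total power of $\norm{a-b}$ is $-(d-2)-N=-(n+\dn)-(m+\dn)$. Together with the prefactor $r^{n+\dn}s^{m+\dn}$ this becomes exactly $\si_1^{n+\dn}\si_2^{m+\dn}$ by \eqref{def_sigma_sep}, which is \eqref{eq_minus_translation}.

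The only delicate point I expect is Step 3. There one must make sure Lemma \ref{lem_dual_rel} is used for a general (non-harmonic) $h$, which is how it is stated, and keep the sign conventions straight, since $P_\mu$ acts as $-\pa_\mu$ only up to the twist by $\theta$. Otherwise the computation is a bookkeeping of exponents.
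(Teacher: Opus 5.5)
Your proposal is correct and follows essentially the same route as the paper: pass to the difference variable $u=x-y$, apply Lemma \ref{lem_dual_rel} to the non-harmonic polynomial $f_n(-t)g_m(t)$ (so that $h(P)=f_n(-P)g_m(P)$), and use homogeneity of degree $n+m$ to move the evaluation point to $\frac{a-b}{\norm{a-b}}$. Your exponent bookkeeping, which absorbs $\norm{a-b}^{-(d-2)-(n+m)}$ into $\si_1^{n+\dn}\si_2^{m+\dn}$, matches the paper's computation.
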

\begin{proof}
By definition, we have
\begin{align*}
C_{a,r;b,s}(f_n(P).1,g_m(P).1) &= r^{n+\dn}s^{m+\dn} \left(f_n(\pa_x)g_m(\pa_y) \frac{1}{\norm{x-y}^{d-2}} \right)\Bigl|_{(x,y)=(a,b)}\\
&= r^{n+\dn}s^{m+\dn} \left(f_n(\pa_x)g_m(-\pa_x) \frac{1}{\norm{x}^{d-2}} \right)\Bigl|_{x=a-b}.
\end{align*}
By Lemma \ref{lem_dual_rel}, this equal to
\begin{align*}
&= r^{n+\dn}s^{m+\dn} \left(\norm{x}^{-(d-2+2(n+m))} f_n(-P)g_m(P).1 \right)\Bigl|_{x=a-b}\\
&=  \si_1^{n+\dn}\si_2^{m+\dn} \left(\norm{x}^{-(n+m)} f_n(-P)g_m(P).1 \right)\Bigl|_{x=a-b}\\
&=  \si_1^{n+\dn}\si_2^{m+\dn} \left(f_n(-P)g_m(P).1 \right)\Bigl|_{x=\frac{a-b}{\norm{a-b}}}.
\end{align*}
\end{proof}


\begin{lem}\label{lem_binomial_bound}
For any $\la \geq 1$ and $n,m \geq 0$,
\begin{align*}
\frac{\left(\la\right)_{n+m}}{\left(\la\right)_{n}\left(\la\right)_{m}} \leq \binom{n+m}{n}.
\end{align*}
and for any $\la \geq \ft$ and $n,m \geq 0$
\begin{align*}
\frac{\left(\la\right)_{n+m}}{\left(\la\right)_{n}\left(\la\right)_{m}} \leq (n+m) \binom{n+m}{n}.
\end{align*}
\end{lem}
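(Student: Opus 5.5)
We write the ratio as a product and compare it factor by factor with the binomial coefficient. By definition of the Pochhammer symbol,
\begin{align*}
\frac{(\la)_{n+m}}{(\la)_n(\la)_m} = \frac{\prod_{k=0}^{m-1}(\la+n+k)}{\prod_{k=0}^{m-1}(\la+k)} = \prod_{k=0}^{m-1}\frac{\la+n+k}{\la+k},
\qquad
\binom{n+m}{n}=\prod_{k=0}^{m-1}\frac{n+k+1}{k+1}.
\end{align*}
The case $m=0$ is trivial, so we may assume $m\geq 1$. Both bounds then reduce to estimating the factors $\frac{\la+n+k}{\la+k}$.

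For the first inequality, take $\la\geq 1$ and $k\geq 0$, $n\geq 0$. The function $t\mapsto \frac{t+n}{t}=1+\frac{n}{t}$ is nonincreasing in $t>0$. Since $\la+k\geq k+1$, we get $\frac{\la+n+k}{\la+k}\leq \frac{k+1+n}{k+1}$. Multiplying these bounds over $k=0,\dots,m-1$ gives the claim.

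For the second inequality, take $\la\geq \ft$. Here the same monotonicity only gives $\la+k\geq k+\ft$, so the comparison fails only mildly. We split off the $k=0$ factor, which satisfies
\begin{align*}
\frac{\la+n}{\la}=1+\frac{n}{\la}\leq 1+2n .
\end{align*}
For $k\geq 1$ we have $\la+k\geq k+\ft\geq \frac{k+1}{2}\cdot\frac{2k+1}{k+1}$, but the cleaner route is to compare with shifted indices. We use
\begin{align*}
\frac{\la+n+k}{\la+k}\leq \frac{n+k}{k}\qquad (k\geq 1),
\end{align*}
which holds because $\la+k\geq k$. The product over $k=1,\dots,m-1$ is then at most $\prod_{k=1}^{m-1}\frac{n+k}{k}=\binom{n+m-1}{n}=\frac{m}{n+m}\binom{n+m}{n}$.

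Combining the two pieces, the ratio is at most $(1+2n)\frac{m}{n+m}\binom{n+m}{n}$. Since $m\leq n+m$, this is bounded by $(1+2n)\binom{n+m}{n}$. This is not quite $(n+m)\binom{n+m}{n}$ when $n>m-1$, so the bound needs adjusting.

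By symmetry of the ratio in $n$ and $m$, we may assume $n\leq m$. Then $1+2n\leq 1+n+m$, which gives the bound $(1+n+m)\frac{m}{n+m}\binom{n+m}{n}$. Using $(1+n+m)m\leq (n+m)^2$, valid for $m\geq1$, $n\geq 0$ unless $n=0$, we obtain $(n+m)\binom{n+m}{n}$. The case $n=0$ is trivial since the ratio equals $1$.

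The only delicate point is this bookkeeping of the $k=0$ factor, which the WLOG $n\leq m$ resolves.
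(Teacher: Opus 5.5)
Your proof is correct and follows essentially the paper's route: both compare $\prod_{k=0}^{m-1}\bigl(1+\frac{n}{\lambda+k}\bigr)$ with factorials factor by factor, using monotonicity in $\lambda$. In the second part your intermediate bound $(1+2n)\frac{m}{n+m}\binom{n+m}{n}$ is slightly sharper than the paper's $2m\binom{n+m}{n}$, and it already suffices for all $n\ge 0$, $m\ge 1$ because $(n+m)^2-(1+2n)m=n^2+m(m-1)\ge 0$, so you can drop the reduction to $n\le m$ and the separate case $n=0$.

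One caveat, which comes from the statement itself and is shared by the paper's proof: at $n=m=0$ the second inequality reads $1\le 0$ and is false. So your remark that the case $m=0$ is trivial holds for the second bound only when $n\ge 1$. This edge case does not affect the polynomial-growth estimate where the lemma is used.
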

\begin{proof}
Since $\frac{\left(\la\right)_{n+m}}{\left(\la\right)_{n}\left(\la\right)_{m}}= \frac{(\la + n)_m}{(\la)_m}=\prod_{k \geq 0}^{m-1}(1+\frac{n}{\la+k})$,
$\frac{\left(\la\right)_{n+m}}{\left(\la\right)_{n}\left(\la\right)_{m}}$ is a monotonically decreasing function with respect to $\la$.
Hence the assertion follows from $\frac{\left(1\right)_{n+m}}{\left(1\right)_{n}\left(1\right)_{m}} =\binom{n+m}{n}$.
If $n \geq m$,
\begin{align*}
\frac{\left(\ft\right)_{n+m}}{\left(\ft\right)_{n}\left(\ft\right)_{m}} &=
\frac{(\ft+n)\cdots (\ft+n+m-1)}{\ft (\ft+1)\cdots (\ft +m-1)}\\
&\leq 2 \frac{(n+m)!}{n!(m-1)!}\leq (n+m)\binom{n+m}{n}.
\end{align*}
Hence,  the assertion holds.
%
%
\end{proof}

\begin{prop}\label{prop_bounded}
Let $d \geq 3$. Then, there exists a function $Q_d(t)$ of polynomial growth such that
\begin{align*}
|C_{a,r;b,s}(f_n,g_m)| \leq Q_d(n+m) \binom{n+m}{n}\si_1^n\si_2^m \norm{f_n}_H\norm{g_m}_H
\end{align*}
holds for any $n,m\geq 0$ and $f_n \in \mathrm{Harm}_{d,n}$, $g_m \in \mathrm{Harm}_{d,m}$.
\end{prop}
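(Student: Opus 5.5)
The plan is to reduce the bilinear form to a point evaluation of a single harmonic polynomial of degree $n+m$ on $S^{d-1}$, and then to control that evaluation by the combinatorial inner product $(-,-)_F$.

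\emph{Step 1: rewrite $C_{a,r;b,s}$ in terms of $f_n(P).1$.} Since $f_n$ and $g_m$ are harmonic, \eqref{eq_harm_formula} gives $f_n=\frac{1}{2^n(\dn)_n}f_n(P).1$, and similarly for $g_m$. Lemma \ref{lem_minus_translation} then yields
\begin{align*}
C_{a,r;b,s}(f_n,g_m)=\frac{\si_1^{n+\dn}\si_2^{m+\dn}}{2^{n+m}(\dn)_n(\dn)_m}\,\bigl(h(P).1\bigr)(\xi),
\end{align*}
where $h(x)=f_n(-x)g_m(x)\in\R[x_1,\dots,x_d]_{n+m}$ and $\xi=\frac{a-b}{\norm{a-b}}\in S^{d-1}$. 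Since $\si_1,\si_2<1$, the factor $\si_1^{\dn}\si_2^{\dn}$ is at most $1$ and can be dropped.

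\emph{Step 2: evaluation bound.} The polynomial $h(P).1$ lies in $\Harm_{d,n+m}$. Proposition \ref{prop_restriction_sphere} therefore gives
\begin{align*}
|(h(P).1)(\xi)|\le \sqrt{\tfrac{(d-2)A_{d,n+m}}{2(n+m)+d-2}}\;\norm{h(P).1}_H .
\end{align*}
Here $h$ is in general not harmonic, so the equality \eqref{eq_fisher1} is not available (Remark \ref{rem_fisher}). This is the main subtlety. It is handled by the inequality \eqref{eq_fisher2}:
\begin{align*}
\norm{h(P).1}_H^2\le 2^{n+m}(\dn)_{n+m}\norm{h}_F^2 .
\end{align*}
Lemma \ref{lem_fisher_product} then gives $\norm{h}_F^2\le\binom{n+m}{n}\norm{f_n(-x)}_F^2\norm{g_m}_F^2$. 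The sign change $x\mapsto -x$ preserves $\norm{\cdot}_F$, since monomials are orthogonal for $(-,-)_F$.

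\emph{Step 3: return to the $H$-norm.} Because $f_n$ and $g_m$ are harmonic, \eqref{eq_fisher3} gives $\norm{f_n}_F^2=2^n(\dn)_n\norm{f_n}_H^2$, and similarly for $g_m$. Collecting all factors, the powers of $2$ cancel and we obtain
\begin{align*}
|C_{a,r;b,s}(f_n,g_m)|\le \si_1^n\si_2^m\sqrt{\tfrac{(d-2)A_{d,n+m}}{2(n+m)+d-2}}\sqrt{\binom{n+m}{n}}\sqrt{\tfrac{(\dn)_{n+m}}{(\dn)_n(\dn)_m}}\;\norm{f_n}_H\norm{g_m}_H .
\end{align*}

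\emph{Step 4: conclude.} Since $\dn\ge\ft$ for $d\ge3$, the second inequality of Lemma \ref{lem_binomial_bound} bounds the last ratio by $(n+m)\binom{n+m}{n}$. The binomial factors then combine to $\binom{n+m}{n}\sqrt{n+m}$. By \eqref{eq_harm_dim}, $A_{d,N}$ grows polynomially in $N$. We may therefore take
\begin{align*}
Q_d(t)=\sqrt{\tfrac{(d-2)A_{d,t}\,\max(t,1)}{2t+d-2}},
\end{align*}
with the $\max(t,1)$ covering the case $n=m=0$. This function has polynomial growth, which proves the claim.

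The only delicate point is Step 2, namely passing from the non-harmonic product $h$ to its harmonic projection without losing more than the binomial factor. The orthogonality in Lemma \ref{lem_fisher_adjoint}, which underlies \eqref{eq_fisher2}, is exactly what makes this work.
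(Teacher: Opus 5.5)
Your proposal is correct and follows essentially the same route as the paper. The paper also reduces to $h(P).1$ via Lemma~\ref{lem_minus_translation} and Proposition~\ref{prop_restriction_sphere}, passes to Fischer norms through \eqref{eq_fisher1}, \eqref{eq_fisher2} and Lemma~\ref{lem_fisher_product}, and finishes with Lemma~\ref{lem_binomial_bound}. Your explicit $Q_d$ with the factor $\max(t,1)$ is in fact slightly more careful than the paper: the second inequality of Lemma~\ref{lem_binomial_bound} fails at $n=m=0$ (the left side is $1$, the right side is $0$), a case the paper passes over silently.
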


\begin{proof}
Let $f_n(x)=\sum_\al a_\al x^\al \in \mathrm{Harm}_{d,n}$ and $g_m(x) = \sum_\be b_\be x^\be  \in \mathrm{Harm}_{d,m}$.
be non-zero polynomials.

By Lemma \ref{lem_minus_translation} and Proposition \ref{prop_restriction_sphere}, we have
\begin{align*}
|C_{a,r;b,s}&\left(f_n(P).1,g_m(P).1\right)|\\
&=\si_1^{n+\frac{d-2}{2}} \si_2^{m+\frac{d-2}{2}} \left|\left(f_n(-P)g_m(P).1 \right)\Bigl|_{x=\frac{a-b}{\norm{a-b}}}\right|\\
&\leq \si_1^{n+\frac{d-2}{2}} \si_2^{m+\frac{d-2}{2}}
\sqrt{\frac{(d-2)A_{d,n+m}}{2n+2m+d-2}}
\norm{f_n(-P)g_m(P).1}_H.
\end{align*}

Hence, by \eqref{eq_fisher1} and \eqref{eq_fisher2} in Proposition \ref{prop_fisher_inner} and Lemma \ref{lem_fisher_product},
\begin{align*}
&\left(\norm{f_n(-P)g_m(P).1}_H / \norm{f_n(P).1}_H\norm{g_m(P).1}_H\right)^2\\
&\leq \frac{\left(\frac{d-2}{2}\right)_{n+m}}{\left(\frac{d-2}{2}\right)_{n}\left(\frac{d-2}{2}\right)_{m}}
\left(\norm{f_n(-x)g_m(x)}_F / \norm{f_n(x)}_F\norm{g_m(x)}_F\right)^2\\
&\leq \frac{\left(\frac{d-2}{2}\right)_{n+m}}{\left(\frac{d-2}{2}\right)_{n}\left(\frac{d-2}{2}\right)_{m}}
\binom{n+m}{n},
\end{align*}
where we used $\norm{f_n(-x)}_F = \norm{f_n(x)}_F$.
Thus, we have
\begin{align*}
&\norm{f_n(-P)g_m(P).1}_H \leq \sqrt{\frac{\left(\frac{d-2}{2}\right)_{n+m}}{\left(\frac{d-2}{2}\right)_{n}\left(\frac{d-2}{2}\right)_{m}}
\binom{n+m}{n}}\norm{f_n(P).1}_H\norm{g_m(P).1}_H.
\end{align*}

Here, by Lemma \ref{lem_binomial_bound},
$\sqrt{\frac{(d-2)A_{d,n+m}}{2n+2m+d-2}}
\sqrt{\frac{\left(\frac{d-2}{2}\right)_{n+m}}{\left(\frac{d-2}{2}\right)_{n}\left(\frac{d-2}{2}\right)_{m}}
\binom{n+m}{n}}$ can be bounded by $\binom{n+m}{n}Q_d(n+m)$ for a function $Q_d(t)$ of polynomial growth.

\end{proof}

%
%
%


%

By Proposition \ref{prop_bounded}, we have:
\begin{thm}\label{thm_bounded}
Let $a,b \in \bD$ and $1>r,s>0$ satisfy \eqref{eq_geom_sep}.
There is a unique bounded linear map
\begin{align*}
\hat{C}_{a,r;b,s}:\Hf \hotimes \Hf \rightarrow \R
\end{align*}
such that the restriction of $\hat{C}_{a,r;b,s}$ on $\Harm_d \otimes \Harm_d \subset \Hf\hotimes\Hf$
coincides with $C_{a,r;b,s}$.
\end{thm}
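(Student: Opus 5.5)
The plan is to show that $C_{a,r;b,s}$ is bounded on $\Harm_d\otimes\Harm_d$ for the norm of $\Hf\hotimes\Hf$. Uniqueness is then automatic, since $\Harm_d\otimes\Harm_d$ is dense in $\Hf\hotimes\Hf$.

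The key structural point is that $\Hf=\bigoplus_{n}\Harm_{d,n}$ is an orthogonal decomposition by \eqref{eq_harm_orth1}. Hence
\begin{align*}
\Hf\hotimes\Hf=\widehat{\bigoplus_{n,m}}\,\Harm_{d,n}\otimes\Harm_{d,m}
\end{align*}
is an orthogonal Hilbert direct sum, and each summand is finite-dimensional. Write $C^{n,m}$ for the restriction of $C_{a,r;b,s}$ to $\Harm_{d,n}\otimes\Harm_{d,m}$; this is a linear functional on a finite-dimensional space. For $v=\sum_{n,m}v_{n,m}$ (a finite sum) we get
\begin{align*}
|C(v)|\leq\sum_{n,m}\norm{C^{n,m}}\,\norm{v_{n,m}}\leq\Bigl(\sum_{n,m}\norm{C^{n,m}}^2\Bigr)^{1/2}\norm{v}
\end{align*}
by Cauchy--Schwarz. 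So it suffices to prove $\sum_{n,m}\norm{C^{n,m}}^2<\infty$, where $\norm{C^{n,m}}$ is the norm of the functional with respect to the tensor Hilbert norm on $\Harm_{d,n}\otimes\Harm_{d,m}$. This is exactly where the subtlety noted after \eqref{eq_vect_linear} enters: Proposition \ref{prop_bounded} only bounds $C^{n,m}$ on elementary tensors $f_n\otimes g_m$, and a bilinear bound need not control the Hilbert-tensor norm.

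To pass from elementary tensors to the tensor norm, I would use the crude estimate $\norm{C^{n,m}}\leq\sqrt{A_{d,n}A_{d,m}}\,M_{n,m}$, where $M_{n,m}$ is the bilinear bound. To see this, choose orthonormal bases $\{Y_{n,l}\}$ of $\Harm_{d,n}$ and $\{Y_{m,k}\}$ of $\Harm_{d,m}$. Then the tensors $Y_{n,l}\otimes Y_{m,k}$ form an orthonormal basis, and
\begin{align*}
\norm{C^{n,m}}^2=\sum_{l,k}C(Y_{n,l},Y_{m,k})^2\leq A_{d,n}A_{d,m}M_{n,m}^2 .
\end{align*}
By Proposition \ref{prop_bounded}, $M_{n,m}\le Q_d(n+m)\binom{n+m}{n}\si_1^n\si_2^m$, and \eqref{eq_harm_dim} shows that $A_{d,n}$ grows polynomially. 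Therefore
\begin{align*}
\norm{C^{n,m}}\leq \tilde Q_d(n+m)\binom{n+m}{n}\si_1^n\si_2^m
\end{align*}
for some function $\tilde Q_d$ of polynomial growth.

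It remains to sum the squares. Using $\binom{n+m}{n}^2\si_1^{2n}\si_2^{2m}\leq\bigl(\binom{n+m}{n}\si_1^n\si_2^m\bigr)\cdot\binom{n+m}{n}\si_1^n\si_2^m$, and estimating one factor by the full binomial sum, we obtain
\begin{align*}
\sum_{n,m}\norm{C^{n,m}}^2\leq\sum_{N\ge0}\tilde Q_d(N)^2\Bigl(\sum_{n+m=N}\binom{N}{n}\si_1^n\si_2^m\Bigr)^2=\sum_{N}\tilde Q_d(N)^2(\si_1+\si_2)^{2N}.
\end{align*}
This converges because $\si_1+\si_2<1$, which follows from the separation condition \eqref{eq_geom_sep} and the definition \eqref{def_sigma_sep}.

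The main obstacle is the step from Proposition \ref{prop_bounded}'s elementary-tensor bound to a bound in the Hilbert tensor norm. The dimension factor $\sqrt{A_{d,n}A_{d,m}}$ resolves it only because that factor is polynomial and is absorbed by the geometric decay $(\si_1+\si_2)^N$; this is also the place where the strict separation of the closed balls is essential.
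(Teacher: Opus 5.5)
Your proof is correct and is essentially the paper's argument: your blockwise Cauchy--Schwarz estimate reduces boundedness to $\sum_{n,m,l,k} C_{a,r;b,s}(Y_{n,l},Y_{m,k})^2<\infty$, which is exactly the Riesz (Hilbert--Schmidt) criterion the paper invokes, and both proofs then bound each coefficient via Proposition~\ref{prop_bounded}, absorb the polynomially growing dimensions $A_{d,n}A_{d,m}$, and conclude from $\binom{n+m}{n}\si_1^n\si_2^m\le(\si_1+\si_2)^{n+m}$ with $\si_1+\si_2<1$. The only cosmetic difference is that you bound $\sum_{n+m=N}\binom{N}{n}^2\si_1^{2n}\si_2^{2m}$ by $(\si_1+\si_2)^{2N}$ collectively, whereas the paper bounds each term by $(\si_1+\si_2)^{2N}$ separately and absorbs the resulting extra polynomial factor $\sum_k A_{d,k}A_{d,N-k}$.
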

\begin{proof}
Since $\Harm_d \subset \Hf$ is dense, the uniqueness is obvious. Recall that $A_{d,n}=\dim\Harm_{d,n}$ is a polynomial in $n$ of degree $d-2$ (see \eqref{eq_harm_dim}).
Let $\{Y_{n,i} \in \Harm_{d,n} \}_{i=1,\dots, \dim A_{d,n}}$ be the orthonormal basis of $\Harm_{d,n}$ with respect to $(-,-)_H$. Set 
\begin{align*}
c_{n,m}^{i,j} = C_{a,r;b,s}(Y_{n,i},Y_{m,j}).
\end{align*}
By the Riesz representation theorem, the existence of a bounded extension of $C_{a,r;b,s}$ is equivalent to the condition
$\sum_{\substack{n,m \geq 0 \\ i,j}} |c_{n,m}^{i,j}|^2 < \infty$.
By Proposition \ref{prop_bounded}, we have
$|c_{n,m}^{i,j}| \leq Q_d(n+m)\binom{n+m}{n}\si_1^n \si_2^m$,
and since
$\binom{n+m}{n}\si_1^n \si_2^m \leq (\si_1+\si_2)^{n+m}$,
setting $\si=\si_1+\si_2$ we obtain
\begin{align*}
\sum_{n,m \geq 0}\sum_{i =1}^{A_{d,n}}\sum_{j =1}^{A_{d,m}} |c_{n,m}^{i,j}|^2
&\leq \sum_{n,m \geq 0} Q_d(n+m)^2 A_{d,n}A_{d,m} \si^{2(n+m)}\\
&=\sum_{N \geq 0} \si^{2N} Q_d(N)^2\left(\sum_{k =0}^N A_{d,k}A_{d,N-k}\right).
\end{align*}
Here $Q_d(N)^2\left(\sum_{k =0}^N A_{d,k}A_{d,N-k}\right)$ grows at most polynomially in $N$.
Since $\si<1$ by \eqref{eq_geom_sep}, the series converges, and the proposition follows.
\end{proof}

%

This map satisfies the following property:
\begin{lem}\label{lem_evaluate_C}
For any $p,q \in \bD$,
\begin{align*}
\hat{C}_{a,r;b,s}(E_p,E_q) = r^{\frac{d-2}{2}}s^{\frac{d-2}{2}} \frac{1}{\norm{T_a r^D(p)-T_b s^D(q)}^{d-2}}
\end{align*}
\end{lem}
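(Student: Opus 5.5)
The plan is to expand both arguments into their harmonic components, evaluate $\hat{C}_{a,r;b,s}$ term by term using the defining formula \eqref{eq_def_contraction_trans}, and recognise the resulting double series as the Taylor expansion of $\norm{x-y}^{-(d-2)}$ about $(a,b)$.

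First write $E_p=\sum_{N\geq0}E_p^N$ and $E_q=\sum_{M\geq0}E_q^M$, both converging in norm in $\Hf$ by Lemma \ref{lem_Gegen}. The partial sums $\sum_{N\leq N_0}E_p^N\otimes\sum_{M\leq M_0}E_q^M$ converge to $E_p\otimes E_q$ in $\Hf\hotimes\Hf$, since $\norm{u\otimes v}=\norm{u}\norm{v}$. Because $\hat{C}_{a,r;b,s}$ is bounded (Theorem \ref{thm_bounded}), it follows that
\begin{align*}
\hat{C}_{a,r;b,s}(E_p,E_q)=\sum_{N,M\geq0}C_{a,r;b,s}(E_p^N,E_q^M),
\end{align*}
and the double series converges absolutely. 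Absolute convergence comes from Proposition \ref{prop_bounded} combined with $\norm{E_p^N}_H^2=\frac{(d-2)_N}{N!}\norm{p}^{2N}$: the terms are bounded by $Q_d(N+M)(\si_1+\si_2)^{N+M}$ times polynomially growing factors.

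Next, substitute the definition $E_p^N=\sum_{|\al|=N}\frac{p^\al}{\al!}P^\al.1$ and apply \eqref{eq_def_contraction_trans} directly. This is legitimate because $C_{a,r;b,s}$ is defined on each $P^\be.1$, and we only need linearity on the finite sum. The result is
\begin{align*}
C_{a,r;b,s}(E_p^N,E_q^M)=r^{\dn}s^{\dn}\sum_{|\al|=N,|\be|=M}\frac{(rp)^\al(sq)^\be}{\al!\be!}\partial_x^\al\partial_y^\be\frac{1}{\norm{x-y}^{d-2}}\Bigl|_{(x,y)=(a,b)}.
\end{align*}
Summing over $N$ and $M$ gives the multivariable Taylor series of $G(x,y)=\norm{x-y}^{-(d-2)}$ at $(a,b)$, evaluated at the increment $(rp,sq)$. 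If that Taylor series converges to $G(a+rp,\,b+sq)$, we obtain exactly $r^{\dn}s^{\dn}\norm{T_ar^D(p)-T_bs^D(q)}^{-(d-2)}$.

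The main obstacle is justifying that the Taylor series converges to the function value at this increment. The two-variable Taylor series of $G$, which is real-analytic off the diagonal, need not converge on the whole region where $x\neq y$. To handle this, I would first restrict to $p$ and $q$ near $0$. For small $\norm{p}$ and $\norm{q}$ the increment is small, and real-analyticity of $G$ at $(a,b)$ (note $a\neq b$) guarantees the series converges to $G(a+rp,b+sq)$. This gives the identity for $p,q$ in a small ball $B_\ep(0)$.

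To pass to all $p,q\in\bD$, I would use analytic continuation. Both sides are real-analytic functions of $(p,q)\in\bD\times\bD$. The left side $(p,q)\mapsto\hat{C}(E_p,E_q)$ is analytic because the absolutely convergent series above is dominated uniformly on compact subsets, being controlled by $(\si_1+\si_2)^{N+M}$ times polynomial factors. The right side is analytic because $\overline{B_r(a)}\cap\overline{B_s(b)}=\emptyset$, so $T_ar^D(p)\neq T_bs^D(q)$. Since $\bD\times\bD$ is connected and the two sides agree on an open set, they agree everywhere.

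As an alternative route, one could rewrite the double series as a power series in $(p,q)$ and note that it converges absolutely on all of $\bD\times\bD$. It would then suffice to check agreement near the origin, which gives the same conclusion.
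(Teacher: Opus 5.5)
Your first three steps are sound. Proposition \ref{prop_bounded} together with $\norm{E_p^N}_H^2=\frac{(d-2)_N}{N!}\norm{p}^{2N}$ gives absolute convergence of the bidegree series, in fact uniformly on $\bD\times\bD$. Each term is computed correctly from \eqref{eq_def_contraction_trans}. For $p,q$ near $0$, the identity follows from real-analyticity of $\norm{x-y}^{-(d-2)}$ at $(a,b)$.

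The gap is in the continuation step. You justify real-analyticity of $(p,q)\mapsto \hat{C}_{a,r;b,s}(E_p,E_q)$ by uniform domination, on compact sets, of the series of polynomials $C_{a,r;b,s}(E_p^N,E_q^M)$. Bounds for real $(p,q)$ only give continuity: over $\R$, a locally uniformly convergent series of polynomials need not have an analytic sum, so the identity theorem cannot be invoked on this basis.

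Your alternative route does not close the gap either. If you mean absolute convergence of the bidegree series, that is the same real bound and again gives no analyticity. If you mean absolute convergence of the monomial expansion in $p^\al q^\be$, it is false in general. Its absolute series is $\sum_\ga \frac{u^\ga}{\ga!}\,|(\pa^\ga g)(a-b)|$ with $g(x)=\norm{x}^{-(d-2)}$ and $u_i=r|p_i|+s|q_i|$. This discards the cancellation in $rp-sq$, and it diverges once the polydisc of radii $u_i$ about $a-b$ meets the complex cone $\sum_i z_i^2=0$. For $a-b=\eta e_1$ this happens as soon as $u_1+u_2>\eta$, which occurs for $\si$ close to $1$ and $p=q$ close to $\pa\bD$ in the direction $e_1+e_2$.

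The step is easy to repair using harmonicity, which you did not invoke. For fixed $q$, the map $v\mapsto \hat{C}_{a,r;b,s}(v\otimes E_q)$ is a bounded functional, so it equals $(v,w_q)_H$ for some $w_q\in\Hf$. By Lemma \ref{lem_evaluate_harmonic}, $\hat{C}_{a,r;b,s}(E_p,E_q)=\sum_N (w_q)_N(p)$ is a locally uniformly convergent sum of harmonic polynomials, hence harmonic in $p\in\bD$. The right-hand side is also harmonic in $p$, since $a+rp\neq b+sq$. Agreement on $B_\ep(0)$ therefore propagates to all $p\in\bD$, and you then repeat the argument in $q$.

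The paper avoids continuation altogether. It regroups the absolutely convergent double series by total degree $K=N+M$. The multinomial theorem, together with the fact that the kernel depends only on $x-y$, turns the degree-$K$ part into $\frac{1}{K!}\bigl(((rp-sq)\cdot\pa_x)^K\norm{x}^{-(d-2)}\bigr)\big|_{x=a-b}$. This is the degree-$K$ term of the Gegenbauer expansion \eqref{eq_gegenbauer} of $\norm{a-b+rp-sq}^{-(d-2)}$, which converges because $\norm{rp-sq}<r+s<\norm{a-b}$. So the obstruction you worried about affects only the monomial Taylor series in $(x,y)$. The total-degree series in the single variable $w=rp-sq$ converges on the whole ball $\norm{w}<\norm{a-b}$, which is exactly what the separation condition provides.
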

\begin{proof}
Since $\norm{a}^{-2} \norm{a-x}^2 = (1-\frac{2(a,x)}{\norm{a}^2}+\frac{\norm{x}^2}{\norm{a}^2})$,
by applying $r=\frac{\norm{x}}{\norm{a}}$ and $t = \frac{(a,x)}{\norm{a}\norm{x}}$ to \eqref{eq_gegenbauer}, we obtain
\begin{align}
\frac{1}{\norm{a-x}^{d-2}}\Bigl|_{|a|>|x|} = 
\norm{a}^{-d+2}\sum_{n \geq 0} \frac{\norm{x}^n}{\norm{a}^n}C_n\left(\frac{(a,x)}{\norm{a}\norm{x}}\right).\label{eq_expansion2}
\end{align}
Here the sum on the right-hand side converges uniformly on any compact subset of $\{|a|>|x|\}$ \cite[Corollary 5.34]{ABR}.
Since $\sum_{k=0}^N E_a^k \to E_a$ $(N \to \infty)$ converges in norm, and $C_{a,r;b,s}$ is continuous,
\begin{align}
\begin{split}
C_{a,r;b,s}(E_p,E_q)&= C_{a,r;b,s}\left(\sum_{n=0}^\infty E_p^n, \sum_{m=0}^\infty E_q^m\right)
=\sum_{n,m \geq 0} C_{a,r;b,s}\left(E_p^n,E_q^m\right)\\
&=\lim_{R \to \infty}\left(\sum_{N = 0}^R \sum_{\substack{n,m \geq 0 \\ n+m=N}} \left(\sum_{|\al|=n,|\be|=m} r^{n+\frac{d-2}{2}} s^{m+\frac{d-2}{2}} p^\al q^\be \frac{1}{\al!\be!}\partial_x^\al \partial_y^\be \frac{1}{\norm{x-y}^{d-2}}\Biggl|_{x-y=a-b}\right)\right)\\
&= r^{\frac{d-2}{2}} s^{\frac{d-2}{2}}\lim_{R \to \infty} \left(\sum_{N = 0}^R \sum_{\substack{n,m \geq 0 \\ n+m=N}} \left( \sum_{|\al|=n,|\be|=m} \frac{1}{\al!\be!} (rp)^\al \partial_x^\al (sq)^\be\partial_y^\be \frac{1}{\norm{x-y}^{d-2}}\Biggl|_{x-y=a-b}\right)\right)\\
&= r^{\frac{d-2}{2}} s^{\frac{d-2}{2}}\lim_{R \to \infty} \left(\sum_{N = 0}^R \frac{1}{N!}\left((r p\cdot \pa_x + s q\cdot \pa_y)^N \frac{1}{\norm{x-y}^{d-2}}\Biggl|_{x-y=a-b}\right)\right)\\
&= r^{\frac{d-2}{2}} s^{\frac{d-2}{2}}\lim_{R \to \infty} \left(\sum_{N = 0}^R \frac{1}{N!}\left(((r p-sq)\cdot \pa_x)^N \frac{1}{\norm{x}^{d-2}}\Biggl|_{x=a-b}\right)\right),\\
&= r^{\frac{d-2}{2}} s^{\frac{d-2}{2}}\lim_{R \to \infty} \norm{a-b}^{-d+2}\sum_{N = 0}^R \frac{1}{N!}\left(
 \frac{\norm{rp-sq}^N}{\norm{a-b}^N}C_N\left(\frac{(a-b,rp-sq)}{\norm{a-b}\norm{rp-sq}}\right)\right),\\
&=\frac{ r^{\frac{d-2}{2}} s^{\frac{d-2}{2}}}{\norm{a-b+rp-sq}^{d-2}},
\end{split}
\label{eq_expansion3}
\end{align}
where we used $\norm{rp - sq} < r+s < \norm{a-b}$ in the last line.
Since $a+rp = T_a r^D (p)$ and $b+sq = T_b s^D (q)$, the assertion follows.

\end{proof}

%

%

Finally, we give a lower bound for the operator norm of $\hat{C}_{a,r;b,s}$, and prove that the operator norm becomes unbounded as $\si_1+\si_2 \to 1$. As we will see in the next section, this estimate motivates our introduction of the category $\Embc$.
\begin{prop}\label{prop_unbounded}
Let $a,b \in \bD$ and $1>r,s>0$ satisfy \eqref{eq_geom_sep}. Set
\begin{align*}
\si = \frac{r+s}{\norm{a-b}}<1.
\end{align*}
Then the operator norm of the bounded linear map $\hat{C}_{a,r;b,s}:\Hf \hotimes \Hf \rightarrow \R$ satisfies
\begin{align}
\norm{\hat{C}_{a,r;b,s}}\geq
\begin{cases}
 \frac{1}{\sqrt{2}} \left(\si_1\si_2\sum_{N \geq 0}\frac{\si^{2N}}{2N+1}\right)^\ft & d =3,\\
(\si_1\si_2)^{\frac{d-2}{2}}\left(\sum_{N \geq 0}\frac{\si^{2N}}{N+1}\right)^\ft & d\geq 4,
\end{cases}
\label{eq_estimate_below}
\end{align}
where $\si_1=\frac{r}{\norm{a-b}}$ and $\si_2=\frac{s}{\norm{a-b}}$.
\end{prop}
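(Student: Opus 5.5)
The plan is to bound the operator norm from below by a Hilbert--Schmidt-type sum over a well-chosen orthonormal family. By the Riesz argument in the proof of Theorem \ref{thm_bounded}, a bounded functional on $\Hf\hotimes\Hf$ has squared norm $\sum|\hat C_{a,r;b,s}(e_\alpha\otimes e_\beta)|^2$ for any orthonormal basis of $\Hf\hotimes\Hf$. So it suffices to find an orthonormal family $\{u_n\}\subset\Hf$ and sum only the terms $|\hat C(u_n,u_m)|^2$. The natural choice is the zonal harmonics along the axis $e=\pm\frac{a-b}{\norm{a-b}}$ (sign fixed below). Set $u_n=E_e^n/\norm{E_e^n}_H\in\Harm_{d,n}$. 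These are mutually orthogonal because they lie in different degrees, and by Lemma \ref{lem_Gegen} $\norm{E_e^n}_H^2=C_n(1)=\frac{(d-2)_n}{n!}$.

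First I would compute $\hat C(E_{pe}^n,E_{qe}^m)$ for real parameters $p,q$ with $|p|,|q|<1$. By Lemma \ref{lem_evaluate_C}, with $e$ chosen along $b-a$ and $q$ replaced by $-q$,
\begin{align*}
\hat C_{a,r;b,s}(E_{pe},E_{-qe})=\frac{(rs)^{\frac{d-2}{2}}}{\norm{a-b}^{d-2}}\,\frac{1}{(1-\si_1p-\si_2q)^{d-2}}
=(\si_1\si_2)^{\frac{d-2}{2}}\sum_{n,m\geq0}\frac{(d-2)_{n+m}}{n!\,m!}\si_1^n\si_2^m p^nq^m .
\end{align*}
This expansion converges absolutely since $\si_1+\si_2<1$. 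On the other hand $E_{pe}=\sum_n p^nE_e^n$ and $E_{-qe}=\sum_m(-q)^mE_e^m$ converge in norm, and $\hat C$ is continuous. Comparing coefficients of $p^nq^m$ then gives
\begin{align*}
|\hat C(u_n,u_m)|=(\si_1\si_2)^{\frac{d-2}{2}}\frac{(d-2)_{n+m}}{\sqrt{n!\,m!\,(d-2)_n(d-2)_m}}\,\si_1^n\si_2^m .
\end{align*}

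Second, I would group the terms by $N=n+m$ and bound the inner sum from below using Cauchy--Schwarz,
\[
\sum_{n+m=N}x_{n,m}^2\geq\frac{1}{N+1}\Bigl(\sum_{n+m=N}x_{n,m}\Bigr)^2 .
\]
For $d=3$ we have $(1)_k=k!$, so $x_{n,m}=\binom{N}{n}\si_1^n\si_2^m$, and the inner sum is at least $\si^{2N}/(N+1)$. This yields a bound of the stated form (note $\frac{1}{N+1}\geq\frac{1}{2N+1}$, so the claimed constant $\frac{1}{\sqrt2}$ is weaker than needed). For $d\geq4$ I would use that $k\mapsto(d-2)_k/k!$ is increasing, because $\la=d-2\geq2>1$. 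Writing $\la=d-2$, this gives
\[
\frac{(\la)_{N}^2}{n!m!(\la)_n(\la)_m}\geq\binom{N}{n}\frac{(\la)_N}{N!}\cdot\frac{(\la)_N/N!}{\bigl((\la)_n/n!\bigr)\bigl((\la)_m/m!\bigr)}\geq\binom{N}{n}^2 .
\]
The last inequality follows because $\frac{(\la)_N}{(\la)_n(\la)_m}\geq\binom{N}{n}$, which is the reverse direction of the monotonicity argument in Lemma \ref{lem_binomial_bound}, now with $\la\geq1$. The same Cauchy--Schwarz step then gives $\si^{2N}/(N+1)$, with the prefactor $(\si_1\si_2)^{d-2}$. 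Summing over $N$ and taking square roots produces \eqref{eq_estimate_below}.

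The main obstacle is justifying the coefficient extraction. I need to show that the Taylor coefficients in $(p,q)$ of $\hat C(E_{pe},E_{-qe})$ really equal $\hat C(E_e^n,E_e^m)$. I expect to handle this via the double-series argument already used in \eqref{eq_expansion3}: the bilinear series $\sum_{n,m}p^n(-q)^m\hat C(E_e^n,E_e^m)$ converges absolutely for small $p,q$, by Proposition \ref{prop_bounded}, so uniqueness of power-series coefficients applies. The sign bookkeeping for $e$ and $q$ is routine. The other point to watch is the elementary inequality on Pochhammer ratios for $d\geq4$.
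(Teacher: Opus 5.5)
Your approach is sound in outline and takes a genuinely different route from the paper's. The paper complexifies $\Hf$ and rotates $a-b$ onto the $x_1$-axis. It then tests $\hat{C}_{a,r;b,s}$ on $v_n\propto L(-1)^n.1$, computing the entries by differentiating $\norm{x}^{-(d-2)}$ in $z$ and the norms $\norm{L(-1)^n.1}_H^2=n!(\dn)_n$ from the $\mathrm{sl}_2$ relations; its entries therefore carry $(\dn)_k$. You instead use the real zonal harmonics $E_e^n$ along $e=(b-a)/\norm{a-b}$ and read off the entries as Taylor coefficients of $(\si_1\si_2)^{\dn}(1-\si_1p-\si_2q)^{-(d-2)}$ from Lemma \ref{lem_evaluate_C}. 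This checks out, and your justification of the coefficient extraction is adequate: $E_{pe}$ and $E_{-qe}$ converge in norm, and Proposition \ref{prop_bounded} gives absolute convergence. Your entries carry $(d-2)_k$ instead, and this is a real gain. Since $\la=d-2\geq 1$ for every $d\geq3$, monotonicity of $r_k=(\la)_k/k!$ applies uniformly. So $d=3$ needs no separate computation, and you get $\norm{\hat{C}_{a,r;b,s}}\geq(\si_1\si_2)^{\dn}\bigl(\sum_{N}\si^{2N}/(N+1)\bigr)^{1/2}$ for all $d\geq3$, which is sharper than the stated $d=3$ bound. The paper's family has $\la=\ft<1$ at $d=3$, which forces its special computation and the losses $\frac{1}{\sqrt2}$ and $\frac{1}{2N+1}$. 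The price of your route is the power-series identification, which the paper avoids by differentiating directly.

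The one step that is wrong as written is your displayed Pochhammer chain for $d\geq4$. With $r_k$ as above, the left side equals $\binom{N}{n}^2 r_N^2/(r_nr_m)$, while your middle expression equals $\binom{N}{n}r_N^2/(r_nr_m)$. So the first inequality throws away a factor $\binom{N}{n}$, and the second is false: for $\la=2$, $n=m=5$ the middle term is $847<\binom{10}{5}^2$. The justification you cite, $\frac{(\la)_N}{(\la)_n(\la)_m}\geq\binom{N}{n}$, also goes the wrong direction. Lemma \ref{lem_binomial_bound} gives $\leq$ for $\la\geq1$; already $\frac{(2)_2}{(2)_1(2)_1}=\frac32<2$. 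The correct step is one line and uses exactly the monotonicity you mention:
\[
\frac{(\la)_N^2}{n!\,m!\,(\la)_n(\la)_m}=\binom{N}{n}^2\frac{r_N^2}{r_nr_m}\geq\binom{N}{n}^2,
\]
since $r_{k+1}/r_k=(\la+k)/(k+1)\geq1$. With this replacement, the Cauchy--Schwarz step and the rest of your argument go through, for $d=3$ as well, where $r_k\equiv1$.
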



\begin{proof}
Let $\Hf_\C$ be the complex Hilbert space obtained by complexifying the real Hilbert space $\Hf$. Throughout, the inner product is taken to be conjugate-linear in the first variable.
It suffices to consider the operator norm of the complexified bounded linear operator
$\hat{C}_{a,r;b,s}:\Hf_\C \hotimes \Hf_\C \rightarrow \C$
and to show that there exists a sequence of vectors
$v_n \in \Harm_d \otimes_{\R} \C$ ($n \geq 0$) such that
$(v_n,v_m)_H=\delta_{n,m}$ and, setting
\begin{align*}
A_{n,m} = C_{a,r;b,s}(v_n,v_m),
\end{align*}
the sum $(\sum_{n,m \geq 0} |A_{n,m}|^2)^\ft$ is bounded from below by \eqref{eq_estimate_below}.
By the rotational invariance of \eqref{eq_def_contraction_trans}, we may assume without loss of generality that
$a-b = \eta(1,0,\dots,0)=\eta e_1 \in \R^d$ with $\eta >0$.
Then, $\si_1 = \frac{r}{\eta}$ and $\si_2= \frac{s}{\eta}$ and $\si =\si_1+\si_2<1$.
%

Set $z=x_1+ix_2$, $\pa_z =\ft (\pa_1-i\pa_2)$ and
\begin{align*}
L(-1)=\ft (P_1-iP_2), \qquad \Ld(-1)=\ft (P_1+iP_2)\\
L(1)=\ft (K_1+iK_2),\qquad \Ld(1)=\ft (K_1-iK_2)\\
L(0)=\ft (D+iJ_{12}),\qquad \Ld(0)=\ft (D-iJ_{12}).
\end{align*}
Then, $\{\Ld(n)\}_{n=-1,0,1}$ forms the $\mathrm{sl}_2$-triple, $[\Ld(n),\Ld(m)]=(n-m)\Ld(n+m)$, and 
\begin{align*}
C_{a,r;b,s}(L(-1)^n .1,L(-1)^m.1) &= (-1)^{m} r^{n+\dn}s^{m+\dn} \pa_z^{n+m} \frac{1}{\norm{z\z+x_3^2+\dots+x_d^2}^{\dn}}\Bigl|_{x=\eta e_1}\\
 &=(-1)^{n} r^{n+\dn}s^{m+\dn}\frac{(\dn)_{n+m}\z^{n+m}}{\norm{z\z+x_3^2+\dots+x_d^2}^{\dn+n+m}}\Bigl|_{x=\eta e_1}\\
  &=(-1)^{n} r^{n+\dn}s^{m+\dn}\frac{(\dn)_{n+m}}{\eta^{d-2+n+m}}
\end{align*}
On the other hand, using the Hermitian property of $(-,-)_H$ and Proposition \ref{prop_bilinear}, we obtain
\begin{align*}
\norm{L(-1)^n.1}_H^2 &=2^{-2n}((K_1-iK_2)^n \va,(K_1-iK_2)^n 1)_H\\
&=2^{-2n}(1, (P_1+iP_2)^n(K_1-iK_2)^n 1)_H\\
&=(1, \Ld(-1)^n\Ld(1)^n 1)_H = n! (\dn)_n,
\end{align*}
where we used $\Ld(0)1 =- \frac{d-2}{4}1$ and $\Ld(-1)1=0$.

Set $v_n = \frac{1}{\sqrt{n!(\dn)_n}} L(-1)^n.1 \in \Harm_{d,n}$. Then, $(v_n,v_m)_H=\delta_{n,m}$ and set
\begin{align*}
A_{n,m} = C_{a,r;b,s}(v_n,v_m) =
(-1)^{n} {\si_1^{\dn}\si_2^{\dn}}\frac{(\dn)_{n+m}}{\sqrt{n!m!(\dn)_n(\dn)_m}} \si_1^n \si_2^m.
\end{align*}
Set $r_n= \frac{(\dn)_n}{n!}$. Then, $\frac{(\dn)_{n+m}}{\sqrt{n!m!(\dn)_n(\dn)_m}} = \binom{n+m}{n}\frac{r_{n+m}}{\sqrt{r_nr_m}}$.
Assume $d \geq 4$. 
Since $r_n = \frac{\dn(\dn+1)\cdots (\dn+n-1)}{1\cdot 2 \cdots n}$ is monotone increasing in $n$,
we have $\frac{r_{n+m}}{\sqrt{r_n r_m}} \geq 1$, and hence
$|A_{n,m}|\geq \si_1^{\dn}\si_2^{\dn}\binom{n+m}{n}\si_1^n \si_2^m$.
Set $B_{n,m}= \binom{n+m}{n} \si_1^n \si_2^m$.
By the Cauchy--Schwarz inequality,
\begin{align*}
\si^N=(\si_1+\si_2)^N = \sum_{k \geq 0}^N B_{k,N-k}
\leq (N+1)^\ft \left(\sum_{k \geq 0}^N B_{k,N-k}^2\right)^\ft,
\end{align*}
and therefore
\begin{align*}
\sum_{n,m \geq 0} |B_{n,m}|^2
=\sum_{N \geq 0} \sum_{k=0}^N B_{k,N-k}^2
\geq \sum_{N \geq 0}\frac{\si^{2N}}{N+1}.
\end{align*}

Next, we consider the case $d=3$.
Using $(\ft)_k=\frac{(2k)!}{4^k k!}$, we compute
\begin{align*}
\sum_{n,m \geq 0} \frac{(\ft)_{n+m}^2\si_1^{2n}\si_2^{2m}}{n!m!(\ft)_n(\ft)_m}
&= \sum_{n,m \geq 0}
\frac{(2n+2m)!^2\si_1^{2n}\si_2^{2m}}{4^{n+m}(n+m)!^2 (2n)!(2m)!} \\
&= \sum_{N \geq 0}
4^{-N} \binom{2N}{N}\sum_{k=0}^N \binom{2N}{2k}\si_1^{2k}\si_2^{2N-2k} \\
&\geq \ft \sum_{N \geq 0}\si^{2N} 4^{-N}\binom{2N}{N}
\geq \ft \sum_{N \geq 0}\frac{\si^{2N}}{2N+1}.
\end{align*}
Here we used
$\sum_{k=0}^N \binom{2N}{2k}\si_1^{2k}\si_2^{2N-2k}
=\ft\bigl((\si_1+\si_2)^{2N}-(\si_1-\si_2)^{2N}\bigr)\geq \ft \si^{2N}$
and
$4^N= \sum_{k=0}^{2N} \binom{2N}{k} \leq (2N+1)\binom{2N}{N}$.
%
%
\end{proof}

\subsection{Completion of contraction operator}\label{sec_completion}
Recall $\CEc(2)$ consists of $g_1,g_2\in\fS_d$ with
\begin{align}
{g_1(\bD)} \cap {g_2(\bD)} =\emptyset \label{eq_B_disj}
\end{align}
and $\CE(2)$ consists of
\begin{align}
\overline{g_1(\bD)} \cap \overline{g_2(\bD)} =\emptyset. \label{eq_B_disj2}
\end{align}
For $(g_1,g_2)\in \CEc(2)$, define a bilinear map
\begin{align*}
B_{g_1,g_2}: K \otimes K \rightarrow \R
\end{align*}
by
\begin{align*}
B_{g_1,g_2}\bigl(K_{a},K_{b}\bigr)= \Om_{g_1}(a)^{\frac{d-2}{2}}\Om_{g_2}(b)^{\frac{d-2}{2}} \frac{1}{\|g_1a-g_2b\|^{d-2}}
\end{align*}
for $a,b \in \bD$.
By Lemma \ref{lem_evaluate_C}, we have:
\begin{lem}\label{lem_extension}
Let $a,b\in \bD$ and $1>r,s>0$ satisfy \eqref{eq_geom_sep}.
The restriction of $\hat{C}_{a,r;b,s}:\Hf \hotimes \Hf \rightarrow \R$ on the dense subspace $K \otimes K \subset \Hf \hotimes \Hf$ coincides with $B_{T_ar^D,T_bs^D}: K \otimes K \rightarrow \R$. In particular, there is a constant $M_{a,r;b,s}>0$ such that
\begin{align*}
|B_{T_ar^D,T_bs^D}(v)| \leq M_{a,r;b,s}\norm{v}_{\Hf\hotimes \Hf}
\end{align*}
for any $v \in K \otimes K$.
\end{lem}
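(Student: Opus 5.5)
The plan is to obtain Lemma \ref{lem_extension} directly from Lemma \ref{lem_evaluate_C} together with the boundedness in Theorem \ref{thm_bounded}. Everything reduces to a comparison on basis vectors, since both $\hat{C}_{a,r;b,s}\circ(i\otimes i)$ and $B_{T_ar^D,T_bs^D}$ are bilinear on $K\otimes K$.

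The key steps, in order, are as follows.

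First, I recall that $K=\bigoplus_{x\in\bD}\R K_x$ has basis $\{K_x\}$, so $K\otimes K$ has basis $\{K_p\otimes K_q\}_{p,q\in\bD}$. A linear functional on $K\otimes K$ is therefore determined by its values on these elementary tensors. The embedding $i:K\to\Hf$, $K_x\mapsto E_x$, is isometric (Proposition \ref{prop_reproducing_kernel}). Hence $i\otimes i:K\otimes K\to\Hf\hotimes\Hf$ is isometric for the tensor inner products, since these are products of the factor inner products. Its image is dense, because the algebraic tensor product of dense subspaces is dense in the Hilbert tensor product.

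Second, on basis elements Lemma \ref{lem_evaluate_C} gives
\[
\hat{C}_{a,r;b,s}(E_p,E_q)=r^{\frac{d-2}{2}}s^{\frac{d-2}{2}}\norm{T_ar^D(p)-T_bs^D(q)}^{-(d-2)}.
\]
The dilation and translation factors are $\Om_{T_ar^D}(p)=r$ and $\Om_{T_bs^D}(q)=s$, because translations have trivial conformal factor and $D(r)$ scales the metric by $r^2$. So this value equals $B_{T_ar^D,T_bs^D}(K_p,K_q)$ by definition. By linearity the two functionals agree on all of $K\otimes K$.

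Third, for the bound I take $M_{a,r;b,s}=\norm{\hat{C}_{a,r;b,s}}$, which is finite by Theorem \ref{thm_bounded}. Then
\[
|B(v)|=|\hat{C}((i\otimes i)v)|\le M\norm{(i\otimes i)v}=M\norm{v},
\]
where the last step uses the isometry of $i\otimes i$.

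There is essentially no obstacle. The only point needing care is that $\hat{C}$ restricted to $K\otimes K$ is a priori defined through the extension from $\Harm_d\otimes\Harm_d$, not directly. This is exactly what Lemma \ref{lem_evaluate_C} handles, via norm convergence of $\sum_N E_p^N$ and the continuity of $\hat{C}$. Beyond that, one only checks the conformal factors of $T_ar^D$ and $T_bs^D$ against the convention $\phi^*g=\Om_\phi^2 g$, which gives $\Om=r$ and $\Om=s$ respectively.
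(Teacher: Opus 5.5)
Your proposal is correct and follows the paper's route. The paper obtains the lemma directly from Lemma~\ref{lem_evaluate_C} by comparing the values on the elementary tensors $E_p\otimes E_q$, using $\Om_{T_ar^D}=r$ and $\Om_{T_bs^D}=s$. The bound then comes from the boundedness of $\hat{C}_{a,r;b,s}$ established in Theorem~\ref{thm_bounded}.
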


\begin{lem}
\label{lem_B_inv}
Let $(g_1,g_2) \in \CEc(2)$. Then, for any $f,h \in \fS_d$,
\begin{align*}
B_{hg_1,hg_2} = B_{g_1,g_2}
\end{align*}
and
\begin{align*}
B_{g_1f,g_2} = B_{g_1,g_2}\circ (\rho_K(f) \otimes \id_K),\qquad B_{g_1,g_2f} = B_{g_1,g_2} \circ (\id_K \otimes \rho_K(f))
\end{align*}
holds as linear maps $K\otimes K \rightarrow \R$.
\end{lem}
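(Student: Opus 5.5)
Since $K$ has basis $\{K_a\}_{a\in\bD}$ and all maps involved are linear, it suffices to check each identity on pairs of basis vectors $K_a\otimes K_b$. The whole argument reduces to two facts already available. The first is the cocycle/chain rule for conformal factors, $\Om_{\phi_1\phi_2}(x)=\Om_{\phi_1}(\phi_2(x))\,\Om_{\phi_2}(x)$; this is Lemma \ref{lem_norm_all_inv} with $\Om_g=j_g^{-1}$ from Proposition \ref{prop_Liouville_sphere}. The second is the distance identity \eqref{eq_identity_all} of Proposition \ref{prop_conf_identity}.

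For the first identity, take $a,b\in\bD$ and write $x=g_1a$, $y=g_2b$. Both points lie in $\bD$, and $x\neq y$ because $g_1(\bD)\cap g_2(\bD)=\emptyset$. By the cocycle rule,
\[
\Om_{hg_1}(a)^{\dn}\Om_{hg_2}(b)^{\dn}=\Om_h(x)^{\dn}\Om_h(y)^{\dn}\,\Om_{g_1}(a)^{\dn}\Om_{g_2}(b)^{\dn}.
\]
Applying \eqref{eq_identity_all} to $h\in\fS_d$ gives $\norm{hx-hy}^{d-2}=\Om_h(x)^{\dn}\Om_h(y)^{\dn}\norm{x-y}^{d-2}$. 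The factors of $\Om_h$ therefore cancel, and $B_{hg_1,hg_2}(K_a,K_b)=B_{g_1,g_2}(K_a,K_b)$. We also need $(hg_1,hg_2)\in\CEc(2)$ so that the left-hand side is defined. This holds because $h$ is injective on $\bD$, so $hg_1(\bD)$ and $hg_2(\bD)$ remain disjoint.

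For the second identity, $\rho_K(f)K_a=\Om_f(a)^{\dn}K_{f(a)}$. Hence
\[
B_{g_1,g_2}(\rho_K(f)K_a,K_b)=\Om_f(a)^{\dn}\,\Om_{g_1}(fa)^{\dn}\,\Om_{g_2}(b)^{\dn}\,\norm{g_1fa-g_2b}^{-(d-2)}.
\]
The cocycle rule combines the first two factors into $\Om_{g_1f}(a)^{\dn}$, which gives $B_{g_1f,g_2}(K_a,K_b)$. Again $(g_1f,g_2)\in\CEc(2)$, since $g_1f(\bD)\subset g_1(\bD)$. The third identity follows by the same computation in the second slot.

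The only step needing care is the use of \eqref{eq_identity_all} in the first identity. It is stated for $\phi\in\fS_d$ at points of $\bD$, and here $x=g_1a$ and $y=g_2b$ do lie in $\bD$, so it applies directly. One could instead use Lemma \ref{lem_norm_all_inv} on all of $\R^d$, but then the sign of $j_h$ would have to be tracked. Restricting to $\bD$, where $\Om_h>0$ and $h$ maps into $\bD$ without passing through $\infty$, avoids this, and the fractional powers $\Om^{\dn}$ are unambiguous there. No boundedness is needed, since these are identities of bilinear forms on the algebraic space $K$.
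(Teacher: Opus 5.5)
Your proposal is correct and follows the paper's proof essentially verbatim. You evaluate on basis vectors $K_a\otimes K_b$, cancel the factors $\Omega_h(g_1a)^{\frac{d-2}{2}}\Omega_h(g_2b)^{\frac{d-2}{2}}$ using \eqref{eq_identity_all} of Proposition~\ref{prop_conf_identity}, and use the chain rule $\Omega_{g_1f}(a)=\Omega_{g_1}(f(a))\,\Omega_f(a)$ for the two equivariance identities. Your checks that $(hg_1,hg_2)$ and $(g_1f,g_2)$ lie in $\CEc(2)$ are details the paper leaves implicit; your sign concern is moot, since $j_g>0$ automatically because $g$ preserves $\mathcal N_+$.
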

\begin{proof}
By Proposition \ref{prop_conf_identity}, we have
\begin{align*}
B_{hg_1,hg_2}(K_x,K_y) &= \Om_{hg_1}(x)^{\dn} \Om_{hg_2}(y)^{\dn}\frac{1}{ \norm{hg_1 x-hg_2 y}^{d-2}}\\
&= 
\Om_{hg_1}(x)^{\dn} \Om_{hg_2}(y)^{\dn}
\Om_{h}(g_1x)^{-\dn} \Om_{h}(g_2y)^{-\dn}\frac{1}{ \norm{g_1 x-g_2 y}^{d-2}}\\
&= \Om_{g_1}(x)^{\dn} \Om_{g_2}(y)^{\dn}\frac{1}{ \norm{g_1 x-g_2 y}^{d-2}}= B_{g_1,g_2}(K_x,K_y).
\end{align*}
By setting $fx = x'$, we have
\begin{align*}
B_{g_1f,g_2}(K_x,K_y) &=
\Om_{g_1f}(x)^{\dn} \Om_{g_2}(y)^{\dn}\frac{1}{\norm{g_1f x - g_2 y}^{d-2}}\\
&=\left(\Om_{g_1}(f(x))\Om_{f}(x)\right)^{\dn} \Om_{g_2}(y)^{\dn} \frac{1}{\norm{g_1x' - g_2 y}^{d-2}}\\
&=B_{g_1,g_2}(\Om_f(x)^{\dn}K_{x'},K_y)=B_{g_1,g_2}(\rho_K(f)K_{x},K_y).
\end{align*}
\end{proof}


By Theorem \ref{thm_bounded}, Lemma \ref{lem_extension} and Lemma \ref{lem_B_inv}, we have:
\begin{prop}\label{prop_B_bounded}
Let $(g_1,g_2) \in \CE(2)$. Then, there is a constant $M_{g_1,g_2}$ such that
\begin{align*}
|B_{g_1,g_2}(v)| \leq M_{g_1,g_2} \norm{v}_{\Hf\hotimes \Hf}
\end{align*}
for any $v \in K\otimes K$.
\end{prop}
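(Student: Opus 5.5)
The plan is to reduce a general pair $(g_1,g_2)\in\CE(2)$ to the special configuration $(T_ar^D,T_bs^D)$ handled by Lemma \ref{lem_extension}, using the equivariance in Lemma \ref{lem_B_inv}. Since $\rho_K(f)$ extends to a contraction on $\Hf$ (Proposition \ref{prop_contraction_extension}), the operator $\rho_K(f_1)\otimes\rho_K(f_2)$ is bounded on $\Hf\hotimes\Hf$ with norm at most $1$, and it maps $K\otimes K$ into $K\otimes K$. Hence any identity of the form $B_{g_1,g_2}=B_{T_ar^D,T_bs^D}\circ(\rho_K(f_1)\otimes\rho_K(f_2))$ on $K\otimes K$ gives the desired bound with $M_{g_1,g_2}=M_{a,r;b,s}$.

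To produce such a factorization, I first use the left invariance $B_{hg_1,hg_2}=B_{g_1,g_2}$. This lets me replace $(g_1,g_2)$ by $(hg_1,hg_2)$ for a suitable $h\in\fG_d$, or indeed by any $h\in\SO^+(d+1,1)$. The defining formula makes sense for the latter, and the computation in Lemma \ref{lem_B_inv} uses only the identity of Lemma \ref{lem_norm_all_inv}, provided the images avoid $\infty$. The compact sets $\overline{g_1(\bD)}$ and $\overline{g_2(\bD)}$ are disjoint closed round balls in $\R^d$, since Möbius maps send spheres to spheres. A standard fact is that two disjoint closed balls can be moved by a Möbius transformation to a pair of concentric balls, and hence, after a further dilation and translation, to two disjoint balls $B_{r'}(a')$ and $B_{s'}(b')$ inside $\bD$ with disjoint closures. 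More simply, I can choose $h\in\fG_d$ (and translations) so that both images lie well inside $\bD$. Concretely, it suffices that $\overline{hg_i(\bD)}\subset\bD$ remain disjoint closed balls $\overline{B_{r_i}(a_i)}$.

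Then $hg_i=T_{a_i}r_i^D k_i$ for some $k_i\in\fG_d$, because $hg_i$ and $T_{a_i}r_i^D$ have the same image ball, so they differ by an element preserving $\bD$. Here Proposition \ref{prop_decomposition} (1) gives this directly. Applying Lemma \ref{lem_B_inv} twice yields
\begin{align*}
B_{g_1,g_2}=B_{T_{a_1}r_1^D,T_{a_2}r_2^D}\circ(\rho_K(k_1)\otimes\rho_K(k_2)),
\end{align*}
where the $\rho_K(k_i)$ are even isometries by \eqref{eq_unitary_rho}. Combining this with Lemma \ref{lem_extension} finishes the proof.

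The main obstacle is the geometric normalization step: ensuring that some $h$ (preferably in $\fS_d$ or $\fG_d$, so that Lemma \ref{lem_B_inv} applies verbatim) moves both closed images into $\bD$ while keeping them disjoint with closures inside. My first attempt would be to take $h=D(\lambda)$ with $\lambda<1$, but images already inside $\bD$ need no move at all. Indeed, $g_i\in\fS_d$ already gives $g_i(\bD)\subset\bD$ as round balls, not half-spaces. The only issue is that a closure might touch $\partial\bD$, which \eqref{eq_geom_sep} permits, since only $B_r(a)\subset\bD$ is required there. So in fact $h=\id$ works, and the decomposition $g_i=T_{a_i}r_i^Dk_i$ follows from Proposition \ref{prop_decomposition} (1) when $g_i\notin\fG_d$. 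The case $g_i\in\fG_d$ cannot occur, since the closures of the two images would then intersect.
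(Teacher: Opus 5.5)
Your proposal is correct and is essentially the paper's argument. Once the normalization detour is dropped (as you conclude yourself, $h=\id$ suffices), the steps are the same: write $g_i=T_{a_i}r_i^D k_i$ with $k_i\in\fG_d$ by Proposition~\ref{prop_decomposition}(1), observe $g_i(\bD)=B_{r_i}(a_i)$ so that \eqref{eq_geom_sep} holds, and transfer the bound of Lemma~\ref{lem_extension} through Lemma~\ref{lem_B_inv}, using that the $\rho_K(k_i)$ are isometries. Your remark that $g_i\in\fG_d$ would force the closures to meet also justifies a step the paper states without comment.
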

\begin{proof}
Let $(g_1,g_2) \in \CE(2)$. Then, $g_i \notin \fG$. Hence, by Proposition \ref{prop_decomposition}, there are $a_i \in \bD$, $1>r_i>0$ and $h_i \in \fG$ such that $g_i = T_{a_i} r_i^D h_i$. 
Since $a_i$ is the center of $g_1(\bD)$ and $r_i$ is its radius, \eqref{eq_B_disj2} implies that the assumptions of Lemma \ref{lem_extension} are satisfied. Hence, by Lemma \ref{lem_extension} and Proposition~\ref{prop_unitary},
\begin{align*}
|B_{g_1,g_2}(v)| &= |B_{T_{a_1} r_1^D h_1,T_{a_2} r_2^D h_2}(v)| = |B_{T_{a_1} r_1^D,T_{a_2} r_2^D}((\rho_K(h_1)\otimes \rho_K(h_2))v)|\\
& \leq M_{a_1,r_1;a_2,r_2} \norm{(\rho_K(h_1)\otimes \rho_K(h_2))v)}_{\Hf\hotimes \Hf}
 = M_{a_1,r_1;a_2,r_2}\norm{v}_{\Hf\hotimes \Hf}.
\end{align*}
\end{proof}

Therefore, if $(g_1,g_2) \in \CE(2)$, then $B_{g_1,g_2}: K \otimes K \rightarrow \R$ extends onto a bounded linear map on its completion,
\begin{align*}
\hat{C}_{g_1,g_2}: \Hf \hotimes \Hf \rightarrow \R.
\end{align*}
\begin{thm}\label{thm_wick_contraction}
For any $(g_1,g_2) \in \CE(2)$, there is a unique bounded bilinear map $\hat{C}_{g_1,g_2}: \Hf \hotimes \Hf \rightarrow \R$ such that
\begin{align*}
\hat{C}_{g_1,g_2}(E_a\otimes E_b) = \Om_{g_1}(a)^{\frac{d-2}{2}}\Om_{g_2}(b)^{\frac{d-2}{2}} \frac{1}{\|g_1a-g_2b\|^{d-2}}.
\end{align*}
for any $a,b \in \bD$.
 Moreover, it satisfies
\begin{align*}
C_{hg_1,hg_2} = C_{g_1,g_2},\quad
C_{g_1f,g_2} = C_{g_1,g_2}(\rho(f) \otimes \id),\quad C_{g_1,g_2f} = C_{g_1,g_2} (\id \otimes \rho(f))
\end{align*}
for any $h,f \in \fS$ as bounded linear maps $\Hf \hotimes \Hf \rightarrow \R$
and
\begin{align}
C_{g_1,g_2} \circ (12) = C_{g_2,g_1},
\end{align}
where $(12): \Hf \hotimes \Hf \rightarrow \Hf \hotimes \Hf$ is the permutation.
\end{thm}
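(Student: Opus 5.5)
Existence and uniqueness come from Proposition~\ref{prop_B_bounded} together with density. The map $i:K\to\Hf$, $K_x\mapsto E_x$, is an injective isometry with dense image (Proposition~\ref{prop_reproducing_kernel}). Hence $K\otimes K$ is dense in $\Hf\hotimes\Hf$: the algebraic tensor product of dense subspaces is dense in the Hilbert tensor product.

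Proposition~\ref{prop_B_bounded} gives $|B_{g_1,g_2}(v)|\le M_{g_1,g_2}\norm{v}$ on $K\otimes K$. So $B_{g_1,g_2}$ extends uniquely by continuity to a bounded functional $\hat C_{g_1,g_2}$ on $\Hf\hotimes\Hf$. By definition of $B$ it takes the stated value on $E_a\otimes E_b$. Any other bounded map with these values agrees with it on $K\otimes K$, since the $E_a\otimes E_b$ span the image of $K\otimes K$, and therefore everywhere.

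For the equivariance identities I would first establish each one on the dense subspace $K\otimes K$ and then extend by continuity. Lemma~\ref{lem_B_inv} already gives, on $K\otimes K$,
\[
B_{hg_1,hg_2}=B_{g_1,g_2},\qquad
B_{g_1f,g_2}=B_{g_1,g_2}\circ(\rho_K(f)\otimes\id_K),\qquad
B_{g_1,g_2f}=B_{g_1,g_2}\circ(\id_K\otimes\rho_K(f)).
\]
These are meaningful because $(hg_1,hg_2)$, $(g_1f,g_2)$ and $(g_1,g_2f)$ again lie in $\CE(2)$: by Proposition~\ref{prop_CE_d_explicit}, elements of $\fS$ are injective on $\overline{\bD}$ and satisfy $f(\overline{\bD})\subset\overline{\bD}$, so disjointness of the closures $\overline{g_i(\bD)}$ is preserved. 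In particular both sides of each identity are bounded functionals on $\Hf\hotimes\Hf$.

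On the right-hand sides, $\rho(f)$ is the bounded (contractive) extension of $\rho_K(f)$ by Propositions~\ref{prop_contraction_extension} and \ref{prop_unitary}. Hence $\rho(f)\otimes\id$ is a bounded operator on $\Hf\hotimes\Hf$ that restricts to $\rho_K(f)\otimes\id_K$ on $K\otimes K$. Both sides of each identity are therefore bounded and agree on a dense subspace, so they coincide.

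The symmetry $C_{g_1,g_2}\circ(12)=C_{g_2,g_1}$ is checked on generators:
\[
B_{g_1,g_2}(K_b,K_a)=\Om_{g_1}(b)^{\dn}\,\Om_{g_2}(a)^{\dn}\,\norm{g_1b-g_2a}^{-(d-2)}=B_{g_2,g_1}(K_a,K_b),
\]
because $\norm{g_1b-g_2a}=\norm{g_2a-g_1b}$. The flip $(12)$ is unitary on $\Hf\hotimes\Hf$, so this identity also extends by density.

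The only genuinely nontrivial input is the boundedness estimate in Proposition~\ref{prop_B_bounded}, which is already proved. I expect the main point needing care is the preservation of $\CE(2)$ under pre- and post-composition by elements of $\fS$, together with the density of $K\otimes K$ in $\Hf\hotimes\Hf$. Both are routine given the results above.
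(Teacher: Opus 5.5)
Your proposal is correct and follows essentially the paper's route. The bounded extension comes from Proposition~\ref{prop_B_bounded} together with density of $K\otimes K$ in $\Hf\hotimes\Hf$, and the equivariance and symmetry identities come from Lemma~\ref{lem_B_inv} (and the direct check for the flip) on $K\otimes K$, extended by continuity. Your explicit verification that the composed pairs stay in $\CE(2)$ is a detail the paper leaves implicit.
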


Finally, we end this section by giving a necessary and sufficient condition for the existence of a bounded extension:
\begin{thm}\label{thm_unbounded}
Let $(g_1,g_2) \in \CEc(2)$. Then,  $B_{g_1,g_2}: K \otimes K \rightarrow \R$ extends onto a bounded linear map
$\Hf \hotimes \Hf \rightarrow \R$ if and only if $(g_1,g_2) \in \CE(2)$.
\end{thm}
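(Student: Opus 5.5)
The "if" direction is Proposition \ref{prop_B_bounded}: for $(g_1,g_2)\in\CE(2)$ the form $B_{g_1,g_2}$ is bounded on $K\otimes K$, so it extends. The work is in the converse. Take $(g_1,g_2)\in\CEc(2)\setminus\CE(2)$, so that $g_1(\bD)\cap g_2(\bD)=\emptyset$ but the closures meet. We will show that $B_{g_1,g_2}$ is unbounded.

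\emph{Step 1: reduce to translations and dilations.} First, neither $g_i$ lies in $\fG_d$: if $g_1\in\fG_d$ then $g_1(\bD)=\bD$, and disjointness would force $g_2(\bD)=\emptyset$. So Proposition \ref{prop_decomposition}(1) gives $g_i=T_{a_i}r_i^Dh_i$ with $h_i\in\fG_d$. By Lemma \ref{lem_B_inv},
\begin{align*}
B_{g_1,g_2}=B_{T_{a_1}r_1^D,T_{a_2}r_2^D}\circ(\rho_K(h_1)\otimes\rho_K(h_2)).
\end{align*}
Since $\rho_K(h_i)$ is an isometric bijection of $K$ by \eqref{eq_unitary_rho}, boundedness of $B_{g_1,g_2}$ is equivalent to boundedness of $B_{T_{a_1}r_1^D,T_{a_2}r_2^D}$. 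The balls $B_{r_1}(a_1)$ and $B_{r_2}(a_2)$ are disjoint with tangent closures, so $r_1+r_2=\norm{a_1-a_2}$, i.e.\ $\si=1$ in the notation of Proposition \ref{prop_unbounded}.

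\emph{Step 2: approximate from inside $\CE(2)$ by shrinking.} For $t\in(0,1)$ set $B^t=B_{T_{a_1}(tr_1)^D,T_{a_2}(tr_2)^D}$. This pair lies in $\CE(2)$, and Lemma \ref{lem_B_inv} gives
\begin{align*}
B^t=B\circ(\rho_K(t^D)\otimes\rho_K(t^D)).
\end{align*}
Since $\rho(t^D)$ is a contraction (Lemma \ref{lem_nuclear}), a bounded extension $\hat B$ of $B$ would satisfy $\norm{\hat{C}_{a_1,tr_1;a_2,tr_2}}\le\norm{\hat B}$ for every $t<1$. Here we use that $\hat{C}_{a_1,tr_1;a_2,tr_2}$ agrees with $B^t$ on $K\otimes K$ (Lemma \ref{lem_extension}), so the two have the same norm by density.

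\emph{Step 3: let $t\to1$.} The shrunk configuration has $\si_1=t r_1/\norm{a_1-a_2}$, $\si_2=t r_2/\norm{a_1-a_2}$ and $\si=t$, with $\si_1,\si_2$ bounded away from $0$. Proposition \ref{prop_unbounded} then bounds $\norm{\hat{C}_{a_1,tr_1;a_2,tr_2}}$ below by a constant times $\bigl(\sum_N t^{2N}/(N+1)\bigr)^{1/2}$ (or $\bigl(\sum_N t^{2N}/(2N+1)\bigr)^{1/2}$ when $d=3$). This behaves like $\sqrt{\log(1/(1-t))}$ and tends to infinity as $t\to1$, which contradicts the uniform bound from Step 2. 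Hence $B_{g_1,g_2}$ is unbounded.

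The main obstacle is getting the rescaling identity and the norm comparison exactly right. Two points need care: $T_a(tr)^D=T_ar^D\circ t^D$, and Proposition \ref{prop_unbounded} is stated for real $\Hf$, with complexification only used inside its proof. Both are mechanical once Lemma \ref{lem_B_inv} and the contraction property are in hand.
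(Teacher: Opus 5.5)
Your proof is correct and is essentially the paper's argument. You reduce via Proposition~\ref{prop_decomposition} and Lemma~\ref{lem_B_inv} to the pair $(T_{a_1}r_1^D,T_{a_2}r_2^D)$ with $r_1+r_2=\norm{a_1-a_2}$, shrink by $t^D$ using that $\rho(t^D)$ is a contraction to get a uniform bound on $\norm{\hat{C}_{a_1,tr_1;a_2,tr_2}}$, and contradict Proposition~\ref{prop_unbounded} as $t\to 1$. You are slightly more careful than the written proof in two places: you check that $g_i\notin\fG_d$, and you keep the prefactor $(\si_1\si_2)^{\frac{d-2}{2}}$, which stays bounded below as $t\to1$ and which the paper's final estimate omits.
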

\begin{proof}
By Proposition \ref{prop_B_bounded}, if $(g_1,g_2) \in \CE(2)$, then it admits the bounded linear extension.
Let $(g_1,g_2) \in \CEc(2) \setminus \CE(2)$. Assume that $B_{g_1,g_2}$ admits a bounded linear extension,
that is, there is $M>0$ such that $|B_{g_1,g_2}(v)| \leq M \norm{v}_K$ for any $v\in K\otimes K$.
Let $g_i = T_{a_i} r_i^D h_i$ be the decompositions in Proposition \ref{prop_decomposition}, which satisfies
\begin{align}
\frac{r_1+r_2}{\norm{a_1-a_2}}=1 \label{eq_unbounded_sep}
\end{align}
by $(g_1,g_2) \in \CEc(2) \setminus \CE(2)$.
By the proof of Proposition \ref{prop_B_bounded}, we have $|B_{T_{a_1}r_1^D, T_{a_2}r_2^D}(v)| \leq M\norm{v}_K$.
Hence, for any $1\geq s >0$ and $v \in K\otimes K$, we obtain
\begin{align*}
|B_{T_{a_1}(r_1s)^D, T_{a_2}(r_2s)^D}(v)| = |B_{T_{a_1}r_1^D, T_{a_2}r_2^D}((\rho_K(s^D)\otimes \rho_K(s^D)) v)| \leq M \norm{(\rho_K(s^D)\otimes \rho_K(s^D)) v}_K.
\end{align*}
On the other hand, if $s<1$, then $(T_{a_1}(r_1s)^D, T_{a_2}(r_2s)^D) \in \CE(2)$, thus
$B_{T_{a_1}(r_1s)^D, T_{a_2}(r_2s)^D}$ admits a bounded extension $\hat{C}_{a_1,r_1s; a_2,r_2s}$ in Theorem \ref{thm_bounded}.
Moreover, since $\rho_K(s^D)\otimes \rho_K(s^D)$ also admits a bounded extension
$\rho(s^D)\otimes \rho(s^D)$, it follows that for any $v\in \Hf\hotimes \Hf$
\begin{align*}
|\hat{C}_{a_1,r_1s; a_2,r_2s}(v)| \leq M \norm{(\rho(s^D)\otimes \rho(s^D)) v} \leq M\norm{v}.
\end{align*}
By Proposition \ref{prop_unbounded} and \eqref{eq_unbounded_sep}, for $d \geq 4$, there exists $v_0 \in \Hf \hotimes \Hf$ such that
\begin{align*}
|\hat{C}_{a_1,r_1s; a_2,r_2s}(v_0)| \geq \left(\sum_{N \geq 0}\frac{s^{2N}}{N+1}\right)^\ft
\end{align*}
and $\norm{v_0}=1$. Hence, $M\geq \left(\sum_{N \geq 0}\frac{s^{2N}}{N+1}\right)^\ft$ for any $1>s$, which yields a contradiction. The case $d=3$ leads to a similar contradiction.
Thus, $B_{g_1,g_2}$ admits no bounded extension.
\end{proof}

\subsection{Construction of conformally flat $d$-disk algebras}\label{sec_construction}
In this section, we construct a $\CE$-algebra using Theorem \ref{thm_wick_contraction}.
We use the notation introduced in Section \ref{sec_ind_Hilb}; in particular,
$\bs^p$, $\Sym^p(\Hf)$, $\Hf^{\hotimes k}$, and $\Sym(\Hf)$ are as defined in \eqref{eq_sym_proj}, \eqref{eq_def_Hk}, and \eqref{eq_def_Sym}.
Set $V=\Sym(\Hf)$.

Let $(g_1,g_2) \in \CE(2)$.
In Section \ref{sec_completion}, we defined a bounded linear map
$\hat{C}_{g_1,g_2}:\Hf\hotimes\Hf \rightarrow \R$.
Using this operator, for $p,q \geq 0$ we define
\begin{align}
\hat{C}_{g_1,g_2}:\Hf^{\hotimes p} \hotimes \Hf^{\hotimes q}
\rightarrow
\Hf^{\hotimes p-1} \hotimes \Hf^{\hotimes q-1}
\label{eq_hat_C_def}
\end{align}
as follows.
If $p,q \geq 1$, then on the algebraic tensor product
$\Hf^{\otimes p}\otimes \Hf^{\otimes q}$ we set
\begin{align*}
\hat{C}_{g_1,g_2}&(v_1 \otimes \dots \otimes v_p,\,
w_1 \otimes \dots \otimes w_q)\\
&=
\sum_{\substack{p \geq i \geq 1\\ q \geq j \geq 1}}
\hat{C}_{g_1,g_2}(v_i,w_j) (v_1 \otimes \dots \hat{v_i} \dots \otimes v_p)
\otimes
(w_1 \otimes \dots \hat{w_j} \dots \otimes w_q),
\end{align*}
for any $v_1 \otimes \dots \otimes v_p \in \Hf^{\otimes p}$ and
$w_1 \otimes \dots \otimes w_q \in \Hf^{\otimes q}$.
Here $v_1 \otimes \dots \hat{v_i} \dots \otimes v_p$ denotes the vector in $\Hf^{\otimes (p-1)}$ obtained by omitting $v_i$.
By Theorem \ref{thm_wick_contraction}, this operator is bounded and therefore induces
a bounded linear map on the completions, as in \eqref{eq_hat_C_def}.
If $p=0$ or $q=0$, we set $\hat{C}_{g_1,g_2}=0$.

\begin{lem}\label{lem_wick_symmetric}
As a bounded linear operator
$\Hf^{\hotimes p} \hotimes \Hf^{\hotimes q}
\rightarrow
\Hf^{\hotimes p-1} \hotimes \Hf^{\hotimes q-1}$,
the following identity holds:
\begin{align*}
\hat{C}_{g_1,g_2} \circ (\bs^p \hotimes \bs^q)
=
(\bs^{p-1}\hotimes \bs^{q-1}) \circ \hat{C}_{g_1,g_2}.
\end{align*}
In particular, the image of the restriction of $\hat{C}_{g_1,g_2}$ to
$\Sym^p(\Hf) \hotimes \Sym^q(\Hf)$ is contained in
$\Sym^{p-1}(\Hf) \hotimes \Sym^{q-1}(\Hf)
\subset \Hf^{\hotimes p-1}\hotimes \Hf^{\hotimes q-1}$.
\end{lem}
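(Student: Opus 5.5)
Both sides of the identity are bounded linear operators, and the algebraic tensor product $\Hf^{\otimes p}\otimes \Hf^{\otimes q}$ is dense in $\Hf^{\hotimes p}\hotimes\Hf^{\hotimes q}$. It therefore suffices to check the identity on simple tensors $v_1\otimes\cdots\otimes v_p\otimes w_1\otimes\cdots\otimes w_q$. Boundedness of $\bs^p$, $\bs^q$ is clear, since they are orthogonal projections. Boundedness of $\hat C_{g_1,g_2}$ in \eqref{eq_hat_C_def} was asserted from Theorem \ref{thm_wick_contraction}. For that step I would write $\hat C_{g_1,g_2}$ as a finite sum over $(i,j)$ of the operators that permute $v_i$ and $w_j$ to the front and then apply $\hat C_{g_1,g_2}\hotimes \id$. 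Each of these is bounded as a composition of a unitary permutation with the tensor product of a bounded map and an identity.

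The key computation is on simple tensors. Expanding the left-hand side gives
\[
\frac{1}{p!\,q!}\sum_{\sigma\in S_p,\tau\in S_q}\sum_{i,j} \hat C_{g_1,g_2}(v_{\sigma(i)},w_{\tau(j)})\,(v_{\sigma(1)}\otimes\cdots\widehat{v_{\sigma(i)}}\cdots\otimes v_{\sigma(p)})\otimes(w_{\tau(1)}\otimes\cdots\widehat{w_{\tau(j)}}\cdots\otimes w_{\tau(q)}).
\]
I would reindex by $k=\sigma(i)$ and $l=\tau(j)$. For fixed $k$, the pairs $(\sigma,i)$ with $\sigma(i)=k$ are in bijection with pairs $(i,\sigma')$, where $\sigma'$ is a bijection from $[p]\setminus\{i\}$ onto $[p]\setminus\{k\}$. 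The ordered remaining word runs over all orderings of $\{v_m\}_{m\neq k}$, each one reached exactly $p$ times (once for each $i$). The $v$-part then becomes $p\cdot (p-1)!\,\bs^{p-1}(v_1\otimes\cdots\hat v_k\cdots\otimes v_p)$, and the same argument applies to the $w$-part. The factors $p\,(p-1)!/p!=1$ cancel, and the result is exactly $(\bs^{p-1}\hotimes\bs^{q-1})\hat C_{g_1,g_2}(v\otimes w)$. When $p=0$ or $q=0$ both sides vanish by definition, so there is nothing to check.

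The second statement follows at once. On $\Sym^p\Hf\hotimes\Sym^q\Hf$, the operator $\bs^p\hotimes\bs^q$ acts as the identity, so $\hat C_{g_1,g_2}$ agrees there with $(\bs^{p-1}\hotimes\bs^{q-1})\hat C_{g_1,g_2}$. Its image therefore lies in the range of $\bs^{p-1}\hotimes\bs^{q-1}$, which is $\Sym^{p-1}\Hf\hotimes\Sym^{q-1}\Hf$. The only real work is the reindexing count above, which is combinatorial bookkeeping; the analytic input, namely boundedness and density, is already available from Theorem \ref{thm_wick_contraction}.
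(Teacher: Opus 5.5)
Your proof is correct and follows the same route as the paper. You verify the identity on simple tensors by expanding both symmetrizers and reindexing with $k=\sigma(i)$, $l=\tau(j)$, so that the factors $p\,(p-1)!/p!$ and $q\,(q-1)!/q!$ cancel, and then extend by density and boundedness; your explicit argument that $\hat C_{g_1,g_2}$ is bounded on $\Hf^{\hotimes p}\hotimes\Hf^{\hotimes q}$ (via factor permutations composed with $\hat C_{g_1,g_2}\hotimes\mathrm{id}$) fills in a step the paper only asserts.
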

%
\begin{proof}
Let  $v_1 \otimes \dots \otimes v_p \in \Hf^{\otimes p}$ and
$w_1 \otimes \dots \otimes w_q \in \Hf^{\otimes q}$. Since
\begin{align*}
&\hat{C}_{g_1,g_2} \circ (\bs^p \hotimes \bs^q)(v_1 \otimes \dots \otimes v_p, w_1 \otimes \dots \otimes w_q)\\
&= \frac{1}{p!q!}\sum_{\substack{\si \in S_p,\\ \tau \in S_q}}\sum_{i \in [p], j \in [q]}
\hat{C}_{g_1,g_2}(v_{\si(i)},w_{\tau(j)})
(v_{\si(1)} \otimes \dots \hat{v_{\si(i)}} \dots \otimes v_{\si(p)}) \otimes
(w_{\tau(1)} \otimes \dots \hat{w_{\tau(j)}} \dots \otimes w_{\tau(q)})\\
&= \frac{pq}{p!q!}\sum_{k \in [p], l \in [q]} \sum_{\substack{\si \in S_p,\si(k)=k\\ \tau \in S_q,\tau(l)=l}}
\hat{C}_{g_1,g_2}(v_{k},w_{l})
(v_{\si(1)} \otimes \dots \hat{v_{k}} \dots \otimes v_{\si(p)}) \otimes
(w_{\tau(1)} \otimes \dots \hat{w_{l}} \dots \otimes w_{\tau(q)})\\
&=(\bs^{p-1} \hotimes \bs^{q-1})\circ \hat{C}_{g_1,g_2}(v_1 \otimes \dots \otimes v_p, w_1 \otimes \dots \otimes w_q).
\end{align*}
Hence, the assertion holds.
\end{proof}
%
%

Since $\hat{C}_{g_1,g_2}$ is a degree-lowering operator, for each $p,q \geq 0$ the action of
\begin{align*}
\exp(\hat{C}_{g_1,g_2}) = \sum_{k \geq 0} \frac{1}{k!} \hat{C}_{g_1,g_2}^k
\end{align*}
on $\Sym^p \Hf \otimes \Sym^q \Hf$ is a finite sum.
Therefore, $\exp(\hat{C}_{g_1,g_2}):V \otimes V \rightarrow V$ is well defined, and its restriction
to $\Sym^p \Hf \otimes \Sym^q H$ is a bounded linear operator.


%

%

Let $g \in \CE(1)=\fS$. Define a linear map $\rho_1(g) :V \rightarrow V$ by
\begin{align*}
\rho_1(g)|_{\Sym^p \Hf} = \rho(g) \otimes \dots \otimes \rho(g).
\end{align*}
for $p \geq 1$ and $\rho_1(g)|_{\Sym^0 \Hf} = \id$.

Let $(g_1,\dots,g_n) \in \CE(n)$ for $n \geq 2$.
For $i,j \in \{1,\dots,n\}$ with $i\neq j$, we denote by
$\exp(\hat{C}_{g_i,g_j}^{i,j})$ the operator on $V^{\otimes n}$ obtained by letting
$\exp(\hat{C}_{g_i,g_j})$ act on the $i$-th and $j$-th tensor factors.
In this situation, the operators $\exp(\hat{C}_{g_i,g_j}^{i,j})$ and
$\exp(\hat{C}_{g_k,g_l}^{k,l})$ commute with each other.
The operator
$\exp(\sum_{n \geq i > j \geq 1} \hat{C}_{g_i,g_j}^{i,j})$
is the composition of all these mutually commuting operators.
We also define $\bs:V^{\otimes n} \rightarrow V$ so that its restriction to
$\Sym^{p_1}\Hf \otimes \cdots \Sym^{p_n} \Hf$ coincides with the restriction of $\bs^{p_1+\dots+p_n}:\Hf^{p_1+\dots+p_n} \rightarrow \Sym^{p_1+\dots+p_n}(\Hf)$.

We define a linear map
$\m_{g_1,\dots,g_n}: V^{\otimes n} \rightarrow V$
by the following composition:
\begin{align*}
V^{\otimes n}
\overset{\exp(\sum_{n \geq i > j \geq 1} \hat{C}_{g_i,g_j}^{i,j})}{\longrightarrow}
V^{\otimes n}
\overset{\rho_1(g_1)\otimes \dots \otimes \rho_1(g_n)}{\longrightarrow}
V^{\otimes n}
\overset{\bs}{\longrightarrow}
V.
\end{align*}
Finally, for $*=(\emptyset \rightarrow \bD) \in \CE(0)$, we define
\begin{align*}
\rho_0(*):\R \rightarrow V,\quad 1 \mapsto \va.
\end{align*}

%
%
\begin{thm}\label{thm_CF}
For any $d \geq 3$, $(V,\m,\va)$ is a $\CE$-algebra in $\Vect$ with the Hilbert space filtration $\Hf^k = \bigoplus_{p = 0}^k \Sym^p \Hf$. Moreover, it satisfies the conditions (U) and (D) in Definition \ref{def_semigroup_nice}
and $\dim V_0=1$.
\end{thm}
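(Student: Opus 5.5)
Most of this is an algebraic Wick-type calculation; the real work is the operadic composition axiom. I would split the proof into four parts. First, the Hilbert-filtration axioms of Definition \ref{def_ind_CF}. Second, the monoid conditions (U), (D) and $\dim V_0=1$. Third, the unit and equivariance axioms. Fourth, the composition law \eqref{eq_def_of_operad}.

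The first two parts are essentially bookkeeping. Each $\rho_1(g)$ preserves $\Sym^p\Hf$ and is a contraction by Proposition \ref{prop_unitary}. The operator $\exp(\hat C)$ lowers total degree and is a finite sum of bounded maps by Theorem \ref{thm_wick_contraction} and Lemma \ref{lem_wick_symmetric}. Hence $\m_{g_1,\dots,g_n}$ sends $\Hf^{k_1}\otimes\cdots\otimes\Hf^{k_n}$ boundedly into $\Hf^{K}$ with $K=k_1+\cdots+k_n$.

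For (U) and (D), tensor powers of an orthogonal strongly continuous representation are again orthogonal and strongly continuous. Dilations act on $\Sym^p\Hf$ as the restriction of $D(r)^{\otimes p}$, which is self-adjoint, contractive and Hilbert--Schmidt. The Hilbert--Schmidt norm is bounded by the $p$-th power of the trace of $D(r^2)$ from Proposition \ref{prop_dilation}. Summing over $p\le k$ and letting $k\to\infty$ gives (D-5), since $\mathrm{tr}D(r)<1$ for small $r$. For larger $r$, I would use the generating function $\prod_n(1-r^{n+\dn})^{-A_{d,n}}$ of $\Sym$, which is finite for $r<1$. The weight-$0$ space is $\Sym^0\Hf=\R\va$, because every vector in $\Sym^p\Hf$ with $p\ge1$ has weight at least $p\dn>0$. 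So $\dim V_0=1$.

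For the third part, $\rho_1(\id)=\id$ is immediate. $S_n$-equivariance follows from $C_{g_1,g_2}\circ(12)=C_{g_2,g_1}$ and the symmetry of $\bs$. The vacuum axiom holds because $\hat C$ vanishes whenever one factor is in $\Sym^0$.

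For the fourth part, I would compare both sides on $\Sym^{p_1}\Hf\otimes\cdots$. By continuity it suffices to check on products of vectors $E_{a}$. These span a dense subspace of $\Hf$ by Proposition \ref{prop_reproducing_kernel}, and symmetric products of them are dense in each $\Sym^p\Hf$. Write $\phi=(g_1,\dots,g_n)$ and $\psi=(f_1,\dots,f_m)$.

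\begin{enumerate}
\item Apply $\m_\phi\circ_i\m_\psi$: first contract among the $f$'s, then apply $\rho(f_k)$ and symmetrize; then contract the $i$-th slot with the other slots using $\hat C_{g_i,g_j}$, then apply $\rho(g_i)$.
\item On the symmetric product of the $\rho(f_k)$-images, $\hat C_{g_i,g_j}$ acts as a derivation. It distributes over the constituent factors, giving $\sum_k \hat C_{g_i,g_j}(\rho(f_k)\cdot,\cdot)$.
\item By Theorem \ref{thm_wick_contraction}, $\hat C_{g_i,g_j}(\rho(f_k)\cdot,\cdot)=\hat C_{g_if_k,g_j}$.
\item By the same theorem, $\hat C_{f_k,f_l}=\hat C_{g_if_k,g_if_l}$, so the internal contractions already carry the labels of the composite configuration.
\end{enumerate}

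The composite then becomes $\exp\bigl(\sum\hat C\bigr)$ over all pairs of the composite configuration $\phi\circ_i\psi$. Here I use the combinatorial identity that $\exp$ of a derivation-type contraction factors as a product of the exponentials over the pairs. This is the usual Wick theorem: contractions applied after symmetrization equal sums over disjoint pairings.

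The main obstacle is making this Wick combinatorics rigorous with the symmetrizer $\bs$ interleaved with the contractions. I would handle it with generating vectors. On $E_a^{\otimes p}$ the contraction $\hat C_{g_1,g_2}(E_a,E_b)$ is an explicit scalar, so both sides reduce to identical sums over partial matchings of points. Each matching is weighted by the product of $\Om^{\dn}\|g x-g' y\|^{2-d}$. Lemma \ref{lem_B_inv} makes these weights agree term by term. Polarization then gives the identity on all symmetric tensors.
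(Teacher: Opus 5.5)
Your proposal is correct and follows essentially the paper's route. The filtration axioms, (U), (D) (via the bosonic trace $\prod_n(1-r^{n+\dn})^{-A_{d,n}}$) and $\dim V_0=1$ are checked directly. The composition law comes from commuting $\hat C$ past $\bs$ (Lemma~\ref{lem_wick_symmetric}) and relabelling contractions with the covariance identities of Theorem~\ref{thm_wick_contraction}, which is exactly the paper's chain of equalities.

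One small caution concerns your optional kernel-vector check: you cannot polarize from $E_a^{\otimes p}$ alone. The span of $\{E_a^{\otimes p}\}_{a\in\bD}$ is not dense in $\Sym^p\Hf$ for $p\geq 2$; for example, $x_1\cdot x_1-x_2\cdot x_2-(x_1^2-x_2^2)\cdot 1$ pairs to zero with every $E_a\otimes E_a$. Instead, run the same matching computation on $\bs(E_{a_1}\otimes\cdots\otimes E_{a_p})$ with independent points, which your density remark already justifies.
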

\begin{proof}
We first verify Definition \ref{def_operad_algebra}.
The $S_n$-invariance is clear. The equality $\rho_1(\id_\bD)=\id_V$ also follows from the definition.
Let $(g_1,\dots,g_n,g_{n+1}) \in \CE(n+1)$ and $(h_1,\dots,h_m) \in \CE(m)$.
For simplicity, we write $\hat{C}_{g_i,g_j}^{i,j}$ as $\hat{C}_{g_i,g_j}$, and we write $\rho_1(g_i)$ simply as $g_i$.
The case $m=0$ in which \eqref{eq_def_of_operad} holds is clear, since $\hat{C}$ and $\rho_1$ act trivially on $\va$.
If $m \geq 1$, then by Theorem \ref{thm_wick_contraction} we have
\begin{align*}
&\m_{g_1,\dots,g_n,g_{n+1}h_1,\dots,g_{n+1}h_m}\\
&=\bs\left((g_1 \otimes \dots \otimes g_n \otimes g_{n+1}h_1 \otimes \dots \otimes g_{n+1} h_m)
e^{\sum \hat{C}_{g_i,g_j}}
e^{\sum \hat{C}_{g_i,g_{n+1}h_j}}
e^{\sum \hat{C}_{g_{n+1}h_i,g_{n+1}h_j}}\right)\\
&=\bs\left((g_1 \otimes \dots \otimes g_n \otimes g_{n+1}h_1 \otimes \dots \otimes g_{n+1} h_m)
e^{\sum \hat{C}_{g_i,g_j}}
e^{\sum \hat{C}_{g_i,g_{n+1}}(\id\otimes h_j)}
e^{\sum \hat{C}_{h_i,h_j}}\right)\\
&=\bs\left((g_1 \otimes \dots \otimes g_n \otimes g_{n+1} \otimes \dots \otimes g_{n+1})
(\id \otimes h_1 \otimes \dots \otimes h_m)
e^{\sum \hat{C}_{g_i,g_j}}
e^{\sum \hat{C}_{g_i,g_{n+1}}(\id\otimes h_j)}
e^{\sum \hat{C}_{h_i,h_j}}\right)\\
&=\bs\left((g_1 \otimes \dots \otimes g_n \otimes g_{n+1} \otimes \dots \otimes g_{n+1})
e^{\sum \hat{C}_{g_i,g_j}}
e^{\sum \hat{C}_{g_i,g_{n+1}}}
(\id \otimes h_1 \otimes \dots \otimes h_m)
e^{\sum \hat{C}_{h_i,h_j}}\right)\\
&=\bs\left((g_1 \otimes \dots \otimes g_n \otimes g_{n+1} \otimes \dots \otimes g_{n+1})
e^{\sum \hat{C}_{g_i,g_j}}
e^{\sum \hat{C}_{g_i,g_{n+1}}}\bs
(\id \otimes h_1 \otimes \dots \otimes h_m)
e^{\sum \hat{C}_{h_i,h_j}}\right)\\
&=\rho_{g_1,\dots,g_n,g_{n+1}}\circ_{n+1} \rho_{h_1,\dots,h_m}.
\end{align*}
Here we have used the fact that, for a linear operator symmetric in its inputs, one may insert the symmetrization operator $\bs$.
By Proposition \ref{prop_unitary} and Proposition \ref{prop_dilation}, conditions (D-1), (D-2), (D-3), and (U) are clear.
By Proposition \ref{prop_dilation},
\begin{align*}
\mathrm{tr}|_{V} \rho(q^D)
= \prod_{n \geq 0} \frac{1}{(1-q^{\dn+n})^{\dim \Harm_{d,n}}}\Bigl|_{|q|<1}
\end{align*}
converges absolutely for $|q|<1$, since $\dim \Harm_{d,n}$ is a polynomial of degree $d-2$ in $n$.
Therefore, conditions (D-4) and (D-5) follow.
%
\end{proof}

\begin{rem}
It is shown in \cite{Mvertex}, using the results of Section \ref{sec_conti_state}, that the $(V,\rho,\va_V)$ constructed here
is a simple $\CE$-algebra.
\end{rem}

Define a linear isomorphism $\Psi:V \rightarrow V$ by
\begin{align*}
\Psi|_{\Sym^p \Hf} = \sqrt{p!}\,\id_{\Sym^p \Hf}.
\end{align*}
Then $\Psi|_{H^k}$ is a bounded linear map for each $k \geq 0$, while $\Psi$ itself is not bounded.
Using this linear isomorphism, we can endow $V$ with a $\CE$-algebra structure by setting
\begin{align}
\rho_{g_1,\dots,g_n}^\Psi
= \Psi \circ \rho_{g_1,\dots,g_n} \circ (\Psi^{-1}\otimes \cdots \otimes \Psi^{-1}).\label{eq_sqrt}
\end{align}
The $\CE$-algebra $(V,\rho^\Psi,\va)$ is of course isomorphic to the original one.
Since $\rho_1$ preserves degree, we have $\rho_1=\rho_1^\Psi$, and hence conditions (U) and (D) are still satisfied.
On the other hand, for $n\geq 2$, the operations $\rho_n^\Psi$ differ from $\rho_n$.

We shall regard the normalized products $\rho^\Psi=(\rho_n^\Psi)_{n\geq 0}$ as
the $\CE$-algebra structure associated with the free scalar field. 
Thus, in the rest of this paper, the $\CE$-algebra constructed in this section means
$(\Sym\Hf,\rho^\Psi)$.

\begin{rem}\label{rem_Hermite}
The unnormalized products $\rho_n$ are useful in the construction, but the
Hermitian properties expected from the scalar field require the normalized
products $\rho_n^\Psi$. The factor appearing in \eqref{eq_sqrt} is the usual
normalization in the Fock space realization of a Hermitian scalar field, and also
arises naturally from Gaussian measures on Hilbert spaces; see, for example,
\cite{Arai,Glimm-Jaffe,Janson}.
In dimension two, the same normalization appears in the comparison with the
affine Heisenberg vertex operator algebra. Namely, after identifying the unitary
affine Heisenberg VOA with the symmetric algebra of the Bergman space by the
isometry, $\rho_2^\Psi$ is identified with the
radius-regularized vertex operator
of the affine Heisenberg VOA \cite{MBergman}.
\end{rem}
%
%
%
%

\subsection{Failure of the Hilbert space completion}\label{sec_unbounded}
Let $d\geq 3$.
We now show that the $\CE$-algebra $(\Sym \Hf,\rho^\Psi)$ does not extend to the natural Hilbert space completion.
Let
\begin{align*}
\mathcal{F}_H = \widehat{\bigoplus}_{p\geq 0} \Sym^p \Hf
\end{align*}
be the direct sum as Hilbert spaces, which is called the \emph{Fock space} in axiomatic QFT \cite{SW,Glimm-Jaffe}.
Then $V$ is naturally a dense subspace of $\mathcal{F}_H$, and $\mathcal{F}_\Hf$ is the Hilbert space completion of $V$.
In this section, we derive a contradiction under the assumption that, for every $n \geq 1$, the product
$\rho_n^\Psi:\CE(n) \rightarrow \mathrm{Hom}_{\Ind}(V^{\otimes n},V)$
is bounded and extends to a bounded linear operator on the Hilbert space completion,
$\mathcal{F}_\Hf^{\hat{\otimes}n}\rightarrow \mathcal{F}_\Hf$.

We first prove the following elementary lemma.
\begin{lem}\label{app_inequality_fail}
For $d \geq 3$, there exist $N \geq 2$, $a_i \in \bD$, and $1>r_i>0$ ($i=1,\dots,N$) satisfying
\begin{align}
\begin{split}
B_{r_i}(a_i) \subset B_1(0)\qquad \text{ for } i=1,\dots,N,\\
\overline{B_{r_i}(a_i)}\cap \overline{B_{r_j}(a_j)}=\emptyset\qquad \text{ for } i\neq j,
\end{split}
\label{app_geom_sep}
\end{align}
such that the following inequality holds:
\begin{align}
\sum_{i \neq j}\frac{r_i^{\frac{d-2}{2}}r_j^{\frac{d-2}{2}}}{\norm{a_{i}-a_{j}}^{d-2}}
>N.
\end{align}
\end{lem}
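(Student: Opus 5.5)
Our plan is to use equal, tightly packed balls arranged on a lattice inside the unit disk and compare the pair sum with $N$. Fix a small $\epsilon>0$, set $r=\epsilon/2\cdot(1-\delta)$ for a fixed $\delta\in(0,1)$ (say $\delta=1/2$), and take as centers $a_i$ all points of the lattice $\epsilon\mathbb{Z}^d$ lying in $B_{1/2}(0)$. Then $B_r(a_i)\subset B_1(0)$ as soon as $\epsilon<1/2$, and distinct centers are at distance at least $\epsilon>2r$, so the closed balls are pairwise disjoint. This gives \eqref{app_geom_sep}. The number of balls is $N\asymp c_d\,\epsilon^{-d}$ with $c_d>0$ the volume of $B_{1/2}(0)$.

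Next we estimate the left-hand side. Each term equals $r^{d-2}/\norm{a_i-a_j}^{d-2}$, and since all centers lie in $B_{1/2}(0)$ we have $\norm{a_i-a_j}\le 1$. Hence every pair contributes at least $r^{d-2}\asymp \epsilon^{d-2}$. There are $N(N-1)$ ordered pairs, so
\begin{align*}
\sum_{i\neq j}\frac{r^{d-2}}{\norm{a_i-a_j}^{d-2}} \;\geq\; N(N-1)\,r^{d-2}\;\asymp\; N\cdot \epsilon^{-d}\epsilon^{d-2}=N\,\epsilon^{-2}.
\end{align*}
More precisely, the ratio of the sum to $N$ is at least $(N-1)r^{d-2}\geq C_d\,\epsilon^{-d}\cdot\epsilon^{d-2}=C_d\epsilon^{-2}$ for some constant $C_d>0$ depending only on $d$. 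This tends to infinity as $\epsilon\to0$, so for $\epsilon$ small the sum exceeds $N$, and $N\geq 2$ holds automatically.

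The only step needing care is the lattice count $N\geq c\,\epsilon^{-d}$ for the points of $\epsilon\mathbb{Z}^d$ in $B_{1/2}(0)$. This is a standard lower bound: every cube of side $\epsilon$ centered at a lattice point of $B_{1/2-\sqrt d\epsilon}(0)$ lies in $B_{1/2}(0)$, and these cubes cover $B_{1/2-\sqrt d\epsilon}(0)$. The argument is otherwise elementary.

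This construction also shows why $d\ge3$ matters. The gain comes from the factor $r^{d-2}\cdot N\sim\epsilon^{-2}$, which uses that the packing density $N r^d$ stays bounded below while the exponent is $d-2<d$. No Grunsky-type cancellation is available here, since every term is positive. For $d=2$ the terms become $r^0=1$, and the statement is not the relevant one, consistent with Remark~\ref{rem_app_2d}.
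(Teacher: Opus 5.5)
Your proof is correct and is essentially the paper's: both place $N\sim\epsilon^{-d}$ equal balls of radius $\sim\epsilon$ on a cubic lattice, so the pair sum is of order $N\epsilon^{-2}$. The paper uses the full grid $\{0,\dots,L-1\}^d$, which makes $N=L^d$ exact, and lower-bounds the sum by counting pairs whose difference lies in $\{1,\dots,K\}^d$; your bound of every term by the diameter, giving the ratio $(N-1)r^{d-2}$, is a bit simpler.

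One small fix is needed in your lattice count. The cubes centered at lattice points of $B_{1/2-\sqrt d\,\epsilon}(0)$ need not cover that same ball near its boundary. Instead, the cubes centered at all lattice points of $B_{1/2}(0)$ cover $B_{1/2-\sqrt d\,\epsilon/2}(0)$, which gives $N\epsilon^d\geq \mathrm{vol}\,B_{1/2-\sqrt d\,\epsilon/2}(0)$. Alternatively, use a cube grid as the paper does and avoid the count altogether.
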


%
\begin{proof}
Let \(L\geq 4\) be an integer to be chosen later and put
\[
I_L=\{0,1,\ldots,L-1\}^d,\qquad N=L^d .
\]
Set
\[
r=\frac{1}{16\sqrt d\,L},
\]
and for \(\mathbf m=(m_1,\dots,m_d)\in I_L\), define
\[
a_{\mathbf m}
=
4r\left(\mathbf m- \frac{L-1}{2}(1,\ldots,1)\right)\in \mathbb R^d .
\]
Then $|a_{\mathbf m}| \leq 4r\frac{\sqrt d(L-1)}{2} < \frac18$ and, thus, $a_\mathbf{m} \in \bD$.
For distinct \(\mathbf m,\mathbf n\in I_L\), we have
\[
|a_{\mathbf m}-a_{\mathbf n}|\geq 4r.
\]
Hence the closed balls \(\overline B_r(a_{\mathbf m})\) ($\mathbf m \in I_L$) are pairwise disjoint. Moreover,
$|a_{\mathbf m}|+r < \frac18+\frac{1}{16\sqrt d\,L}<1.$
Thus this configuration of balls satisfies the assumption \eqref{app_geom_sep}.

For this configuration, we will show the following inequality:
\begin{align}
\sum_{\mathbf m \neq \mathbf n}\frac{r^{d-2}}{\norm{a_{\mathbf m}-a_{\mathbf n}}^{d-2}}>N.
\label{app_violate}
\end{align}
Since
$\frac{r^{d-2}}{\norm{a_{\mathbf m}-a_{\mathbf n}}^{d-2}}
=
\frac{1}{4^{d-2}\norm{\mathbf m-\mathbf n}^{d-2}}$,
we estimate the sum
$\sum_{\mathbf m \neq \mathbf n}\frac{1}{4^{d-2}\norm{\mathbf m-\mathbf n}^{d-2}}$
from below.
Set $K=\lfloor L/2\rfloor$ and let $\mathbf q=(q_1,\ldots,q_d)\in \{1,\ldots,K\}^d$.
Then the number of ordered pairs whose difference is exactly $\mathbf q$ is
\[
\#\{(\mathbf m,\mathbf n)\in I_L\times I_L \mid \mathbf n-\mathbf m=\mathbf q \} =
\prod_{j=1}^d (L-q_j)
\geq
\left(\frac L2\right)^d .
\]
Moreover, $|\mathbf q|\leq \sqrt d\,K$,
and hence $|\mathbf q|^{-(d-2)}
\geq d^{-\frac{d-2}{2}}K^{-(d-2)}$.
Since there are \(K^d\) choices of \(\mathbf q\), we obtain
\[
\sum_{\mathbf m\neq \mathbf n}
|\mathbf m-\mathbf n|^{-(d-2)}
\geq
\left(\frac L2\right)^d K^d d^{-\frac{d-2}{2}}
K^{-(d-2)}=\left(\frac L2\right)^d d^{-\frac{d-2}{2}}K^{2}.
\]
Since $N=L^d$, we have
\begin{align*}
\frac{1}{N}\sum_{\mathbf m \neq \mathbf n}\frac{1}{4^{d-2}\norm{\mathbf m-\mathbf n}^{d-2}}\geq 2^{-3d+4}d^{-\frac{d-2}{2}}K^2.
\end{align*}
Since $K=\lfloor L/2\rfloor$, by taking \(L\) sufficiently large, $\frac{1}{N}\sum_{\mathbf m \neq \mathbf n}\frac{1}{4^{d-2}\norm{\mathbf m-\mathbf n}^{d-2}}>1$.
Hence, the assertion follows.
%
\end{proof}

We shall show that boundedness of the $\CE$-algebra structure leads to a contradiction with Lemma \ref{app_inequality_fail}.
For $v \in \Hf$, set
\begin{align*}
E(v) = \sum_{p \geq 0} \frac{1}{\sqrt{p!}} v^{\otimes p}.
\end{align*}
This vector is called a \emph{coherent vector}, and coherent vectors have been used to study linear operators on Hilbert Fock spaces (see, for example, \cite{Arai}).
Since 
\begin{align}
\norm{E(v)}^2 = \sum_{p \geq 0} \frac{1}{p!}(v,v)^p = \exp(\norm{v}^2),
\label{app_Ev_norm}
\end{align}
we have $\norm{E(v)} < \infty$, and hence $E(v) \in \mathcal{F}_\Hf$.

Let $n \geq 2$ and $(g_1, \dots,g_n) \in \CE(n)$.
If $\rho_{(g_1, \dots,g_n)}^\Psi$ is bounded with respect to the norm of $\mathcal{F}_\Hf^{\hat{\otimes}n}$, then we denote its bounded extension by
\begin{align}
\overline{\rho}_{(g_1, \dots,g_n)}^\Psi:\mathcal{F}_\Hf^{\hat{\otimes} n}\rightarrow \mathcal{F}_\Hf.
\label{app_two_product}
\end{align}

\begin{lem}\label{lem_app_formula}
If $\rho_{(g_1, \dots,g_n)}^\Psi$ admits a bounded extension,
then for any $v_1,\dots,v_n \in \Hf$, 
\begin{align}
\overline{\rho}_{(g_1,\dots,g_n)}^\Psi(E(v_1)\otimes\cdots\otimes E(v_n)) =  \exp\left(\sum_{n \geq i>j \geq 1} C_{g_i,g_j}(v_i,v_j)\right)  E\left( \sum_{i=1}^n\rho(g_i)(v_i)\right).
\label{eq_app_formula}
\end{align}
\end{lem}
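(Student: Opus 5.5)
Since both sides of \eqref{eq_app_formula} are continuous in the inputs, the plan is to truncate the coherent vectors, compute exactly on the algebraic core $V^{\otimes n}$, and pass to the limit. Put $E_{\le K}(v)=\sum_{p\le K}\frac{1}{\sqrt{p!}}v^{\otimes p}\in V$. By \eqref{app_Ev_norm}, $E_{\le K}(v)\to E(v)$ in $\mathcal F_\Hf$, so $E_{\le K}(v_1)\otimes\cdots\otimes E_{\le K}(v_n)\to E(v_1)\otimes\cdots\otimes E(v_n)$ in $\mathcal F_\Hf^{\hotimes n}$. Because $\overline{\rho}^\Psi_{(g_1,\dots,g_n)}$ is bounded, the left-hand side is the norm limit of $\rho^\Psi_{(g_1,\dots,g_n)}$ applied to these truncations. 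It therefore suffices to compute $\rho^\Psi$ on pure tensors $v_1^{\otimes p_1}\otimes\cdots\otimes v_n^{\otimes p_n}$ and show that the resulting series converges to the right-hand side.

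\textbf{Finite-level computation.} Since $\Psi^{-1}$ multiplies $\Sym^{p}\Hf$ by $1/\sqrt{p!}$, we have $\Psi^{-1}\bigl(\tfrac{1}{\sqrt{p!}}v^{\otimes p}\bigr)=\tfrac{1}{p!}v^{\otimes p}$. Hence $\Psi^{-1}$ sends $E(v)$ formally to the exponential series $e^{v}=\sum_p \frac{1}{p!}v^{\otimes p}$ in the symmetric algebra. Next we evaluate $\exp(\sum_{i>j}\hat C^{i,j}_{g_i,g_j})$ on $\frac{1}{p_1!}v_1^{\otimes p_1}\otimes\cdots\otimes\frac{1}{p_n!}v_n^{\otimes p_n}$. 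A single contraction satisfies $\hat C_{g_i,g_j}(\frac{v_i^{\otimes p}}{p!}\otimes\frac{v_j^{\otimes q}}{q!})=C_{g_i,g_j}(v_i,v_j)\frac{v_i^{\otimes p-1}}{(p-1)!}\otimes\frac{v_j^{\otimes q-1}}{(q-1)!}$, because it produces $pq$ identical terms. Iterating and summing over the number $k_{ij}$ of contractions for each pair, the generating function over all $(p_1,\dots,p_n)$ factorizes as
\[
\prod_{i>j}\sum_{k_{ij}}\frac{C_{g_i,g_j}(v_i,v_j)^{k_{ij}}}{k_{ij}!}\cdot e^{v_1}\otimes\cdots\otimes e^{v_n}.
\]
Applying $\rho_1(g_i)$ factorwise gives $e^{\rho(g_i)v_i}$. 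Symmetric multiplication then turns $\bigotimes_i e^{w_i}$ into $e^{\sum_i w_i}$ by the binomial theorem in $\Sym\Hf$. Finally $\Psi$ turns $e^{w}$ back into $E(w)$. Formally this already yields \eqref{eq_app_formula}.

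\textbf{Convergence (main obstacle).} The formal manipulation must be made rigorous, because the truncated input does not reproduce the truncated output exactly. The plan is to estimate the output degree by degree. In $\Sym^m\Hf$, the image of the truncated input equals the degree-$m$ part of the right-hand side minus terms where some $p_i>K$ would be needed. Each such missing term is bounded in norm by a sum of products $\prod|C(v_i,v_j)|^{k}/k!\cdot\prod\|\rho(g_i)v_i\|^{p}/\sqrt{\cdots}$. These sums are dominated by the convergent majorant $\exp\bigl(\sum|C_{g_i,g_j}(v_i,v_j)|\bigr)\exp\bigl((\sum_i\|v_i\|)^2\bigr)$-type series, using $\|\rho(g)\|\le1$ (Proposition \ref{prop_unitary}). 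Dominated convergence then shows that the degree-$m$ components converge for every $m$.

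Since the left-hand limit already exists in $\mathcal F_\Hf$ by boundedness, componentwise convergence in each $\Sym^m\Hf$ identifies the limit with the right-hand side. In particular, this also shows that the right-hand side lies in $\mathcal F_\Hf$, where $\|E(w)\|^2=\exp\|w\|^2$. The delicate point is the norm bookkeeping of symmetrization versus the factorials. I expect to handle it by working with $\hat S$ as an orthogonal projection, so that it is a contraction and no norm is lost there.
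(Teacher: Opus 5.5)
Your argument is correct. The combinatorial core is the same as in the paper: the $pq$-fold contraction identity iterated over the numbers $k_{ij}$, the multinomial identity for $\hat S$, and the factor $\sqrt{N!}$ from $\Psi$. The difference is in how you pass to the limit.

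The paper introduces formal variables and computes $\rho^\Psi_{(g_1,\dots,g_n)}(E(t_1v_1),\dots,E(t_nv_n))$ in $V[[t_1,\dots,t_n]]$. It then notes that $(t_1,\dots,t_n)\mapsto\overline{\rho}^\Psi_{(g_1,\dots,g_n)}(E(t_1v_1)\otimes\cdots\otimes E(t_nv_n))$ is an $\mathcal F_\Hf$-valued entire function on $\C^n$ whose Taylor series is this formal series. It concludes at $t_1=\dots=t_n=1$ by the identity theorem. In this way the rearrangement of the double series over $(p,k)$ into $(l,k)$ is absorbed into analyticity, and no tail estimate is ever written.

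You instead truncate, use boundedness to express the left-hand side as a norm limit, and identify that limit through the continuous projections onto each $\Sym^m\Hf$. This avoids complex parameters, which in the paper tacitly require complexifying the real Hilbert space $\Hf$ of that section. It also makes convergence explicit, and that step is easier than you expect. Write $p_i=l_i+k_i$ with $k_i=\sum_{s\neq i}k_{is}$. The degree-$m$ component then involves only the finitely many $l\in\N^n$ with $|l|=m$. The only infinite sum is the absolutely convergent scalar series $\prod_{i>j}\sum_{k_{ij}\geq 0}C_{g_i,g_j}(v_i,v_j)^{k_{ij}}/k_{ij}!$. So the constraint $p_i\leq K$ simply disappears in the limit, with no majorant in $\norm{v_i}$ and no use of $\norm{\rho(g)}\leq 1$.

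The paper's version is shorter; yours is more elementary and fully quantitative.
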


%
%
%
%
%
%
\begin{proof}
We first prove \eqref{eq_app_formula} formally. Let $t_1,\dots,t_n$ be formal variables, and set
\begin{align*}
E(t_iv_i) = \sum_{p \geq 0}\frac{1}{\sqrt{p!}}v_i^{\otimes p}t_i^p \in V[[t_i]].
\end{align*}
Since $\rho_{(g_1,\dots,g_n)}^\Psi$ is well-defined on each finite-degree truncation, we have
\begin{align}
\begin{split}
&\rho_{(g_1,\dots,g_n)}^\Psi(E(t_1v_1),\dots,E(t_nv_n))\\
&=\sum_{p_1,\dots,p_n \geq 0} 
\frac{1}{\sqrt{p_1!\dots p_n!}}
\rho_{(g_1,\dots,g_n)}^\Psi(v_1^{\otimes p_1},\dots, v_n^{\otimes p_n})
t_1^{p_1}\dots t_n^{p_n}\\
&=\sum_{p_1,\dots,p_n \geq 0} 
\frac{1}{{p_1!\dots p_n!}}
\Psi \circ \rho_{(g_1,\dots,g_n)}(v_1^{\otimes p_1},\dots, v_n^{\otimes p_n})
t_1^{p_1}\dots t_n^{p_n}\\
&=\sum_{p_1,\dots,p_n \geq 0} 
\frac{1}{{p_1!\dots p_n!}}
\Psi \circ (\rho(g_1)\otimes \cdots \otimes \rho(g_n))\circ \exp\left(\sum_{i>j}\hat{C}_{g_i,g_j}\right)
(v_1^{\otimes p_1},\dots, v_n^{\otimes p_n})
t_1^{p_1}\dots t_n^{p_n}.
\end{split}
\label{eq_add_Et1}
\end{align}
Here, for $k_{ij}\geq 0$, put $k_i=\sum_{s\neq i}k_{is}$, where we set $k_{ij}=k_{ji}$. Then
\begin{align*}
&\frac{1}{p_1!\dots p_n!}\prod_{i>j}\frac{1}{k_{ij}!}\hat{C}_{g_i,g_j}^{k_{ij}}(v_1^{\otimes p_1},\dots, v_n^{\otimes p_n})t_1^{p_1}\dots t_n^{p_n}\\
&=\frac{1}{(p_1-k_1)!\dots (p_n-k_n)!}\prod_{i>j}\frac{1}{k_{ij}!}
{C}_{g_i,g_j}(v_i,v_j)^{k_{ij}}(t_it_j)^{k_{ij}}(v_1^{\otimes (p_1-k_1)},\dots, v_n^{\otimes (p_n-k_n)})t_1^{p_1-k_1}\dots t_n^{p_n-k_n},
\end{align*}
whenever $p_i\geq k_i$ for all $i$, and the term is zero otherwise. By setting $l_i = p_i-k_i \geq 0$, \eqref{eq_add_Et1} is equal to 
\begin{align*}
&\exp\left(\sum_{i>j}t_it_j {C}_{g_i,g_j}(v_i,v_j)\right)
\sum_{l_1,\dots,l_n \geq 0}
\frac{1}{l_1!\dots l_n!}\Psi\hat{S}(\rho(g_1)v_1^{\otimes l_1} \otimes \cdots \otimes \rho(g_n)v_n^{\otimes l_n})t_1^{l_1}\dots t_n^{l_n}\\
&=\exp\left(\sum_{i>j}t_it_j {C}_{g_i,g_j}(v_i,v_j)\right)
\sum_{l_1,\dots,l_n \geq 0}
\frac{\sqrt{(l_1+\dots+l_n)!}}{l_1!\dots l_n!}\hat{S}(\rho(g_1)v_1^{\otimes l_1}\otimes \cdots \otimes \rho(g_n)v_n^{\otimes l_n})t_1^{l_1}\dots t_n^{l_n}\\
&=\exp\left(\sum_{i>j}t_it_j {C}_{g_i,g_j}(v_i,v_j)\right)
\sum_{N \geq 0}
\frac{1}{\sqrt{N!}}
\sum_{\substack{l_1,\dots,l_n\geq 0\\l_1+\dots+l_n=N}}
\frac{N!}{l_1!\dots l_n!}\hat{S}(\rho(g_1)v_1^{\otimes l_1} \otimes \cdots \otimes \rho(g_n)v_n^{\otimes l_n})t_1^{l_1}\dots t_n^{l_n}.
\end{align*}
Since, by the $S_N$-invariance of $\hat{S}$, $(a_1+\dots+a_n)^{\otimes N}= \hat{S}((a_1+\dots+a_n)^{\otimes N}) = \sum_{\substack{l_1,\dots,l_n \geq 0\\l_1+\dots+l_n=N}}\frac{N!}{l_1!\dots l_n!} \hat{S}(a_1^{\otimes l_1}\otimes \cdots \otimes a_n^{\otimes l_n})$ for any $a_1,\dots,a_n \in \Hf$, the above sum is equal to
\begin{align}
\exp\left(\sum_{n \geq i>j \geq 1} t_i t_j {C}_{g_i,g_j}(v_i,v_j)\right)  E\left( \sum_{i=1}^n t_i \rho(g_i)(v_i)\right) \in V[[t_1,\dots,t_n]].
\label{eq_app_series_fin}
\end{align}

By \eqref{app_Ev_norm}, $E(t_iv_i)$ can be regarded as an $\mathcal F_\Hf$-valued holomorphic function of $t_i \in \C$. Hence, by the boundedness assumption on $\overline{\rho}_{(g_1,\dots,g_n)}^\Psi$,
\begin{align}
\C^n \rightarrow \mathcal F_\Hf,\qquad
(t_1,\dots,t_n)\mapsto \overline{\rho}_{(g_1,\dots,g_n)}^\Psi(E(t_1v_1)\otimes \cdots \otimes E(t_nv_n))
\label{eq_app_entire}
\end{align}
is an $\mathcal{F}_\Hf$-valued entire function on $\C^n$. The formal computation above shows that the Taylor expansion of \eqref{eq_app_entire} is given by \eqref{eq_app_series_fin}.
Since both $\exp\left(\sum_{n \geq i>j \geq 1} t_i t_j {C}_{g_i,g_j}(v_i,v_j)\right)$ and $E\left( \sum_{i=1}^n t_i \rho(g_i)(v_i)\right)$ converge and define entire functions on $\C^n$, by the identity theorem,
\begin{align*}
\overline{\rho}_{(g_1,\dots,g_n)}^\Psi(E(v_1)\otimes \cdots \otimes E(v_n)) = \left.\exp\left(\sum_{n \geq i>j \geq 1} t_i t_j {C}_{g_i,g_j}(v_i,v_j)\right)  E\left( \sum_{i=1}^n t_i \rho(g_i)(v_i)\right)\right|_{t_1=\dots =t_n=1}.
\end{align*}
\end{proof}

\begin{prop}\label{prop_app_inequality}
If $\rho_{(g_1, \dots,g_n)}^\Psi$ admits a bounded extension,
then the following inequality holds
for any $v_1,\dots,v_n \in \Hf$:
\begin{align}
\norm{\sum_{i=1}^n \rho(g_i)(v_i)}^2 + 2  \sum_{i>j} C_{g_i,g_j}(v_i,v_j) - \sum_{i=1}^n \norm{v_i}^2 \leq 0.
\label{app_inequality}
\end{align}
\end{prop}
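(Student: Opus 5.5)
The plan is to test the bounded extension $\overline{\rho}_{(g_1,\dots,g_n)}^\Psi$ on coherent vectors. Lemma \ref{lem_app_formula} gives its value on them in closed form, and the norms of coherent vectors are known from \eqref{app_Ev_norm}. Comparing the two norms for rescaled inputs $tv_i$ and letting $t\to\infty$ should force the exponent in \eqref{app_inequality} to be nonpositive.

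Concretely, let $M=\norm{\overline{\rho}_{(g_1,\dots,g_n)}^\Psi}$ be the operator norm of the bounded extension $\mathcal{F}_\Hf^{\hat{\otimes}n}\to\mathcal{F}_\Hf$. Fix $v_1,\dots,v_n\in\Hf$ and a real parameter $t>0$, and apply the bound to $E(tv_1)\otimes\cdots\otimes E(tv_n)$. On the input side, the norm of an elementary tensor in the Hilbert tensor product is the product of the norms, so by \eqref{app_Ev_norm} it equals $\exp\bigl(\tfrac{t^2}{2}\sum_i\norm{v_i}^2\bigr)$. On the output side, Lemma \ref{lem_app_formula} applied to the vectors $tv_i$, together with bilinearity of $C_{g_i,g_j}$ and linearity of $\rho(g_i)$, gives
\begin{align*}
\exp\Bigl(t^2\sum_{i>j}C_{g_i,g_j}(v_i,v_j)\Bigr)\,E\Bigl(t\sum_{i}\rho(g_i)v_i\Bigr).
\end{align*}
Its norm is $\exp\bigl(t^2\sum_{i>j}C_{g_i,g_j}(v_i,v_j)\bigr)\exp\bigl(\tfrac{t^2}{2}\norm{\sum_i\rho(g_i)v_i}^2\bigr)$, again by \eqref{app_Ev_norm}. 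Here we use that we are working over $\R$, so the scalar $C_{g_i,g_j}(v_i,v_j)$ is real and the exponential factor is positive.

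The boundedness inequality then reads
\begin{align*}
\exp\Bigl(\tfrac{t^2}{2}\Bigl(\norm{\textstyle\sum_{i}\rho(g_i)v_i}^2+2\sum_{i>j}C_{g_i,g_j}(v_i,v_j)-\sum_i\norm{v_i}^2\Bigr)\Bigr)\le M
\end{align*}
for all $t>0$. If the bracketed quantity were positive, the left-hand side would tend to infinity as $t\to\infty$, contradicting the bound. Hence the bracket is $\le 0$, which is exactly \eqref{app_inequality}.

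The only point needing care is the legitimacy of applying Lemma \ref{lem_app_formula} to the rescaled vectors $tv_i$. This is immediate, since the lemma holds for arbitrary $v_i\in\Hf$. The remaining step is to confirm that the norm of $E(v_1)\otimes\cdots\otimes E(v_n)$ in $\mathcal{F}_\Hf^{\hat{\otimes}n}$ is multiplicative, which holds by the definition of the Hilbert tensor product inner product. No real obstacle is expected: the argument is an exponential-growth comparison.
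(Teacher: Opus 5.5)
Your proposal is correct and follows essentially the same route as the paper's proof. Both evaluate the bounded extension on the coherent vectors $E(tv_1)\otimes\cdots\otimes E(tv_n)$, use Lemma \ref{lem_app_formula} and \eqref{app_Ev_norm} to identify the norm ratio as an exponential in $t^2$ times the quantity in \eqref{app_inequality}, and conclude that this quantity must be nonpositive because the ratio stays bounded as $t\to\infty$.
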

\begin{proof}
Assume that $\rho_{(g_1,\dots,g_n)}^\Psi$ admits a bounded extension. Then, for any non-zero vectors $v_1,\dots,v_n \in \Hf$ and any $t\in\R$, boundedness implies
\begin{align}
\sup_{t \in \R} \frac{\norm{\overline{\rho}_{{(g_1,\dots,g_n)}}^\Psi(E(tv_1)\otimes\cdots\otimes E(tv_n))}}{\norm{E(tv_1)}\dots \norm{E(tv_n)}}<\infty.
\label{eq_app_infty}
\end{align}
On the other hand, by Lemma \ref{lem_app_formula} and \eqref{app_Ev_norm}, we have
\begin{align*}
\frac{\norm{\overline{\rho}_{{(g_1,\dots,g_n)}}^\Psi(E(tv_1)\otimes\cdots\otimes E(tv_n))}^2}{\norm{E(tv_1)}^2\dots \norm{E(tv_n)}^2} &=
\exp\left(t^2\left(
\norm{\sum_{i=1}^n \rho(g_i)v_i}^2
+ 2\sum_{i > j} C_{g_i,g_j}(v_i,v_j)
- \sum_{i=1}^n \norm{v_i}^2
\right)\right).
\end{align*}
Hence, the assertion holds.
\end{proof}

\begin{cor}\label{cor_app_inequality}
Let $n \geq 2$, $a_i \in \bD$, and $1>r_i>0$
($i=1,\dots,n$) satisfying \eqref{app_geom_sep}.
Assume that $\rho_{(T_{a_i}D(r_i), \dots,T_{a_n}D(r_n))}^\Psi$  admits a bounded extension,
then the following inequality holds
\begin{align}
\sum_{i,j=1}^n \frac{r_i^{\frac{d-2}{2}}r_j^{\frac{d-2}{2}}}{\left(1-2(a_i,a_j)+\norm{a_i}^2\norm{a_j}^2\right)^{\frac{d-2}{2}}} + \sum_{i\neq j}
\frac{r_i^{\frac{d-2}{2}}r_j^{\frac{d-2}{2}}}{\norm{a_i-a_j}^{d-2}}\leq n.
\label{app_inequality_geom}
\end{align}
\end{cor}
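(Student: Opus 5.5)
We apply Proposition \ref{prop_app_inequality} to $g_i=T_{a_i}D(r_i)$ with a specific test vector, $v_i=E_0\in\Hf$ for every $i$, and then evaluate each term in \eqref{app_inequality} with the reproducing-kernel formulas. This choice is natural: $E_0$ is the constant harmonic polynomial $1$. By \eqref{eq_E_reproduce}, $\norm{E_0}^2=K_\bD(0,0)=1$, so $\sum_i\norm{v_i}^2=n$, which accounts for the right-hand side of \eqref{app_inequality_geom}.

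The first term comes from Proposition \ref{prop_unitary}(1). For $g_i(x)=r_ix+a_i$ we have $\Om_{g_i}\equiv r_i$, so $\rho(g_i)E_0=r_i^{\dn}E_{a_i}$. Then \eqref{eq_E_reproduce} gives
\begin{align*}
\norm{\sum_i\rho(g_i)E_0}^2=\sum_{i,j}r_i^{\dn}r_j^{\dn}(E_{a_i},E_{a_j})_H
=\sum_{i,j}\frac{r_i^{\dn}r_j^{\dn}}{(1-2(a_i,a_j)+\norm{a_i}^2\norm{a_j}^2)^{\dn}}.
\end{align*}

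The cross term comes from Theorem \ref{thm_wick_contraction}, which is applicable because \eqref{app_geom_sep} places each pair $(g_i,g_j)$ in $\CE(2)$ by Proposition \ref{prop_CE_d_explicit}. It gives $C_{g_i,g_j}(E_0,E_0)=r_i^{\dn}r_j^{\dn}\norm{a_i-a_j}^{-(d-2)}$. This expression is symmetric in $i,j$, which is also consistent with $C_{g_1,g_2}\circ(12)=C_{g_2,g_1}$. Hence
\begin{align*}
2\sum_{i>j}C_{g_i,g_j}(E_0,E_0)=\sum_{i\neq j}\frac{r_i^{\dn}r_j^{\dn}}{\norm{a_i-a_j}^{d-2}}.
\end{align*}
Substituting these three evaluations into \eqref{app_inequality} gives exactly \eqref{app_inequality_geom}.

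The only subtle point is bookkeeping in the passage from $B$ on $K\otimes K$ to $\hat C$ on $\Hf\hotimes\Hf$. We must check that $E_0$ corresponds to $K_0$, so that the formula $\hat C_{g_1,g_2}(E_a\otimes E_b)$ of Theorem \ref{thm_wick_contraction} applies at $a=b=0$. We must also check that the $C_{g_i,g_j}$ appearing in Lemma \ref{lem_app_formula} and Proposition \ref{prop_app_inequality} is this same bounded extension $\hat C_{g_i,g_j}$, and is not affected by the $\Psi$-normalization. Lemma \ref{lem_app_formula} confirms this, since its exponent contains only the bare pairing. No other obstacle is expected.
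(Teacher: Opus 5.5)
Your proposal is correct and follows the same route as the paper. It applies Proposition \ref{prop_app_inequality} with $v_i=E_0$, uses $\rho(g_i)E_0=r_i^{\frac{d-2}{2}}E_{a_i}$, evaluates the cross terms by Theorem \ref{thm_wick_contraction}, and computes the norm term via \eqref{eq_E_reproduce}. Your extra bookkeeping checks — that $E_0$ is the image of $K_0$, and that the exponent in Lemma \ref{lem_app_formula} contains the unnormalized pairing $C_{g_i,g_j}(v_i,v_j)$ — are accurate and simply spell out what the paper leaves implicit.
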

\begin{proof}
Set $g_i=T_{a_i}D(r_i)$. Then $(g_1,\dots,g_n)\in \CE(n)$ by \eqref{app_geom_sep}.
Applying Proposition \ref{prop_app_inequality} to
$v_i=E_0=1\in \Harm_d\subset \Hf$, $i=1,\dots,n$, since $\norm{E_0}=1$ and $\rho(g_i)E_0=r_i^{\frac{d-2}{2}}E_{a_i}$,
by Theorem \ref{thm_wick_contraction}, we obtain
\begin{align*}
\norm{\sum_{i=1}^n r_i^{\frac{d-2}{2}}E_{a_i}}^2 + \sum_{i\neq j}
\frac{r_i^{\frac{d-2}{2}}r_j^{\frac{d-2}{2}}}{\norm{a_i-a_j}^{d-2}}\leq n.
\end{align*}
Hence, the assertion follows from
Proposition \ref{prop_reproducing_kernel}.
\end{proof}

\begin{thm}\label{thm_app_unbounded}
Let $d \geq 3$. Then the $\CE$-algebra $(\Sym \Hf,\rho^\Psi)$ in $\Ind$ constructed in Section \ref{sec_construction} is unbounded with respect to the norm of the Fock space $\mathcal{F}_\Hf$. 
In particular, the \(\CE\)-algebra structure does not extend to the natural Hilbert space completion.
\end{thm}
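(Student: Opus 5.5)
The plan is to argue by contradiction, combining Corollary~\ref{cor_app_inequality} with Lemma~\ref{app_inequality_fail}. Suppose the $\CE$-algebra $(\Sym\Hf,\rho^\Psi)$ extends to the Fock space completion. Then for every $n\geq 2$ and every $\phi\in\CE(n)$, the operation $\rho^\Psi_\phi$ admits a bounded extension $\overline{\rho}^\Psi_\phi:\mathcal{F}_\Hf^{\hat\otimes n}\to\mathcal{F}_\Hf$. In particular this applies to the configurations $\phi=(T_{a_1}D(r_1),\dots,T_{a_N}D(r_N))$ with $a_i,r_i$ satisfying \eqref{app_geom_sep}, which lie in $\CE(N)$ by Proposition~\ref{prop_CE_d_explicit}.

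First, I apply Lemma~\ref{app_inequality_fail} to choose $N\geq 2$, $a_i\in\bD$ and $0<r_i<1$ satisfying \eqref{app_geom_sep} such that
\[
\sum_{i\neq j}\frac{r_i^{\frac{d-2}{2}}r_j^{\frac{d-2}{2}}}{\norm{a_i-a_j}^{d-2}}>N .
\]
Second, I invoke Corollary~\ref{cor_app_inequality} for this configuration. It yields the inequality \eqref{app_inequality_geom} with $n=N$. The first sum in \eqref{app_inequality_geom} equals $\norm{\sum_i r_i^{\frac{d-2}{2}}E_{a_i}}^2$ by Proposition~\ref{prop_reproducing_kernel}, so it is nonnegative. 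Dropping it therefore gives
\[
\sum_{i\neq j}\frac{r_i^{\frac{d-2}{2}}r_j^{\frac{d-2}{2}}}{\norm{a_i-a_j}^{d-2}}\leq N,
\]
which contradicts the choice made in the first step. Hence $\rho^\Psi_\phi$ is unbounded for the Fock norm, and the algebra structure cannot extend to $\mathcal{F}_\Hf$.

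Since the two ingredients are already proved, the only real content is the logical assembly. The one point I would check is that Lemma~\ref{lem_app_formula} only requires boundedness of the single operation $\rho^\Psi_\phi$; it does not need compatibility with the completion of lower operations. I would also confirm that the entire-function argument there uses only the boundedness of that one operation together with $\norm{E(tv)}=e^{\norm{tv}^2/2}$.

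I would add a remark that the failure holds already for a single explicit $\phi\in\CE(N)$ with $N$ large. Any smaller $N$ would suffice provided the inequality of Lemma~\ref{app_inequality_fail} is violated, which the lattice construction achieves once $L$ is large compared with $2^{3d/2}d^{(d-2)/4}$.
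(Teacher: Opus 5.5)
Your proposal is correct and is essentially the paper's own proof. You take the lattice configuration from Lemma~\ref{app_inequality_fail}, apply Corollary~\ref{cor_app_inequality}, and drop the nonnegative term $\norm{\sum_i r_i^{\frac{d-2}{2}}E_{a_i}}^2$ to obtain the contradiction. Your side check is also accurate: Lemma~\ref{lem_app_formula} and Proposition~\ref{prop_app_inequality} need boundedness of only the single operation $\rho^\Psi_\phi$.
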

\begin{proof}
Let $N \geq 2$, $a_1,\dots,a_N \in \bD$, and $1>r_i>0$ be as in Lemma \ref{app_inequality_fail}, so that
\begin{align}
\sum_{i \neq j}\frac{r_i^{\frac{d-2}{2}}r_j^{\frac{d-2}{2}}}{\norm{a_{i}-a_{j}}^{d-2}}
>N.
\label{eq_app_assume_fail}
\end{align}
By \eqref{app_geom_sep}, $(T_{a_1}D(r_1),\dots,T_{a_N}D(r_N))\in \CE(N)$.
Since 
\[
\sum_{i,j=1}^n \frac{r_i^{\frac{d-2}{2}}r_j^{\frac{d-2}{2}}}{\left(1-2(a_i,a_j)+\norm{a_i}^2\norm{a_j}^2\right)^{\frac{d-2}{2}}}
=\norm{\sum_{i=1}^n r_i^{\frac{d-2}{2}}E_{a_i}}^2
 \geq 0,\]
\eqref{eq_app_assume_fail} together with Corollary \ref{cor_app_inequality} implies that
$\rho_{(T_{a_1}D(r_1),\dots,T_{a_N}D(r_N))}^\Psi$ does not admit a bounded extension.

\end{proof}

\begin{rem}\label{rem_d3}
In dimension $d=3$, the inequality \eqref{app_inequality_geom} already has a counterexample for $N=2$.
Indeed, set
\begin{align*}
N=2,\qquad r=s=\frac{1}{3}, \qquad a=\left(\frac{2}{3},0,0\right),\qquad b=\left(-\frac{2}{3},0,0\right).
\end{align*}
Then the left-hand side of \eqref{app_inequality_geom} is $\frac{281}{130}>2$. Hence, in dimension $d=3$, $\rho_2^\Psi$ is already unbounded.
\end{rem}

\begin{rem}
\label{rem_app_2d}
In \cite{MBergman}, an example of a $\CEt$-algebra in dimension $d=2$ was constructed using the symmetric algebra $\Sym A^2(\bD)$ of the Bergman space $A^2(\bD)$ as an Ind-Hilbert space.
For the product on $\Sym A^2(\bD)$, one can consider the following inequality analogous to \eqref{app_inequality_geom}:
\begin{align}
\begin{split}
\sum_{i,j=1}^N \frac{r_ir_j}{(1-a_i\bar{a}_j)^{2}} + \sum_{i\neq j}
\mathrm{Re}\,\frac{r_i r_j}{(a_i-a_j)^2} \leq N.
\end{split}
\label{app_inequality_geom_2d}
\end{align}
Here $a_i \in \bD$ are points in the unit disk in the complex plane $\C$, and they satisfy the separation condition \eqref{app_geom_sep}.
The inequality \eqref{app_inequality_geom_2d} is a special case of inequalities of Grunsky--Nehari type, and it follows from \cite[Corollary 1]{De} by taking $F_i(\zeta)=r_i \zeta+a_i$.
In particular, unlike the case $d \geq 3$, the inequality obtained from the boundedness assumption holds for every $N$.
Although \eqref{app_inequality_geom_2d} alone does not imply the boundedness of the $\CEt$-algebra $\Sym A^2(\bD)$, this suggests that, in dimension $d=2$, a Hilbert space completion may be possible.
This is consistent with Segal's definition of two-dimensional conformal field theory as a Hilbert-space-valued symmetric tensor monoidal \cite{Segal}, and with the construction of concrete examples \cite{Te,GKRV2}.
\end{rem}

\section{Left Kan extension of conformally flat d-disk algebra}\label{sec_left}
By Proposition \ref{prop_CF_algebra}, giving a $\CE$-algebra in $\Vect$ is equivalent to giving a symmetric monoidal functor
$\Disk \rightarrow \Vect$.
In Proposition \ref{prop_Left_Kan}, we see that the left Kan extension along the inclusion
$\Disk \hookrightarrow \Embc$ defines a symmetric monoidal functor $\Embc \rightarrow \Vect$.
In this section, we assume that $d \geq 2$ and introduce the notion of a state and study general properties of left Kan extensions.
In Section \ref{sec_left_state}, we consider the case $\cC=\Vect$, while in Section \ref{sec_conti_state} we treat the case $\cC=\Ind$.
In Section \ref{sec_correlation}, we explicitly construct continuous states on the standard sphere $S^d$ and show that the left Kan extension is nontrivial under the assumption (U) and (D).
Finally, in Section~\ref{sec_cor}, we extract correlation functions
on configuration spaces from a $\CEd$-algebra and prove their clustering property.

%

\subsection{Left Kan extension and states}\label{sec_left_state}

%
%


Let $(A,\rho)$ be a $\CE$-algebra in $\Vect$. 
\begin{dfn}\label{def_character}
Let $(U,g,M) \in \Embc$. A \textbf{state} of $A$ on $(U,g,M)$ is a sequence of maps $n \geq 0$
\begin{align}
\chi_{M}:\Embc(\sqcup_n \overline{\bD},M) \rightarrow \mathrm{Hom}_\C(A^{\otimes n}, \C)\label{eq_char_def}
\end{align}
such that:
\begin{enumerate}
\item
For any $n \geq 1$, $m\geq 0$ and $\iota_{[n]} = (\iota_1,\dots,\iota_n) \in \Embc(\sqcup_n \overline{\bD},M)$
and $\phi_{[m]} =(\phi_1,\dots,\phi_m) \in \CE(m)$,
\begin{align*}
\chi_{M}(\iota_{[n]} \circ_i \phi_{[m]})= \chi_{M}(\iota_{[n]}) \circ_i \rho_m(\phi_{[m]})
\end{align*}
as linear maps $A^{\otimes (n+m-1)} \rightarrow \C$, where $\iota_{[n]} \circ_i \phi_{[m]}$ is the composition of maps in $\Embc$.
\item
For any $n \geq 1$ and $\iota_{[n]} = (\iota_1,\dots,\iota_n)  \in \Embc(\sqcup_n \overline{\bD},M)$,
$\chi_{M}(\iota_1,\dots,\iota_n)\circ \si_{A}^{-1} = \chi_{M}(\iota_{\si(1)},\dots,\iota_{\si(n)})$ for any permutation $\si \in S_n$. Here, $\si_{A}$ is the permutation of $A^{\otimes n}$.
\end{enumerate}
The {state} is called \textbf{conformally invariant} if 
\begin{align*}
\chi_{M}(\iota_1,\dots,\iota_n) = \chi_{M}(\phi \iota_1,\dots, \phi \iota_n) 
\end{align*}
for any $\phi \in \Aut(U,g,M)$, the automorphism group in $\Embc$.
\end{dfn}
If $\chi_M$ is a state, then any scalar multiple of $\chi_M$ and any sum of states are also states. Hence the set of all states forms a vector space.

Let $\chi_{M}$ be a state of a $\CE$-algebra $(A,\rho)$ in $\Vect$ on $M \in \Embc$.
By the definition of the pointwise Kan extension, there is a linear map $f_\chi:\mathrm{Lan}_A(M) \rightarrow \C$ such that
\[
\begin{tikzcd}[row sep=normal, column sep=normal]
A^{\otimes n} \arrow[r, "\mathrm{Lan}(\iota_{[n]})"] \arrow[dr, "\chi_{M}(\iota_{[n]})"'] 
& \mathrm{Lan}_A(M) \arrow[d, "f_\chi"] \\
& \C
\end{tikzcd}
\]
for any $n \geq 0$ and $\iota_{[n]} \in \Embc(\sqcup_n \overline{\bD},(U,g,M))$.
Conversely, for any linear map $f:\mathrm{Lan}_A(M) \rightarrow \C$, define a state $\chi_f$  by the composition
\begin{align*}
\mathrm{Lan}_A(\sqcup_n \overline{\bD}) \cong A^{\otimes n} \overset{\mathrm{Lan}_A(\iota_{[n]})}{\longrightarrow} \mathrm{Lan}_A(M) \overset{f}{\rightarrow} \C
\end{align*}
for $\iota_{[n]} \in \Embc(\sqcup_n \overline{\bD},M)$.
Moreover, for any permutation $\si \in S_n$, for the composite
$\sqcup_n \bD \overset{\si}{\rightarrow} \sqcup_n \overline{\bD}
\overset{\iota_{[n]}}{\rightarrow} M$,
using that $\mathrm{Lan}_A$ is symmetric monoidal, (2) follows. Hence, we have:
\begin{prop}\label{prop_universal_character}
The space of states of $A$ on $M$ is naturally isomorphic, as a vector space, to
$\mathrm{Hom}_\C(\mathrm{Lan}_A(M),\C)$.
\end{prop}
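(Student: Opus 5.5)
The plan is to exhibit mutually inverse linear maps between the space of states of $A$ on $M$ and $\mathrm{Hom}_\C(\mathrm{Lan}_A(M),\C)$. The main tool is the explicit colimit formula of Proposition \ref{prop_Left_Kan},
\begin{align*}
\mathrm{Lan}_A(M)=\mathrm{colim}_{\phi\in \CE/M}\tilde{A}(\phi).
\end{align*}
Its universal property says that a linear map $\mathrm{Lan}_A(M)\to\C$ is the same thing as a cocone from $\tilde{A}$ to $\C$. Such a cocone is a family $f_\iota:A^{\otimes n}\to\C$, one for each object $\iota:\sqcup_n\overline{\bD}\to M$ of the comma category, subject to $f_{\iota\circ\psi}=f_\iota\circ A(\psi)$ for every $\psi$ in $\Disk$. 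The proof therefore comes down to showing that cocones are exactly states.

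First I will check that every state gives a cocone. Let $\psi:\sqcup_m\overline{\bD}\to\sqcup_n\overline{\bD}$ be a morphism in $\Disk$. By the decomposition of hom-sets in $\Disk$ used just before Proposition \ref{prop_CF_algebra}, $\psi$ has the form $\si\circ(\phi^{(1)}\sqcup\cdots\sqcup\phi^{(n)})$. Here each $\phi^{(k)}\in\CE(i_k)$, possibly with $i_k=0$, and $\si$ is a block permutation. Accordingly, $A(\psi)$ is the permutation of tensor factors composed with $\rho_{i_1}(\phi^{(1)})\otimes\cdots\otimes\rho_{i_n}(\phi^{(n)})$. Now $\iota\circ\psi$ is obtained from $\iota$ by iterated operadic insertions $\circ_k$ followed by a reordering. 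Applying axiom (1) of Definition \ref{def_character} once for each $k$, and axiom (2) for the permutation, yields $\chi_M(\iota\circ\psi)=\chi_M(\iota)\circ A(\psi)$. So $\{\chi_M(\iota)\}$ is a cocone.

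Conversely, starting from a cocone $\{f_\iota\}$, I set $\chi_M(\iota)=f_\iota$. Axiom (1) is the cocone condition for the particular morphism $\id\sqcup\cdots\sqcup\phi_{[m]}\sqcup\cdots\sqcup\id$, whose image under $A$ is $\id\otimes\cdots\otimes\rho_m(\phi_{[m]})\otimes\cdots\otimes\id$. Axiom (2) is the cocone condition for the permutation morphism $\si\in\Disk$, which $A$ sends to $\si_A$ because $A$ is symmetric monoidal. Both constructions are visibly linear and inverse to each other, which establishes the isomorphism. Naturality follows from writing the map as $f\mapsto f\circ\mathrm{Lan}_A(\iota)$, using the structure maps of the colimit together with $\mathrm{Lan}_A(\sqcup_n\overline{\bD})\cong A^{\otimes n}$ from Proposition \ref{prop_Left_Kan}.

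I expect the main obstacle to be the bookkeeping in the first step. One has to confirm that the morphisms of $\Disk$ are generated, under composition and disjoint union, by single insertions $\circ_i$ of operad elements, including the empty embeddings in $\CE(0)$, together with permutations. One also has to confirm that the functor $A$ matches these with the operad action and the symmetry of $\cC=\Vect$. Once that generation statement is established, the verification in each direction is formal.
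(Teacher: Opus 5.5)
Your proposal is correct and is essentially the paper's argument. States are identified with cocones on $\tilde{A}$ over $\CE/M$, so the universal property of the colimit formula in Proposition~\ref{prop_Left_Kan} gives the isomorphism, and your decomposition of morphisms of $\Disk$ into operadic insertions and permutations makes explicit a step the paper leaves implicit. One small correction there: the permutation must act on the source, $\psi=(\phi^{(1)}\sqcup\cdots\sqcup\phi^{(n)})\circ\si$ with $\si\in S_m$ a shuffle, because a permutation of the target (or of whole blocks) cannot send non-consecutive inputs into the same disk; axiom (2) handles this corrected form equally well.
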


\begin{rem}\label{rem_empty_character}
Let $(U,g,M) \in \Embc$ such that $\Embc(\overline{\bD},(U,g,M)) =\emptyset$.
Then, the comma category $\CE / M$ consists of the single object $\emptyset \rightarrow M$. Hence, $\mathrm{Lan}_A(M) \cong A(\emptyset)=\C$. Moreover, in this case the map \eqref{eq_char_def} exists only for $n=0$, and the space of states is a one-dimensional vector space.
\end{rem}

The following lemma will be used later:
\begin{lem}\label{lem_state_non_trivial}
Let $(U,g,M) \in \Embc$ satisfy $\Embc(\bD,(U,g,M)) \neq \emptyset$,
and let $\chi_M$ be a non-zero state.
Then there exist $n \geq 1$ and $\phi_{[n]} \in \Embc(\sqcup_n \overline{\bD},(U,g,M))$ such that
$\chi_M(\phi_{[n]}): A^{\otimes n} \rightarrow \C$ is a non-zero linear map.
\end{lem}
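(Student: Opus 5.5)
Since $\chi_M$ is non-zero, some component $\chi_M(\phi_{[n]})$ is non-zero. If it already has $n\geq 1$ we are done, so the only case to handle is the one in which every component with $n\geq 1$ vanishes and the only non-zero component is $n=0$, i.e. $\chi_M(*)\neq 0$ for the empty embedding $*:\emptyset\to M$. Note that $\Embc(\sqcup_0\overline{\bD},M)$ is the single point $\{*\}$ and $A^{\otimes 0}=\C$. So $\chi_M(*)$ is just a non-zero scalar, and we have to exhibit $n\geq 1$ with $\chi_M(\phi_{[n]})\neq 0$.

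The key step is the compatibility axiom (1) of Definition \ref{def_character} with $m=0$. By hypothesis there is at least one morphism $\iota\in\Embc(\overline{\bD},(U,g,M))$. The empty configuration is obtained from it by composing with the unique element $*\in\CE(0)$: $\iota\circ_1 * = *$ as elements of $\Embc(\emptyset,M)$, because composing with the empty embedding forgets the single disk. Axiom (1) then gives
\begin{align*}
\chi_M(*)=\chi_M(\iota\circ_1 *)=\chi_M(\iota)\circ_1\rho_0(*),
\end{align*}
and evaluating on $1\in\C$ yields $\chi_M(*)(1)=\chi_M(\iota)(\va_A)$.

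Hence $\chi_M(\iota)(\va_A)\neq 0$, so $\chi_M(\iota):A\to\C$ is a non-zero linear map, with $n=1$ and $\phi_{[1]}=\iota$. In every case some component with $n\geq 1$ is non-zero.

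There is no real obstacle. The only points that need care are that axiom (1) is stated for $m\geq 0$, so the case $m=0$ is allowed, and that the hypothesis is what supplies $\iota$; it should be read as $\Embc(\overline{\bD},(U,g,M))\neq\emptyset$. Without this hypothesis the lemma fails, as Remark \ref{rem_empty_character} shows: if there is no disk embedding, the space of states is one-dimensional and concentrated in $n=0$.
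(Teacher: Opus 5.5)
Your proposal is correct and follows essentially the same argument as the paper: pick a disk embedding $\iota$, apply compatibility axiom (1) with $m=0$ to obtain $\chi_M(*)=\chi_M(\iota)(\va_A)$, and conclude that the components with $n\geq 1$ cannot all vanish. The paper phrases this as a contradiction, but the content is identical, including your reading of the hypothesis as $\Embc(\overline{\bD},(U,g,M))\neq\emptyset$, which matches how the paper's proof uses it.
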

\begin{proof}
Suppose that $\chi_M(\phi_{[n]})=0$ for all $n \geq 1$ and all
$\phi_{[n]} \in \Embc(\sqcup_n \bD,(U,g,M))$.
Since $\Embc(\overline{\bD},(U,g,M)) \neq \emptyset$, take $f \in \Embc(\overline{\bD},(U,g,M))$.
Then
\[
\chi_M(\emptyset) = \chi_M(f \circ *) = \chi_M(f)(\va_A)=0,
\]
which implies $\chi_M=0$, a contradiction.
\end{proof}



\subsection{Continuous state and closed ideal}\label{sec_conti_state}
Let $(V,\rho)$ be a $\CE$-algebra in $\Vect$ equipped with a Hilbert space filtration $V= \cup_k H^k$.

Recall that $\Ind$ is the category of ind objects of the category of Hilbert spaces $\Hilb$.
In this setting, $(V,\rho)$ can naturally be regarded as a $\CE$-algebra in $\Ind$
(see Appendix \ref{app_Hilb}).
Moreover, by Corollary \ref{cor_ind_Hilb}, the category $\Ind$ is cocomplete and carries
a tensor product that is distributive over colimits.
Therefore, Proposition \ref{prop_CF_algebra} yields the following result.

\begin{thm}\label{prop_ind_CF}
Let $(V,\m)$ be a $\CE$-algebra in $\Vect$ equipped with a Hilbert space filtration
$V=\cup_{k \geq 0} H^k$.
Then $(V,\m)$ naturally determines a symmetric monoidal functor
$V:\Disk \rightarrow \Ind$,
and its left Kan extension defines a symmetric monoidal functor
$\mathrm{Lan}_V:\Embc \rightarrow \Ind$.
\end{thm}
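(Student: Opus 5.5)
The plan is to reduce the statement to Proposition \ref{prop_Left_Kan}. That needs three ingredients: a symmetric monoidal functor $\Disk \to \Ind$, cocompleteness of $\Ind$, and distributivity of $\hotimes$ over colimits.

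First, I would build the functor $V:\Disk\to\Ind$ from the $\CE$-algebra and its filtration. As an ind-object, $V$ is the formal colimit $\text{``}\mathrm{colim}\text{''}_k H^k$ along the isometric inclusions $H^k\hookrightarrow H^{k+1}$ (Definition \ref{def_ind_CF}(2),(3)). The tensor power $V^{\otimes n}$ in $\Ind$ is the ind-object $\text{``}\mathrm{colim}\text{''}_{(k_1,\dots,k_n)} H^{k_1}\hotimes\cdots\hotimes H^{k_n}$ over the filtered poset $\N^n$. A morphism $V^{\otimes n}\to V$ in $\Ind$ is an element of
\begin{align*}
\lim_{(k_1,\dots,k_n)}\ \mathrm{colim}_K\ \Hom_{\Hilb}(H^{k_1}\hotimes\cdots\hotimes H^{k_n},H^K).
\end{align*}
Condition (5) of Definition \ref{def_ind_CF} supplies, for each $(k_i)$, a $K$ and a bounded extension of $\m_n(\phi_{[n]})$. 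These choices are compatible with enlarging the $k_i$, because they are all restrictions of the single linear map $\m_n(\phi_{[n]})$ and the inclusions are isometric, so the bounded extensions agree on the dense algebraic tensor products. For $n=0$, condition (1) gives $\va_V\in H^0$, i.e.\ a map $\C\to H^0$. For $n=1$, condition (4) together with (5) gives the bounded maps.

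Next I would check the operad axioms of Definition \ref{def_operad_algebra} in $\Ind$. Unit, $S_n$-equivariance and $\circ_i$-compatibility all hold after restriction to the dense subspaces $H^{k_1}\otimes\cdots\otimes H^{k_n}$, since they hold in $\Vect$. Bounded maps that agree on a dense subspace are equal. Since the forgetful comparison $\Hom_{\Hilb}\to\Hom_{\Vect}$ is injective on such restrictions, the identities transfer. Proposition \ref{prop_CF_algebra} then turns the $\CE$-algebra in $\Ind$ into a symmetric monoidal functor $\Disk\to\Ind$. For this I would use the symmetric monoidal structure on $\Ind$ from Appendix \ref{app_Hilb}, strictified if needed.

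Finally, by Corollary \ref{cor_ind_Hilb}, $\Ind$ is cocomplete and $\hotimes$ distributes over colimits. Proposition \ref{prop_Left_Kan} then gives $\mathrm{Lan}_V$ as a symmetric monoidal functor $\Embc\to\Ind$, computed pointwise as a colimit over $\CE/(U,g,M)$.

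I expect the main obstacle to be the well-definedness in the first step. One must verify that the family of bounded extensions really defines a morphism in the ind-category, meaning compatibility in the limit over $(k_i)$ and independence of the choice of $K$. The key point is that the transition maps are isometric embeddings, hence monomorphisms, so the colimit over $K$ of $\Hom$-sets is a union. The remaining work is the bookkeeping that $\hotimes$ of ind-objects is computed levelwise, which I would take from Appendix \ref{app_Hilb}.
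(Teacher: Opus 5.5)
Your proposal is correct and follows the paper's own argument. The paper likewise regards the filtration as the ind-object $H^\bullet$, uses $\Hom_{\Ind}((H^\bullet)^{\otimes n},H^\bullet)\cong\lim_{k_1,\dots,k_n}\colim_K\Hom_{\Hilb}(H^{k_1}\hotimes\cdots\hotimes H^{k_n},H^K)$ to get a $\CE$-algebra in $\Ind$, and then applies Proposition \ref{prop_CF_algebra}, Corollary \ref{cor_ind_Hilb} and Proposition \ref{prop_Left_Kan}; your density and isometric-inclusion bookkeeping just spells out the compatibility checks the paper leaves implicit.
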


\begin{dfn}\label{def_continuous_character}
A state $\chi_M$ of $V$ on $M \in \Embc$ is called \textbf{continuous} if
for any $n \geq 1$, $\iota_{[n]} \in \Embc(\sqcup_n \overline{\bD},M)$ and $k_1,\dots,k_n \geq 0$,
\begin{align*}
\chi_M(\iota_{[n]})|_{H^{k_1}\otimes \cdots \otimes H^{k_n}} H^{k_1}\otimes \cdots \otimes H^{k_n} \rightarrow \C
\end{align*}
is a bounded linear map.
\end{dfn}
This continuity condition is equivalent to the requirement that the state defines a morphism in $\Ind$ (see \eqref{eq_ind_hom_dual}).
Hence, similarly to Proposition \ref{prop_universal_character}, we obtain the following:
\begin{prop}\label{prop_universal_character_ind}
The space of continuous states of $V$ on $M \in \Embc$ is naturally isomorphic, as a vector space, to
$\mathrm{Hom}_{\Ind}(\mathrm{Lan}_V(M),\C)$.
\end{prop}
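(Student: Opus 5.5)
The plan is to repeat the proof of Proposition \ref{prop_universal_character} inside $\Ind$, using that $\mathrm{Hom}_{\Ind}(-,\C)$ turns colimits into limits. By Theorem \ref{prop_ind_CF} and Proposition \ref{prop_Left_Kan}, $\mathrm{Lan}_V(M)$ is the colimit in $\Ind$ of $\tilde V$ over the comma category $\CE/M$, where $\tilde V(\iota_{[n]})=V^{\otimes n}$ in $\Ind$. Hence a morphism $f:\mathrm{Lan}_V(M)\to\C$ in $\Ind$ is the same as a compatible family of $\Ind$-morphisms $f_{\iota_{[n]}}:V^{\otimes n}\to \C$, one for each $\iota_{[n]}\in\Embc(\sqcup_n\overline{\bD},M)$. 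Compatibility means $f_{\iota_{[n]}\circ\psi}=f_{\iota_{[n]}}\circ V(\psi)$ for every morphism $\psi$ in $\Disk$.

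The first step is to identify $\mathrm{Hom}_{\Ind}(V^{\otimes n},\C)$. In $\Ind$ the tensor power $V^{\otimes n}$ is the ind-object $\text{``colim''}_{k_1,\dots,k_n} H^{k_1}\hotimes\cdots\hotimes H^{k_n}$. Since $\C$ is a compact (constant) object, \eqref{eq_ind_hom_dual} gives
\[
\mathrm{Hom}_{\Ind}(V^{\otimes n},\C)=\lim_{k_1,\dots,k_n}\mathrm{Hom}_{\Hilb}(H^{k_1}\hotimes\cdots\hotimes H^{k_n},\C).
\]
An element of this limit is a linear map $V^{\otimes n}\to\C$ whose restriction to each $H^{k_1}\otimes\cdots\otimes H^{k_n}$ is bounded, in the sense of Section \ref{sec_ind_Hilb}. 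This is exactly the continuity condition of Definition \ref{def_continuous_character}. The map from this limit into the algebraic $\mathrm{Hom}_\C(V^{\otimes n},\C)$ is injective, because $V^{\otimes n}$ is the union of these algebraic tensor products.

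The second step is to translate the compatibility conditions. A morphism $\psi:\sqcup_m\overline{\bD}\to\sqcup_n\overline{\bD}$ in $\Disk$ decomposes, up to permutation, as a tensor product of elements of $\CE(i_k)$. The functor $V:\Disk\to\Ind$ sends $\psi$ to the $\Ind$-morphism whose underlying linear map is the corresponding tensor product of the $\m_{i_k}$, possibly with a permutation. Because the forgetful map to linear maps is injective, the compatibility $f_{\iota\circ\psi}=f_\iota\circ V(\psi)$ holds in $\Ind$ if and only if it holds for the underlying linear maps. As in the proof of Proposition \ref{prop_universal_character}, these conditions are equivalent to the generating ones: condition (1) of Definition \ref{def_character} for the $\circ_i$-compositions, including $m=0$, and condition (2) for the symmetries, coming from the symmetric monoidal structure. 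Putting the two steps together, compatible $\Ind$-cocones to $\C$ correspond bijectively and linearly to continuous states.

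The main point needing care is the first step: that $\C$ is compact in $\Ind$ and that $\hotimes$ of ind-objects is computed levelwise, so that the Hom really is the stated limit. I would take both facts from Appendix \ref{app_Hilb}, namely \eqref{eq_ind_hom_dual} and Corollary \ref{cor_ind_Hilb}. Naturality in $M$ is then clear, since both sides are functorial under precomposition with morphisms $M\to M'$.
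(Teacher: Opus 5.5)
Your proposal is correct and follows the paper's route. The paper likewise deduces the statement from \eqref{eq_ind_hom_dual}, which says that continuity of each component is the same as being an $\Ind$-morphism, combined with the colimit argument of Proposition \ref{prop_universal_character}; you just spell out the cocone-compatibility and density details the paper leaves implicit. One small slip: \eqref{eq_ind_hom_dual} does not rest on compactness of $\C$, which concerns maps \emph{out of} $\C$; it rests only on $\mathrm{Hom}_{\Ind}(\colim_i y_{H_i},y_{\C})\cong\lim_i\mathrm{Hom}_{\Hilb}(H_i,\C)$, that is, the universal property of the colimit together with Yoneda.
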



\begin{prop}\label{prop_null_ideal}
Let $\chi_M$ be a continuous state of $V$ on $M \in \Embc$.
Let $N(\chi_M)$ be a subspace of $V$ consisting of all $v \in V$ such that
\begin{align}
\chi_M(\iota_{[n]})(a_1,\dots,a_{n-1},v)=0\label{eq_null_set_state}
\end{align}
for any $n \geq 1$ and $a_i \in V$ and $\iota_{[n]} \in \Embc(\sqcup_n \overline{\bD},M)$.
Then, $N(\chi_M)$ is a closed ideal. 
\end{prop}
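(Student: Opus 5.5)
We have to show two things about $N(\chi_M)$: it is an ideal in the sense of Definition \ref{def_ideal}, and it is closed in the sense of Definition \ref{def_closed}. Linearity of each $\chi_M(\iota_{[n]})$ makes $N(\chi_M)$ a subspace. We will treat the ideal property algebraically, using axiom (1) of Definition \ref{def_character} together with the symmetry axiom (2). Closedness will come from continuity of the state.

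\emph{Ideal property.} Let $v \in N(\chi_M)$, $(\phi_1,\dots,\phi_m) \in \CE(m)$ with $m\geq 1$, and $b_1,\dots,b_{m-1}\in V$. We must show $w=\rho_m(\phi_{[m]})(b_1,\dots,b_{m-1},v)\in N(\chi_M)$. So take any $n\geq 1$, any $\iota_{[n]}\in\Embc(\sqcup_n\overline{\bD},M)$ and any $a_1,\dots,a_{n-1}\in V$. By axiom (1) with $i=n$,
\begin{align*}
\chi_M(\iota_{[n]})(a_1,\dots,a_{n-1},w) = \chi_M(\iota_{[n]}\circ_n\phi_{[m]})(a_1,\dots,a_{n-1},b_1,\dots,b_{m-1},v).
\end{align*}
Here $\iota_{[n]}\circ_n\phi_{[m]}\in\Embc(\sqcup_{n+m-1}\overline{\bD},M)$, and $v$ occupies the last slot. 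The right-hand side therefore vanishes by the definition of $N(\chi_M)$. The operadic composition places the inserted inputs in the $n$-th block, so $v$ indeed ends up last. If one prefers to put $v$ in an arbitrary slot, axiom (2) lets us permute embeddings and inputs simultaneously. In either case $w\in N(\chi_M)$.

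\emph{Closedness.} Fix $k$. We write
\begin{align*}
N(\chi_M)\cap H^k=\bigcap \ker\bigl(v\mapsto \chi_M(\iota_{[n]})(a_1,\dots,a_{n-1},v)\bigr)\cap H^k,
\end{align*}
where the intersection runs over all $n$, all $\iota_{[n]}$ and all $a_i$. Fix such data and choose $k_i$ with $a_i\in H^{k_i}$. By Definition \ref{def_continuous_character}, the restriction of $\chi_M(\iota_{[n]})$ to $H^{k_1}\otimes\cdots\otimes H^{k_{n-1}}\otimes H^k$ is bounded with respect to the tensor-product norm. Since $\norm{a_1\otimes\cdots\otimes a_{n-1}\otimes v}=\prod\norm{a_i}\,\norm{v}$, the functional $v\mapsto\chi_M(\iota_{[n]})(a_1,\dots,a_{n-1},v)$ is bounded on $H^k$. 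Its kernel is therefore closed in $H^k$, and an intersection of closed subspaces is closed. This gives closedness.

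The only point requiring care is the bookkeeping in the ideal step: which slot the composed embedding acts on, and the compatibility of $\circ_n$ with input ordering. Nothing analytic is hard, because the continuity needed is exactly what Definition \ref{def_continuous_character} provides on elementary tensors.
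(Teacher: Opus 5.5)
Your proof is correct and follows the paper's argument. The ideal property comes from the state's compatibility with operadic composition: you correctly invoke axiom (1) of Definition \ref{def_character} with $i=n$, whereas the paper's one-line proof cites axiom (2), which is not actually needed. Closedness comes from writing $N(\chi_M)\cap H^k$ as an intersection of kernels of functionals $v\mapsto\chi_M(\iota_{[n]})(a_1,\dots,a_{n-1},v)$ on $H^k$. Your observation that boundedness on $H^{k_1}\otimes\cdots\otimes H^{k_{n-1}}\otimes H^k$ makes these functionals bounded, via the norm of elementary tensors, supplies exactly the detail the paper leaves implicit.
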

\begin{proof}
By Definition \ref{def_character} (2), $N(\chi_M)$ is an ideal.
Since $N(\chi_M) \cap H^k$ is the intersection, taken over all $a_i \in V$ and $\iota_{[n]} \in \Embc(\sqcup_n \bD,M)$, of
\begin{align*}
\ker\left(\chi_M(\iota_{[n]})(a_1,\dots,a_{n-1},\bullet):H^k \rightarrow \C\right) \subset H^k
\end{align*}
the continuity of the state implies that this is a closed subspace.
\end{proof}
\begin{rem}
If $\Embc(\bD,M)=\emptyset$, then \eqref{eq_null_set_state} always holds and we have $N(\chi_M)=V$.
\end{rem}

As a special case, we consider continuous states on $\bD$.
In this case, $\Embc(\sqcup_n \overline{\bD},\overline{\bD})$ is nothing but $\CE(n)$.
A sequence $(u_k \in H^k)_{k\geq 0}$ is said to be compatible if the restriction of
$(u_k,\bullet) \in (H^k)^*$ to $H^{k-1}$ coincides with $(u_{k-1},\bullet) \in (H^{k-1})^*$.
We denote by $V^\vee$ the vector space consisting of all compatible sequences.
This space is equal to the projective limit
$(H^k)^* \rightarrow (H^{k-1})^* \rightarrow \dots \rightarrow (H^1)^* \rightarrow (H^0)^*$,
which is isomorphic to $\mathrm{Hom}_{\Ind}(V,\C)$ by definition.
For any $u=(u_n) \in V^\vee$, the map
\begin{align*}
\chi_\bD^u:\CE(n) \rightarrow \mathrm{Hom}_\C(V^{\otimes n},\C),\quad
\phi_{[n]} \mapsto ( u , \rho_n(\phi_{[n]})(\bullet))
\end{align*}
defines a continuous state on $\bD$.
In particular, the choice $u = \va$ is called the \textbf{vacuum expectation value} in physics.

Assume that $V$ satisfies the condition (D).
Let $N^\alg(V)$ be the subspace of $V^\alg$ consisting of all $v \in V^\alg$ such that
\begin{align*}
\langle \va,\rho_n(\phi_{[n]})(a_1,\dots,a_{n-1},v)\rangle=0
\end{align*}
for any $n \geq 1$, $a_i \in V^\alg$, and $\phi_{[n]} \in \CE(n)$.
The following proposition is useful:
\begin{prop}\label{prop_criterion_simple}
Assume that $V$ satisfies condition (D) and that $\dim V_0=1$.
Then the following are equivalent:
\begin{enumerate}
\item
$V$ is simple.
\item
$N(\chi_\bD^{\va})=0$.
\item
$N^\alg(V) =0$.
\end{enumerate}
\end{prop}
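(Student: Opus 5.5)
I will prove the cycle (1)$\Rightarrow$(2)$\Rightarrow$(3)$\Rightarrow$(1).

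\emph{(1)$\Rightarrow$(2).} The vacuum expectation value $\chi_\bD^{\va}$ is a continuous state on $\overline{\bD}$. By Proposition \ref{prop_null_ideal}, $N(\chi_\bD^{\va})$ is a closed ideal. It is not all of $V$, since $\va\notin N(\chi_\bD^{\va})$: taking $n=1$ and $\phi=\id$ gives $\langle\va,\va\rangle\neq0$ by Proposition \ref{prop_semigroup} (3). Simplicity then forces $N(\chi_\bD^{\va})=0$.

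\emph{(2)$\Rightarrow$(3).} It suffices to show $N^\alg(V)\subset N(\chi_\bD^{\va})$. Let $v\in N^\alg(V)$; the defining identities hold only for $a_i\in V^\alg$, and we must extend them to arbitrary $a_i\in V$. The functional $(a_1,\dots,a_{n-1})\mapsto\langle\va,\rho_n(\phi_{[n]})(a_1,\dots,a_{n-1},v)\rangle$ is bounded on each $H^{k_1}\otimes\cdots\otimes H^{k_{n-1}}$ by Definition \ref{def_ind_CF} (5). By Proposition \ref{prop_semigroup} and the compatibility of \eqref{eq_orth_dec} with the filtration, $H^k\cap V^\alg$ is dense in $H^k$. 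Continuity in each variable therefore extends the vanishing to all $a_i\in V$.

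\emph{(3)$\Rightarrow$(1).} This is the main step. Let $I$ be a closed ideal with $I\neq V$.
\begin{itemize}
\item By Proposition \ref{prop_proper_ideal}, $(\va,w)=0$ for every $w\in I$.
\item Fix $w\in I$. Since $I$ is an ideal, $\rho_n(\phi_{[n]})(a_1,\dots,a_{n-1},w)\in I$. Hence $\langle\va,\rho_n(\phi_{[n]})(a_1,\dots,a_{n-1},w)\rangle=0$ for all $a_i$ and all $\phi_{[n]}$, so $I\subset N(\chi_\bD^{\va})$.
\item The ideal $I$ is closed and stable under dilations, because $\rho_1(D(r))=\rho_2(D(r),*\text{-type})$ is among the ideal operations with $n=1$. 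By Lemma \ref{lem_lowest_argument}, if $I\neq0$ then $I\cap V_\Delta\neq0$ for some $\Delta$.
\item Any such nonzero homogeneous vector lies in $V^\alg$ and in $N(\chi_\bD^{\va})$. Restricting its defining identities to $a_i\in V^\alg$ shows it lies in $N^\alg(V)$, contradicting (3).
\end{itemize}
Hence $I=0$, and $V$ is simple.

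The main obstacle is the density argument in (2)$\Rightarrow$(3). It needs that the boundedness in Definition \ref{def_ind_CF} (5) gives separate continuity in each slot, and that $V^\alg\cap H^k$ is dense in $H^k$. Both follow because each $V_\Delta$ lies in some $H^k$ and $H^k=\bigoplus_\Delta H^k_\Delta$ orthogonally.
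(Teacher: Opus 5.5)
Your proof is correct and essentially the paper's. It uses the same ingredients: Proposition~\ref{prop_null_ideal} for (1)$\Rightarrow$(2), density of $V^\alg\cap H^k$ together with continuity for (2)$\Rightarrow$(3), and Proposition~\ref{prop_proper_ideal} with Lemma~\ref{lem_lowest_argument} for the return direction. The only differences are that you arrange the implications as a cycle and apply Lemma~\ref{lem_lowest_argument} directly to $I$ rather than to $N(\chi_\bD^{\va})$, merging the paper's (2)$\Rightarrow$(1) and (3)$\Rightarrow$(2); also, dilation-stability of $I$ is just the $n=1$ case of Definition~\ref{def_ideal}, so the ``$\rho_2(D(r),*)$'' aside is unnecessary.
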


%
%
\begin{proof}
Assume (1).
By the definition of $\chi_{\bD}^{\va}$, we have $\va \notin N(\chi_\bD^{\va})$, and hence
$N(\chi_\bD^{\va})=0$ by Proposition \ref{prop_null_ideal}.
Conversely, assume (2).
Let $I \subset V$ be a closed ideal.
Assume that $I\neq V$.
Then, by Proposition \ref{prop_proper_ideal}, every element of $I$ is orthogonal to $\va$.
It follows that $I \subset N(\chi_\bD^{\va})$, and hence $I=0$.
Therefore, $V$ is simple.

We next show the equivalence of (2) and (3).
Assume that $N(\chi_\bD^{\va})\neq 0$.
Then, by Lemma \ref{lem_lowest_argument}, we have $N(\chi_\bD^{\va})^\alg \neq 0$.
Hence $N^\alg(V) \neq 0$.
Conversely, suppose that $N^\alg(V) \neq 0$.
Let $v \in N^\alg(V)$.
Then, for any $\phi_{[n]} \in \CE$,
\begin{align*}
\langle \va,\rho_n(\phi_{[n]})(\bullet,\dots,\bullet,v)\rangle: V^{\otimes n-1} \rightarrow \C
\end{align*}
restricts to zero on $(V^\alg)^{\otimes n-1}$.
Since $V^\alg \cap H^k$ is dense in $H^k$, the continuity of the state implies that
$v \in N(\chi_\bD^{\va})$.
%
%
\end{proof}

\begin{cor}\label{cor_vertex_simple}
Assume that $V$ satisfies condition (D) and that $\dim V_0=1$. Set
\begin{align*}
N_{2\text{-pt}}^\alg(V) = \{v \in V^\alg \mid  \langle \va, \rho_{T_x D(r), D(s)}(w,v)\rangle = 0 \text{ for any }w\in V^\alg, (T_x D(r), D(s)) \in \CE(2)\}.
\end{align*}
If $N_{2\text{-pt}}^\alg(V)=0$, then $V$ is simple.
\end{cor}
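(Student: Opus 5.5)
The plan is to verify condition (3) of Proposition \ref{prop_criterion_simple}, that is, $N^\alg(V)=0$. Simplicity then follows immediately. Since $N^\alg(V)\subset N_{2\text{-pt}}^\alg(V)$ is not obvious directly (the defining conditions of $N^\alg(V)$ quantify over all $\phi_{[n]}\in\CE(n)$ and $a_i\in V^\alg$, while $N_{2\text{-pt}}^\alg$ only uses special two-point configurations), I will instead show the inclusion $N^\alg(V)\subset N_{2\text{-pt}}^\alg(V)$ by specialization. If $v\in N^\alg(V)$, then in particular taking $n=2$, $\phi_{[2]}=(T_xD(r),D(s))\in\CE(2)$ and $a_1=w\in V^\alg$ gives $\langle \va,\rho_{T_xD(r),D(s)}(w,v)\rangle=0$. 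This is exactly the defining condition of $N_{2\text{-pt}}^\alg(V)$, so $v\in N_{2\text{-pt}}^\alg(V)=0$.

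One point needs care: the ordering of the slots. In $N^\alg(V)$ the distinguished element $v$ sits in the last slot, and $\phi_{[2]}=(\phi_1,\phi_2)$ is arbitrary. In $N_{2\text{-pt}}^\alg$ the vector $v$ is fed into the factor attached to $D(s)$, which is also the second slot. So no use of the $S_2$-equivariance in Definition \ref{def_operad_algebra} is even needed. If one preferred the other ordering, $S_n$-equivariance would swap the slots.

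Having $N^\alg(V)=0$, the hypotheses (D) and $\dim V_0=1$ allow us to invoke Proposition \ref{prop_criterion_simple}, (3)$\Rightarrow$(1). Internally this runs as follows. Lemma \ref{lem_lowest_argument} upgrades triviality of the algebraic radical to $N(\chi_\bD^{\va})=0$, using the continuity of the vacuum state and the density of $V^\alg\cap H^k$ in $H^k$. Proposition \ref{prop_proper_ideal} then places any proper closed ideal inside that radical.

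There is essentially no obstacle here: the only thing to check is that the two-point configurations used in $N_{2\text{-pt}}^\alg$ are genuinely elements of $\CE(2)$, which is built into the definition. The substantive work was already done in Proposition \ref{prop_criterion_simple}.
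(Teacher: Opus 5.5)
Your proof is correct and is the same as the paper's: specializing the defining condition of $N^\alg(V)$ to $n=2$, $\phi_{[2]}=(T_xD(r),D(s))$, $a_1=w$ gives $N^\alg(V)\subset N_{2\text{-pt}}^\alg(V)=0$, and (3)$\Rightarrow$(1) of Proposition~\ref{prop_criterion_simple} then gives simplicity. One small correction to your sketch of that proposition's internals: the step (3)$\Rightarrow$(2) only needs Lemma~\ref{lem_lowest_argument} applied to the closed, dilation-stable ideal $N(\chi_\bD^{\va})$, whose closedness comes from continuity of the vacuum state, whereas the density of $V^\alg\cap H^k$ in $H^k$ is what the paper uses for the converse (2)$\Rightarrow$(3).
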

\begin{proof}
It is clear that $N^\alg(V) \subset N_{2\text{-pt}}^\alg(V)$. Hence, by Proposition \ref{prop_criterion_simple}, $V$ is simple.
\end{proof}
In \cite{Mvertex}, the simplicity of the $\CE$-algebra constructed in Theorem \ref{thm_CF} will be proved by showing that
$N_{2\text{-pt}}^\alg(V) = 0$.
By Lemma \ref{lem_state_non_trivial}, we have:
\begin{prop}\label{prop_state_non_trivial}
Let $M \in \Embc$ satisfy $\Embc(\overline{\bD},M) \neq \emptyset$.
If $V$ is simple and $\chi_M$ is a non-zero state of $V$ on $M$, then $N(\chi_M)=0$.
\end{prop}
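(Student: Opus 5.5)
The plan is to argue by contradiction, using the closed ideal $N(\chi_M)$ of Proposition~\ref{prop_null_ideal} together with the non-triviality Lemma~\ref{lem_state_non_trivial}. Throughout, $\chi_M$ is taken to be continuous, as in the surrounding discussion; continuity is needed for $N(\chi_M)$ to be closed. The outline is as follows.

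First, by Proposition~\ref{prop_null_ideal}, $N(\chi_M)$ is a closed ideal of $V$. Since $V$ is simple, there are only two possibilities: $N(\chi_M)=0$ or $N(\chi_M)=V$. It therefore suffices to exclude $N(\chi_M)=V$.

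Second, suppose $N(\chi_M)=V$. By definition \eqref{eq_null_set_state}, this means
\[
\chi_M(\iota_{[n]})(a_1,\dots,a_{n-1},v)=0
\]
for every $n\geq 1$, every $\iota_{[n]}\in\Embc(\sqcup_n\overline{\bD},M)$, and all $a_1,\dots,a_{n-1},v\in V$. Elementary tensors span $V^{\otimes n}$, so $\chi_M(\iota_{[n]})=0$ as a linear map for all $n\geq 1$.

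Third, since $\Embc(\overline{\bD},M)\neq\emptyset$ and $\chi_M\neq 0$, Lemma~\ref{lem_state_non_trivial} provides some $n\geq 1$ and some $\phi_{[n]}$ with $\chi_M(\phi_{[n]})\neq 0$. This contradicts the second step, so $N(\chi_M)=0$.

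There is no real obstacle here. The only point requiring care is matching hypotheses. Lemma~\ref{lem_state_non_trivial} is stated with $\Embc(\bD,M)$, which should be read as $\Embc(\overline{\bD},M)$, consistent with the proof of that lemma. Proposition~\ref{prop_null_ideal} requires continuity, which is why the state is assumed continuous.

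The argument via Lemma~\ref{lem_state_non_trivial} is the cleaner route. One could instead argue directly: $N(\chi_M)=V$ contains $\va$, and then $\chi_M(\emptyset)=\chi_M(f)(\va)=0$ for any $f\in\Embc(\overline{\bD},M)$. Combined with the vanishing for $n\geq 1$, this forces $\chi_M=0$.
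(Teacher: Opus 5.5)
Your proof is correct and follows the paper's own route: the paper obtains the statement directly from Lemma~\ref{lem_state_non_trivial}, using that $N(\chi_M)$ is a closed ideal (Proposition~\ref{prop_null_ideal}), so simplicity leaves only $N(\chi_M)=0$ or $N(\chi_M)=V$, and the latter is excluded. Your remark that $\chi_M$ must be continuous for $N(\chi_M)$ to be closed is apt and matches the paper's intent, since the introduction states this result for non-zero \emph{continuous} states.
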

Hence, any non-zero continuous state on a simple $\CE$-algebra is ``non-degenerate.''

%
%

\subsection{Continuous state on sphere}\label{sec_correlation}

Let $(V,\m)$ be a $\CE$-algebra equipped with a Hilbert space filtration $V= \cup_{k=0}^\infty H^k$.
In this section, we assume that
the monoid homomorphism $\rho:\CE(1) \rightarrow \mathrm{End}(V)$ satisfies the assumptions (U) and (D) in Definition \ref{def_semigroup_nice}.
We will show that the state $\chi_\bD^{\va}$ on the disk extends uniquely to a conformally invariant continuous state on the sphere.
\begin{lem}\label{lem_vacuum_shrink_inv}
Let $g \in \fS$ such that $\overline{g(\bD)} \subset \bD$. Then, $(\va,\rho(g)v)=(\va,v)$ for any $v \in V$.
\end{lem}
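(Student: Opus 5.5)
Since $\overline{g(\bD)}\subset\bD$, Proposition \ref{prop_decomposition} (2) lets us write $g=h_1 r^D h_2$ with $h_1,h_2\in\fG_d$ and $0<r<1$. As $\rho$ is a monoid homomorphism, $\rho(g)=\rho(h_1)\rho(r^D)\rho(h_2)$, and the claim reduces to three pieces: the vacuum pairing is invariant under $\rho(h)$ for $h\in\fG_d$ on either side, and under $\rho(r^D)$.

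First, for $h\in\fG_d$, condition (U) says that $\rho(h)|_{H^k}$ is unitary on each $H^k$. By Proposition \ref{prop_vacuum_1} (1), $\rho(h^{-1})\va=\va$. Fix $v\in H^k$ with $k$ large enough that $\va\in H^0\subset H^k$. Unitarity gives $\rho(h)^*=\rho(h)^{-1}=\rho(h^{-1})$ on $H^k$, so
\[
(\va,\rho(h)v)=(\rho(h^{-1})\va,v)=(\va,v).
\]
Applying this to $h_1$ and then to $h_2$ moves both group factors past the vacuum.

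Second, for the dilation: $\rho(r^D)$ is self-adjoint on $H^k$ by (D-3), and it fixes $\va$ by Proposition \ref{prop_vacuum_1} (1). Hence
\[
(\va,\rho(r^D)w)=(\rho(r^D)\va,w)=(\va,w).
\]
Chaining the three steps gives $(\va,\rho(g)v)=(\va,\rho(r^D)\rho(h_2)v)=(\va,\rho(h_2)v)=(\va,v)$.

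The only point needing care is that the inner products are taken inside a single $H^k$. This holds because condition (4) of Definition \ref{def_ind_CF} makes each $\rho(\cdot)$ preserve $H^k$, and condition (2) makes the inner products compatible along the filtration. No analytic estimate is needed beyond the decomposition in Proposition \ref{prop_decomposition}, which is where the hypothesis $\overline{g(\bD)}\subset\bD$ is used.
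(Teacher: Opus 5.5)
Your proof is correct and follows essentially the same route as the paper. You write $g=h_1r^Dh_2$ via Proposition \ref{prop_decomposition} (2), then move each factor onto the vacuum using unitarity of $\rho|_{\fG_d}$ from (U), self-adjointness of $\rho(r^D)$ from (D-3), and the invariance $\rho(\phi)\va=\va$. The only difference is that you spell out details the paper leaves implicit, such as taking all inner products inside a single $H^k$.
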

\begin{proof}
By Proposition \ref{prop_decomposition}, $g=h_1r^Dh_2$. Since the action of $\fG$ is unitary and $r^D$ is self-adjoint, the assertion follows from $\rho(\phi)\va=\va$ for any $\phi \in \fS$.
\end{proof}

Let $(M,g)$ be a compact Riemannian manifold, which is regarded as $(M,g,M) \in \Embc$.
Then, similarly to Proposition \ref{prop_CE_d_explicit},
\begin{align*}
\Embc(\sqcup_n \overline{\bD}, M)
&=\left\{(f_1,\dots,f_n) \in \Embc(\overline{\bD},M)
\mid \overline{f_i(\bD)}\cap \overline{f_j(\bD)}=\emptyset \text{ for any }i\neq j \right\},
\end{align*}
where $\overline{f_i(\bD)}$ denotes the closure of $f_i(\bD)$ in $M$.
Since the automorphism group of $M$ in $\Embc$ coincides with $\Conf^+(M,g)$ by Proposition \ref{prop_category_mor}, $\Conf^+(M,g)$ acts naturally on this space of disk configurations.

We set $\Embf_n^\bD(S^d) = \Embc(\sqcup_n \overline{\bD}, S^d)$.
By Remark \ref{rem_standard_sphere}, the map $\bD \overset{\si_0}{\hookrightarrow} S^d$ is in $\Embc$. Thus, $\CE$ can be regarded as a subset of $\Embf_n^\bD(S^d)$. In fact,
\begin{align*}
\CE(n) =\left\{(f_1,\dots,f_n) \in \Embf_n^\bD(S^d) \mid f_i(\bD) \subset \si_0(\bD) \right\}.
\end{align*}

%
%
%
Set
\begin{align*}
\Embf_n^\bD(\R^d)&=\{f_{[n]}=(f_1,\dots,f_n) \in \Embf_n^\bD(S^d) \mid \infty \notin \overline{f_i(\bD)} \text{ for any }i \}.
\end{align*}
\begin{lem}\label{lem_shrink}
For any $f_{[n]} \in \Embf_n^\bD(\R^d)$, there exists $r>0$ such that
\begin{align*}
r^D f_{[n]} \in \CE(n),
\end{align*}
that is, $r^D f_i(\bD) \subset \bD$ for any $i=1,\dots,n$.
\end{lem}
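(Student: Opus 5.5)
Since $\overline{f_i(\bD)}$ is the closure in $S^d$ and $\infty\notin\overline{f_i(\bD)}$, each $\overline{f_i(\bD)}$ is a compact subset of $S^d\setminus\{\infty\}=\si_0(\R^d)\cong\R^d$. The plan is to use only this compactness: find one ball that contains all the images, then shrink it into the unit disk with a single dilation.

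First, set $K=\bigcup_{i=1}^n \overline{f_i(\bD)}$. This is a finite union of compact sets, hence compact in $\R^d$, so $R=\max_{x\in K}\norm{x}<\infty$. Second, choose any $0<r<1/R$ (any $r>0$ if $R=0$). Then for every $x\in f_i(\bD)$ we have $\norm{r^D x}=r\norm{x}\leq rR<1$. This gives $r^D f_i(\bD)\subset \overline{r^D f_i(\bD)}\subset\bD$.

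Third, check that $r^D f_{[n]}=(r^Df_1,\dots,r^Df_n)$ satisfies the conditions describing $\CE(n)$ inside $\Embf_n^\bD(S^d)$.
\begin{itemize}
\item[(a)] Each $r^D f_i$ is an element of $\Embc(\overline{\bD},S^d)$, being a composition of morphisms in $\Embc$ (here $r^D\in\Conf^+(S^d)$, the dilation extended by fixing $\infty$).
\item[(b)] The closures stay disjoint: $r^D$ is a homeomorphism of $S^d$, so $\overline{r^Df_i(\bD)}=r^D\overline{f_i(\bD)}$, and these are pairwise disjoint.
\item[(c)] The images lie in the disk: $r^D f_i(\bD)\subset\bD=\si_0(\bD)$ by the second step.
\end{itemize}
By the description of $\CE(n)$ just before the lemma, $r^D f_{[n]}\in\CE(n)$. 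Equivalently, via Theorem \ref{thm_Liouville} and Proposition \ref{prop_CE_d_explicit} for $d\ge3$, or via the germ description in Proposition \ref{prop_CE_two} for $d=2$, each $r^Df_i$ is a conformal embedding of a neighborhood of $\overline{\bD}$ landing in $\bD$, and the closures are disjoint.

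The only point requiring care is (a)–(c) as a whole: membership in $\CE(n)$ must be understood as a morphism of germs into $(\R^d,g_\std,\overline{\bD})$ with core condition $f(\overline{\bD})\subset\overline{\bD}$. Since $\overline{r^Df_i(\bD)}\subset\bD$ is compact, the germ of $r^Df_i$ near $\overline{\bD}$ lands inside the disk. After composing with $\si_0^{-1}$ it therefore defines a morphism into $\overline{\bD}$, so there is no real obstacle.
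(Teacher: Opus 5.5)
Your proof is correct and takes the same approach as the paper. Each closure $\overline{f_i(\bD)}$ in $S^d$ avoids $\infty$, so it is compact in $\R^d$ and lies in some ball $B_R(0)$; a single dilation by a factor less than $1/R$ then moves all the images into $\bD$. Your checks (a)--(c), that the dilated tuple is a genuine element of $\CE(n)$ (disjoint closures, germ landing in the disk), are details the paper leaves implicit.
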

\begin{proof}
By assumption, there exists a sufficiently large $R>0$ such that
$\overline{f_i(\bD)} \subset B_R(0)$ for any $i=1,\dots,n$.
Hence $R^{-D}f_i(\bD) \subset B_1(0)=\bD$.
\end{proof}
\begin{rem}
The Cayley transform $\cC$ that sends $\bD$ to the upper half-space
$\mathbb{H}_d=\{x_d > 0\}$ is an element of $\Embc(\overline{\bD},S^d)$ (see Appendix \ref{app_remark}).
Although $\infty \notin \cC(\bD)$, the Cayley transform cannot be made an element of $\CE(1)$
by composing with dilation $r^D$, that is, $r^D \cC \notin \CE(1)$ for all $r$.
Therefore, the closure in the definition of $\Embf_n^\bD(\R^d)$ is necessary.
\end{rem}

%

\begin{lem}\label{lem_reach}
For any $f_{[n]} \in \Embf_n^\bD(S^d)$, there exists $g \in \Conf^+(S^d)$ such that
\begin{align*}
g f_{[n]} \in \Embf_n^\bD(\R^d).
\end{align*}
\end{lem}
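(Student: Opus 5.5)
The plan is to find a point of $S^d$ lying outside the union of the closed images, and then move that point to $\infty$ by a conformal transformation.

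\emph{Step 1: a free point exists.} For $f_{[n]}=(f_1,\dots,f_n)\in\Embf_n^\bD(S^d)$, each $f_i$ is defined on a neighborhood of $\overline{\bD}$ and is an open embedding. Hence $\overline{f_i(\bD)}=f_i(\overline{\bD})$ is a compact set whose boundary is the embedded sphere $f_i(\partial\bD)$. I claim $K=\bigcup_{i=1}^n f_i(\overline{\bD})$ is a proper subset of $S^d$.

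Suppose instead $K=S^d$. The sets $f_i(\overline{\bD})$ are pairwise disjoint closed sets covering the connected space $S^d$, so $n=1$. Then $f_1(\overline{\bD})=S^d$. But $f_1|_{\overline{\bD}}$ is injective, so it would give a homeomorphism from the closed ball $\overline{\bD}$ onto $S^d$. This is impossible, for instance because $\overline{\bD}$ is contractible while $S^d$ is not. Alternatively, points of $f_1(\partial\bD)$ are interior points of $S^d$, whereas invariance of domain shows $f_1$ maps $\partial\bD$ to boundary points of the image. So some $p\in S^d\setminus K$ exists, and when $n=0$ the statement is trivial.

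\emph{Step 2: move $p$ to $\infty$.} Since $\Conf^+(S^d)\cong\SO^+(d+1,1)$ (Proposition~\ref{prop_Liouville_sphere}) acts transitively on $S^d=\mathbb{P}(\mathcal N)$, there is $g\in\Conf^+(S^d)$ with $g(p)=\infty$. Concretely, if $p=\si_0(x_0)$, take $g=JT(-x_0)$ composed with a rotation to fix orientation, or use $T(-x_0)$ followed by an orientation-preserving element sending $0$ to $\infty$, such as $J$ composed with a reflection in $\mathrm{O}(d)$. If $p=\infty$ already, take $g=\id$.

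\emph{Step 3: check the result.} $g f_{[n]}$ is still a morphism in $\Embc$, since composing with an automorphism of $(S^d,g_\std,S^d)$ preserves disjointness of the closed images. Moreover $\overline{gf_i(\bD)}=g(\overline{f_i(\bD)})\not\ni g(p)=\infty$ for every $i$, so $gf_{[n]}\in\Embf_n^\bD(\R^d)$.

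The only genuinely nontrivial point is Step 1, namely that a single closed disk image cannot exhaust $S^d$. That is handled by the topological argument above.
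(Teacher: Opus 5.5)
Your proof is correct and takes the same approach as the paper: pick a point outside $\bigcup_i \overline{f_i(\bD)}$ and send it to $\infty$ by an element of $\Conf^+(S^d)$, and your Step 1 makes explicit the case $n=1$ (a single closed disk image cannot be all of $S^d$), which the paper's one-line appeal to disjointness of the closed sets leaves implicit. One small slip: composing $JT(-x_0)$ with a rotation does not fix orientation, since you need a reflection in $\mathrm{O}(d)\setminus\SO(d)$, as in your second option.
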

\begin{proof}
Since $\overline{f_i(\bD)}$ ($i=1,\dots,n$) are disjoint closed subsets of $S^d$,
there exists a point $x_0 \in S^d$ such that $x_0 \notin \overline{f_i(\bD)}$ for all $i$.
Choose a conformal transformation $g \in \Conf^+(S^d)$ that sends $x_0$ to $\infty$.
Then $\infty \notin \overline{g \circ f_i(\bD)}$ for all $i$, and hence the proposition follows.
\end{proof}

\begin{prop}\label{thm_character}
Let $(V,\m)$ be a $\CE$-algebra in $\Vect$ equipped with a Hilbert space filtration $V= \cup_{k=0}^\infty H^k$
which satisfies (D) and (U) in Definition \ref{def_semigroup_nice}.
Then, there exists a unique state $\chi_S$ on $S^d$ such that:
\begin{enumerate}
\item
$\chi_S$ is conformally invariant.
\item
If $f_{[n]} \in \Embf_n^\bD(S^d)$ satisfies $f_i(\bD) \subset \bD$, that is, $f_{[n]} \in \CE(n)$ then
\begin{align*}
\chi_S(f_{[n]})(a_1,\dots,a_n) = \langle \va,\m_{f_{[n]}}(a_1,\dots,a_n)\rangle
\end{align*}
for any $a_1,\dots,a_n \in V$.
\end{enumerate}
Moreover, the state is continuous and $\chi_{S}(\emptyset)\neq 0$.
\end{prop}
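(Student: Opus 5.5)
The plan is to define $\chi_S$ by transporting the vacuum expectation value. Given $f_{[n]}\in\Embf_n^\bD(S^d)$, Lemma \ref{lem_reach} and Lemma \ref{lem_shrink} give some $g\in\Conf^+(S^d)$ with $gf_{[n]}\in\CE(n)$. We then set
\begin{align*}
\chi_S(f_{[n]})(a_1,\dots,a_n)=\langle \va,\m_{gf_{[n]}}(a_1,\dots,a_n)\rangle .
\end{align*}
For $n=0$ we put $\chi_S(\emptyset)=\langle\va,\va\rangle=\norm{\va}^2\neq 0$; this is nonzero by Proposition \ref{prop_semigroup}(3), provided $V\neq 0$. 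Uniqueness is then immediate. Conformal invariance together with condition (2) forces this formula, because every configuration can be moved into $\CE(n)$ by some element of $\Conf^+(S^d)$. Continuity follows from Definition \ref{def_ind_CF}(5): each $\m_{gf_{[n]}}$ is bounded $H^{k_1}\otimes\cdots\otimes H^{k_n}\to H^K$, and pairing with $\va\in H^0\subset H^K$ is bounded.

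The main obstacle is well-definedness. Suppose $g,g'$ both send $f_{[n]}$ into $\CE(n)$, and set $h=g'g^{-1}\in\Conf^+(S^d)\cong\SO^+(d+1,1)$. We must show
\begin{align*}
\langle\va,\m_{\phi}(\cdot)\rangle=\langle\va,\m_{h\phi}(\cdot)\rangle \quad\text{whenever } \phi,\ h\phi\in\CE(n).
\end{align*}
If $h\in\fS_d$, Proposition \ref{prop_vacuum_1}(3) gives $\m_{h\phi}=\rho(h)\m_\phi$. It then suffices that $\langle \va,\rho(h)v\rangle=\langle\va,v\rangle$ for all $h\in\fS_d$. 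For $h\in\fG_d$ this holds by unitarity (U) and $\rho(h^{-1})\va=\va$. For $\overline{h(\bD)}\subset\bD$ it is Lemma \ref{lem_vacuum_shrink_inv}. For the boundary case $h=T_{x_0}r^Dh'$ of Proposition \ref{prop_decomposition}(1), write $h\,s^D$ with $s<1$, which has closure inside $\bD$. Then use $\langle\va,\rho(h)\rho(s^D)v\rangle=\langle\va,\rho(s^D)v\rangle=\langle\va,v\rangle$, where the last step holds because $s^D$ is self-adjoint and fixes $\va$. Letting $s\to1$ with (D-1) and boundedness of $\rho(h)$ gives the claim.

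For general $h$, I would connect $\phi$ and $h\phi$ by elements of $\fS_d$ and inverses of such, in the following steps.
\begin{itemize}
\item[(a)] The union $\bigcup_i\overline{\phi_i(\bD)}$ is a compact subset of $\bD$, and its image under $h$ is also compact in $\bD$.
\item[(b)] Choose $s<1$ so that both compact sets lie in $B_s(0)$.
\item[(c)] Note $s^{-D}\phi\in\CE(n)$ and $s^{-D}h\phi\in\CE(n)$.
\end{itemize}
After applying this dilation trick, the element $k=s^{-D}hs^{D}$ maps the compact set $K=s^{-D}\bigcup_i\overline{\phi_i(\bD)}$ into $\bD$. I would try to factor $h$ as a product of elements of $\fS_d$ and their inverses, each of which keeps the configuration inside $\CE(n)$. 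One way is to use connectedness of $\SO^+(d+1,1)$ and move along a path $h_t$ with $h_t\phi\in\CE(n)$ for all $t$; by compactness, each small step is a near-identity element, which is a product $h'\,(r^D)^{\pm1}$ with $h'\in\fG_d$ up to small corrections. A cleaner alternative is to take $R$ with $R^D\cdot(\text{both configurations})$ well inside $\bD$ and reduce to the group generated by $\fS_d$ acting on $\bD$. I expect this combinatorial or geometric factorization, and in particular ensuring that each intermediate configuration stays in $\CE(n)$, to be the delicate step.

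Once well-definedness holds, the state axioms of Definition \ref{def_character} transfer directly from the operad axioms for $\m$. Composition with $\CE(m)$ commutes with postcomposition by $g$, and $S_n$-equivariance is inherited from $\m$. Conformal invariance holds by construction, since replacing $f$ by $\psi f$ just replaces $g$ by $g\psi^{-1}$. Condition (2) holds by taking $g=\id$.
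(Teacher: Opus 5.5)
\textbf{There is a genuine gap at the step you flag yourself.} Your reduction is correct up to that point. Everything depends on showing
\[
\langle\va,\m_\phi(\cdot)\rangle=\langle\va,\m_{h\phi}(\cdot)\rangle\quad\text{whenever } \phi,\ h\phi\in\CE(n),
\]
for an \emph{arbitrary} $h\in\Conf^+(S^d)$, and the proposal does not prove this. Your handling of $h\in\fS_d$ is fine. The general case, however, is only sketched, and the sketch does not work as stated:
\begin{itemize}
\item The path/connectedness argument would need the set $\{h\mid h\phi\in\CE(n)\}$ to be connected. It would also need every factor of each small step to keep the configuration inside $\CE(n)$, since $\m$ is only defined there. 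Neither is established.
\item ``$h'(r^D)^{\pm1}$ up to small corrections'' leaves the translation part uncontrolled.
\item Step (a) is false. For $\phi\in\CE(n)$ the closures $\overline{\phi_i(\bD)}$ only lie in $\overline{\bD}$; take $\phi=\id_\bD$, or a ball internally tangent to $\partial\bD$. You would first have to replace $\phi$ by $t^D\phi$. This is harmless, since $\langle\va,\m_{t^D\phi}\rangle=\langle\va,\m_\phi\rangle$.
\end{itemize}

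\textbf{The missing idea is the explicit decomposition of a M\"obius map through the image sphere $h(\partial\bD)$} (Lemma \ref{lem_decomposition}).
\begin{itemize}
\item If $h(\partial\bD)$ is a sphere in $\R^d$ with center $x$ and radius $r$, then $h=T_xr^Dh'$ with $h'\in\fG_d$.
\item If $h(\partial\bD)$ is a hyperplane, then $h=h_1T_xr^Dh_2$ with $h_1,h_2\in\fG_d$.
\end{itemize}
With this no path or factorization search is needed. In the first case $h'\phi\in\CE(n)$. For $s>0$ small, $s^DT_xr^D\in\fS_d$ satisfies $\overline{s^DT_xr^D(\bD)}\subset\bD$. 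Hence
\[
\langle\va,\m_{h\phi}\rangle=\langle\va,\rho(s^D)\m_{h\phi}\rangle=\langle\va,\rho(s^DT_xr^D)\m_{h'\phi}\rangle=\langle\va,\m_{h'\phi}\rangle=\langle\va,\m_{\phi}\rangle .
\]
The steps use, in order:
\begin{itemize}
\item self-adjointness of $\rho(s^D)$ together with $\rho(s^D)\va=\va$;
\item Proposition \ref{prop_vacuum_1}(3);
\item Lemma \ref{lem_vacuum_shrink_inv};
\item unitarity on $\fG_d$.
\end{itemize}
Every intermediate configuration, namely $h'\phi$ and $s^Dh\phi$, lies in $\CE(n)$ automatically. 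In the hyperplane case, apply the same argument to $h_1^{-1}h=T_xr^Dh_2$. Note that $h_1^{-1}h\phi\in\CE(n)$ because $h_1\in\fG_d$.

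This is precisely the paper's route. There it is phrased as invariance of $\tilde\chi$ under $\Conf^+_\infty(S^d)$, where translations are absorbed by a compensating dilation, combined with Lemma \ref{lem_char_disk}. Once this step is supplied, the rest of your argument goes through: uniqueness, continuity, the state axioms, conformal invariance, and $\chi_S(\emptyset)=\norm{\va}^2\neq0$.
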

By Lemma \ref{lem_shrink} and Lemma \ref{lem_reach}, the uniqueness of a state satisfying (1) and (2) of Proposition \ref{thm_character} is clear.
In the following, we show the existence.
Note that the subgroup
\begin{align}
\Conf_\infty^+(S^d)=
\{g \in \Conf^+(S^d) \mid g(\infty)=\infty\} \cong \R^d \rtimes (\SO(d)\times \R_{>0})\label{eq_decomp_Matsu}
\end{align}
that fixes $\infty \in S^d$ acts on $\Embf_n^\bD(\R^d)$ \cite[Proposition 1.9]{Matsumoto}.
Let $f_{[n]} = (f_1,\dots,f_n) \in \Embf_n^\bD(\R^d)$.
Then, by Lemma \ref{lem_shrink} for sufficiently small $r>0$,
\begin{enumerate}
\item[(R1)]
$r^D \circ f_{[n]} \in \CE(n)$
holds.
\end{enumerate}
Define a linear map
\begin{align*}
\tilde{\chi}_{f_{[n]}}: V^{\otimes n} \rightarrow \C,\quad \tilde{\chi}_{f_{[n]}}(v_1,\dots,v_n) = (\va, \m_{r^D f_{[n]}}(v_1,\dots,v_n)).
\end{align*}
For any $1>s>0$, since $\rho(s^D)$ is a self-adjoint operator and $\rho(s^D) \va=\va$,
\begin{align*}
(\va, \m_{r^D f_{[n]}}(v_1,\dots,v_n))&=
(\rho(s^D) \va, \m_{r^D f_{[n]}}(v_1,\dots,v_n)) \\
&= (\va, \rho(s^D) \m_{r^D f_{[n]}}(v_1,\dots,v_n))\\
&=(\va, \m_{(sr)^D f_{[n]}}(v_1,\dots,v_n)).
\end{align*}
Hence, $\tilde{\chi}_{f_{[n]}}$ is independent of the choice of $r>0$ which satisfies (R1).

\begin{lem}\label{lem_invariance}
For any $g \in \Conf_\infty^+(S^d)$, $\tilde{\chi}_{f_{[n]}} = \tilde{\chi}_{gf_{[n]}}$ for any $f_{[n]} \in \Embf_n^\bD(\R^d)$.
Moreover, if $g \in \fG$ and $f_{[n]} \in \CE(n)$, then $\tilde{\chi}_{f_{[n]}} = \tilde{\chi}_{gf_{[n]}}$.
\end{lem}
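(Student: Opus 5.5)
We treat the two assertions separately. The key tools are: the independence of $\tilde{\chi}_{f_{[n]}}$ from the admissible dilation (already shown above), the identity $\rho(g)\va=\va$ together with Proposition \ref{prop_vacuum_1}~(3), and the unitarity condition (U).

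For the first assertion we use the decomposition \eqref{eq_decomp_Matsu}. It shows that $\Conf_\infty^+(S^d)$ is generated by translations $T_a$, rotations $\tilde{R}$, and dilations $\lambda^D$ with $\lambda>0$, so it suffices to check invariance under each type of generator. Dilations are immediate: if $r^D f_{[n]}\in\CE(n)$, then $(r\lambda^{-1})^D(\lambda^D f_{[n]})=r^D f_{[n]}$. Hence the same element of $\CE(n)$ computes both $\tilde{\chi}_{f_{[n]}}$ and $\tilde{\chi}_{\lambda^D f_{[n]}}$.

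For a rotation, $\tilde{R}$ commutes with $r^D$ and lies in $\fG$. Proposition \ref{prop_vacuum_1}~(3) gives $\m_{\tilde{R}r^Df_{[n]}}=\rho(\tilde{R})\m_{r^Df_{[n]}}$. By (U), $\rho(\tilde{R})$ is unitary with $\rho(\tilde{R})^{*}\va=\rho(\tilde{R}^{-1})\va=\va$, so the vacuum pairing is unchanged.

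Translations are the main obstacle, since $T_a\notin\fG$ and $T_a$ does not preserve $\bD$. First choose $r$ so small that $r^Df_{[n]}$ lands in $B_{1/4}(0)$ and $|ra|<1/4$. Then $r^DT_af_{[n]}=T_{ra}r^Df_{[n]}$, and all images lie inside $\overline{B_{1/2}(0)}$. Put $g=T_{ra}D(1/2)$. Then $\overline{g(\bD)}\subset\bD$, and we can write $r^DT_af_{[n]}=g\circ\psi$ with $\psi=D(2)T_{-ra}\,r^DT_af_{[n]}=D(2)r^Df_{[n]}\in\CE(n)$. By Proposition \ref{prop_vacuum_1}~(3) and Lemma \ref{lem_vacuum_shrink_inv},
\begin{align*}
(\va,\m_{g\psi}(v))=(\va,\rho(g)\m_{\psi}(v))=(\va,\m_{\psi}(v)).
\end{align*}
Since $\psi=(2r)^Df_{[n]}$ is itself an admissible dilate of $f_{[n]}$, the right-hand side equals $\tilde{\chi}_{f_{[n]}}(v)$. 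This proves invariance under translations, and hence under all of $\Conf_\infty^+(S^d)$.

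For the second assertion, let $g\in\fG$ and $f_{[n]}\in\CE(n)$. Then $gf_{[n]}\in\CE(n)$ as well, so we may take $r=1$ in both $\tilde{\chi}_{f_{[n]}}$ and $\tilde{\chi}_{gf_{[n]}}$. The claim then follows exactly as in the rotation case: Proposition \ref{prop_vacuum_1}~(3) moves $\rho(g)$ outside $\m_{f_{[n]}}$, and (U) together with $\rho(g^{-1})\va=\va$ gives $(\va,\rho(g)w)=(\rho(g^{-1})\va,w)=(\va,w)$.
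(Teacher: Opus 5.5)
Your proposal is correct and follows essentially the same route as the paper. Both reduce the first assertion to translations, rotations and dilations via \eqref{eq_decomp_Matsu}. For a translation, both factor the rescaled configuration as an element of $\fS$ whose image has closure inside $\bD$ (your $T_{ra}D(1/2)$ is the paper's $s^DT_{rx}$ with $s=1/2$), composed with an admissible dilate of $f_{[n]}$, and then apply Lemma~\ref{lem_vacuum_shrink_inv}. The $\fG$ case uses unitarity and vacuum invariance in both; the only difference is that you write out the rotation and dilation cases, which the paper calls clear.
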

\begin{proof}
Let $g \in \Conf_\infty^+(S^d)$ and $f_{[n]} \in \Embf_n^\bD(\R^d)$.
Let $r>0$ be such that both $r^D f_{[n]}$ and $r^D g f_{[n]}$ are in $\CE(n)$.
If $g \in \SO(d)$ or $g = s^D$, then $\tilde{\chi}_{f_{[n]}} = \tilde{\chi}_{g f_{[n]}}$ clearly holds.
Assume that $g = T_x$, a translation by $x\in \R^d$.
Choose $1>s>0$ sufficiently small so that $s^D T_{r(x)} \in \fS$.
Then, by Lemma \ref{lem_vacuum_shrink_inv},
\begin{align*}
\tilde{\chi}_{T_x \circ f_{[n]}}(v_1,\dots,v_n)&= (\va, \m_{r^Ds^DT_x f_{[n]}}(v_1,\dots,v_n))\\
&=(\va, \m_{s^DT_{rx}r^D f_{[n]}}(v_1,\dots,v_n))\\
&=(\va, \rho(s^DT_{rx})\m_{r^D f_{[n]}}(v_1,\dots,v_n))\\
&=(\va, \m_{r^D f_{[n]}}(v_1,\dots,v_n)).
\end{align*}
Hence, the assertion follows from \eqref{eq_decomp_Matsu}


If $g \in \fG$ and $f_{[n]} \in \CE(n)$, then 
\begin{align*}
\tilde{\chi}_{g\circ f_{[n]}}(v_1,\dots,v_n)&= (\va, \m_{g \circ f_{[n]}}(v_1,\dots,v_n))\\
&= (\va, \rho(g)\m_{f_{[n]}}(v_1,\dots,v_n)) = (\rho(g)^{-1}\va, \m_{f_{[n]}}(v_1,\dots,v_n)) \\
&= \tilde{\chi}_{f_{[n]}}(v_1,\dots,v_n).
\end{align*}
\end{proof}

\begin{lem}\label{lem_decomposition}
Let $g \in \Conf^+(S^d)$.
\begin{enumerate}
\item
If $\infty \notin g(\pa\bD)$, then $g= T_xr^{D}h$ for some $x \in \R^d$, $r >0$ and $h \in \fG_d$.
\item
If $\infty \in g(\pa\bD)$, then $g= h_1T_xr^{D}h_2$ for some $x \in \R^d$, $r >0$ and $h_1,h_2 \in \fG_d$.
\end{enumerate}
\end{lem}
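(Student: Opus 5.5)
Both cases rest on one reduction. An element $g\in\Conf^+(S^d)=\SO^+(d+1,1)$ sends $\partial\bD$ to a round hypersphere of $S^d$. If $\infty\notin g(\partial\bD)$, this image is a genuine sphere $\partial B_r(x)\subset\R^d$, since M\"obius transformations send hyperspheres to hyperspheres or hyperplanes and a hyperplane image would contain $\infty$. The plan is to compare $g$ with a standard map having the same image sphere, and then show that the discrepancy preserves $\partial\bD$ and the disk itself, hence lies in $\fG_d$.

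\textbf{Case (1).} Let $\partial B_r(x)$ be the image sphere and set $h=(T_x r^D)^{-1}g\in\SO^+(d+1,1)$. Then $h(\partial\bD)=\partial\bD$. Since $h$ is a homeomorphism of $S^d$ preserving $\partial\bD$, it permutes the two complementary components $\bD$ and $S^d\setminus\overline{\bD}$.

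If $h(\bD)=\bD$, then $h\in\fG_d$ and we are done. Otherwise $h(\bD)=S^d\setminus\overline{\bD}$. In that case compose with the inversion: $J h$ maps $\bD$ to $\bD$, but $Jh\notin\SO^+$. So I will replace $J$ by an orientation-preserving element that fixes $\partial\bD$ setwise and swaps the two components. A natural candidate is $J\circ R$, where $R$ is the reflection $x_1\mapsto -x_1$: it preserves $\partial\bD$, it is orientation preserving since its determinant is $(-1)(-1)=1$, and it swaps the interior and exterior. Writing $\tau=JR$, we get $\tau h\in\fG_d$, and therefore
\[
g=T_x r^D\,\tau^{-1}(\tau h).
\]
The remaining step is to absorb $\tau^{-1}$ into a map of the required form. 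This is the obstacle: $\tau\notin\fG_d$, so case (1) as stated forces us to show that this subcase cannot occur.

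The tool for this is orientation plus the location of $\infty$. Since $\infty\notin g(\partial\bD)$, the point $g^{-1}(\infty)$ lies in either $\bD$ or its exterior.
\begin{itemize}
\item If $g^{-1}(\infty)\in S^d\setminus\overline{\bD}$, then $g(\bD)$ is a bounded component, so $g(\bD)=B_r(x)$ and hence $h(\bD)=\bD$, as required.
\item If $g^{-1}(\infty)\in\bD$, then $g(\bD)$ is the exterior of the ball. In this subcase the honest decomposition is $g=T_xr^D\tau\,h'$ with $h'\in\fG_d$.
\end{itemize}
So I expect the statement to be intended with $g(\bD)$ bounded, or else the second subcase must be reorganised. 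I would try to rewrite $T_xr^D\tau$ as $h_1T_{x'}r'^Dh_2$, which is the same shape as case (2).

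\textbf{Case (2).} Suppose $\infty\in g(\partial\bD)$. Choose $h_1\in\fG_d$ with $h_1^{-1}(\infty)\notin g(\partial\bD)$; for instance a suitable element of $\fG_d$ moving points off that hypersphere, which exists because $\fG_d\cong\SO^+(d,1)$ acts on $S^d$ with open orbits off $\partial\bD$. Then $h_1^{-1}g$ falls under case (1), or under the variant above, and we write $h_1^{-1}g=T_xr^Dh_2$. This gives $g=h_1T_xr^Dh_2$. The key technical points are choosing $h_1$ so that $\infty$ leaves the image sphere, and controlling the orientation issue through the explicit element $\tau$.
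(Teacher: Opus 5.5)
Your analysis of case (1) is correct, and the obstruction you found is a defect of the statement, not of your argument. Part (1) is false as written. The paper's proof passes over exactly this point: it asserts $r^{-D}T_{-x}g(\bD)=\bD$ without checking which complementary component of the sphere $g(\pa\bD)$ the set $g(\bD)$ is.

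Your element $\tau=J\circ R\in\SO^+(d+1,1)$ is a counterexample. It satisfies $\tau(\pa\bD)=\pa\bD\not\ni\infty$, but $\tau(\bD)=S^d\setminus\overline{\bD}\ni\infty$. On the other hand, every $T_xr^Dh$ with $h\in\fG_d$ maps $\bD$ onto the bounded ball $B_r(x)$.

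The repair you suggest cannot work either. The map $h_1T_{x'}{r'}^{D}h_2$ sends $\bD$ onto $h_1(B_{r'}(x'))$. The equality $h_1(B_{r'}(x'))=S^d\setminus\overline{\bD}$ would force $B_{r'}(x')=h_1^{-1}(S^d\setminus\overline{\bD})=S^d\setminus\overline{\bD}$, which is absurd. So $\tau$ (your $T_xr^D\tau$ with $x=0$, $r=1$) has neither of the two shapes.

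The correct form of (1) carries the extra hypothesis $g^{-1}(\infty)\notin\overline{\bD}$, equivalently $g(\bD)$ bounded. Under that hypothesis your first bullet is a complete proof, and it is the same as the paper's. This matters downstream: Lemma \ref{lem_char_disk} applies (1) to $g$ with $f_{[n]},gf_{[n]}\in\CE(n)$. Such $g$ can satisfy $g^{-1}(\infty)\in\bD$, for example $g=s^D\tau$ with $s$ small and $f_1(\bD)$ a small ball near $\pa\bD$. So that step needs a separate argument too.

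Case (2) is true as stated, but your argument does not yet prove it. You choose $h_1$ only so that $\infty\notin h_1^{-1}g(\pa\bD)$; note the correct condition is $h_1(\infty)\notin g(\pa\bD)$, not $h_1^{-1}(\infty)\notin g(\pa\bD)$. That choice can land $h_1^{-1}g$ in the bad subcase, and ``or under the variant above'' then does not give the required form. The paper's choice of $h$ has the same imprecision.

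The fix is to choose $h_1$ more carefully.
\begin{enumerate}
\item Since $\infty\in g(\pa\bD)$, the set $S^d\setminus g(\overline{\bD})$ is an open half-space $H\subset\R^d$.
\item Pick $h_1\in\fG_d$ with $h_1(\infty)\in H$. Then $(h_1^{-1}g)^{-1}(\infty)=g^{-1}(h_1(\infty))\notin\overline{\bD}$.
\item The corrected (1) now applies and gives $h_1^{-1}g=T_xr^Dh_2$, hence $g=h_1T_xr^Dh_2$.
\end{enumerate}
Such $h_1$ exists because $\fG_d\cdot\infty=S^d\setminus\overline{\bD}$. Indeed, every element of $\fG_d$ commutes with $J$ by Proposition \ref{prop_fixed_point}, and $\fG_d$ is transitive on $\bD$, so $\fG_d\cdot\infty=J(\fG_d\cdot 0)=J(\bD)$. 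An open half-space is unbounded, so it always meets $\R^d\setminus\overline{\bD}$.
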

\begin{proof}
Assume $\infty \notin g(\pa\bD)$. Then, $g(\pa \bD)$ is a circle in $\R^d$ with the center $x$ and the radius $r>0$.
Then, $r^{-D}T_{-x}g(\bD)=\bD$. Hence, $h=r^{-D}T_{-x}g \in \fG_d$.

Assume $\infty \in g(\pa\bD)$. Then, $g(\pa\bD)$ is a hyperplane in $\R^d$ (which pass through $\infty$).
It is obvious that there exists $h \in \fG_d$ such that $\infty \notin h g(\pa\bD)$.
Hence, by (1), $hg = T_xr^{D}h'$ for some $x \in \R^d$, $r>0$ and $h' \in \fG$.
\end{proof}

\begin{lem}\label{lem_char_disk}
Let $f_{[n]} \in \CE(n)$ and $g \in \Conf^+(S^d)$. Assume $g f_{[n]} \in \CE(n)$. Then, $\tilde{\chi}_{f_{[n]}} = \tilde{\chi}_{gf_{[n]}}$ holds.
\end{lem}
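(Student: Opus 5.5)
Write $f_{[n]}=(f_1,\dots,f_n)$. The plan is to reduce to the two invariances of Lemma \ref{lem_invariance} by decomposing $g$ with Lemma \ref{lem_decomposition}. The natural case split is according to whether $\infty\in g(\pa\bD)$.

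\emph{Case $\infty\notin g(\pa\bD)$.} By Lemma \ref{lem_decomposition}(1) we write $g=T_xr^Dh$ with $h\in\fG_d$. Then $hf_{[n]}\in\CE(n)$, since $h$ preserves $\bD$ and closures inside it. The second assertion of Lemma \ref{lem_invariance} gives $\tilde\chi_{f_{[n]}}=\tilde\chi_{hf_{[n]}}$. Next, $T_xr^D\in\Conf^+_\infty(S^d)$, and both $hf_{[n]}$ and $T_xr^Dhf_{[n]}=gf_{[n]}$ lie in $\CE(n)\subset\Embf_n^\bD(\R^d)$. The first assertion of Lemma \ref{lem_invariance} therefore gives $\tilde\chi_{hf_{[n]}}=\tilde\chi_{gf_{[n]}}$.

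\emph{Case $\infty\in g(\pa\bD)$.} Lemma \ref{lem_decomposition}(2) gives $g=h_1T_xr^Dh_2$ with $h_1,h_2\in\fG_d$. As before, $h_2f_{[n]}\in\CE(n)$ and $\tilde\chi_{f_{[n]}}=\tilde\chi_{h_2f_{[n]}}$. Also $h_1^{-1}gf_{[n]}\in\CE(n)$, because $gf_{[n]}\in\CE(n)$ and $h_1^{-1}\in\fG_d$ preserves $\bD$; hence $\tilde\chi_{gf_{[n]}}=\tilde\chi_{h_1^{-1}gf_{[n]}}$. The remaining element $T_xr^D$ is in $\Conf^+_\infty(S^d)$, and it maps $h_2f_{[n]}$ to $h_1^{-1}gf_{[n]}$, which is again in $\CE(n)$. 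The first part of Lemma \ref{lem_invariance} then finishes this case.

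The point that needs care is this middle step. Lemma \ref{lem_invariance} requires the configurations to lie in $\Embf_n^\bD(\R^d)$, i.e.\ their closures must avoid $\infty$. This holds for both $h_2f_{[n]}$ and $h_1^{-1}gf_{[n]}$, since each lies in $\CE(n)$ and so has closures inside $\overline{\bD}$.

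There is one degenerate situation to check. For a general $g$ with $\infty\in g(\pa\bD)$, the image $g(\bD)$ is a half-space, and one might worry that $g$ sends some $f_i$ through $\infty$. This cannot happen under the hypothesis $gf_{[n]}\in\CE(n)$, so all intermediate configurations stay in $\Embf_n^\bD(\R^d)$.

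A second check is the well-definedness of $\tilde\chi$ on elements of $\CE(n)$. For these we may take $r=1$ in (R1), so $\tilde\chi_{f_{[n]}}=(\va,\m_{f_{[n]}}(\cdot))$, and this agrees with the value for any smaller $r$ by the independence of $r$ already shown. I expect no further obstacle.
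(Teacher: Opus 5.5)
Your proof is correct and follows the paper's argument essentially verbatim. It uses the same case split on whether $\infty\in g(\pa\bD)$, the same decompositions $g=T_xr^Dh$ and $g=h_1T_xr^Dh_2$ from Lemma \ref{lem_decomposition}, and the same two applications of Lemma \ref{lem_invariance}; your extra checks (that every intermediate configuration lies in $\CE(n)\subset\Embf_n^\bD(\R^d)$, and that $r=1$ is admissible in (R1)) are details the paper leaves implicit.
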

\begin{proof}
First, assume that $\infty \notin g(\pa \bD)$. Then, by Lemma \ref{lem_decomposition}, $g=T_x r^D h$ for some $x\in \R^d$, $r>0$ and $h \in \fG$. 
Since $h f_{[n]} \in \CE(n)$ and $T_x r^D \in \Conf_\infty^+(S^d)$, by Lemma \ref{lem_invariance}, $\tilde{\chi}_{gf_{[n]}} = \tilde{\chi}_{T_x r^D h f_{[n]}}=\tilde{\chi}_{h f_{[n]}} = \tilde{\chi}_{f_{[n]}}$.
Next, assume that $\infty \in g(\pa \bD)$ and let $g = h_1 T_x r^D h_2$ be the decomposition in Lemma \ref{lem_decomposition}.
By the assumption, $g f_{[n]} \in \CE(n)$, which implies $h_1^{-1} g f_{[n]} \in \CE(n)$.
Hence, by Lemma \ref{lem_invariance},
$ \tilde{\chi}_{gf_{[n]}} = \tilde{\chi}_{h_1T_x r^D h_2 f_{[n]}} = \tilde{\chi}_{T_x r^D h_2 f_{[n]}}=\tilde{\chi}_{h_2 f_{[n]}}=\tilde{\chi}_{f_{[n]}}$. Hence, the assertion holds.
\end{proof}

\begin{lem}\label{lem_invariance_all}
Let $f_{[n]} \in\Embf_n^\bD(\R^d)$ and $g \in \Conf^+(S^d)$. Assume $gf_{[n]} \in \Embf_n^\bD(\R^d)$.
Then, $\tilde{\chi}_{f_{[n]}} = \tilde{\chi}_{gf_{[n]}}$ holds.
\end{lem}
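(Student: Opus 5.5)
The plan is to reduce to Lemma \ref{lem_char_disk} by shrinking both configurations into the unit disk with dilations. Let $f_{[n]}\in\Embf_n^\bD(\R^d)$ and $g\in\Conf^+(S^d)$ with $gf_{[n]}\in\Embf_n^\bD(\R^d)$. By Lemma \ref{lem_shrink}, there exist $r>0$ and $s>0$ such that
\begin{align*}
r^D f_{[n]}\in \CE(n),\qquad s^D g f_{[n]}\in\CE(n).
\end{align*}
By the definition of $\tilde{\chi}$ and its independence of the dilation parameter satisfying (R1), we have $\tilde{\chi}_{f_{[n]}}=\tilde{\chi}_{r^Df_{[n]}}$ and $\tilde{\chi}_{gf_{[n]}}=\tilde{\chi}_{s^Dgf_{[n]}}$. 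Here the right-hand sides are evaluated on elements of $\CE(n)$, where one may take the dilation parameter $1$, because $\rho(t^D)$ is self-adjoint and fixes $\va$. Equivalently, Lemma \ref{lem_invariance} applied to $r^D,s^D\in\Conf_\infty^+(S^d)$ gives the same two identities.

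Next, set $g'=s^D g r^{-D}\in\Conf^+(S^d)$. Then $F_{[n]}=r^Df_{[n]}\in\CE(n)$ and $g'F_{[n]}=s^Dgf_{[n]}\in\CE(n)$. This is exactly the hypothesis of Lemma \ref{lem_char_disk}, which gives $\tilde{\chi}_{F_{[n]}}=\tilde{\chi}_{g'F_{[n]}}$. Chaining these equalities yields
\begin{align*}
\tilde{\chi}_{f_{[n]}}=\tilde{\chi}_{r^Df_{[n]}}=\tilde{\chi}_{s^Dgf_{[n]}}=\tilde{\chi}_{gf_{[n]}}.
\end{align*}

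The only point needing care is the consistency of the definition of $\tilde{\chi}$ on elements already in $\CE(n)$, where $\tilde{\chi}_{F}$ is computed either with dilation $1$ or with some other admissible $r$. This is handled by the self-adjointness argument preceding Lemma \ref{lem_invariance}, applied with $r=1$ as a valid choice in (R1). I expect no real obstacle: all the analytic content sits in Lemma \ref{lem_char_disk} and Lemma \ref{lem_vacuum_shrink_inv}, and the present statement is a formal consequence.
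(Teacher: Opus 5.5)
Your proposal is correct and follows the paper's proof. Both arguments shrink the configurations into $\bD$ by dilations (via Lemma \ref{lem_invariance}), conjugate $g$ by those dilations, and conclude with Lemma \ref{lem_char_disk}. The only cosmetic difference is that the paper picks a single $r$ that works for both $f_{[n]}$ and $gf_{[n]}$ and uses $r^D g r^{-D}$, whereas you allow two parameters and use $s^D g r^{-D}$.
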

\begin{proof}
Let $r>0$ such that both $r^D f_{[n]}$ and $r^D gf_{[n]}$ are in $\CE(n)$.
By Lemma \ref{lem_invariance}, 
$\tilde{\chi}_{g f_{[n]}} = \tilde{\chi}_{r^D g f_{[n]}} =  \tilde{\chi}_{r^D gr^{-D} r^Df_{[n]}}$.
Since both $r^D f_{[n]}$ and $(r^D g r^{-D})\, r^D f_{[n]}$ are in $\CE(n)$, by applying Lemma \ref{lem_char_disk} we obtain $\tilde{\chi}_{r^D g r^{-D} r^D f_{[n]}}=\tilde{\chi}_{r^D f_{[n]}}$.
Hence, the assertion holds.
\end{proof}

\begin{proof}[Proof of Proposition \ref{thm_character}]
Let $f_{[n]} \in \Embf_n^\bD(S^d)$ and define $\chi_{f_{[n]}}:V^{\otimes n} \rightarrow \C$
by $\chi_{f_{[n]}}(a_1,\dots,a_n)=\tilde{\chi}_{g f_{[n]}}(a_1,\dots,a_n)$,
where $g \in \Conf^+(S^d)$ satisfies $g f_{[n]} \in \Embf_n^\bD(\R^d)$.
By Lemma \ref{lem_reach}, such an element $g \in \Conf^+(S^d)$ always exists.
Suppose that we take another element $g' \in \Conf^+(S^d)$ with $g' f_{[n]} \in \Embf_n^\bD(\R^d)$.
Applying Lemma \ref{lem_invariance_all} to $g' g^{-1} \in \Conf^+(S^d)$, we obtain
$\tilde{\chi}_{g f_{[n]}}= \tilde{\chi}_{(g'g^{-1}) g f_{[n]}}$.
Hence, $\chi$ is independent of the choice of $g$ and is well defined.
Conformal invariance follows as well: for any $h \in \Conf^+(S^d)$, if we choose $g'$ for $h f_{[n]}$ to be $g h^{-1}$ using the above $g$,
then we obtain $\chi_{h f_{[n]}} = \chi_{f_{[n]}}$.
Continuity is clear from the boundedness of $\m$.
\end{proof}

\begin{thm}\label{cor_left_nonzero}
Let $(V,\m)$ be a $\CE$-algebra equipped with a Hilbert space filtration.
Assume that $(V,\m)$ satisfies the conditions (U), (D) and $\dim V_0 =1$. 
Then, the space of conformally invariant continuous states on $S^d$
\begin{align*}
\mathrm{Hom}_{\Ind}\left(\mathrm{Lan}_V(S^d),\C\right)^{\Conf^+(S^d)}
\end{align*}
is a one-dimensional vector space spanned by $\chi_S$.
\end{thm}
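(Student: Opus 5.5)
By Proposition \ref{prop_universal_character_ind}, $\mathrm{Hom}_{\Ind}(\mathrm{Lan}_V(S^d),\C)$ is identified with the space of continuous states of $V$ on $S^d$. The $\Conf^+(S^d)$-fixed part then consists exactly of the conformally invariant continuous states. Proposition \ref{thm_character} already supplies one such state, $\chi_S$, and shows $\chi_S(\emptyset)\neq 0$, so $\chi_S\neq 0$. What remains is to show that every conformally invariant continuous state $\chi$ on $S^d$ is a scalar multiple of $\chi_S$.

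I will first restrict $\chi$ to disk configurations inside $\si_0(\bD)$. For $f_{[n]}\in \CE(n)\subset \Embf_n^\bD(S^d)$, the state axiom (1) gives
\[
\chi(f_{[n]})=\chi(\si_0)\circ \m_n(f_{[n]}),
\]
where $\si_0\in\Embc(\overline{\bD},S^d)$. Continuity of $\chi$ implies that $\chi(\si_0)$ is a compatible sequence, i.e. an element $u\in V^\vee$. In particular the restriction of $\chi$ to $\CE$ is the disk state $\chi_\bD^u$. Conformal invariance then constrains $u$: for $g\in\fS$ with $\overline{g(\bD)}\subset\bD$, the configuration $\si_0 g$ is carried into a configuration of the same type by some element of $\Conf^+(S^d)$, namely a dilation composed with elements of $\fG$ (Proposition \ref{prop_decomposition}). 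This yields
\[
\chi(\si_0)\circ\rho(g)=\chi(\si_0)\quad\text{for all such } g.
\]
Specializing to $g=r^D$, we get $(u,\rho(r^D)v)=(u,v)$ for all $v$ and all $r\in(0,1)$.

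Next I decompose $v$ into dilation eigenvectors (Proposition \ref{prop_semigroup}). For $v\in V_\Delta$ with $\Delta>0$, the identity gives $r^\Delta(u,v)=(u,v)$, hence $(u,v)=0$. On each $H^k$, $u_k$ is therefore orthogonal to every $V_\Delta\cap H^k$ with $\Delta>0$. Since $\dim V_0=1$ and $V_0$ is spanned by $\va$, this forces $u_k=c\,\va$ for a single constant $c$ independent of $k$, by compatibility. Consequently $\chi$ and $c\,\chi_S$ agree on $\CE(n)$ for every $n$.

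Finally, Lemmas \ref{lem_shrink} and \ref{lem_reach} show that every configuration in $\Embf_n^\bD(S^d)$ is mapped into $\CE(n)$ by some element of $\Conf^+(S^d)$. Since both $\chi$ and $c\,\chi_S$ are conformally invariant and agree on $\CE(n)$, they agree everywhere, so $\chi=c\,\chi_S$. This is the uniqueness argument already noted after Proposition \ref{thm_character}.

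The main obstacle is the dilation-invariance step. One must check that $\si_0\circ r^D$ is related to $\si_0$ by a genuine conformal automorphism of $S^d$, and this holds because the dilation $r^D$ itself lies in $\Conf^+(S^d)$. Concretely, invariance under $r^{-D}$ gives
\[
\chi(\si_0 r^D)=\chi(r^{-D}\si_0 r^D)=\chi(\si_0),
\]
where the last equality uses that $r^{-D}\si_0 r^D$ and $\si_0$ agree as maps into $S^d$ in the $\si_0$ chart. Because the map $r^{-D}\si_0 r^D$ is defined on a neighborhood of $\overline{\bD}$, one should confirm that this equality holds as germs in $\Embc$. The argument also needs continuity of $\chi$ to pass from eigenvectors to all of $H^k$. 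That step is handled by the orthogonal decomposition \eqref{eq_orth_dec}: $u_k$ is a bounded functional vanishing on a dense sum of eigenspaces for $\Delta>0$, so it vanishes on their closure.
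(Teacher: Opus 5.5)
Your proposal is correct and follows the paper's proof essentially step for step. You restrict to $\si_0$, use conformal invariance under dilations to get $\chi(\si_0)\circ\rho(r^D)=\chi(\si_0)$, deduce from $\dim V_0=1$ that $\chi(\si_0)=(c\va,\bullet)$, and then extend to all configurations via Lemmas \ref{lem_shrink} and \ref{lem_reach}. The detour through Proposition \ref{prop_decomposition} for general $g$ is unnecessary, since only $g=r^D$ is used.
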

\begin{proof}
Let $\chi$ be a conformally invariant continuous states on $S^d$.
Then, the restriction of $\chi$ on $\si_0:\bD \hookrightarrow S^d$ is a continuous state on $\bD$, which is given by 
$\chi(\si_0):V \rightarrow \C$ for $\si_0 \in \Embc(\overline{\bD},S^d)$. By definition of $r^D$, and the conformal invariance,
\begin{align*}
\chi(\si_0) = \chi(r^D \circ \si_0) = \chi(\si_0 \circ r^D) = \chi(\si_0) \circ \rho(r^D)
\end{align*}
as a linear map $V \rightarrow \C$ for any $r\in (0,1)$. Hence, $\chi(\si_0)=(a\va,\bullet)$ with some $a\in\C$ by $\dim V_0=1$
and $\chi|_{\si_0}= a\chi_\bD^{\va}$. By Lemma \ref{lem_shrink} and Lemma \ref{lem_reach}, $\chi = a \chi_S$.
\end{proof}

\subsection{Correlation functions on configuration spaces}
\label{sec_cor}
Let $(V,\m)$ be a $\CE$-algebra equipped with a Hilbert space filtration $V= \cup_{k=0}^\infty H^k$
and assume that $(V,\m)$ satisfies the condition (D).
In this section, we extract from a $\CEd$-algebra a family of multilinear maps
parametrized by configuration spaces. After pairing with vectors in $V^\alg$, these maps
give the correlation functions used in physics.

For $d=2$, such systems of functions were introduced in \cite{HK1} and used to formulate two-dimensional, non-chiral conformal field theory (see also \cite{MSwiss,AMT}, where, under several assumptions, it is proved that such systems of correlators satisfy the Osterwalder--Schrader axioms).
Based on the results of this section, \cite{Mvertex} extends \cite{HK1} to higher dimensions and gives a formulation of $d$-dimensional conformal field theory in terms of systems of correlation functions.

Set
\begin{align*}
V^\alg = \bigoplus_{\Delta \geq 0}V_\Delta,\qquad \overline{V^\alg}= \prod_{\Delta \geq 0}V_\Delta,
\end{align*}
where these denote the algebraic direct sum and the direct product of vector spaces, respectively (see Definition \ref{def_algebraic_core}).
Then $V$ is naturally regarded as a subspace of $\overline{V^\alg}$, and satisfies
$V^\alg \subset V \subset \overline{V^\alg}$.
For any $s>0$, define
\begin{align*}
D^\alg(s):\overline{V^\alg} \rightarrow \overline{V^\alg}
\end{align*}
by $D^\alg(s)(v_\Delta)_{\Delta \geq 0}=(s^{\Delta}v_\Delta)_{\Delta \geq 0}$ for $(v_\Delta)_{\Delta \geq 0} \in \prod_{\Delta \geq 0}V_\Delta$.
If $1\geq s>0$, then $D^\alg(s)|_V = D(s)$ holds.
On the other hand, if $s>1$, then $D^\alg(s)$ is an unbounded operator, and hence the action does not preserve $V \subset \overline{V^\alg}$.
We denote by $(-,-)_V$ the inner product on $V$ defined by the Hilbert space filtration,
which induces a natural pairing
\begin{align*}
\langle-,-\rangle_V:V^\alg \otimes \overline{V^\alg} \rightarrow \C.
\end{align*}
For a ball $B_r(a) \subset \bD$, set
\begin{align*}
T_{a,r}:\bD \rightarrow \bD,\qquad x\mapsto rx+a.
\end{align*}
Let
\begin{align*}
\Conf_n(\R^d) = \{(x_1,\dots,x_n) \in (\R^d)^n \mid x_i \neq x_j \},
\end{align*}
be the configuration space of points.
Let $(x_1,\dots,x_n) \in \Conf_n(\R^d)$.
Choose $R >0$ and $r_i>0$ ($i=1,\dots,n$) so that the following condition is satisfied:
\begin{itemize}
\item[(C1)]
$\overline{B_{r_i}(x_i)} \cap \overline{B_{r_j}(x_j)}=\emptyset$ and $B_{R^{-1}r_i}(R^{-1}x_i) \subset \bD$.
\end{itemize}
Then define a linear map
\begin{align*}
Y_{x_1,\dots,x_n}(\bullet):(V^\alg)^{\otimes n} \rightarrow \overline{V^\alg}
\end{align*}
by
\begin{align}
Y_{x_1,\dots,x_n}(a_1\otimes \cdots \otimes a_n) = D^\alg(R)\rho_{T_{R^{-1} x_1, R^{-1} r_1},\dots,T_{R^{-1} x_n,R^{-1}r_n}}(D^\alg(r_1^{-1}) a_1\otimes\cdots \otimes  D^\alg(r_n^{-1}) a_n).
\label{add_Y_def}
\end{align}

\begin{lem}\label{lem_Y_well}
The linear map $Y_{x_1,\dots,x_n}(\bullet):(V^\alg)^{\otimes n} \rightarrow \overline{V^\alg}$ is independent of the choice of $R,r_1,\dots,r_n$ satisfying (C1).
\end{lem}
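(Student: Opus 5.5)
To prove Lemma \ref{lem_Y_well}, I would show that changing one parameter at a time leaves \eqref{add_Y_def} unchanged. The two basic moves are changing $R$ with the $r_i$ fixed, and changing a single $r_i$ with $R$ fixed. Any two admissible choices $(R,r_1,\dots,r_n)$ and $(R',r_1',\dots,r_n')$ are connected by such moves while staying inside (C1). To see this, first enlarge $R$ to $\max(R,R')$; the condition $B_{R^{-1}r_i}(R^{-1}x_i)\subset\bD$ is preserved when $R$ grows. Then replace each $r_i$ by $\min(r_i,r_i')$ and afterwards by $r_i'$; disjointness of closures is preserved when radii shrink. It is also convenient to reduce to homogeneous inputs $a_i\in V_{\Delta_i}$, so that $D^\alg(r_i^{-1})a_i=r_i^{-\Delta_i}a_i$ lies in $V^\alg\subset V$ and the right-hand side of \eqref{add_Y_def} makes sense.

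\emph{Changing $R$.} Let $R'=R/s$ with $0<s<1$. Then $R'^{-1}x_i=sR^{-1}x_i$ and $R'^{-1}r_i=sR^{-1}r_i$. Since these are affine maps,
\[
T_{R'^{-1}x_i,R'^{-1}r_i}=s^D\circ T_{R^{-1}x_i,R^{-1}r_i}.
\]
By Proposition \ref{prop_vacuum_1}(3),
\[
\rho_{(s^D T_{\cdots})}=\rho(s^D)\,\rho_{(T_{\cdots})}.
\]
Moreover $D^\alg(R/s)\,\rho(s^D)=D^\alg(R/s)\,D^\alg(s)=D^\alg(R)$ on $\overline{V^\alg}$, because $D^\alg$ is multiplicative and agrees with $\rho(s^D)$ on $V$ for $s\le 1$. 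So the $R$-dependence cancels.

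\emph{Changing one $r_i$.} Let $r_i'=tr_i$ with $0<t<1$. Then
\[
T_{R^{-1}x_i,R^{-1}tr_i}=T_{R^{-1}x_i,R^{-1}r_i}\circ t^D.
\]
The second identity of Proposition \ref{prop_vacuum_1}(3) turns this into the precomposition $\circ_i\,\rho(t^D)$. On the input side, $\rho(t^D)D^\alg((tr_i)^{-1})a_i=D^\alg(r_i^{-1})a_i$, since $a_i$ is a finite sum of homogeneous components. Hence the output is unchanged. Enlarging $r_i$ while staying inside (C1) is the same computation read backwards.

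The main obstacle I expect is not the algebra but checking that every intermediate choice stays admissible, i.e. that (C1) holds along the whole path of moves, so that all the maps used lie in $\CE(n)$ and the operadic identities apply. The monotonicity observations above (growing $R$, shrinking radii) are what handle this. One also has to check that extending $D^\alg(R)$ to the product $\overline{V^\alg}$ is compatible with $\rho(s^D)$ on $V\subset\overline{V^\alg}$, so that $D^\alg(R/s)D^\alg(s)=D^\alg(R)$ holds there. This follows from Proposition \ref{prop_semigroup}(4), which gives the componentwise action of $\rho(s^D)$.
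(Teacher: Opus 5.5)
Your proposal is correct and follows essentially the same route as the paper. The paper also shrinks the $r_i$ by precomposing with $\rho(D(t_i))$ via the operadic identity, and notes that enlarging $R$ works ``similarly'' by postcomposing with $\rho(s^D)$. Your explicit path $(R,r)\to(\max(R,R'),r)\to(\max(R,R'),\min(r,r'))\to(\max(R,R'),r')\to(R',r')$ through admissible choices just makes precise the connectivity step the paper leaves implicit.
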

\begin{proof}
Let $1>t_1,\dots,t_n >0$. Since $(T_{R^{-1} x_1,R^{-1}r_1},\dots,T_{R^{-1}x_n,R^{-1} r_n}) \in \CE(n)$ and $D(t_i) \in \CE(1)$, we have
\begin{align*}
&\rho_{T_{R^{-1} x_1,R^{-1}t_1r_1},\dots,T_{R^{-1} x_n,R^{-1}t_nr_n}}(D^\alg((t_1 r_1)^{-1}) a_1\otimes\cdots \otimes  D^\alg((t_nr_n)^{-1}) a_n)\\
&=\rho_{T_{R^{-1} x_1,R^{-1}r_1},\dots,T_{R^{-1}x_n,R^{-1}r_n}}
\circ (\rho_{D(t_1)} \otimes \cdots \otimes \rho_{D(t_n)} )
(D^\alg((t_1 r_1)^{-1})  a_1\otimes\cdots \otimes  D^\alg((t_nr_n)^{-1}) a_n)\\
&=\rho_{T_{R^{-1} x_1,R^{-1} r_1},\dots,T_{R^{-1}x_n,R^{-1}r_n}}
(D^\alg((r_1)^{-1})  a_1\otimes\cdots \otimes  D^\alg((r_n)^{-1}) a_n).
\end{align*}
Thus replacing each $r_i$ by a smaller positive real number does not change $Y$.
Similarly, replacing $R$ by a larger positive real number does not change $Y$.
Hence, the assertion follows.
\end{proof}

Next, we study the property satisfied by the compositions of the maps $Y$, corresponding to the composition of products in the $\CE$-algebra \eqref{eq_def_of_operad}.
Let $n,m \geq 1$, $(x_1,\dots,x_{n+1}) \in \Conf_{n+1}(\R^d)$, and $(y_1,\dots,y_{m}) \in \Conf_{m}(\R^d)$.
We assume that
\begin{align}
\min_{i=1,\dots,n}|x_i-x_{n+1}| > \max_{j=1,\dots,m}|y_j|.
\label{eq_min_max}
\end{align}
Note that \eqref{eq_min_max} immediately implies $(x_1,\dots,x_n,x_{n+1}+y_1,\dots,x_{n+1}+y_m) \in \Conf_{n+m}(\R^d)$.
We would like to prove
\begin{align}
Y_{x_1,\dots,x_{n+1}}(\bullet) \circ_{n+1} Y_{y_1,\dots,y_{m}}(\bullet) =Y_{x_1,\dots,x_n,x_{n+1}+y_1,\dots,x_{n+1}+y_m}(\bullet),\label{add_naive}
\end{align}
but since the domain of $Y$ is $V^\alg$ and its codomain is $\overline{V^\alg}$, \eqref{add_naive} does not make literal sense.
For $\Delta \geq 0$, let $\pr_\Delta:\overline{V^\alg} \rightarrow V_\Delta$ be the projection.
Then $\pr_\Delta Y_{y_1,\dots,y_{m}}(\bullet)$ takes values in $V^\alg$.
Therefore, we formulate \eqref{add_naive} in the following way: after applying the projections, the resulting sum converges absolutely, and its limit agrees with the right-hand side.

\begin{thm}\label{thm_Y}
Let $(V,\m)$ be a $\CE$-algebra equipped with a Hilbert space filtration $V= \cup_{k=0}^\infty H^k$
and assume that $(V,\m)$ satisfies the condition (D).
Then, the family of maps
\begin{align*}
Y_{\bullet}(\bullet):\Conf_n(\R^d) \rightarrow \mathrm{Hom}_\C((V^\alg)^{\otimes n},\overline{V^\alg})\qquad (n \geq 1)
\end{align*}
given in \eqref{add_Y_def} satisfies the following conditions:
\begin{enumerate}
\item[S1)]
For any $n \geq 2$, $(x_1,\dots,x_n) \in \Conf_n(\R^d)$ and $a_1,\dots,a_{n-1} \in V^\alg$,
\begin{align*}
Y_{x_1,\dots,x_n}(a_1,\dots,a_{n-1},\va)=Y_{x_1,\dots,x_{n-1}}(a_1,\dots,a_{n-1}).
\end{align*}
\item[S2)]
For any $(x_1,\dots,x_n) \in \Conf_n(\R^d)$, $a_1,\dots,a_{n} \in V^\alg$ and any permutation $\si \in S_n$,
\begin{align*}
Y_{x_{\si 1},\dots,x_{\si n}}(a_{\si 1},\dots,a_{\si n})=Y_{x_1,\dots,x_{n}}(a_1,\dots,a_{n}).
\end{align*}
\item[S3)]
Let $n,m \geq 1$, $(x_1,\dots,x_{n+1}) \in \Conf_{n+1}(\R^d)$ and  $(y_1,\dots,y_{m}) \in \Conf_{m}(\R^d)$ satisfying 
\begin{align*}
\min_{i=1,\dots,n}|x_i-x_{n+1}| > \max_{j=1,\dots,m}|y_j|.
\end{align*}
Then, for any $u,a_1,\dots,a_{n+m} \in V^\alg$, the sum
\begin{align}
\sum_{\Delta \geq 0}\left\langle u, Y_{x_1,\dots,x_{n+1}}(a_1,\dots,a_n, \pr_\Delta Y_{y_1,\dots,y_{m}}(a_{n+1},\dots,a_{n+m}))\right \rangle_V
\label{eq_Y_sum}
\end{align}
converges absolutely, and its limit is
\begin{align*}
\left \langle u,Y_{x_1,\dots,x_n,x_{n+1}+y_1,\dots,x_{n+1}+y_m}(a_1,\dots,a_{n+m})\right\rangle_V.
\end{align*}
\end{enumerate}
\end{thm}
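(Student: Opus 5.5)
S1 and S2 come straight from the operad axioms once a common choice of $R,r_i$ in (C1) is fixed, using Lemma \ref{lem_Y_well} to pick it freely. For S1, take $R,r_1,\dots,r_n$ satisfying (C1) for $(x_1,\dots,x_n)$; these also satisfy (C1) for $(x_1,\dots,x_{n-1})$. Since $D^\alg(r_n^{-1})\va=\va$ (as $\va\in V_0$ by Proposition \ref{prop_semigroup}) and Proposition \ref{prop_vacuum_1}(2) gives $\rho_n(\phi)(\dots,\va)=\rho_{n-1}(\phi\circ_n *)(\dots)$, the claim follows. For S2, the same parameters serve both orderings, and we apply the $S_n$-equivariance of $\rho_n$.

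For S3 we choose radii adapted to the composition. Set $\delta=\min_i|x_i-x_{n+1}|$ and $\eta=\max_j|y_j|<\delta$. Choose $\rho$ with $\eta<\rho<\delta$ and $\rho$ close enough to $\eta$ that small balls $B_{s_j}(y_j)$ lie in $B_\rho(0)$ with pairwise disjoint closures. Take $r_{n+1}$ with $\rho<r_{n+1}<\delta$ and $\rho/r_{n+1}$ as close to $1$ as needed, and take $r_1,\dots,r_n$ small so that the closed balls $\overline{B_{r_i}(x_i)}$, $i\le n+1$, are pairwise disjoint; this is possible since $r_{n+1}<\delta$. Finally take $R$ large. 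Inside $Y_{y}$ we use $R'=r_{n+1}$ (allowed, since $B_{s_j/r_{n+1}}(y_j/r_{n+1})\subset\bD$) and the radii $s_j$. Writing $\Phi=(T_{R^{-1}x_i,R^{-1}r_i})_{i\le n+1}\in\CE(n+1)$ and $\Theta=(T_{y_j/r_{n+1},s_j/r_{n+1}})_j\in\CE(m)$, the composite $\Phi\circ_{n+1}\Theta$ is exactly the family of maps $T_{R^{-1}(x_{n+1}+y_j),R^{-1}s_j}$, so it computes the right-hand side. In the left-hand side the factor $D^\alg(r_{n+1}^{-1})$ in the $(n+1)$-st slot cancels the outer $D^\alg(r_{n+1})$ of $Y_y$. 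Consequently
\[
\sum_{\Delta}\langle u,Y_x(\dots,\pr_\Delta Y_y(\dots))\rangle
=\sum_\Delta\langle D^\alg(R)u,\rho_{\Phi}(b_1,\dots,b_n,\pr_\Delta w)\rangle ,
\]
where $w=\rho_\Theta(D^\alg(s_j^{-1})a_{n+j})\in V$ and $b_i=D^\alg(r_i^{-1})a_i$. Note that $D^\alg(R)u\in V^\alg$ because $u$ is a finite sum.

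Since $w\in H^k$ for some $k$ by Definition \ref{def_ind_CF}, the partial sums $\sum_{\Delta\le N}\pr_\Delta w$ converge to $w$ in $H^k$. Because $\rho_\Phi$ restricted to $H^{k_1}\otimes\cdots\otimes H^k$ is bounded, the series converges to $\langle D^\alg(R)u,\rho_\Phi(b,w)\rangle$. By the operad identity \eqref{eq_def_of_operad}, this equals $\langle D^\alg(R)u,\rho_{\Phi\circ_{n+1}\Theta}(\dots)\rangle$, which is the claimed limit. This argument only gives unconditional (norm) convergence; absolute convergence needs an additional step.

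The main obstacle is absolute convergence, since the summands are not monotone and norm convergence alone does not suffice. I plan to gain decay by using slack in the dilation. Factor $\Phi$'s $(n+1)$-st component as $T_{R^{-1}x_{n+1},R^{-1}r_{n+1}t^{-1}}\circ D(t)$ for some $t<1$ such that the enlarged ball $B_{r_{n+1}/t}(x_{n+1})$ still has closure disjoint from the other balls; this is possible because $r_{n+1}<\delta$. Each term then becomes $t^{\Delta}\langle D^\alg(R)u,\rho_{\Phi'}(b,\pr_\Delta w')\rangle$, where $w'$ is obtained by correspondingly shrinking $\Theta$ by $t$, which remains inside $\bD$ since $\rho<r_{n+1}$. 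The terms are then bounded by
\[
C\,t^{\Delta}\|\pr_\Delta w'\|\le C\,t^\Delta\|w'\|.
\]
By Proposition \ref{prop_semigroup}(5), the number of $\Delta\le N$ with $V_\Delta\ne0$ grows at most exponentially in $N$ with arbitrarily small rate, so $\sum_\Delta t^\Delta$ over the spectrum converges. This gives absolute convergence.
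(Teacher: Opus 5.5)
Your proof is correct. For S1, S2, the choice of radii, the cancellation of $D^{\alg}(r_{n+1}^{\pm1})$, and the identification of the limit via boundedness of $\rho_\Phi$ plus the operad identity, you follow the paper. Your $r_{n+1}$ is the paper's $P$, and the auxiliary $\rho<r_{n+1}$ is harmless but unnecessary.

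The genuine difference is how you get absolute convergence.

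The paper adds no further geometry. It moves $\rho_\Phi$ to the other side via its Hilbert-space adjoint, setting $U=\rho_\Phi^*(D^{\alg}(R)u)\in H^{k_1}\hotimes\cdots\hotimes H^{k_n}\hotimes H^k$. It writes each term as $\bigl((\id\otimes\cdots\otimes\id\otimes\pr_\Delta)U,\,b_1\otimes\cdots\otimes b_n\otimes\pr_\Delta w\bigr)$. Cauchy--Schwarz, first termwise and then over $\Delta$, gives $\sum_\Delta|\cdot|\le\norm{U}\,\norm{b_1}\cdots\norm{b_n}\,\norm{w}$. This uses only the orthogonality of $H^k=\bigoplus_\Delta H^k_\Delta$, and the given configuration needs no slack.

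You instead spend the strict separation of the closed balls to factor $D(t)$ off the $(n+1)$-st slot. That gives the termwise bound $C\,t^\Delta\norm{w}$. You then need the trace-class part of (D) to sum $t^\Delta$ over the spectrum, either via Proposition \ref{prop_semigroup}(5) or directly via $\sum_\Delta t^\Delta\le\mathrm{tr}|_{H^k}\rho(t^D)$ over the $\Delta$ that can occur in $w\in H^k$. This is valid and yields more: geometric decay of the terms in $\Delta$. The cost is an extra re-factorization and more of condition (D) than the paper's argument actually uses.

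There is one slip in your write-up. If $w'$ denotes the output of the shrunken family $D(t)\Theta$, then $\pr_\Delta w'=t^\Delta\pr_\Delta w$ already contains the factor. So the term is $\langle D^{\alg}(R)u,\rho_{\Phi'}(b,\pr_\Delta w')\rangle$ with no further $t^\Delta$ in front; alternatively, keep $w'=w$ together with the prefactor. Your bound $C\,t^\Delta\norm{w}$ holds under either reading.

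Also choose $R$ large enough that the enlarged disk $B_{R^{-1}r_{n+1}/t}(R^{-1}x_{n+1})$ still lies in $\bD$. Lemma \ref{lem_Y_well} permits this.
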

\begin{proof}
The conditions (S1) and (S2) follow immediately from the definition of a $\CE$-algebra.
We prove (S3).
Let $P>0$ satisfy  $\min_{i=1,\dots,n}|x_i-x_{n+1}| > P>\max_{j=1,\dots,m}|y_j|$.
Then there exist $r_1,\dots,r_{n},s_1,\dots,s_{m}>0$ such that
\begin{align*}
\overline{B_{r_i}(x_i)} \cap \overline{B_{r_j}(x_j)}=\emptyset,\qquad \overline{B_{r_i}(x_i)} \cap \overline{B_P(x_{n+1})} = \emptyset
\end{align*}
and
\begin{align*}
\overline{B_{s_k}(y_k)} \cap \overline{B_{s_l}(y_l)}=\emptyset,\qquad {B_{s_k}(y_k)} \subset B_P(0).
\end{align*}
Choose sufficiently large $Q>0$ such that $B_{Q^{-1}r_i}(Q^{-1}x_i), B_{Q^{-1}P}(Q^{-1}x_{n+1}) \subset \bD$.
Then $Q,r_1,\dots,r_{n},P$ and $P,s_1,\dots,s_{m}$ satisfy (C1) for $x_1,\dots,x_n,x_{n+1}$ and $y_1,\dots,y_m$, respectively.
In $\CE$, we have
\begin{align*}
&(T_{Q^{-1}x_1,Q^{-1}r_1},\dots, T_{Q^{-1}x_n,Q^{-1}r_n},T_{Q^{-1}x_{n+1},Q^{-1}P}) \circ_{n+1}
(T_{P^{-1}y_1,P^{-1}s_1},\dots,T_{P^{-1}y_m,P^{-1}s_m})\\
&=(T_{Q^{-1}x_1,Q^{-1}r_1},\dots, T_{Q^{-1}x_n,Q^{-1}r_n}, T_{Q^{-1}(x_{n+1}+y_1), Q^{-1}s_1},\dots,  T_{Q^{-1}(x_{n+1}+y_m), Q^{-1}s_m}).
\end{align*}
Therefore, as linear maps on $(V^\alg)^{\otimes (n+m)}$
\begin{align*}
&Y_{x_1,\dots,x_n,x_{n+1}+y_1,\dots,x_{n+1}+y_m}(\bullet) \\
&= D^\alg(Q)
\rho_{(T_{Q^{-1}x_1,Q^{-1}r_1},\dots, T_{Q^{-1}x_n,Q^{-1}r_n}, T_{Q^{-1}(x_{n+1}+y_1), Q^{-1}s_1},\dots,  T_{Q^{-1}(x_{n+1}+y_m), Q^{-1}s_m})}\\
&\circ (D^\alg(r_1^{-1}) \otimes \cdots \otimes D^\alg(r_n^{-1}) \otimes D^\alg(s_1^{-1}) \otimes \cdots \otimes D^\alg(s_m^{-1}) )
\\
&= D^\alg(Q)
\rho_{T_{Q^{-1}x_1,Q^{-1}r_1},\dots, T_{Q^{-1}x_n,Q^{-1}r_n},T_{Q^{-1}x_{n+1},Q^{-1}P}}\circ  (D^\alg(r_1^{-1}) \otimes \cdots \otimes D^\alg(r_n^{-1})\otimes \id)\\
& \circ_{n+1}
\rho_{T_{P^{-1}y_1,P^{-1}s_1},\dots,T_{P^{-1}y_m,P^{-1}s_m}}\circ (D^\alg(s_1^{-1}) \otimes \cdots \otimes D^\alg(s_m^{-1}) ).
\end{align*}
Then, for each $\Delta \geq 0$,
\begin{align*}
& D^\alg(Q)
\rho_{T_{Q^{-1}x_1,Q^{-1}r_1},\dots, T_{Q^{-1}x_n,Q^{-1}r_n},T_{Q^{-1}x_{n+1},Q^{-1} P}}\circ  (D^\alg(r_1^{-1}) \otimes \cdots \otimes D^\alg(r_n^{-1})\otimes \id)\\
& \circ_{n+1} \pr_\Delta
\rho_{T_{P^{-1}y_1,P^{-1}s_1},\dots,T_{P^{-1}y_m,P^{-1}s_m}}\circ (D^\alg(s_1^{-1}) \otimes \cdots \otimes D^\alg(s_m^{-1}) )\\
&= D^\alg(Q)
\rho_{T_{Q^{-1}x_1,Q^{-1}r_1},\dots, T_{Q^{-1}x_n,Q^{-1} r_n},T_{Q^{-1} x_{n+1},Q^{-1} P}}\circ  (D^\alg(r_1^{-1}) \otimes \cdots \otimes D^\alg(r_n^{-1})\otimes D^\alg(P^{-1}))\\
& \circ_{n+1} D^\alg(P) \pr_\Delta \rho_{T_{P^{-1}y_1,P^{-1}s_1},\dots,T_{P^{-1}y_m,P^{-1}s_m}}\circ (D^\alg(s_1^{-1}) \otimes \cdots \otimes D^\alg(s_m^{-1}) )\\
&= Y_{x_1,\dots,x_{n+1}}(\bullet) \circ_{n+1} \pr_\Delta
Y_{y_1,\dots,y_{m}}(\bullet).
\end{align*}

Set $a_i'=D^\alg(r_i^{-1})(a_i)$ and $a_{n+j}'=D^\alg(s_j^{-1})a_{n+j}$, and set
\begin{align*}
A&= \rho_{T_{P^{-1}y_1,P^{-1}s_1},\dots,T_{P^{-1}y_m,P^{-1}s_m}}\left(a_{n+1}',\dots,a_{n+m}'\right),\\
A_\Delta &= \pr_\Delta A.
\end{align*}
Choose $k_i \geq 0$ such that $a_i \in H^{k_i}$.
By the definition of the Hilbert space filtration, there exists $K\geq 0$ such that $A,A_\Delta \in H^K$.
Moreover, there exists $L$ such that
\begin{align}
\rho_{T_{Q^{-1}x_1,Q^{-1}r_1},\dots, T_{Q^{-1}x_n,Q^{-1}r_n},T_{Q^{-1}x_{n+1},Q^{-1} P}}:H^{k_1} \hat{\otimes} \cdots \hat{\otimes} H^{k_n} \hat{\otimes} H^K \rightarrow H^L\label{add_bound_rho}
\end{align}
is a bounded linear map.
Enlarging $L$ if necessary, we may assume that $u \in H^L$.
Since $\sum_{\Delta \geq 0} A_\Delta$ converges to $A$ in the norm of $H^K$, the boundedness of $\rho$ implies that
\begin{align*}
&\rho_{(T_{Q^{-1}x_1,Q^{-1}r_1},\dots, T_{Q^{-1}x_n,Q^{-1}r_n}, T_{Q^{-1}(x_{n+1}+y_1), Q^{-1}s_1},\dots,  T_{Q^{-1}(x_{n+1}+y_m), Q^{-1}s_m})}(a_1',\dots,a_n',a_{n+1}',\dots,a_{n+m}')\\
&=\rho_{T_{Q^{-1}x_1,Q^{-1}r_1},\dots, T_{Q^{-1}x_n,Q^{-1}r_n},T_{Q^{-1}x_{n+1},Q^{-1} P}}
\left(a_1',\dots,a_n', A\right)\\
&=\rho_{T_{Q^{-1}x_1,Q^{-1}r_1},\dots, T_{Q^{-1}x_n,Q^{-1}r_n},T_{Q^{-1}x_{n+1},Q^{-1} P}}
\left(a_1',\dots,a_n', \sum_{\Delta\geq 0} A_\Delta \right)\\
&=\sum_{\Delta\geq 0}\rho_{T_{Q^{-1}x_1,Q^{-1}r_1},\dots, T_{Q^{-1}x_n,Q^{-1}r_n},T_{Q^{-1}x_{n+1},Q^{-1} P}}
\left(a_1',\dots,a_n', A_\Delta \right)
\end{align*}
holds in the norm of $H^L$.
In particular, pairing with $D^\alg(Q) u \in H^L$ gives weak convergence, and hence the desired identity follows.

It remains to prove absolute convergence.
Let
\begin{align*}
\rho_{T_{Q^{-1}x_1,Q^{-1}r_1},\dots, T_{Q^{-1}x_n,Q^{-1}r_n},T_{Q^{-1}x_{n+1},Q^{-1} P}}^*:H^L \rightarrow H^{k_1} \hat{\otimes} \cdots \hat{\otimes} H^{k_n} \hat{\otimes} H^K
\end{align*}
be the adjoint operator of \eqref{add_bound_rho}.
Then
\begin{align*}
&\left| 
\left\langle u, D^\alg(Q) \rho_{T_{Q^{-1}x_1,Q^{-1}r_1},\dots, T_{Q^{-1}x_n,Q^{-1}r_n},T_{Q^{-1}x_{n+1},Q^{-1} P}}
\left(a_1',\dots,a_n', A_\Delta\right)\right\rangle_V\right|\\
&=\left|\left(D^\alg(Q) u, \rho_{T_{Q^{-1}x_1,Q^{-1}r_1},\dots, T_{Q^{-1}x_n,Q^{-1}r_n},T_{Q^{-1}x_{n+1},Q^{-1} P}}
\left(a_1',\dots,a_n', A_\Delta\right)\right)_{H^L}\right|\\
&=
\left|
\left(\rho_{T_{Q^{-1}x_1,Q^{-1}r_1},\dots, T_{Q^{-1}x_n,Q^{-1}r_n},T_{Q^{-1}x_{n+1},Q^{-1} P}}^* \left(D^\alg(Q) u\right), a_1' \otimes \cdots \otimes a_n' \otimes A_\Delta\right)_{H^{k_1} \hat{\otimes} \cdots \hat{\otimes} H^{k_n} \hat{\otimes} H^K}\right|.
\end{align*}
Set $U= \rho_{T_{Q^{-1}x_1,Q^{-1}r_1},\dots, T_{Q^{-1}x_n,Q^{-1}r_n},T_{Q^{-1}x_{n+1},Q^{-1} P}}^* \left(D^\alg(Q) u\right) \in H^{k_1} \hat{\otimes} \cdots \hat{\otimes} H^{k_n} \hat{\otimes} H^K$.
Since $A=\sum_{\Delta \geq 0} A_\Delta \in H^K$ is an orthogonal decomposition, we have
\begin{align*}
\sum_{\Delta \geq 0}\left|
\left(U, a_1' \otimes \cdots \otimes a_n' \otimes A_\Delta\right)\right|
&= \sum_{\Delta \geq 0}
\left|
\left((\id_{H^{k_1}}\otimes \cdots \otimes \id_{H^{k_n}}\otimes \pr_\Delta) U, a_1' \otimes \cdots \otimes a_n' \otimes A_\Delta\right)\right|\\
&\leq \sum_{\Delta \geq 0}
\norm{(\id_{H^{k_1}}\otimes \cdots \otimes \id_{H^{k_n}}\otimes \pr_\Delta) U}
\norm{a_1' \otimes \cdots \otimes a_n' \otimes A_\Delta}.
\end{align*}
Using the Cauchy--Schwarz inequality, we obtain
\begin{align*}
&\leq \left(\sum_{\Delta \geq 0}
\norm{(\id_{H^{k_1}}\otimes \cdots \otimes \id_{H^{k_n}}\otimes \pr_\Delta) U}^2\right)^\ft
\left(\sum_{\Delta \geq 0}\norm{a_1' \otimes \cdots \otimes a_n' \otimes A_\Delta}^2
\right)^\ft\\
&=\norm{U}\norm{a_1'}\dots\norm{a_n'}\norm{A}<\infty.
\end{align*}
Therefore, \eqref{eq_Y_sum} converges absolutely.

\end{proof}

\appendix

\section{Remark on Global conformal group}\label{app_remark}
In Appendix \ref{app_remark}, we provide proofs of several propositions from Section \ref{sec_conf_sphere}.
Set
\begin{align*}
\cC=\left( 
\begin{array}{c|ccc}
I_{d-1}& &0& \\ \hline
& 0 & -\frac{1}{\sqrt{2}} & -\frac{1}{\sqrt{2}}\\
0&\frac{1}{\sqrt{2}}&\ft&-\ft\\
&\frac{1}{\sqrt{2}}&-\ft&\ft
\end{array}
\right),
\end{align*}
which is an element of $\mathrm{O}^+(d+1,1)$, and acts on $\mathbb{P}(\mathcal{N})$ by
\begin{align*}
\cC(\si_0(x_1,\dots,x_d)) = \left(x_d+\frac{1+|x|^2}{2}\right) \si_0\left(\frac{x_1}{x_d+\frac{1+|x|^2}{2}},\dots, \frac{x_{d-1}}{x_d+\frac{1+|x|^2}{2}}, \frac{\frac{-1+|x|^2}{2}}{x_d+\frac{1+|x|^2}{2}}\right).
\end{align*}
The restriction of $\cC$ to $x_d=0$ coincides with the planar projection
$\R^{d-1}\rightarrow S^{d-1} \subset \R^d \setminus \{(0,\dots,0,1)\}$,
\begin{align*}
(x_1,\dots,x_{d-1},0) &\mapsto \left(\frac{2x_1}{1+|x|^2},\dots,\frac{2x_{d-1}}{1+|x|^2},\frac{-1+|x|^2}{1+|x|^2}\right).
\end{align*}
In particular, $\cC$ is a generalization of the \textbf{Cayley transform} that maps the upper half-space
$\bH_d = \{(x_1,\dots,x_d)\in\R^d \mid x_d >0 \}$ into the interior of the unit disk.
Moreover, under $\cC$, the center of the unit disk is sent to $(0,\dots,0,1) \in \bH_d$.
Set
\begin{align*}
\fG_C &=  \{g \in \SO^+(d+1,1) \mid g(\bH_d) =\bH_d \},\\
\fS_C &=  \{g \in \SO^+(d+1,1) \mid g(\bH_d) \subset \bH_d\}.
\end{align*}
The sets $\fG$ and $\fS$ are conjugate to $\fG_C$ and $\fS_C$ via the Cayley transform, and the latter description is often more convenient.
Note that
\begin{align*}
J_C =\cC^{-1}J \cC =\left( 
\begin{array}{c|ccc}
I_{d-1}& &0& \\ \hline
& -1 & 0 &0\\
0&0&1&0\\
&0&0&1
\end{array}
\right).
\end{align*}
A matrix commuting with $J_C$ has all off-diagonal entries in the $d$-component equal to zero.
If $g \in \mathrm{O}^+(d+1,1)$ satisfies $J_C g = g J_C$, then its diagonal entry is $1$ or $-1$;
if it is $1$ (resp. $-1$), then $g$ preserves $\bH_d$ (resp. sends $\bH_d$ to $-\bH_d$).
Hence, we have:
\begin{prop}\label{prop_app_SO}
The following properties hold:
\begin{enumerate}
\item
For $g \in \mathrm{O}^+(d+1,1)$, preserving $\{x_d=0\}$ is equivalent to $gJ_C=J_C g$.
\item
The group $\{g \in \mathrm{O}^+(d+1,1) \mid gJ_C=J_C g \}$ has two connected components and coincides with
$\fG_C \sqcup \fG_C J_C$. 
Moreover, under the embedding $\SO^+(d,1) \hookrightarrow\SO^+(d+1,1)$ that fixes the $d$-component identically,
$\fG_C = \SO^+(d,1)$.
\item
If $g \in G_\C$ restricts to the identity map on $\{x_d=0\}$, then $g = \id$.
\end{enumerate}
\end{prop}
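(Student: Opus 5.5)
The plan is to work throughout in the Cayley picture, with the explicit matrix $J_C=\mathrm{diag}(I_{d-1},-1,1,1)$ in the basis $\{e_1,\dots,e_{d-1},e_d,e_+,e_-\}$, and to use the $\si_0$-coordinate formula $g\si_0(x)=j_g(x)\si_0(g\cdot x)$.

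\textbf{Part (1).} For any $x$, $J_C\si_0(x)=\si_0(\bar x)$, where $\bar x=(x_1,\dots,x_{d-1},-x_d)$. So $J_C$ acts on $\mathbb P(\mathcal N)$ as the reflection in the hyperplane $\{x_d=0\}$, and its fixed-point set is $\{x_d=0\}\cup\{\infty\}$.
\begin{itemize}
\item[($\Leftarrow$)] If $gJ_C=J_Cg$, then $g$ preserves the fixed-point set of $J_C$.
\item[($\Rightarrow$)] If $g$ preserves $\Sigma=\{x_d=0\}\cup\{\infty\}$, then $gJ_Cg^{-1}$ and $J_C$ are both elements of $\mathrm{O}^+(d+1,1)$ that fix $\Sigma$ pointwise. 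Their "quotient" $J_C gJ_Cg^{-1}$ therefore fixes pointwise the $d+1$ null lines spanned by $\si_0(x)$ with $x_d=0$, together with $e_+$. These span the hyperplane $e_d^\perp$. A linear map fixing $d+1$ independent null lines projectively and preserving the form must be a scalar on $e_d^\perp$; using more null lines (take $d+2$ points on $\Sigma$) fixes the scalar to be $\pm1$, and the $\mathrm{O}^+$ condition forces $+1$. Orthogonality then gives $e_d\mapsto \pm e_d$, so the quotient is either the identity or $J_C$. Since $J_C gJ_Cg^{-1}$ has determinant $+1$, it is the identity, i.e.\ $g$ commutes with $J_C$.
\end{itemize}

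\textbf{Part (2).} If $g$ commutes with $J_C$, then $g$ preserves the $(-1)$-eigenspace $\R e_d$ and its orthogonal complement $e_d^\perp$, which is of signature $(d,1)$. So $g=\pm1\oplus h$ with $h\in\mathrm O(d,1)$, and $h$ must preserve $\mathcal N_+\cap e_d^\perp$, so $h\in\mathrm O^+(d,1)$. The components are then:
\begin{itemize}
\item The sign on $e_d$ determines whether $g$ preserves $\bH_d$ or swaps it with $-\bH_d$: the sign of $x_d$ is read off from $\langle \si_0(x),e_d\rangle=x_d$ together with the positivity of $j_g$ on $\mathcal N_+$.
\item Within $\fG_C$, the determinant condition $\det g=1$ gives $\det h=1$, so the elements of $\fG_C$ in $\SO^+(d+1,1)$ are exactly $1\oplus\SO^+(d,1)$.
\end{itemize}
This requires being careful that $\fG_C$ means orientation-preserving maps; with that, $\fG_C\cong\SO^+(d,1)$, which is connected. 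The orientation-reversing maps preserving $\bH_d$ together with those swapping the half-spaces form the coset $\fG_CJ_C$. I expect the main obstacle to be this sign bookkeeping, in particular checking that the diagonal entry $-1$ corresponds to swapping the half-spaces, which is where the precise convention for $\mathcal N_+$ enters.

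\textbf{Part (3).} If $g=1\oplus h\in\fG_C$ restricts to the identity on $\{x_d=0\}$, then $h$ fixes every null line in $e_d^\perp$. By the argument of part (1), $h=\pm1$, and since $h$ preserves $\mathcal N_+$, $h=1$; hence $g=\mathrm{id}$.
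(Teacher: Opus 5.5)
\textbf{Gap in part (2).} The step that fails is the end of your part (2).

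Your own decomposition shows that the centralizer of $J_C$ in $\mathrm O^+(d+1,1)$ is $\{\epsilon\oplus h \mid \epsilon=\pm1,\ h\in\mathrm O^+(d,1)\}\cong\{\pm1\}\times\mathrm O^+(d,1)$. Since $\mathrm O^+(d,1)$ itself has two components, this group has four components, not two.

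The sentence ``the orientation-reversing maps preserving $\bH_d$ together with those swapping the half-spaces form the coset $\fG_CJ_C$'' is false. In fact $\fG_CJ_C=(-1)\oplus\SO^+(d,1)$ contains only orientation-reversing swaps. A concrete counterexample is $R=\mathrm{diag}(-1,1,\dots,1)$, i.e.\ $x\mapsto(-x_1,x_2,\dots,x_d)$:
\begin{itemize}
\item $R$ lies in $\mathrm O^+(d+1,1)$, since it fixes $e_\pm$;
\item $R$ commutes with $J_C$ and preserves $\bH_d$;
\item $R\notin\fG_C$, since $\det R=-1$;
\item $R\notin\fG_CJ_C$, since $RJ_C$ sends $\bH_d$ to $-\bH_d$.
\end{itemize}
So the centralizer is $\fG_C\sqcup\fG_CR\sqcup\fG_CJ_C\sqcup\fG_CRJ_C$. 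The sign bookkeeping you anticipated (the $e_d$-entry, settled by $j_g>0$) is fine. What goes wrong is that nothing tracks $\det h$.

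This is a defect of the statement itself. The paper's one-line argument (block-diagonal centralizer, with the sign of the $e_d$-entry deciding preserving versus swapping) has exactly the same blind spot, so no proof can close it. You should flag it and prove a corrected form instead, for example:
\begin{itemize}
\item $\{g\in\SO^+(d+1,1)\mid gJ_C=J_Cg\}=\fG_C\sqcup\fG_C RJ_C$, which has two components, with $RJ_C$ an orientation-preserving swap; or
\item the $\bH_d$-preserving part of the centralizer in $\mathrm O^+(d+1,1)$ is $\fG_C\sqcup\fG_CR$.
\end{itemize}

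\textbf{The remaining parts.} The identification $\fG_C=1\oplus\SO^+(d,1)$, hence its connectedness, does follow from your argument. You should say explicitly that every $g\in\fG_C$ preserves $\partial\bH_d\cup\{\infty\}$ and therefore commutes with $J_C$ by (1); otherwise you only obtain the inclusion $1\oplus\SO^+(d,1)\subset\fG_C$.

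In (1), your proof of ``preserves $\{x_d=0\}\cup\{\infty\}$ $\Rightarrow$ commutes with $J_C$'' goes beyond the paper, which does not really argue that direction. One step needs tightening: fixing $d+1$ independent lines only makes the map diagonal. Scalarity with $\lambda=\pm1$ comes from $\lambda_i\lambda_j=1$ for all $i\neq j$, which holds because $\langle\si_0(x),\si_0(y)\rangle=-\tfrac12|x-y|^2$ and $\langle\si_0(x),e_+\rangle=\tfrac{1}{\sqrt2}$ are nonzero.

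Part (3) is fine as written.
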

Proposition \ref{prop_fixed_point} follows from this proposition and the Cayley transform.
Set
\begin{align*}
C_r(x_0) = \{x\in \R^d \mid |x-x_0|=r\}.
\end{align*}
\begin{prop}\label{prop_app_circle}
If $C_r(x_0) \subset \bD$, then there exist $g \in \fG$ and $1>R>0$ such that
\begin{align*}
g(C_r(x_0)) = C_R(0).
\end{align*}
\end{prop}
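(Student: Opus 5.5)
The plan is to reduce to a concentric situation using the transitivity of $\fG$ on $\bD$ and the fact that Möbius transformations send spheres to spheres. Since $C_r(x_0)\subset \bD$ is a compact sphere, we have $\norm{x_0}+r<1$. First I will show that for any $p\in\bD$ there is $h\in\fG$ with $h(p)=0$. One option is a suitable conjugate of a special conformal transformation composed with a translation. The cleaner route is via the Cayley transform: $\fG_C=\SO^+(d,1)$ (Proposition \ref{prop_app_SO}) contains the translations parallel to $\{x_d=0\}$ and the dilations. These act transitively on $\bH_d$, so by conjugating with $\cC$, the group $\fG$ acts transitively on $\bD$.

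The key step is to pick the right point to move to the origin. Consider the one-parameter family of points $p_t=t\,x_0/\norm{x_0}$ on the line through $0$ and $x_0$; if $x_0=0$ we are already done with $g=\id$. By rotational symmetry it suffices to work in the $2$-plane containing this line. There, the relevant maps are Möbius maps of $\bD$ preserving the diameter, namely the hyperbolic translations $h_t$ along this diameter. Each $h_t$ maps $C_r(x_0)$ to a sphere $S_t\subset\bD$, because $h_t(\bD)=\bD$ and $C_r(x_0)$ lies in $\bD$. Since $C_r(x_0)$ is symmetric under the reflections fixing the line, and $h_t$ commutes with them, $S_t$ has its center on the line. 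Thus $S_t$ meets the line in two points $u_t<v_t$, which are the images under the one-dimensional Möbius map $h_t|_{(-1,1)}$ of the two points $u<v$ where $C_r(x_0)$ meets the diameter.

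It then remains to choose $t$ so that $h_t(u)=-h_t(v)$. Then $S_t$ is centered at $0$, and we get $S_t=C_R(0)$ with $R=h_t(v)<1$. This is a one-dimensional problem. On $(-1,1)$ the maps $h_t$ form the group $\mathrm{PSU}(1,1)$ restricted to the real line, i.e. $x\mapsto \frac{x-t}{1-tx}$. We need $t$ with $\frac{u-t}{1-tu}+\frac{v-t}{1-tv}=0$. The left side is continuous in $t\in(-1,1)$. At $t=u$ it equals $\frac{v-u}{1-uv}>0$, and at $t=v$ it equals $\frac{u-v}{1-uv}<0$. The intermediate value theorem therefore gives a solution $t\in(u,v)$.

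The main obstacle is to justify cleanly that $h_t$, defined geometrically, is indeed an element of $\fG$ whose restriction to the diameter is this real Möbius map, and that $h_t$ preserves the reflection symmetry. For this I would write $h_t$ explicitly as $h_t=T\circ\ldots$ using the formula for $K(c)$ combined with translation and dilation. Alternatively, I would use Proposition \ref{prop_conf_identity}, which says $\fG$ contains the $d$-dimensional analogue $x\mapsto \frac{(1-\norm{c}^2)(x-c)-\norm{x-c}^2c}{1-2(c,x)+\norm{c}^2\norm{x}^2}$ for $c\in\bD$. Taking $c=p_t$, this map visibly commutes with reflections fixing the line $\R c$ and restricts to $x\mapsto\frac{x-t}{1-tx}$ on it.
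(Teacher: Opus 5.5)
Your argument is correct. It is essentially the paper's argument moved from the half-space model to the ball model.

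The paper conjugates by $\cC$ and uses a translation parallel to $\{x_d=0\}$ to put the center of $\cC^{-1}(C_r(x_0))$ on the vertical axis. It then applies $D(\lambda)$ with $\lambda=(y_d^2-r'^2)^{-1/2}$, using that the spheres $\cC^{-1}(C_R(0))$ are exactly the spheres centered on that axis with $(\text{height})^2-(\text{radius})^2=1$. Under $\cC$ the vertical axis becomes a diameter of $\bD$, and the dilations restrict on it to exactly your maps $x\mapsto\frac{x-t}{1-tx}$ with $t=\frac{1-\lambda}{1+\lambda}$. The paper's normalization $\lambda^2(y_d-r')(y_d+r')=1$ is the half-space form of your condition $h_t(u)=-h_t(v)$. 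You solve that condition by the intermediate value theorem instead of explicitly. Your choice of the diameter through $x_0$, together with the reflection symmetry, replaces the horizontal translation.

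The paper's model makes membership in $\fG$ automatic, since horizontal translations and dilations visibly lie in $\SO^+(d+1,1)$ and preserve $\bH_d$, and it gives $R$ explicitly. Your version avoids computing $\cC^{-1}(C_R(0))$, but it must prove that the explicit ball map $h_c$ lies in $\fG$. Your opening paragraph on the transitivity of $\fG$ on $\bD$ is never used.

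For that remaining step, Proposition \ref{prop_conf_identity} does not help. It only records identities for conformal factors of maps already known to lie in $\fS_d$ or $\fG_d$, and says nothing about which maps belong to $\fG$. Your other suggestion does work:
\begin{itemize}
\item With $\beta=(1-\norm{c}^2)^{-1}$, a direct computation gives $h_c=D(\beta)\circ K(\beta c)\circ T_{-c}$, so $h_c\in\SO^+(d+1,1)$.
\item The identity $1-\norm{h_c(x)}^2=\frac{(1-\norm{c}^2)(1-\norm{x}^2)}{1-2(c,x)+\norm{c}^2\norm{x}^2}$ shows that $h_c$ maps each of $\bD$, $\partial\bD$ and $S^d\setminus\overline{\bD}$ into itself.
\item Since $h_c$ is a bijection of $S^d$, it follows that $h_c(\bD)=\bD$.
\end{itemize}
Alternatively, take $h_t$ to be a rotation-conjugate of $\cC D(\lambda)\cC^{-1}$. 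That map lies in $\fG$ automatically and commutes with the reflections fixing the diameter, and with this choice your proof becomes the paper's.
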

\begin{proof}
It suffices to work on the upper half-space via the Cayley transform.
It is easy to see that
\begin{align*}
\cC^{-1}(C_R(0))= \left\{(x',x_d) \in \bH_d \mid |x'|^2 + \left(x_d- \frac{1+R^2}{1-R^2}\right)^2 = \left(\frac{2R}{1-R^2}\right)^2 \right\}.
\end{align*}
The set $\cC^{-1}(C_r(x_0))$ is a circle in $\bH_d$ with radius $r'$ and center $(y',y_d) \in \bH_d$.
Moreover, by assumption we have $y_d > r'$.
By some translation and dilation $T_{y'},D(\lambda) \in \fG_C$, this circle can be moved to one with center $(0,\lambda(y_d))$
and radius $\lambda r'$.
Choosing $\lambda=\frac{1}{\sqrt{y_d^2-r'^2}}$, this circle coincides with $\cC^{-1}(C_R(0))$ for some $1>R>0$.
\end{proof}

Using Proposition \ref{prop_app_circle}, we can prove Proposition \ref{prop_decomposition}.
\begin{proof}[Proof of Proposition \ref{prop_decomposition}]
Since a M\"{o}bius transformation sends circles to circles, $g(\pa \bD)$ is a circle contained in $\overline{\bD}$.
Let $x_0$ be its center and $r$ its radius; then $T_{x_0}r^D \in \fS$ is clear.
The element $(T_{x_0}r^D)^{-1}g$ preserves $\pa \bD$ and sends the interior of $\bD$ into itself, hence it lies in $\fG$.

Next, assume that $\overline{g(\bD)} \subset \bD$.
Then $g(\pa \bD)$ is a circle in the interior of ${\bD}$.
By Proposition \ref{prop_app_circle}, there exist $h_1 \in \fG$ and $1> r >0$ such that
$h_1 g(\pa \bD) = \{x \in \R^d \mid |x|=r \}$.
Hence $r^{-D}h_1g \in \fG$.
\end{proof}
\begin{rem}\label{rem_circle_intersection}
In Proposition \ref{prop_app_circle}, if $C_r(x_0) \subset \overline{\bD}$ and the circle has a non-empty intersection with $\pa \bD$,
that is, if $r+|x_0|=1$, then the proof of Proposition \ref{prop_app_circle} no longer works since $y_d =r'$.
In particular, if $g \in \fS$ and $\overline{g(\bD)}$ is not contained in $\bD$, then the decomposition in Proposition \ref{prop_decomposition} (2) does not hold.
\end{rem}

We conclude this section by proving Proposition \ref{prop_conf_identity}.
Equation \eqref{eq_identity_all} follows immediately from Lemma \ref{lem_norm_all_inv} and Proposition \ref{prop_Liouville_sphere}.
We will prove \eqref{eq_identity_G}.

%
%

Let $H_C:\bH_d \times \bH_d \rightarrow \R_{\geq 0}$ be a map defined by 
\begin{align}
H_C(x',y') = \frac{|x'-y'|^2}{x_d' y_d'}\qquad \text{ for $x',y' \in \bH_d$}
\label{eq_app_def_HC}
\end{align}
and $H:\bD_d \times \bD_d \rightarrow \R_{\geq 0}$ by $H(x,y) = H_C(\cC^{-1}(x), \cC^{-1}(y))$ for $x,y \in \bD_d$.
Since 
\begin{align*}
\cC^{-1}(x) =\left( \frac{x_1}{-x_d+\frac{|x|^2+1}{2}},\dots,\frac{x_{d-1}}{-x_d+\frac{|x|^2+1}{2}}, \frac{1-|x|^2}{-2x_d+1+|x|^2}\right),
\end{align*}
a straightforward computation shows that
\begin{align}
H(x,y)=\frac{|\cC^{-1} x- \cC^{-1} y|^2}{(\cC^{-1}x)_d(\cC^{-1}y)_d} &=
\frac{4|x-y|}{(1-|x|^2)(1-|y|^2)}.\label{eq_app_d_D}
\end{align}
Note that $H_C$ is related to the hyperbolic distance on the upper half-space, while $H$ is related to the distance on the Poincar\'{e} disk.

\begin{lem}\label{lem_app_identity}
For any $g \in \fG_C$ and $x,y \in \bH_d$, $H_C(gx,gy)=H_C(x,y)$.
\end{lem}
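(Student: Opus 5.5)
The plan is to reduce to generators of $\fG_C$. By Proposition \ref{prop_app_SO} (2), $\fG_C=\SO^+(d,1)$, embedded in $\SO^+(d+1,1)$ so as to fix the $d$-th component. This group is generated by:
\begin{itemize}
\item translations $T_a$ with $a=(a',0)$ parallel to $\partial\bH_d$;
\item rotations $\tilde R$ with $R\in\SO(d-1)\subset\SO(d)$ fixing the $x_d$-axis;
\item dilations $D(\lambda)$, $\lambda>0$;
\item one further element, for instance $J_C$ composed with a suitable reflection, or the inversion-type map in the unit sphere centered at the origin followed by a reflection in $\{x_1=0\}$.
\end{itemize}
Since $H_C(gx,gy)=H_C(x,y)$ is preserved under composition, it suffices to check it on these generators.

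I would take this generating set from the Iwasawa/Bruhat decomposition, or argue as follows. By Proposition \ref{prop_decomposition}-type reasoning, the subgroup of $\fG_C$ fixing $\infty$ is $\R^{d-1}\rtimes(\SO(d-1)\times\R_{>0})$, by the analogue of \eqref{eq_decomp_Matsu} restricted to maps preserving $\bH_d$. This subgroup acts transitively on $\partial\bH_d\cup\{\infty\}$ away from $\infty$. Together with a single element moving $\infty$, it therefore generates $\fG_C$.

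For translations and rotations the check is immediate: $|x-y|$ is preserved, and so are $x_d$ and $y_d$. For dilations, both numerator and denominator of \eqref{eq_app_def_HC} scale by $\lambda^2$.

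For the element moving $\infty$, I would use the Euclidean inversion $\iota(x)=x/|x|^2$, composed with the reflection $x_1\mapsto -x_1$ to make it orientation-preserving. This composite preserves $\bH_d$ because $(\iota x)_d=x_d/|x|^2>0$. By Lemma \ref{lem_norm_all_inv}, applied with $j_J(x)=|x|^2$ as read off from $J\sigma_0(x)=|x|^2\sigma_0(x/|x|^2)$, we have
\[
|\iota x-\iota y|^2=\frac{|x-y|^2}{|x|^2|y|^2}.
\]
Also $(\iota x)_d(\iota y)_d=\frac{x_dy_d}{|x|^2|y|^2}$. The factors cancel in the quotient, which gives invariance. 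The reflection is trivially harmless.

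The main obstacle is the generation statement, namely justifying cleanly that these elements generate all of $\fG_C$ rather than a proper subgroup. I would handle it like this. Given $g\in\fG_C$, $g$ preserves $\partial\bH_d\cup\{\infty\}$. If $g(\infty)\neq\infty$, then composing with a translation and the inversion $\iota$ produces an element fixing $\infty$. Such an element is a Euclidean similarity preserving $\bH_d$, so it lies in $\R^{d-1}\rtimes(\SO(d-1)\times\R_{>0})$.

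Alternatively, I would bypass generation entirely by a direct computation with the cocycle. For $g\in\fG_C$ one has $(gx)_d=\Om_g(x)x_d$, because $x\mapsto x_d$ transforms as a conformal density on $\bH_d$. Combined with $|gx-gy|^2=\Om_g(x)\Om_g(y)|x-y|^2$ from Lemma \ref{lem_norm_all_inv}, this gives the claim immediately. I would try to prove $(gx)_d=\Om_g(x)x_d$ by pairing $g\sigma_0(x)$ with the fixed spacelike vector $e_d$: since $g$ commutes with $J_C$ by Proposition \ref{prop_app_SO}, it fixes $e_d$. Then
\[
\langle \sigma_0(x),e_d\rangle=x_d=\langle g\sigma_0(x),e_d\rangle=j_g(x)(gx)_d,
\]
which yields exactly what is needed. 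This second route is shorter, and I would adopt it as the main argument.
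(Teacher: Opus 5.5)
Your main (second) argument is correct, and it takes a genuinely different route from the paper's proof.

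\textbf{The paper's route.} The paper uses a normal form. A translation $T_{y'}$ and a dilation $D(y_d)$ move $g(0,\dots,0,1)$ back to $(0,\dots,0,1)$. After conjugating by the Cayley transform, the resulting stabilizer is identified with $\SO(d)$ by citing \cite{Matsumoto}, which gives $g=T_{y'}D(y_d)(\cC^{-1}R\cC)$. Invariance is then checked factor by factor, with the conjugated rotation handled through the explicit disk formula \eqref{eq_app_d_D}.

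\textbf{Your route.} You stay in the null-cone model. Since $g$ commutes with $J_C$, one has $ge_d=\epsilon e_d$ with $\epsilon=\pm1$. Pairing $g\si_0(x)=j_g(x)\si_0(gx)$ with $e_d$ gives $(gx)_d=\epsilon\,\Om_g(x)\,x_d$. Lemma \ref{lem_norm_all_inv} then finishes the proof in one line.

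Strictly, commuting with $J_C$ only gives $\epsilon=\pm1$. You can get $\epsilon=1$ from $j_g>0$ and $x_d,(gx)_d>0$. The sign is irrelevant anyway, because only the product $(gx)_d(gy)_d$ enters $H_C$.

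\textbf{Comparison.} Your route is shorter and self-contained: no external citation and no Cayley-transform computation. It also proves the stronger pointwise statement that $x_d$ transforms with conformal weight one, i.e.\ that the hyperbolic metric $x_d^{-2}g_\std$ is $\fG_C$-invariant. The paper's route, in exchange, gives an explicit factorization of elements of $\fG_C$. It also passes through \eqref{eq_app_d_D}, which the paper needs anyway to deduce \eqref{eq_identity_G} in Proposition \ref{prop_conf_identity}.

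Your first, generator-based sketch is also workable: it uses the stabilizer of $\infty$ plus the inversion composed with a reflection, instead of the paper's stabilizer of an interior point. One slip there: $J_C$ is not a valid choice for the extra generator. It is the reflection in $\{x_d=0\}$, so it fixes $\infty$ and exchanges $\bH_d$ with $-\bH_d$. Since you adopt the cocycle argument as the main proof, this does not affect its validity.
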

\begin{proof}
Let $g \in \fG_C$. Set $(y',y_d) =g(0,\dots,0,1)\in \bH_d$. Then,
\begin{align*}
D(y_d)^{-1}T_{-y'}g(0,\dots,0,1) = (0,\dots,0,1)
\end{align*}
and $D(y_d)^{-1}T_{-y'}g \in \fG_C$. Since $\cC D(y_d)^{-1}T_{-y'}g \cC^{-1} \in \fG$ send the origin to the origin,
there is $R \in \SO(d)$ such that $R = \cC D(y_d)^{-1}T_{-y'}g \cC^{-1}$ by \cite[Proposition 2.9]{Matsumoto}.
Hence,
\begin{align*}
g = T_{y'} D(y_d) (\cC^{-1}R \cC) \in \fG_C.
\end{align*}
By \eqref{eq_app_def_HC}, $H_C(gx,gy)=H_C(x,y)$ holds for $g = T_{y'}$ or $g= D(y_d)$.
Furthermore, by \eqref{eq_app_d_D}, it also holds for $g = \cC^{-1}R \cC$.
\end{proof}

\begin{proof}[Proof of Proposition \ref{prop_conf_identity}]
By Lemma \ref{lem_app_identity}, for any $g \in \fG$ and $x,y \in \bD$, we have
\begin{align*}
\frac{|x-y|}{(1-|x|^2)(1-|y|^2)} &= \frac{|g x-g y|}{(1-|g x|^2)(1-|g y|^2)}.
\end{align*}
Combining this equality with the well-known identity
\begin{align*}
|x-y|^2 + (1-|x|^2)(1-|y|^2) = 1-2(x,y)+|x|^2|y|^2
\end{align*}
we obtain \eqref{eq_identity_G} from Lemma \ref{lem_norm_all_inv}.
%
%
%
\end{proof}


\section{Category of Ind Hilbert spaces}
\label{app_Hilb}
In this Appendix, we recall the definition of $\mathrm{Ind}(\Hilb)$ and the symmetric monoidal structure on it.
All propositions in this section are standard; for details on the categorical aspects, see \cite[Section 6]{KS}, and for the theory of Hilbert spaces, see \cite[Chapter IV]{Takesaki}.
\begin{dfn}\label{def_ind_category}
Let $\cC$ be a category. The ind-category $\operatorname{Ind}(\cC)$ is a full subcategory of $\mathrm{PSh}(\cC)=[\cC^\op,\mathrm{Set}]$, the category of presheaves on $\cC$, consisting of small filtered colimits of representables, i.e., 
\begin{align*}
X \cong \mathrm{colim}_{i\in I} \mathrm{Hom}_{\cC}(\bullet,F(i)),
\end{align*}
for some small filtered category $I$ with a functor $F:I \rightarrow \cC$.
\end{dfn}

Let $F:I \rightarrow \cC$ and $G:J \rightarrow \cC$ be objects in $\operatorname{Ind}(\cC)$. Then,
\begin{align}
\begin{split}
\Hom_{\operatorname{Ind}(\cC)}(F,G) &\cong \Hom_{\operatorname{Ind}(\cC)}(\colim_{i \in I} F(i), \colim_{j \in J} G(j))\\
& \cong \lim_{i \in I}\Hom_{\operatorname{Ind}(\cC)}(F(i), \colim_{j \in J} G(j))\\
& \cong \lim_{i \in I}\left(\colim_{j \in J} \Hom_{\cC}(F(i), G(j))\right)
\end{split}
\label{app_def_ind_hom}
\end{align}
holds (see \cite[Section 6.1]{KS}).

%

\begin{prop}\cite[Corollary 6.1.18]{KS}\label{prop_colimit}
If $\cC$ has all finite colimits, then $\operatorname{Ind}(\cC)$ has all small colimits.
\end{prop}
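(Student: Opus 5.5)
The statement is the standard fact that $\operatorname{Ind}(\cC)$ is cocomplete when $\cC$ has finite colimits. I would prove it by reducing arbitrary small colimits to two kinds, filtered colimits and finite colimits, using the classical decomposition: any small colimit can be written as a filtered colimit of finite colimits. Concretely, for a small diagram $X:K\to\operatorname{Ind}(\cC)$, take the poset of finite subcategories $K_0\subset K$. It is filtered, and
\[
\colim_K X\cong \colim_{K_0}\colim_{K_0} X|_{K_0}.
\]
So it suffices to show two things. First, $\operatorname{Ind}(\cC)\subset \mathrm{PSh}(\cC)$ is closed under small filtered colimits. Second, it admits finite colimits. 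Those finite colimits need not be computed in $\mathrm{PSh}(\cC)$; they only need to exist in $\operatorname{Ind}(\cC)$, with a filtered-colimit comparison that is compatible.

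\textbf{Filtered colimits.} Let $(X_j)_{j\in J}$ be a filtered diagram in $\operatorname{Ind}(\cC)$, with $X_j\cong\colim_{i\in I_j}h_{F_j(i)}$, where $h_c=\Hom_\cC(\bullet,c)$. I would use the characterization that a presheaf is an ind-object if and only if its category of elements $\cC/X$ is filtered and essentially small (equivalently, $X$ is the colimit over $\cC/X$ of representables). For $X=\colim_j X_j$ computed pointwise in $\mathrm{PSh}(\cC)$, Yoneda gives
\[
\Hom(h_c,X)\cong\colim_j X_j(c).
\]
Filteredness of $\cC/X$ then follows by lifting finitely many objects and morphisms of $\cC/X$ to a single $X_j$, using filteredness of $J$. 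Then I equalize or join them inside the filtered category $\cC/X_j$. Hence $X\in\operatorname{Ind}(\cC)$, and it is the colimit there because $\operatorname{Ind}(\cC)$ is a full subcategory.

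\textbf{Finite colimits.} It suffices to treat the initial object, binary coproducts and coequalizers. The initial object is $h_{\emptyset}$, the representable on the initial object of $\cC$, which exists since $\cC$ has finite colimits; Yoneda together with \eqref{app_def_ind_hom} shows it is initial. For a finite diagram $D:L\to\operatorname{Ind}(\cC)$, the key step is the classical lemma that every finite diagram of ind-objects is isomorphic to a filtered colimit over a single filtered category $I$ of finite diagrams $D_i:L\to\cC$ (``simultaneous presentation'' of finite diagrams, \cite[Proposition 6.1.13]{KS}). Granting this, I set
\[
\colim_L D:=\colim_{i\in I} h_{\colim_L D_i},
\]
and verify the universal property with \eqref{app_def_ind_hom}. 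Maps out of $\colim_i h_{\colim_L D_i}$ into an ind-object $Y$ are $\lim_i \Hom(h_{\colim_L D_i},Y)$. This equals $\lim_i \lim_L \Hom(h_{D_i(\ell)},Y)$, because $\Hom(h_c,Y)=Y(c)$ and the finite colimit $\colim_L D_i$ is taken in $\cC$. Interchanging the two limits and reassembling gives $\lim_L\Hom(D(\ell),Y)$.

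The main obstacle is this simultaneous presentation lemma. For a single morphism $X\to Y$ of ind-objects it is done by reindexing over the category of triples $(i,j,\text{representative map }F(i)\to G(j))$ and showing that category is filtered and cofinal on both sides. For a general finite diagram one proceeds by induction on the number of morphisms, which is notationally heavy but routine. Everything else is formal Yoneda bookkeeping.
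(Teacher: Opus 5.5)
Your overall strategy is the standard one behind the result the paper cites; the paper itself gives no argument beyond the citation \cite{KS}. The strategy is: show $\operatorname{Ind}(\cC)$ is closed under filtered colimits in $\mathrm{PSh}(\cC)$, get finite colimits via simultaneous presentation, then combine finite and filtered colimits to get all small colimits. Your filtered-colimit and finite-colimit steps are essentially right, but the reduction step fails as written.

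\textbf{The reduction step.} Finite subcategories of $K$ need not exhaust $K$, nor form a filtered poset, because a subcategory must be closed under composition. Take $K=B\N$: one object, endomorphism monoid $(\N,+)$. Any subcategory containing the generator contains all its powers, so the only finite subcategory with an object is $\{*,\id\}$. Your formula then gives $\colim_K X\cong X(*)$ instead of the coequalizer of $\id$ and the generator. For $\mathbb{Z}$ with the shift, in $\mathrm{Set}\simeq\operatorname{Ind}(\text{finite sets})$, this is $\mathbb{Z}$ instead of a point. Filteredness also fails: take two objects with freely composable arrows $f\colon a\to b$, $g\colon b\to a$. The finite subcategories generated by $f$ and by $g$ have no finite common upper bound, since any subcategory containing both contains all powers of $gf$.

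The repair is cheap. Either:
\begin{itemize}
\item index by the directed poset of finite subgraphs of $K$ (finitely many objects and morphisms, no closure under composition), whose colimits are coequalizers of two maps between finite coproducts; or
\item as in \cite{KS}, write a small coproduct as the filtered colimit of its finite subcoproducts, and an arbitrary small colimit as a coequalizer of two maps between coproducts.
\end{itemize}
Either way you only need the initial object, binary coproducts, coequalizers and filtered colimits, which is exactly what you go on to construct.

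\textbf{Two smaller points in the finite-colimit step.}
\begin{itemize}
\item The identification $\Hom(h_{\colim_L D_i},Y)\cong\lim_L\Hom(h_{D_i(\ell)},Y)$ is not just Yoneda. It says that $Y\cong\colim_j h_{G(j)}$ turns finite colimits of $\cC$ into limits. That holds because each representable does, and filtered colimits commute with finite limits in $\mathrm{Set}$. This is precisely where finiteness of $L$ enters, and you should state it.
\item The simultaneous-presentation lemma is not available ``for every finite diagram.'' Its classical form is for finite indexing categories whose only endomorphisms are identities. For general $\cC$ it genuinely fails for shapes such as $B(\mathbb{Z}/2)$, and induction on the number of morphisms does not remove that hypothesis. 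The empty category, the discrete pair and the parallel pair have no non-identity endomorphisms, so this costs you nothing once you restrict to those shapes, as you say you will.
\end{itemize}
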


Assume that $\cC$ is a symmetric monoidal category with the tensor product $\otimes_\cC$ and the unit $1_\cC$.
Then, one can define a bifunctor, called the Day convolution \cite{Day},
\begin{align*}
\otimes_{\mathrm{Day}}:\mathrm{PSh}(\cC) \times \mathrm{PSh}(\cC) \rightarrow \mathrm{PSh}(\cC)
\end{align*}
by the coend
\begin{align}
(F \otimes_{\mathrm{Day}} G)(C) = \int^{A,B \in \cC}F(A) \times G(B) \times \mathrm{Hom}_\cC(C,A \otimes_\cC B).\label{eq_Day}
\end{align}
If $F = y_X=\mathrm{Hom}_\cC(\bullet,X)$ and $G = y_Y$ for some $X,Y \in \cC$,
then by the property of the coend, $y_X \otimes_{\mathrm{Day}} y_Y\cong y_{X \otimes_\cC Y}$.
Hence, the Day convolution coincides with $-\otimes_\cC-$ on $\cC \rightarrow \mathrm{PSh}(\cC)$,
which preserves all filtered colimits in each variable. In particular, if $F,G \in \mathrm{Ind}(\cC)$, then $F \otimes_{\mathrm{Day}} G \in \mathrm{Ind}(\cC)$. Hence, $-\otimes_{\mathrm{Day}}-$ defines a tensor product on $\mathrm{Ind}(\cC)$.
Note that the inclusion $i:\mathrm{Ind}(\cC) \hookrightarrow\mathrm{Psh}(\cC)$ preserves filtered colimits but does not preserve colimits in general.
Therefore, $-\otimes_{\mathrm{Day}}-$ on $\mathrm{Ind}(\cC)$ does not necessarily commute with small colimits.

Assume that $\cC$ has all finite colimits and that $-\otimes_\cC -$ separately preserves all finite colimits.
Then,
\begin{align*}
\cC \overset{X \otimes_\cC -}{\longrightarrow} \cC \overset{i}{\hookrightarrow} \mathrm{Ind}(\cC)
\end{align*}
preserves all finite colimits.
Then, by \cite[Proposition 5.3.6.2 and Example 5.3.6.8]{Lurie}, $X \otimes_\cC-: \cC \rightarrow \mathrm{Ind}(\cC)$ admits a unique extension to an all-colimit preserving functor $X \otimes -: \mathrm{Ind}(\cC) \rightarrow \mathrm{Ind}(\cC)$, which coincides with $X \otimes_{\mathrm{Day}}-$.
Since colimit preserving functors are stable under colimits, we have:
\begin{prop}\cite[Corollary 4.8.1.14]{Lurie2} \label{prop_Day}
Assume that $\cC$ is a symmetric monoidal category with the tensor product $\otimes_\cC$ and the unit $1_\cC$ such that
\begin{itemize}
\item
$\cC$ has all finite colimits.
\item
$-\otimes_\cC-$ separately preserves all finite colimits.
\end{itemize}
Then, $\mathrm{Ind}(\cC)$ is a symmetric monoidal category whose tensor product $\otimes_{\mathrm{ind}(\cC)}$ is defined by \eqref{eq_Day} and unit is $y_{1_\cC}$ such that:
\begin{itemize}
\item
$-\otimes_{\mathrm{ind}(\cC)}-$ distributes all small colimits separately in each variable.
\end{itemize}
\end{prop}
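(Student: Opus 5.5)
The statement to prove is Proposition \ref{prop_Day}: $\mathrm{Ind}(\cC)$ carries a symmetric monoidal structure given by Day convolution, its unit is $y_{1_\cC}$, and the tensor product distributes over all small colimits separately in each variable. The plan has three steps.

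\textbf{Step 1: the tensor product and the monoidal axioms.} Take Day convolution \eqref{eq_Day} on $\mathrm{PSh}(\cC)$. By the standard result of \cite{Day}, it makes $\mathrm{PSh}(\cC)$ a symmetric monoidal category with unit $y_{1_\cC}$. The associator, unitors and symmetry come from those of $\cC$ via the coend, and the coherence axioms reduce to those of $\cC$ by the co-Yoneda lemma.

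To restrict to $\mathrm{Ind}(\cC)$, I would check closure. Write $F=\colim_i y_{F(i)}$ and $G=\colim_j y_{G(j)}$ as filtered colimits. Day convolution commutes with all colimits in each variable on $\mathrm{PSh}(\cC)$, since coends and products with sets commute with colimits in $\mathrm{Set}$. Hence
\[
F\otimes_{\mathrm{Day}}G\cong\colim_{(i,j)}\, y_{F(i)\otimes_\cC G(j)},
\]
using $y_X\otimes_{\mathrm{Day}}y_Y\cong y_{X\otimes_\cC Y}$. The index category $I\times J$ is filtered, so the result lies in $\mathrm{Ind}(\cC)$. The unit $y_{1_\cC}$ is representable, hence in $\mathrm{Ind}(\cC)$. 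Because $\mathrm{Ind}(\cC)$ is a full subcategory, the structure isomorphisms restrict, and $\mathrm{Ind}(\cC)$ is a full symmetric monoidal subcategory.

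\textbf{Step 2: colimit distributivity.} This is the real content, because the inclusion $i:\mathrm{Ind}(\cC)\hookrightarrow\mathrm{PSh}(\cC)$ does not preserve finite colimits. Colimits in $\mathrm{Ind}(\cC)$, which exist by Proposition \ref{prop_colimit}, are generally not computed in presheaves.

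I would split a small colimit into filtered colimits and finite colimits. Filtered colimits in $\mathrm{Ind}(\cC)$ are computed in $\mathrm{PSh}(\cC)$, so Day convolution preserves them by Step 1. For finite colimits, fix $X\in\cC$. The functor $X\otimes_\cC-:\cC\to\cC$ preserves finite colimits by hypothesis. The Yoneda embedding $\cC\to\mathrm{Ind}(\cC)$ also preserves finite colimits: a finite colimit $L$ in $\cC$ satisfies $\Hom_{\mathrm{Ind}}(y_L,\colim_k y_{Z_k})=\colim_k\Hom(L,Z_k)$, and this commutes with finite limits in $\mathrm{Set}$ because filtered colimits commute with finite limits.

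By the universal property of $\mathrm{Ind}(\cC)$ as the free cocompletion under filtered colimits of a finitely cocomplete category (\cite[Prop.~5.3.6.2]{Lurie}, or \cite[Section 6.1]{KS}), the composite $\cC\to\mathrm{Ind}(\cC)$ extends uniquely to a colimit-preserving functor $\mathrm{Ind}(\cC)\to\mathrm{Ind}(\cC)$. This extension agrees with $X\otimes_{\mathrm{Day}}-$, since both preserve filtered colimits and agree on representables.

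For a general $F=\colim_i y_{F(i)}$, the functor $F\otimes_{\mathrm{Day}}-$ is the filtered colimit over $i$ of the colimit-preserving functors $F(i)\otimes_{\mathrm{Day}}-$. Colimits commute with colimits, so $F\otimes_{\mathrm{Day}}-$ preserves all small colimits. The other variable follows by symmetry.

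\textbf{Main obstacle.} The delicate step is the agreement between the abstract colimit-preserving extension and $X\otimes_{\mathrm{Day}}-$. This is where the paper's remark that the inclusion into presheaves fails to preserve colimits bites. I would handle it using only the characterization "preserves filtered colimits and agrees on $\cC$", avoiding any presheaf-level computation of finite colimits.
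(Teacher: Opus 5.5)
Your proposal is correct and follows the paper's argument essentially step for step. The shared steps are:
\begin{enumerate}
\item Day convolution on $\mathrm{PSh}(\cC)$ restricts to $\mathrm{Ind}(\cC)$, because $y_X\otimes_{\mathrm{Day}}y_Y\cong y_{X\otimes_\cC Y}$ and Day convolution preserves filtered colimits.
\item For $X\in\cC$, the finite-colimit-preserving composite $\cC\xrightarrow{X\otimes_\cC-}\cC\hookrightarrow\mathrm{Ind}(\cC)$ extends via \cite[Proposition 5.3.6.2]{Lurie} to a colimit-preserving functor, which is identified with $X\otimes_{\mathrm{Day}}-$.
\item A general $F$ is a filtered colimit of representables, and colimit-preserving functors are stable under colimits.
\end{enumerate}
You add two justifications the paper only asserts. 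First, you identify the extension with $X\otimes_{\mathrm{Day}}-$ because both preserve filtered colimits and agree on representables. Second, you check that $\cC\to\mathrm{Ind}(\cC)$ preserves finite colimits.
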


We apply the above general discussion to $\Hilb$. Let $\Hilb$ be the category of separable Hilbert spaces, i.e.\ the category whose objects are separable Hilbert spaces and whose morphisms are bounded linear operators.
First, $\Hilb$ is a $\C$-linear category with biproducts and a zero object.
For bounded linear maps $f,g:H_1 \rightarrow H_2$ between Hilbert spaces,
\begin{align*}
\ker(f-g)= \{v\in H_1 \mid f(v)=g(v) \}
\end{align*}
is a closed subspace of $H_1$, hence a Hilbert space, and the inclusion $\ker(f-g) \rightarrow H_1$ satisfies the universal property of an equalizer. Moreover,
\begin{align*}
\mathrm{im}(f-g)^\perp = \{w \in H_2 \mid (w, f(v)-g(v))_{H_2}=0 \text{ for any } v \in H_1\}
\end{align*}
is a closed subspace of $H_2$, and $H_2 \cong \overline{\mathrm{im}(f-g)} \oplus  \mathrm{im}(f-g)^\perp$ gives an orthogonal decomposition of the Hilbert space. Let $p:H_2 \rightarrow \mathrm{im}(f-g)^\perp$ be the orthogonal projection.
Given any Hilbert space $H_3$ and a bounded linear map
$h:H_2 \rightarrow H_3$ with $h\circ (f-g)=0$, we have $\ker{h} \supset \overline{\mathrm{im}(f-g)}$, and hence $h$ factors through $p$ as a bounded linear map. Therefore, $\Hilb$ is a $\C$-linear category with finite limits and finite colimits.

Note that $\Hilb$ does not have infinite coproducts, and hence it is not a cocomplete category. Indeed, consider the sequence of one-dimensional Hilbert spaces $\C e_n$ with $(e_n,e_n)=1$ ($n \geq 1$).
Then $A_n: e_n \mapsto n e_n$ is a bounded operator, but $\oplus_{n \geq 0} A_n:\bigoplus_{n \geq 0}\C e_n \rightarrow \bigoplus_{n \geq 0}\C e_n$ is clearly unbounded.

%

Let $H_i$ be Hilbert spaces with the inner product $(-,-)_i$ ($i=1,2$).
Denote by $H_1 {\otimes} H_2$ the algebraic tensor product of $\C$-vector spaces.
Then, $H_1 {\otimes} H_2$ inherits an inner product by
\begin{align}
(v_1\otimes v_2,w_1\otimes w_2) =(v_1,w_1)_1 (v_2,w_2)_2\label{eq_app_tensor_inner}
\end{align}
for $v_1,w_1\in H_1$ and $v_2,w_2\in H_2$.
The Hilbert space completion of $H_1 {\otimes} H_2$ is denoted by $H_1 {\hotimes} H_2$.
For bounded linear maps $T_i:H_i \rightarrow K_i$ between Hilbert spaces,
the linear map between algebraic tensor products
$T_1 \otimes T_2: H_1 \otimes H_2 \rightarrow K_1 \otimes K_2$
admits a unique extension
$T_1 \hotimes T_2: H_1 \hotimes H_2 \rightarrow K_1 \hotimes K_2$,
and its operator norm satisfies
$\norm{T_1 \hotimes T_2}\leq \norm{T_1}\norm{T_2}$
(see for example \cite[Chapter IV.1]{Takesaki}).
With respect to this tensor product $\hotimes$, the category of Hilbert spaces $\Hilb$
carries the structure of a symmetric monoidal category.
%
\begin{rem}
\label{rem_subtle_bilinear}
As discussed in Section \ref{sec_ind_Hilb}, the relationship between bilinear maps on Hilbert spaces and the tensor product $\hotimes$ is subtle.
Let $l^2(\N)$ be the real Hilbert space with orthonormal basis $\{e_i\}_{i\in\N}$.
Define a bilinear map $B: l^2(\N) \times l^2(\N) \rightarrow \R$ by
\begin{align*}
B\left(\sum_{n\geq 0}a_n e_n, \sum_{m \geq 0}b_m e_m\right) = \sum_{n \geq 0} a_n b_n .
\end{align*}
Since the arguments are elements of $l^2(\N)$, we have
$\sum_{n \geq 0} |a_n b_n| \leq (\sum_{n \geq 0} |a_n|^2)^{\ft}(\sum_{n \geq 0} |b_n|^2)^{\ft}$,
thus, $B$ is well-defined, and it induces a linear map from the algebraic tensor product
${B}: l^2(\N) \otimes l^2(\N) \rightarrow \R$.
As we have already seen, ${B}$ satisfies $|{B}(a \otimes b)| \leq |a|\,|b|$ for any $a,b \in l^2(\N)$.
However, for the vectors satisfying
$\norm{\frac{1}{\sqrt{N}}\sum_{i=1}^N e_i \otimes e_i}_{l^2(\N)\otimes l^2(\N)}=1$,
we have
\begin{align*}
B\left(\frac{1}{\sqrt{N}} \sum_{i=1}^N e_i \otimes e_i\right) = \sqrt{N}.
\end{align*}
Therefore, $B$ does not admit an extension to a bounded linear operator between Hilbert spaces
$\tilde{B}: l^2(\N) \hotimes l^2(\N) \rightarrow \R$.
%
\end{rem}

We will show that $H \hotimes -: \Hilb \rightarrow \Hilb$ preserves all finite colimits.
For any $V_i \in \Hilb$ ($i=1,\dots,N$), since the canonical isomorphism of $\C$-vector spaces
\begin{align*}
 \bigoplus_{i=1}^N H {\otimes} V_i \rightarrow H {\otimes} \bigoplus_{i=1}^N V_i  
\end{align*}
is isometric, it induces an isometric isomorphism of Hilbert spaces. Hence, $H \hotimes -$ preserves all finite coproducts by $H\hotimes 0 \cong 0$.
Let $f:V \rightarrow W$ be a map in $\Hilb$
and let $p:W \rightarrow (\mathrm{Im}f)^\perp$ be the orthogonal projection with respect to the decomposition $W \cong (\mathrm{Im}f)^\perp \oplus \overline{\mathrm{Im}f}$.
Then, since $\mathrm{Im}(f\hotimes \id_H)^\perp = \mathrm{Im}(f)^\perp \hotimes H$, it follows that $H \hotimes -$ preserves coequalizers.
Thus, by Proposition \ref{prop_colimit} and Proposition \ref{prop_Day}, we have:
\begin{cor}\label{cor_ind_Hilb}
The category $\Ind$ is a cocomplete symmetric monoidal category whose tensor product distributes all small colimits separately.
\end{cor}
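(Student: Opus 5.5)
The statement to prove is Corollary \ref{cor_ind_Hilb}. Its content is already reduced by the preceding discussion to checking the hypotheses of Proposition \ref{prop_colimit} and Proposition \ref{prop_Day} for $\cC=\Hilb$ with $\otimes_\cC=\hotimes$. So the plan is simply to assemble three facts in order.

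First, I would verify that $\Hilb$ has all finite colimits. Finite coproducts are orthogonal direct sums, and $0$ is a zero object. Coequalizers are constructed as in the text: for $f,g:H_1\to H_2$, take the orthogonal projection $p:H_2\to \mathrm{im}(f-g)^\perp$. Any bounded $h$ with $h(f-g)=0$ vanishes on $\overline{\mathrm{im}(f-g)}$, so it factors uniquely through $p$, and the factorization $h|_{\mathrm{im}(f-g)^\perp}$ is bounded. Separability is preserved, since closed subspaces and finite sums of separable spaces are separable. Proposition \ref{prop_colimit} then gives that $\Ind$ is cocomplete.

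Second, I would check that $H\hotimes-$ preserves finite colimits; by symmetry this covers both variables.
\begin{itemize}
\item \emph{Finite coproducts.} The algebraic isomorphism $\bigoplus_i H\otimes V_i\cong H\otimes\bigoplus_i V_i$ is isometric for the inner product \eqref{eq_app_tensor_inner}, so it extends to a unitary between the completions. Also $H\hotimes 0=0$.
\item \emph{Coequalizers.} It suffices to show that $\overline{\mathrm{im}(\id_H\hotimes k)}=H\hotimes\overline{\mathrm{im}\,k}$ for $k=f-g$. Then the orthogonal complement is $H\hotimes(\mathrm{im}\,k)^\perp$, and the coequalizer projection of $\id\hotimes f,\id\hotimes g$ is $\id_H\hotimes p$.
\begin{itemize}
\item For the inclusion $\subseteq$: elementary tensors $u\otimes k(v)$ lie in $H\hotimes\overline{\mathrm{im}\,k}$, which is a closed subspace.
\item For $\supseteq$: the algebraic span $H\otimes\mathrm{im}\,k$ is dense in $H\hotimes\overline{\mathrm{im}\,k}$ and is contained in $\mathrm{im}(\id\hotimes k)$.
\end{itemize}
\end{itemize}

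Third, I would apply Proposition \ref{prop_Day}. It gives a symmetric monoidal structure on $\Ind$ via Day convolution, with unit $y_\C$, and this tensor product distributes over all small colimits in each variable separately. Combined with cocompleteness from the first step, this is exactly the statement of the corollary.

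The only step that is not bookkeeping is the coequalizer compatibility in the second step. The subtle point is that $\mathrm{im}\,k$ need not be closed, so one must work with closures throughout and use the density argument above. The remaining steps are routine.
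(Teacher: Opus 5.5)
Your proposal is correct and follows the paper's argument essentially verbatim. You build finite colimits in $\Hilb$ from orthogonal sums and the projection onto $\mathrm{im}(f-g)^\perp$, check that $H\hotimes-$ preserves finite colimits, and conclude via Propositions \ref{prop_colimit} and \ref{prop_Day}; your density argument for $\overline{\mathrm{im}(\id_H\hotimes k)}=H\hotimes\overline{\mathrm{im}\,k}$ just supplies the justification for the identity $\mathrm{Im}(f\hotimes\id_H)^\perp=\mathrm{Im}(f)^\perp\hotimes H$, which the paper states without proof.
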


Since $\Vect$ is cocomplete, we have:
\begin{prop}\cite[Proposition 6.1.9]{KS}\label{prop_}
The forgetful functor $U_0:\Hilb \rightarrow \Vect$ induces a functor $U:\Ind \rightarrow \Vect$ that preserves filtered colimits.
\end{prop}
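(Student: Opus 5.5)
We must prove that the forgetful functor $U_0:\Hilb\to\Vect$ induces $U:\Ind\to\Vect$ preserving filtered colimits. The plan is to use the universal property of the ind-completion: $\Ind$ is the free cocompletion of $\Hilb$ under small filtered colimits, realized inside $\mathrm{PSh}(\Hilb)$ as in Definition \ref{def_ind_category}. Since $\Vect$ is cocomplete, and in particular admits all small filtered colimits, any functor $\Hilb\to\Vect$ extends, uniquely up to isomorphism, to a filtered-colimit-preserving functor $\Ind\to\Vect$.

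Concretely, for an object $X\cong\colim_{i\in I}\Hom_{\Hilb}(\bullet,F(i))$ with $I$ small filtered and $F:I\to\Hilb$, we set
\[
U(X)=\colim_{i\in I}U_0(F(i)),
\]
computed in $\Vect$. Morphisms are handled via formula \eqref{app_def_ind_hom}: a morphism $X\to Y$ is a compatible family of elements of $\colim_{j}\Hom_{\Hilb}(F(i),G(j))$. Each representative $F(i)\to G(j)$ gives a linear map $U_0F(i)\to U_0G(j)\to U(Y)$, and compatibility glues these into $U(X)\to U(Y)$. Independence of representatives and functoriality follow from the universal properties of the colimits involved.

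A cleaner formulation, which I would use to avoid choices, is to define $U$ as the left Kan extension of $U_0$ along the Yoneda embedding $y:\Hilb\to\Ind$, computed pointwise as $U(X)=\colim_{(C,\,y_C\to X)}U_0(C)$ over the comma category $\Hilb/X$. For $X$ an ind-object this comma category is filtered, and the canonical cofinal functor $I\to\Hilb/X$ identifies the colimit with the formula above. Because $y$ is fully faithful, we get $U\circ y\cong U_0$.

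The remaining point, and the only nonformal step, is that $U$ preserves filtered colimits. Let $X=\colim_{k\in K}X_k$ be a filtered colimit in $\Ind$. Write each $X_k$ as a filtered colimit of representables; then $X$ is a filtered colimit of representables indexed by the Grothendieck-type combination of these diagrams, which is again filtered. This is the content of \cite[Section 6.1]{KS}, namely that filtered colimits of ind-objects computed in $\mathrm{PSh}$ stay in $\Ind$. Both $U(X)$ and $\colim_k U(X_k)$ are then colimits in $\Vect$ of $U_0$ over this combined diagram, so interchange of colimits gives the canonical comparison map $\colim_k U(X_k)\to U(X)$ as an isomorphism. I expect this bookkeeping with the combined index category to be the main obstacle, and I would simply cite \cite[Proposition 6.1.9]{KS} for it, since that result states exactly this extension property for any functor into a category admitting small filtered colimits.
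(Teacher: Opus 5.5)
Your proposal is correct and is essentially the paper's argument: the paper gives no proof beyond citing \cite[Proposition 6.1.9]{KS} together with the observation that $\Vect$ is cocomplete (in particular, it has small filtered colimits). Your construction of $U$ as the left Kan extension of $U_0$ along the Yoneda embedding, computed through a small cofinal filtered index category and with the preservation of filtered colimits referred back to that same result, is exactly what the citation provides. The only point worth noting is that $\Hilb$ is essentially small rather than small, so $\Hilb/X$ is large; your reduction to the small cofinal diagram $I\to\Hilb/X$ already ensures that the colimit exists.
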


Let $\N$ be the category whose objects are natural numbers and such that there is a unique morphism from $n$ to $m$ if and only if $n\leq m$.
Let $H^1 \subset H^2 \subset \dots$ be inclusions of separable Hilbert spaces that preserve the inner products. This can be regarded as an object of $\Ind$ via the functor
$H^\bullet:\N \rightarrow \Hilb,\ k\mapsto H^k$.
Then,  by \eqref{app_def_ind_hom},
\begin{align*}
\Hom_{\Ind}((H^\bullet)^{\otimes_{\mathrm{Day}} n},H^\bullet) \cong
\lim_{k_1,\dots,k_n \in \N^n} \colim_{K \in \N}\Hom_{\Hilb}(H^{k_1}\hotimes \cdots \hotimes H^{k_n}, H^K).
\end{align*}
Therefore, given a $\CE$-algebra in $\Vect$ equipped with a Hilbert space filtration $V=\cup_{k\geq 0} H^k$, the object $H^\bullet$ determines a $\CE$-algebra in $\Ind$.
Hence, by Proposition \ref{prop_Left_Kan}, we obtain a symmetric monoidal functor $\Embc \rightarrow \Ind$.
Note also that Proposition \ref{prop_universal_character_ind} follows from
\begin{align}
\mathrm{Hom}_{\Ind}(V^{\otimes n},\C)\cong \lim_{k_1,\dots,k_n \in \N^n} \mathrm{Hom}_{\Hilb}(H^{k_1}\hotimes \cdots \hotimes H^{k_n} ,\C). \label{eq_ind_hom_dual}
\end{align}

In particular, by the Riesz representation theorem, a compatible family $\{v_n \in H^n\}$ corresponds to an element of
$\mathrm{Hom}_{\Ind}(V,\C)$.

\begin{rem}\label{rem_shift_zero}
Note that for $W \in \Ind$, it may happen that
$\mathrm{Hom}_{\Ind}(W,\C)=0$ even if $W \neq 0$.
Let $S:l^2(\N) \rightarrow l^2(\N)$ be the bounded linear map defined by
$(x_0,x_1,x_2,\dots) \mapsto (x_1,x_2,\dots)$.
Then its adjoint $S^*$ is given by
$(x_0,x_1,x_2,\dots) \mapsto (0,x_0,x_1,x_2,\dots)$.
Let $W$ be the ind Hilbert space defined by the sequence
\begin{align*}
W=\left(l^2(\N) \overset{S}{\rightarrow} l^2(\N) \overset{S}{\rightarrow} l^2(\N) \overset{S}{\rightarrow} \cdots\right).
\end{align*}
Then $W \neq 0$, but $\mathrm{Hom}_{\Ind}(W,\C)=0$.
In particular, the fact that no continuous state exists does not necessarily imply that
$\mathrm{Lan}_A(M)=0$.
However, in the case of ind Hilbert spaces given by an isometric increasing sequence as in
Definition \ref{def_ind_CF}, there exist plenty of elements in the dual.
\end{rem}

\noindent
\begin{center}
{\bf Acknowledgements}
\end{center}

I express my gratitude to Yoh Tanimoto and Maria Stella Adamo for collaborative work
and for discussions on quantum field theory from an analytic perspective,
to Mayuko Yamashita for discussions on topological quantum field theory,
to Masahito Yamazaki and Slava Rychkov for discussions on physical aspects,
and to Hiro-Lee Tanaka, Takumi Maegawa, and Vladimir Sosnilo for discussions related to category theory.
I also wish to express my gratitude to Yuji Tachikawa, 
Masaki Natori, Naruki Masuda, Tomohiro Asano, Masahiro Futaki,
Hiroshi Ooguri, Tomoyuki Arakawa, Shintaro Yanagida, Atsushi Matsuo, Hiroshi Yamauchi, Yoshihisa Saito, Toshiro Kuwabara, Yasuyuki Kawahigashi for valuable discussions and comments. This work is supported by Grant-in Aid for Early-Career Scientists (24K16911).


\begin{thebibliography}{100}

\bibitem[At]{Atiyah}
M.~Atiyah, Topological quantum field theory, Publications Mathematiques de l'IHES, 68, 1988, 175--186.


\bibitem[ABR]{ABR}
{S.~Axler, P.~Bourdon and W.~Ramey},
{Harmonic Function Theory}, 2nd Edition,
 {Graduate Texts in Mathematics},
{\bf137}, {(2)}, {Springer} {2001}.



 \bibitem[AF1]{AF1}
 {D.~Ayala and J.~Francis},
  {Factorization homology of topological manifolds},
{Journal of Topology}, {\bf 8}, {(4)},
 {1045--1084}, {2015}.

\bibitem[AF2]{AF2}
{D.~Ayala and J.~Francis},
The cobordism hypothesis, arXiv:{1705.02240}.

%
%

\bibitem[AFR]{AFR}
{D.~Ayala, J.~Francis and N.~Rozenblyum}
{Factorization homology {I}: Higher categories},
{Advances in Mathematics}, {\bf333},
 {1042--1177}, {2018}.

\bibitem[AFT]{AFT}
{D.~Ayala, J.~Francis and H.L.~Tanaka}
{Factorization homology of stratified spaces},
{Selecta Mathematica. New Series}, {\bf23},
 {(1)}, {293--362}, {2017}.
 
\bibitem[AGT]{AGT} 
M. S.~Adamo, L.~ Giorgetti, and Y.~Tanimoto, Wightman fields for two dimensional conformal field theories with pointed representation category. Comm. Math. Phys., 404(3):1231–1273, 2023.

\bibitem[AMT]{AMT}
M. S.~Adamo, Y.~Moriwaki and Y. Tanimoto, Osterwalder-Schrader axioms for unitary full vertex operator algebras, arXiv:2407.18222.




%

\bibitem[Arai]{Arai}
{A.~Arai}, {Analysis on Fock Spaces and Mathematical Theory of Quantum Fields:
               An Introduction to Mathematical Analysis of Quantum Fields},
{\bf2}, {World Scientific}, {2025}.


%
%
%
%
\bibitem[Bo]{B1} R.E.~Borcherds,
Vertex algebras, {K}ac-{M}oody algebras, and the {M}onster,
Proc. Nat. Acad. Sci. U.S.A.,
{\bf83}, 1986, {(10)},
{3068--3071}.
%
%

\bibitem[BD]{BD}
{A.~Beilinson and V.~Drinfeld},{Chiral Algebras},
 {American Mathematical Society},
{AMS Colloquium Publications},
 {51}, {2004}.









\bibitem[CS]{CS}
{D.~Calaque and C.~Scheimbauer}, 
{A note on the $(\infty,n)$-category of cobordisms}, {Algebraic \& Geometric Topology}, {\bf19}, {2019}, {533--655}.

\bibitem[Ch]{Che55}
Shiing-Shen Chern, An elementary proof of the existence of isothermal parameters on
a surface, Proc. Amer. Math. Soc. 6 (1955), 771--782.


\bibitem[CG1]{CG1}
{K.~Costello and O.~Gwilliam},
{Factorization Algebras in Quantum Field Theory. Volume 1},
{New Mathematical Monographs},
{\bf31}, {Cambridge University Press}, {2016}.

\bibitem[CG2]{CG2}
{K.~Costello and O.~Gwilliam},
{Factorization Algebras in Quantum Field Theory. Volume 2},
{New Mathematical Monographs},
{\bf41}, {Cambridge University Press}, {2021}.




\bibitem[CKLW]{CKLW}
S.~Carpi, Y.~Kawahigashi, R.~Longo, and M.~Weiner. From vertex operator algebras to conformal nets and back. Mem. Amer. Math. Soc., 254(1213): vi+85, 2018.


%
%
%
%
%

%
%



 
 
\bibitem[Day]{Day}
B.~Day, On closed categories of functors, Reports of the Midwest Category Seminar IV. Lecture Notes in Mathematics, {\bf137}, Springer, Berlin, 1970.

\bibitem[De]{De}
{D.W.~DeTemple},
{An Area Method for Systems of Univalent Functions Whose Ranges do not Overlap},
{Mathematische Zeitschrift}, {\bf128},
 {23--33}, {1972}.


\bibitem[Dy]{Dy}
{F. J.~Dyson},
{Divergence of Perturbation Theory in Quantum Electrodynamics},
{Phys. Rev.}, {\bf85},
 {(4)}, {631--632}, {1952}.



%
 

\bibitem[DL]{DL}
C.~Dong and X.~Lin, Unitary vertex operator algebras. J. Algebra, 397, 252-277, 2014.

\bibitem[DX]{DX}
F.~Dai and Y.~Xu, Approximation Theory and Harmonic Analysis on Spheres and Balls, Springer
Monographs in Mathematics, Springer, 2013.
%






\bibitem[FLM]{FLM}
I. ~Frenkel, J. ~Lepowsky, and A. ~Meurman, 
{Vertex operator algebras and the {M}onster}, {Pure and Applied Mathematics},
{\bf134}, {Academic Press, Inc., Boston, MA}, {1988}.



%
\bibitem[FHL]{FHL} I. ~Frenkel, Y. ~Huang and J. ~Lepowsky,
On axiomatic approaches to vertex operator algebras and modules, Mem. Amer. Math. Soc., {\bf104}, 1993, (494).

\bibitem[FMS]{FMS}
{P.~Di Francesco, P.~ Mathieu and D.~S\'{e}n\'{e}chal},
{Conformal field theory}, {Graduate Texts in Contemporary Physics}, {Springer-Verlag, New York}, {1997}.
%
%
%
%





 

%
%
%
\bibitem[GKRV]{GKRV2}
C.~Guillarmou, A.~Kupiainen, R.~Rhodes, and V.~Vargas,
{Segal's axioms and bootstrap for Liouville Theory},
{arXiv:2112.14859 [math-ph]}.
%


\bibitem[GW]{GW}
L.~G{\aa}rding and A.S.~Wightman,
{Fields as Operator-valued Distributions in Relativistic Quantum Theory}, {Arkiv f{\"o}r Fysik},
{\bf28}, {129--189}, {1964}.


\bibitem[GJ]{Glimm-Jaffe}
{J.~Glimm and A.~Jaffe},
{Quantum Physics: A Functional Integral Point of View}, {\bf2}, {Springer-Verlag}, {New York}, {1987}.



%





\bibitem[Hum]{Hum}
J. E.~Humphreys,
Introduction to Lie algebras and representation theory, Graduate texts in mathematics, {\bf9}, Springer-Verlag, 1973.
%
%
\bibitem[HK]{HK1}
{Y.-Z.~Huang, L.~Kong},
{Full field algebras}, {Comm. Math. Phys.},
{\bf272}, {2007}, {(2)}, {345--396}.
%
%

%



%



\bibitem[Jaf]{Jaffe}
{A.~Jaffe}, {Divergence of perturbation theory for bosons},
 {Communications in Mathematical Physics},
{\bf1}, {(2)}, {127--149}, {1965}.


\bibitem[Jan]{Janson}
{S.~Janson},
 {Gaussian Hilbert Spaces}, {129},
{Cambridge University Press},
 {1997}.

\bibitem[Ko]{Kobayashi}
{S.~Kobayashi}, {Transformation Groups in Differential Geometry}, {Classics in Mathematics},
 {Springer}, {Berlin, Heidelberg}, {1972}.

%
%
%
%




\bibitem[Kui1]{Kuiper}
N. H.~Kuiper, On conformally flat spaces in the large, Ann. of Math., {\bf 50}, 1949, 916--924.


\bibitem[Kui2]{Kuiper2}
N. H. ~Kuiper, On Compact Conformally Euclidean Spaces of Dimension $> 2$, Ann. of Math.,
{\bf 52}, (2), 1950, 478--490.



\bibitem[KS]{KS}
M.~Kashiwara and P.~Schapira, Categories and Sheaves, Grundlehren der Mathematischen Wissenschaften 332, 
Springer (2006).

%
%

%
%
%
%
%
%
\bibitem[Li]{Li}
H. ~Li, {Symmetric invariant bilinear forms on vertex operator algebras}, {J. Pure Appl. Algebra}, {\bf96}, {1994}, {(3)}, {279--297}.


%



\bibitem[Lee]{Lee}
{J. M.~Lee},
 {Introduction to Smooth Manifolds}, {2nd Edition}, {2013}, {Springer}.

\bibitem[Lio]{Liouville}
J.~Liouville, Extension au cas des trois dimensions de la questions du trace geographic, in Application de l'Anlyse a la Geometrie, Paris, 1850, 609--616. 

\bibitem[LV]{LoV}
J-L.~Loday and B.~Vallette, Algebraic operads, 346, Grundlehren der mathematischen Wissenschaften, Springer, Heidelberg, 2012, xxiv+634.

\bibitem[Lu1]{Lurie}
 {J.~ Lurie}, {Higher Topos Theory}, {Annals of Mathematics Studies},
 {\bf170}, {Princeton University Press}, {2009}.

\bibitem[Lu2]{Lurie2}
 {J.~Lurie}, {Higher Algebra}, {online version}, {September 18, 2017}.

\bibitem[Lu3]{Lurie3}
 {J.~Lurie}, {On the Classification of Topological Field Theories},
{Current Developments in Mathematics, 2008}, {129--280},
 {2009}, {International Press of Boston}.


%



%
%
%
%


\bibitem[Mat]{Matsumoto}
S.~Matsumoto, Foundations of Flat Conformal Structure,
Adv. Stud. Pure Math., {\bf 20}, Aspects of Low Dimensional Manifolds, 1992, 167--261.




\bibitem[Mi]{Milnor}
{J. ~Milnor}, {Microbundles. I},
 {Topology}, {1964}, {\bf 3},
 {Suppl. 1}, {53--80}.


\bibitem[Mo1]{M1} 
Y.~Moriwaki, Two-dimensional conformal field theory, full vertex algebra and current-current deformation, Adv. Math, 427, (2023).

\bibitem[Mo2]{MSwiss}
Y.~Moriwaki, Convergence and operadic compatibility of bulk and boundary OPEs in two-dimensional conformal field theory, arXiv:2410.02648.

\bibitem[Mo3]{MBergman}
Y.~Moriwaki, 
Bergman space, Conformally flat 2-disk operads and affine Heisenberg vertex algebra, arXiv:2603.06491.

\bibitem[Mo4]{Mfactorization}
Y.~Moriwaki, 
Prefactorization algebras for the conformal Laplacian: Central charge and Hilbert Fock space, arXiv:2602.17549.

\bibitem[Mo5]{Mvertex}
Y.~Moriwaki, System of conformally covariant correlation functions and operator product expansions in higher dimensions, in preparation.



%
%
%
%




%
\bibitem[OS1]{OS1}
 {K.~Osterwalder and R.~Schrader},
{Axioms for {E}uclidean {G}reen's functions},
{Comm. Math. Phys.}, {\bf 31}, {1973}, {83--112}.
\bibitem[OS2]{OS2}
 {K.~Osterwalder and R.~Schrader},
Axioms for Euclidean Green’s functions. II. Comm. Math. Phys., {\bf42}, 1975, 281--305.




%




%



\bibitem[RS]{RS}
{M.~Reed and B.~Simon},
 {Methods of Modern Mathematical Physics, Vol. I: Functional Analysis},
 {Academic Press},
{New York},{1980}

%
%
 \bibitem[St]{Steh2}
L.~Stehouwer, The Spin-Statistics Theorem for Topological Quantum Field Theories, Commun. Math. Phys, 405, 253 (2024).

\bibitem[Sche]{Sc}
{C.I.~Scheimbauer},
{Factorization Homology as a Fully Extended Topological Field Theory},
{2014}, {ETH Z{\"u}rich}, {Doctoral Thesis (Dr.\ sc.\ ETH Zurich)}.




\bibitem[Se]{Segal}
G.~Segal, The definition of conformal field theory. Topology, geometry and quantum field theory, 421--577, London Math. Soc. Lecture Note Ser., 308, Cambridge Univ. Press, Cambridge, 2004.


\bibitem[SW]{SW}
R. F.~Streater and A. S.~Wightman, PCT, spin and statistics, and all that, Princeton Landmarks in Physics. Princeton University Press, Princeton, NJ, 2000.


\bibitem[ST]{ST}
S.~Stolz and P.~Teichner, Supersymmetric field theories and generalized cohomology, Mathematical foundations of quantum field theory and perturbative string theory., {\bf83}, Proc. Sympos. Pure Math. Amer. Math. Soc., Providence, RI, 2011, 279--340.


\bibitem[Ta]{Takesaki}
{M.~Takesaki},
{Theory of Operator Algebras I},
{Encyclopaedia of Mathematical Sciences},
{\bf124}, {Springer-Verlag}, {1979}.


\bibitem[Te]{Te}
{J. E.~Tener},
{Construction of the Unitary Free Fermion Segal CFT},
{Communications in Mathematical Physics}, {355},
 {(2)}, {463--518}, {2017}.





\bibitem[TUY]{TUY}
{A.~Tsuchiya and K.~Ueno and Y.~Yamada},
{Conformal field theory on universal family of stable curves with gauge symmetries},
{Advanced Studies in Pure Mathematics}, {\bf19},
 {459--566}, {1989}.


%

%
%
%
%

%
%
\bibitem[Zh]{Zh}
{Y.~Zhu}, {Modular invariance of characters of vertex operator algebras},
{J. Amer. Math. Soc.}, {\bf9}, {1996}, {(1)}, {237--302}.
\end{thebibliography}
\end{document}